\documentclass{article}


\usepackage[final]{neurips_2025}



\usepackage[Symbol]{upgreek}
\usepackage{xcolor} 
\usepackage[most]{tcolorbox}
\usepackage{soul}
\usepackage{pifont}
\usepackage[most]{tcolorbox}
\usepackage{colortbl}
\usepackage{bbding}
\usepackage{caption}
\usepackage{wrapfig}
\usepackage{color,xcolor}
\usepackage{microtype}
\usepackage{graphicx}
\usepackage{subfigure}
\usepackage{booktabs} 
\usepackage{multirow}
\usepackage{makecell}
\usepackage{diagbox}
\usepackage{amssymb}
\usepackage{mathtools}
\usepackage{natbib}
\usepackage{amsthm}

\usepackage[toc,page,header]{appendix}
\usepackage{minitoc}
\usepackage{enumitem}
\usepackage{amsmath}
\usepackage{tikz}


\usepackage[ruled,linesnumbered,vlined]{algorithm2e}

\def \R {\mathbb{R}}

\def \W {\mathbf{W}}
\def \one {\mathbf{1}}

\def \x {\mathbf{x}}

\def \E {\mathbb{E}}

\def \p {\mathbf{p}}

\def \z {\mathbf{z}}

\def \y {\mathbf{y}}

\def \F {\mathcal{F}}

\def \N {\mathcal{N}}
\def \V {\mathcal{V}}

\def \C {\mathcal{C}}

\usepackage{hyperref}
\usepackage[capitalize,noabbrev]{cleveref}
\hypersetup{colorlinks={true},linkcolor={blue},citecolor={blue}}

\newtheorem{theorem}{Theorem}

\newtheorem{lemma}{Lemma}

\newtheorem{definition}{Definition}
\newtheorem{assumption}{Assumption}
\newtheorem{remark}{Remark}

\title{Effective Policy Learning for Multi-Agent Online Coordination Beyond Submodular Objectives}

%

\author{ 
Qixin Zhang$^1$
\quad
Yan Sun$^2$
\quad
Can Jin$^4$
\quad
Xikun Zhang$^1$
\\
\vspace{0.2cm}
\textbf{
Yao Shu$^3$
\quad
Puning Zhao$^5$
\quad
 Li Shen$^5$
\quad
Dacheng Tao$^1\thanks{Corresponding author: Dacheng Tao.}$
}\\
\small{$^1$College of Computing and Data Science, Nanyang Technological University, Singapore}\\
\small{$^2$The University of Sydney}
\quad
\small{$^3$Hong Kong University of Science and Technology(Guangzhou)}\\
\quad 
\small{$^4$Rutgers University}
\quad
\small$^5${Shenzhen Campus of Sun Yat-sen University}\\
{\tt\small
\{qixin.zhang, xikun.zhang, dacheng.tao\}@ntu.edu.sg}; sun9899@uni.sydney.edu.au\\
{\tt\small yaoshu@hkust-gz.edu.cn; \{zhaopn,shenli6\}@mail.sysu.edu.cn}
}

\begin{document}
	
\doparttoc %
\faketableofcontents %
\maketitle
\begin{abstract}
In this paper, we present two effective  policy learning algorithms for multi-agent online coordination(MA-OC) problem. The first one, \texttt{MA-SPL},  not only can achieve the optimal $(1-\frac{c}{e})$-approximation guarantee for the MA-OC problem with submodular objectives but also can handle the unexplored  $\alpha$-weakly DR-submodular and $(\gamma,\beta)$-weakly submodular scenarios, where $c$ is the curvature of the investigated submodular functions, $\alpha$ denotes the diminishing-return(DR) ratio and the tuple $(\gamma,\beta)$ represents the submodularity ratios.  Subsequently, in order to reduce the reliance on the unknown parameters $\alpha,\gamma,\beta$ inherent in the \texttt{MA-SPL} algorithm, we further introduce the second online algorithm named \texttt{MA-MPL}. This \texttt{MA-MPL} algorithm is entirely \emph{parameter-free} and simultaneously can maintain the same approximation ratio as the first  \texttt{MA-SPL} algorithm. The core of our \texttt{MA-SPL} and \texttt{MA-MPL} algorithms is a novel continuous-relaxation technique termed as \emph{policy-based continuous extension}. Compared with the well-established \emph{multi-linear extension}, a notable advantage of this new \emph{policy-based continuous extension} is its ability to provide a lossless rounding scheme for any set function, thereby enabling us to tackle the challenging weakly submodular objectives. Finally, extensive simulations are conducted to validate the effectiveness of our proposed algorithms.
\end{abstract}

\section{Introduction}
Coordinating multiple autonomous agents to cooperatively complete complex tasks in time-varying environments is a significant challenge with extensive applications in machine learning, robotics and control, including target tracking~\citep{corah2021scalable,zhou2023robust,zhou2019sensor,zhou2018resilient}, area monitoring~\citep{hashemi2019submodular,li2023submodularity,schlotfeldt2021resilient},  multi-path planning~\citep{shi2023robust,singh2007efficient,singh2009nonmyopic}, mobile sensor placement~\citep{krause2008efficient,krause2008near,robey2021optimal}, collaborative data selection~\citep{mirzasoleiman2016distributed,FedSub} environmental mapping~\citep{atanasov2015decentralized,liu2021distributed} and task assignment~\citep{qu2019distributed,arslan2007autonomous}. Motivated by these diverse real-world scenarios,  thus this paper delves into the \underline{M}ulti-\underline{A}gent \underline{O}nline \underline{C}oordination(MA-OC) problem.

 Prior to this, numerous studies have demonstrated that the utility functions associated with a wide range of multi-agent coordination scenarios often exhibit a \emph{diminishing-return}(DR) property. Specifically, as the number of agents increases, the marginal gain in benefit will tend to decrease. For instance, in area monitoring using a fleet of unmanned aerial vehicles(UAVs), due to the overlaps of sensing ranges, the increment of the total monitored area from adding an additional UAV typically becomes less and less as the team of UAVs expands. Note that the \emph{diminishing-return} property is also known as \emph{submodularity} in mathematics~\citep{bach2013learning,fujishige2005submodular}. Consequently, a vast majority of research regarding multi-agent coordination focus on \emph{submodular} objectives~\citep{gharesifard2017distributed,grimsman2018impact,kia2025submodular,krause2008robust,marden2016role,qu2019distributed,rezazadeh2023distributed,robey2021optimal,xu2023bandit,xu2023online,zhang2025nearoptimal,zhou2023robust,zhou2022risk}. 
 
However, recent works~\citep{hashemi2019submodular,hashemi2020randomized,jawaid2015submodularity,kaya2025randomized,ye2023maximization} observed that there also exist many multi-agent coordination scenarios inducing utility functions that are \emph{close-to-submodular}, but \emph{not strictly submodular}. A notable example is the employment of a swarm of UAVs to track multiple moving objects. In this scenario, each UAV needs to
periodically determine its moving direction and speed. Particularly, under the Kalman filter framework, \citep{hashemi2019submodular,hashemi2020randomized,kaya2025randomized}  pointed out that the aforementioned trajectory selection problem of UAVs can be formulated as a multi-agent variant of the general online \emph{weakly} submodular maximization problems. It is worth noting that the currently well-known algorithms for weakly submodular maximization are highly dependent on the discrete local search~\citep{chen2018weakly,gatmiry2018non,harshaw2019submodular,JOGO-Lu,thiery2022two}. So far, however, how to extend this local search into online settings still remains an uncharted territory.  Thus, almost all existing studies regarding multi-agent coordination with weakly submodular objectives choose to neglect the changing environment and focus on simple offline scenarios~\citep{hashemi2019submodular,hashemi2020randomized,kaya2025randomized,ye2023maximization}.  In view of all this, a natural question arises:
 \begin{center}\vspace{-0.25em}
 	\textbf{Q1:} \emph{Is it possible to design an effective online algorithm for MA-OC problem with weakly submodular objectives?}
 \end{center}\vspace{-0.25em}
 
 In addition, we also find that when the objective function is exactly \emph{submodular}, the state-of-the-art algorithms~\citep{rezazadeh2023distributed,zhang2025nearoptimal} for the MA-OC problem only can guarantee a sub-optimal $(\frac{1-e^{-c}}{c})$-approximation, which mismatches the best possible $(1-\frac{c}{e})$-approximation established in the works~\citep{sviridenko2017optimal,harvey2020improved} for \emph{single-agent} submodular maximization. Here, $c$ denotes the curvature of the investigated submodular functions. Given this drawback, another question comes to our mind, that is, 
 \begin{center}\vspace{-0.25em}
 	\textbf{Q2:} \emph{Is it possible to achieve the optimal $(1-\frac{c}{e})$-approximation ratio for MA-OC problem \\with submodular objectives?}
 \end{center}\vspace{-0.25em}
In the subsequent sections of this paper, we will provide an affirmative answer to these two questions by presenting an effective online algorithm named \texttt{MA-SPL}, which not only can achieve the optimal $(1-\frac{c}{e})$-approximation for the MA-OC problem with submodular objectives but also can address the previously unexplored  weakly submodular scenarios. The core of our proposed \texttt{MA-SPL} algorithm is a novel continuous-relaxation framework named \emph{policy-based continuous extension}, which can efficiently transform the discrete set function maximization  problem into a solvable continuous optimization task. Furthermore, compared to the well-established \emph{multi-linear extension}~\citep{calinescu2011maximizing}, a notable advantage of our proposed policy-based continuous extension is that it can provide a lossless rounding scheme for any set objective function. In contrast, all known lossless rounding schemes for the \emph{multi-linear extension} require the set objective function to be \emph{submodular}~\citep{calinescu2011maximizing,chekuri2010submodular,chekuri2014submodular}. Moreover, to eliminate the dependence of both DR ratio and submodularity ratio inherent in our proposed \texttt{MA-SPL} algorithm, we further present a \emph{parameter-free} online algorithm termed as \texttt{MA-MPL} for the MA-OC problem with general weakly submodular objectives. 

In summary, we make the following contributions:
\begin{list}{\labelitemi}{\leftmargin=0.5em \itemindent=-0.0em \itemsep=.1em}
	\vspace{-.05in}
	\item[$\bullet$] This paper introduces an innovative continuous-relaxation technique named \emph{policy-based continuous extension} for the general multi-agent coordination problem. Furthermore, we conduct an in-depth exploration of the differentiability, monotonicity and submodularity of our proposed policy-based continuous extension. More importantly, when the investigated set objective function is submodular or weakly submodular, we design three different surrogate functions for our policy-based continuous extension. The stationary points of these three surrogate functions can yield a better approximation guarantee than those of the original policy-based continuous extension itself.
	\item[$\bullet$] Building on these surrogate functions, we then propose a novel online algorithm named \texttt{MA-SPL} for the concerned MA-OC problem. Moreover, we also verify that, when the set objective function is monotone submodular with curvature $c$, $\alpha$-weakly DR-submodular or $(\gamma,\beta)$-weakly submodular, our proposed \texttt{MA-SPL} can achieve an approximation ratio of $(1-\frac{c}{e})$, $(1-e^{-\alpha})$ or $(\frac{\gamma^{2}(1-e^{-(\beta(1-\gamma)+\gamma^2)})}{\beta(1-\gamma)+\gamma^2})$ with a dynamic regret bound of $\mathcal{O}\left(\sqrt{\frac{\mathcal{P}_{T}T}{1-\tau}}\right)$ to the best comparator in hindsight, respectively. Here, $\mathcal{P}_{T}$ is the deviation of maximizer sequence, $\tau$ is the spectral gap of the network, $\alpha$ denotes the DR ratio, $\beta$ represents the upper submodularity ratio and $\gamma$ is the lower submodularity ratio. 
	\item[$\bullet$] To eliminate the dependence of both DR ratio and submodularity ratio of our \texttt{MA-SPL} algorithm, we next present a \emph{parameter-free} online algorithm named \texttt{MA-MPL} for the MA-OC problem. The cornerstone of this \texttt{MA-MPL} algorithm is a novel inequality between our proposed policy-based continuous extension and the original set objective function. Moreover, when the objective function is monotone $\alpha$-weakly DR-submodular or $(\gamma,\beta)$-weakly submodular, our proposed \texttt{MA-MPL} algorithm also can enjoy the same approximation ratio  with a regret bound of $\mathcal{O}\left(d(G)\sqrt{\mathcal{P}_{T}T}\right)$ to the
	best comparator in hindsight, where $d(G)$ is the diameter of the corresponding communication graph $G$.
	\item[$\bullet$] We conduct numerical experiments to verify the effectiveness of our proposed algorithms.
\end{list}
\textbf{Related Work.} Due to space limitations, the comprehensive literature review is placed in \cref{appendix:related_work}. In particular, we present a detailed comparison of our proposed \texttt{MA-SPL} algorithm and \texttt{MA-MPL}  algorithm with existing studies on multi-agent online coordination in \cref{tab:Comparison}.
\begin{table}[t]
\vspace{-1.0em}
	\caption{\small Comparison of the different algorithms for $T$-round MA-OC problem. Note that 	`Approx.' denotes the obtained approximation ratio, `\#Com.' represents the number of communication, `\#Queries' denotes the number of queries to the set objective functions, `D-Regret' denotes the dynamic regret bound, `Proj-free' indicates whether the method does not require projection, `Para-free'  indicates whether the method does not require prior knowledge of curvature $c$ and parameters $\alpha,\gamma,\beta$, $\mathcal{P}_{T}$ is the deviation of maximizer sequence, $\tau$ is the spectral gap of the weight matrix, $d(G)$ is the diameter of the graph $G$, $\phi(\gamma,\beta)\triangleq\beta(1-\gamma)+\gamma^{2}$ and $\kappa\triangleq\sum_{i=1}^{n}\kappa_{i}$.}\label{tab:Comparison}
	\vspace{0.45em}
	\resizebox{1.01\textwidth}{!}{
		\setlength{\tabcolsep}{2.5mm}{
			\begin{tabular}{ccccccccc}
				\toprule[1.3pt]
				Method& Utility& Para-free& Proj-free&Graph&Approx.&\#Com.&\#Queries&D-Regret\\\hline
				OSG~\citep{xu2023online,grimsman2018impact}& Submodular&\ding{52}&\ding{52}&\textbf{Complete}&$\left(\frac{1}{1+c}\right)$&$\mathcal{O}(T)$&$\mathcal{O}(\kappa T)$&$\widetilde{\mathcal{O}}(\sqrt{\mathcal{P}_{T}T})$\\
				MA-OSMA~\citep{zhang2025nearoptimal}& Submodular&\ding{56}&\ding{56}&Connected&$\left(\frac{1-e^{-c}}{c}\right)$&$\mathcal{O}(T)$&$\mathcal{O}(\kappa T)$&$\mathcal{O}\left(\sqrt{\frac{\mathcal{P}_{T}T}{1-\tau}}\right)$\\
				MA-OSEA~\citep{zhang2025nearoptimal}&Submodular&\ding{56}&\ding{52}&Connected&$\left(\frac{1-e^{-c}}{c}\right)$&$\mathcal{O}(T)$&$\mathcal{O}(\kappa T)$&$\widetilde{\mathcal{O}}\left(\sqrt{\frac{\mathcal{P}_{T}T}{1-\tau}}\right)$\\ 
				\midrule[1.3pt]\rowcolor{gray!25}
				\multirow{3}{*}{\Gape{\makecell[c]{\large\texttt{MA-SPL}\\\ }}}&Submodular&\ding{52}&\ding{56}&Connected& $\left(1-\frac{c}{e}\right)$&$\mathcal{O}(T)$&$\mathcal{O}(\kappa T)$& $\mathcal{O}\left(\sqrt{\frac{\mathcal{P}_{T}T}{1-\tau}}\right)$\\\rowcolor{gray!25}
				\cellcolor{gray!25}\large(Algorithm~\ref{alg:SPL})&$\alpha$-weakly DR-Sub&\ding{56}&\ding{56}&Connected& $\left(1-e^{-\alpha}\right)$&$\mathcal{O}(T)$&$\mathcal{O}(\kappa T)$& $\mathcal{O}\left(\sqrt{\frac{\mathcal{P}_{T}T}{1-\tau}}\right)$\\\rowcolor{gray!25}
				&$(\gamma,\beta)$-weakly Sub&\ding{56}&\ding{56}&Connected& $\left(\frac{\gamma^{2}(1-e^{-\phi(\gamma,\beta)})}{\phi(\gamma,\beta)}\right)$&$\mathcal{O}(T)$&$\mathcal{O}(\kappa T)$& $\mathcal{O}\left(\sqrt{\frac{\mathcal{P}_{T}T}{1-\tau}}\right)$\\
				\midrule[1.3pt]\rowcolor{gray!25}
				\multirow{2}{*}{\Gape{\makecell[c]{\large\texttt{MA-MPL}\\\ }}}&$\alpha$-weakly DR-Sub&\ding{52}&\ding{52}&Connected& $\left(1-e^{-\alpha}\right)$&$\mathcal{O}(T^{3/2})$&$\mathcal{O}(\kappa T^{5/2})$& $\mathcal{O}\left(d(G)\sqrt{\mathcal{P}_{T}T}\right)$\\\rowcolor{gray!25}
				\cellcolor{gray!25}\large(Algorithm~\ref{alg:MPL})&$(\gamma,\beta)$-weakly Sub&\ding{52}&\ding{52}&Connected& $\left(\frac{\gamma^{2}(1-e^{-\phi(\gamma,\beta)})}{\phi(\gamma,\beta)}\right)$&$\mathcal{O}(T^{3/2})$&$\mathcal{O}(\kappa T^{5/2})$& $\mathcal{O}\left(d(G)\sqrt{\mathcal{P}_{T}T}\right)$\\
				\midrule[1.3pt]	\end{tabular}}}\vspace{-1.1em}
\end{table}
\section{Problem Setup}\label{sec:problem}
This section will provide a detailed introduction to multi-agent online coordination(MA-OC) problem.

In MA-OC problem, we generally consider a collection of $n$ distinct agents, indexed by the set $\N\triangleq\{1,\dots,n\}$ and interconnected through an undirected network $G(\N,\mathcal{E})$. Here, $\mathcal{E}\subseteq\N\times\N$ represents the possible communication links among agents. Additionally, each agent $i\in\N$ is endowed with a unique set of actions $\V_{i}\triangleq\{v_{i,1},\dots,v_{i,\kappa_{i}}\}$, meaning that these action sets are mutually disjoint, i.e., $\V_{i}\cap\V_{j}=\emptyset$ for any $i,j\in\N$. In the process of multi-agent online coordination, at every time spot $t\in[T]$, each agent $i\in\N$ will separately select one action $a_{i}(t)$ from its own action set $\V_{i}$. After committing to these choices, the environment will reveal a utility set function $f_{t}$ defined over the aggregated action space $\V\triangleq\cup_{i\in\N}\V_{i}$. Then, the agents receive the utility $f_{t}(\cup_{i\in\N}\{a_{i}(t)\})$. As a result, the goal of agents is to maximize their cumulative reward as much as possible. Specifically, at each time step $t\in[T]$, we need to address the following set function maximization problem in a multi-agent collaborative manner: 
\begin{equation}\label{equ_problem}
	\max f_{t}(S),\ \ \text{ s.t.}\ S\subseteq\V\ \text{and}\ |S\cap\V_{i}|\le1,\forall i\in\N.
\end{equation} 
Furthermore, in numerous practical applications regarding  MA-OC problem,  each agent usually has a limited perceptual range, allowing it to sense only the environmental changes in its immediate surroundings. For example, in target tracking scenarios involving a swarm of unmanned aerial vehicles(UAVs), each UAV only can perceive the targets within its sensing radius, leaving those outside this range undetected. To model this limitation, several studies on the MA-OC problem~\citep{zhang2025nearoptimal,xu2023online,xu2023bandit,robey2021optimal,rezazadeh2023distributed} adopt a local feedback model. More specifically, after  $f_{t}$ is revealed, each agent $i\in\N$  is only permitted to query a local marginal oracle $\mathcal{Q}_{t}^{i}:\V_{i}\times2^\V\rightarrow\R_{+}$ defined as  $\mathcal{Q}_{t}^{i}(a,S)\triangleq f_{t}(a|S)\triangleq \big(f_{t}(S\cup\{a\})-f_{t}(S)\big)$  for any $a\in\V_{i}$ and $S\subseteq\V$. This implies that, at each time $t\in[T]$, agents only can receive the marginal evaluations about the actions within their individual action set, rather than the full information of $f_{t}$. We also impose this local-feedback constraint in this paper.

Before going into the details, it is crucial to emphasize that, in many real-world scenarios, there exists such an (approximate) local marginal feedback oracle $\mathcal{Q}_{t}^{i}$ for each agent $i\in\N$ after $f_{t}$ is revealed. Typically, in addition to the decision-making process, agents often utilize various off-the-shelf learning algorithms to estimate the marginal contributions of their available actions based on the observed and collected information (See \citep{corah2021scalable}). Moreover, the local information available to one agent is often insufficient for precisely assessing the actions of other agents who are not in close vicinity. Given this fact, confining each agent to the marginal estimations of the actions within its own action set also can further  reduce the accumulation of learning errors. 

\textbf{Dynamic $\rho$-Regret:} Generally speaking, the set function maximization problem~\eqref{equ_problem} is \textbf{NP}-hard~\citep{natarajan1995sparse, feige1998threshold}, indicating that no polynomial-time algorithms can solve it optimally. Thus, this paper employs the \emph{dynamic $\rho$-regret}~\citep{chen2018online,kakade2007playing,streeter2008online,xu2023online,zhang2025nearoptimal,zinkevich2003online} to measure the performance of our proposed algorithms for MA-OC problem. In \emph{dynamic $\rho$-regret}, the algorithm is compared against a sequence of local maximizers with scale parameter $\rho\in[0,1]$, i.e., $R^{*}_{\rho}(T)\triangleq\rho\sum_{t=1}^{T}f_{t}(\mathcal{A}_{t}^{*})-\sum_{t=1}^{T}f_{t}\big(\cup_{i\in\N}\{a_{i}(t)\}\big)$, where $\mathcal{A}_{t}^{*}$ is the optimal solution of problem~\eqref{equ_problem} and  $a_{i}(t)$ is the action chosen by agent $i$ at time $t$.
\section{Preliminaries}
In this section, we introduce some basic concepts and the frequently used notations.

\textbf{Notations.} For any positive integer $n$, the symbol $[n]$ denotes the set $\{1,\dots, n\}$. $\mathbf{0}_{p}$ and $\mathbf{1}_{p}$ denote the $p$-dimensional vector whose all components are $0$ and $1$, respectively. Moreover, $\|\cdot\|_{1}$ and $\|\cdot\|_{2}$ stand for the $L_{1}$ norm and $L_{2}$  norm for vectors, respectively. We also use $\Delta_{m}$ to represent the standard $m$-dimensional simplex, that is, $\Delta_{m}\triangleq\{(x_{1},\dots,x_{m})|\sum_{i=1}^{m}x_{i}\le 1,\ x_{i}\ge0,\forall i\in[m]\}$.

\textbf{Submodularity and Curvature.} Let $\V$ be a finite ground set and $f:2^{\V}\rightarrow\R_{+}$ be a set function mapping any subset of $\V$ to a non-negative real number. Then, for any two subsets $S,T\subseteq\V$, we  denote by $f(T|S)$ the marginal contribution of adding the elements of $T$ to $S$, i.e., $f(T|S)\triangleq f(T\cup S)-f(S)$. In particular, when $T$ is a singleton set $\{v\}$, we also use $f(v|S)$ to represent $f(\{v\}|S)$. Therefore,  we say a set function $f$ is \emph{submodular} if and only if it satisfies the \emph{diminishing-return property}~\cite{nemhauser1978analysis,fujishige2005submodular,fisher1978analysis}, that is, $f(v|S)\ge f(v|T)$ for any $S\subseteq T\subseteq\V$ and $v\in\V\setminus T$. To precisely characterize the diminishing-return property, \citep{conforti1984submodular,feldman2021guess,sviridenko2017optimal,vondrak2010submodularity} introduced the concept of \emph{curvature} for submodular functions, which is defined as $c\triangleq1-\min_{S\subseteq\V, v\notin S}\frac{f(S\cup\{v\})-f(S)}{f(\{v\})-f(\emptyset)}$.

\textbf{Monotonicity.} A set function $f:2^{\V}\rightarrow\R_{+}$ is \emph{monotone} if and only if $f(S)\le f(T)$ for any $S\subseteq T\subseteq\V$. Moreover, in this paper, we suppose the set function $f$ is \emph{normalized}, that is, $f(\emptyset)=0$.

\textbf{Weak Submodularity.} A set function $f:2^{\V}\rightarrow\R_{+}$ is said to be \emph{$\gamma$-weakly submodular from below} for some $\gamma\in(0,1]$ if and only if $\sum_{v\in T\setminus S}f(v|S)\ge \gamma\big(f(T)-f(S)\big)$ for any two subsets $S\subseteq T\subseteq\V$, where $\gamma$ is called as the \emph{lower submodularity ratio}~\citep{das2011submodular,das2018approximate,chen2018weakly}. Similarly, we also can define the \emph{weak submodularity from above}, that is,  a set function $f:2^{\V}\rightarrow\R_{+}$ is \emph{$\beta$-weakly submodular from above} for some $\beta\ge1$ if and only if $\sum_{v\in T\setminus S}f(v|T-\{v\})\le\beta\big(f(T)-f(S)\big), \forall S\subseteq T\subseteq\V$, where $\beta$ is the \emph{upper submodularity ratio}. When a set function $f$ is both \emph{$\gamma$-weakly submodular from below} and \emph{$\beta$-weakly submodular from above}, we say it is \emph{$(\gamma,\beta)$-weakly submodular}~\citep{thiery2022two}.

\textbf{Weak DR-submodularity.}  A set function $f:2^{\V}\rightarrow\R_{+}$ is \emph{$\alpha$-weakly DR-submodular} for some $\alpha\in(0,1]$ if and only if  $f(v|S)\ge\alpha f(v|T)$ for any two subsets $S\subseteq T\subseteq\V$ and $v\in\V\setminus T$. In particular,  $\alpha$ is often called as the diminishing-return(DR) ratio~\citep{bogunovic2018robust,kuhnle2018fast,gatmiry2018non,GONG202116,JOGO-Lu}. It is worth noting that, from the previous definition of weak submodularity, we can infer that an $\alpha$-weakly DR-submodular function automatically satisfies the conditions for being $(\alpha, \frac{1}{\alpha})$-weakly submodular. 
\section{Policy-based Continuous-Relaxation Framework}\label{sec:Continuous-Relaxation}
Before presenting our proposed algorithms for the MA-OC problem, we firstly explore the offline set function maximization problem~\eqref{equ_problem}. In recent years, compared to discrete optimization, the field of continuous optimization has made significant advancements, yielding a broad spectrum of effective algorithmic frameworks and theoretical tools. Consequently,  one promising strategy to addressing the set function maximization problem~\eqref{equ_problem} is to convert it into a solvable continuous optimization problem throughout \emph{continuous-relaxation} techniques. 

A well-known continuous-relaxation framework  is the \emph{multi-linear extension}~\citep{calinescu2011maximizing}, which was introduced for maximizing submodular set functions. Regrettably, this relaxation framework cannot be directly applied to the general set function maximization problem~\eqref{equ_problem}, as most existing lossless rounding schemes for  \emph{multi-linear extension}—such as pipage rounding~\citep{ageev2004pipage}, swap rounding~\citep{chekuri2010submodular}, and contention resolution~\citep{chekuri2014submodular}—rely heavily on the \emph{submodular} assumption. Note that lossless rounding schemes refer to methods that convert the obtained continuous solution into a feasible discrete solution without any loss in terms of the objective function value. To date, how to losslessly round the \emph{multi-linear extension} of non-submodular set functions, e.g. $(\gamma,\beta)$-weakly submodular and $\alpha$-weakly DR-submodular functions, still remains an open question~\citep{thiery2022two}. To overcome this hurdle, we will introduce an innovative continuous-relaxation technique in the subsequent part of this section.

\subsection{Policy-based Continuous Extension}
From the previous description about the MA-OC problem provided in \cref{sec:problem}, we can view the set function maximization problem~\eqref{equ_problem} as a variant of multi-agent cooperative game~\citep{albrecht2024multi,semsar2009multi,wang2022shaq,zhang2021multi}. Inspired by this viewpoint, we naturally consider whether each agent $i\in\N$ can learn a policy $\uppi_{i}\triangleq(\pi_{i,1},\dots,\pi_{i,\kappa_{i}})$ over its individual action space $\V_{i}\triangleq\{v_{i,1},\dots,v_{i,\kappa_{i}}\}$ and then utilizes this policy $\uppi_{i}$ to make decision, where each $\pi_{i,m}$ represents the probability of agent $i$ taking the action  $v_{i,m},\forall m\in[\kappa_{i}]$. Based on this idea, if letting $a_{i}\in\V_{i}\cup\{\emptyset\}$ denote the random action chosen by each policy $\uppi_{i},\forall i\in\N$, then we can obtain the following policy-based continuous extension, namely, 
\begin{definition}\label{def_extension}	
If $\uppi_{i}\triangleq(\pi_{i,1},\dots,\pi_{i,\kappa_{i}})\in\Delta_{\kappa_{i}}$ for any $i\in\N$, then the policy-based continuous extension $F_{t}:\prod_{i=1}^{n}\Delta_{\kappa_{i}}\rightarrow\R_{+}$ for the set function maximization problem~\eqref{equ_problem} can be  defined as:
	\begin{equation}\label{equ_extension}
		\begin{aligned}
			&F_{t}(\uppi_{1},\dots,\uppi_{n})\triangleq\sum_{a_{i}\in\V_{i}\cup\{\emptyset\},\forall i\in\N}\Big(f_{t}\big(\cup_{i=1}^{n}\{a_{i}\}\big)\prod_{i=1}^{n}p(a_{i}|\uppi_{i})\Big),
		\end{aligned}
	\end{equation}where $p(\cdot|\uppi_{i})$ is a probability distribution  over the set $\V_{i}\cup\{\emptyset\}$, that is, 	$p(v_{i,m}|\uppi_{i}) =\pi_{i,m},\forall i\in[n],\forall m\in[\kappa_{i}]$ and $p(\emptyset | \uppi_{i}) = 1 - \sum_{m=1}^{K_{i}} \pi_{i,m},\forall i\in\N$.
\end{definition}
\begin{remark}
It is noteworthy that, in Eq.\eqref{equ_extension}, with the probability $1 - \sum_{m=1}^{\kappa_{i}} \pi_{i,m}$, the policy $\uppi_{i}$ will not  pick  any action from $\V_{i}$, which means there is a possibility that no action will be chosen, i.e., $\emptyset$. 
\end{remark}
\begin{remark}
The definition in Eq.\eqref{equ_extension} highlights a notable advantage of our proposed policy-based continuous extension: it does not assign probabilities to any subset that violates the constraint of problem~\eqref{equ_problem}. As a result, for any set function $f_{t}$  and  any  $(\uppi_{1},\dots,\uppi_{n})\in\prod_{i=1}^{n}\Delta_{\kappa_{i}}$, throughout the \cref{def_extension}, we can easily generate a subset, i.e.,  $\cup_{i=1}^{n}\{a_{i}\}$, that adheres to the constraints of problem~\eqref{equ_problem} while ensuring $\E\left(f_{t}\left(\cup_{i=1}^{n}\{a_{i}\}\right)\right)=F_{t}(\uppi_{1},\dots,\uppi_{n})$. In contrast, all known lossless rounding schemes of the multi-linear extension require the function $f_{t}$ to be submodular~\citep{calinescu2011maximizing,chekuri2010submodular,chekuri2014submodular}.
\end{remark}
\begin{remark}
Notably, we observe that the works~\citep{sahin2020sets,zhou2025improved} have introduced two relaxation techniques for lattice submodular and $k$-submodular functions, both of which are analogous to our policy-based continuous extension $F_{t}$. However, it is crucial to emphasize that there exist notable differences between our work and \citep{sahin2020sets,zhou2025improved}: \textbf{a)} The lattice formulation typically requires an `order' relationship among different actions of the same agent.  However, in our multi-agent coordination problem, we do not impose any specific order on the decisions. 
Consequently, the results and algorithms from \citep{sahin2020sets,zhou2025improved} are not directly applicable to our scenario. \textbf{b)} In addition to submodularity, our paper also considers weak submodularity and allows for varying sizes of action sets among agents.
\end{remark}
With the policy-based continuous extension $F_{t}$ defined in Eq.\eqref{equ_extension}, the set function maximization problem~\eqref{equ_problem} can be naturally relaxed into a continuous maximization task, i.e., 
\begin{equation}\label{equ_Relaxation}
	\max F_{t}(\uppi_{1},\dots,\uppi_{n}),\ \ \text{ s.t.}\ \ \|\uppi_{i}\|_{1}\le1, \uppi_{i}\in[0,1]^{\kappa_{i}}, \forall i\in\N.
\end{equation} 
In order to effectively tackle the policy optimization problem~\eqref{equ_Relaxation}, we next investigate the properties of our proposed policy-based continuous extension $F_{t}$.
\subsection{Properties of Policy-based Continuous Extension}\label{sec:properties}
This subsection will focus on characterizing the differentiability, monotonicity and submodularity of our proposed policy-based continuous extension. Specifically, we have the following theorem:
\begin{theorem}[Proof in  \cref{appendix_proof_thm}]\label{thm1} The policy-based continuous extension $F_{t}:\prod_{i=1}^{n}\Delta_{\kappa_{i}}\rightarrow\R_{+}$ defined in Eq.\eqref{equ_extension} satisfies the following properties:

	\textbf{1):} For any point $(\uppi_{1},\dots,\uppi_{n})\in\prod_{i=1}^{n}\Delta_{\kappa_{i}}$, the first-order derivative of $F_{t}$ at variable $\pi_{i,m},\forall i\in\N,\forall m\in[\kappa_{i}],$ can be expressed as follows:
	 \begin{equation*}
		\frac{\partial F_{t}}{\partial \pi_{i,m}}(\uppi_{1},\dots,\uppi_{n})\triangleq\E_{a_{j}\sim\uppi_{j},\forall j\in\N}\Big(f_{t}\big(v_{i,m}\big|\cup_{j\neq i,j\in\N}\{a_{j}\}\big)\Big),
	\end{equation*} where $a_{j}\sim\uppi_{j}$ indicates that action $a_{j}$
	is  randomly selected from $\V_{j}\cup\{\emptyset\}$ based on the policy $\uppi_{j}$;

\textbf{2):} If the set function $f_{t}$ is monotone, then $\frac{\partial F_{t}}{\partial \pi_{i,m}}(\uppi_{1},\dots,\uppi_{n})\ge 0$ for any point $(\uppi_{1},\dots,\uppi_{n})\in\prod_{i=1}^{n}\Delta_{\kappa_{i}}$, $ i\in\N$ and $m\in[\kappa_{i}]$, which means the monotonicity of $f_{t}$ can be inherited by $F_{t}$;

\textbf{3):} If $f_{t}$ is $\alpha$-weakly DR-submodular, then $F_{t}$ is $\alpha$-weakly continuous DR-submodular~\citep{hassani2017gradient,pedramfar2024from} over $\prod_{i=1}^{n}\Delta_{\kappa_{i}}$, that is, for any two point $(\uppi^{a}_{1},\dots,\uppi^{a}_{n})\in\prod_{i=1}^{n}\Delta_{\kappa_{i}}$ and $(\uppi^{b}_{1},\dots,\uppi^{b}_{n})\in\prod_{i=1}^{n}\Delta_{\kappa_{i}}$, if  $\uppi^{a}_{i}\le\uppi^{b}_{i}\ \forall i\in\N$, we have that $\nabla F_{t}(\uppi^{a}_{1},\dots,\uppi^{a}_{n})\ge \alpha \nabla F_{t}(\uppi^{b}_{1},\dots,\uppi^{b}_{n})$;

\textbf{4):}  For any subset $S$ within the constraint of problem~\eqref{equ_problem} and any point $(\uppi_{1},\dots,\uppi_{n})\in\prod_{i=1}^{n}\Delta_{\kappa_{i}}$, when $f_{t}$ is monotone $\alpha$-weakly DR-submodular, the following inequality holds:
\begin{equation}\label{equ:thm1.4.1}
	\alpha\Big(f_{t}(S)-F_{t}(\uppi_{1},\dots,\uppi_{n})\Big)\le\sum_{(i,m): v_{i,m}\in S}\frac{\partial F_{t}}{\partial \pi_{i,m}}(\uppi_{1},\dots,\uppi_{n}),
\end{equation} where $\{(i,m): v_{i,m}\in S\}$ denotes the set of all indices $(i,m)$ such that $ v_{i,m}\in S$. Similarly, when $f_{t}$ is monotone $(\gamma,\beta)$-weakly submodular, we can show that
\begin{equation}\label{equ:thm1.4.2}
		\Big(\gamma^{2}f_{t}(S)-(\beta(1-\gamma)+\gamma^{2})F_{t}(\uppi_{1},\dots,\uppi_{n})\Big)\le\sum_{(i,m): v_{i,m}\in S}\frac{\partial F_{t}}{\partial \pi_{i,m}}(\uppi_{1},\dots,\uppi_{n}).
\end{equation}
	  
\end{theorem}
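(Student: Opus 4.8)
The plan is to treat $F_t$ as the expectation $\E_{a_j\sim\uppi_j,\,\forall j\in\N}\big[f_t(\cup_{j\in\N}\{a_j\})\big]$ of the set function over independently drawn actions, and to reduce each claim to a statement about a single realization of the random action profile, which is then averaged. Throughout I write $A=\cup_{j\in\N}\{a_j\}$ and $A_{-i}=\cup_{j\neq i}\{a_j\}=A\setminus\{a_i\}$. For Part 1 I would fix $i$ and isolate the dependence of $F_t$ on agent $i$'s policy: grouping the defining sum over the actions $a_j$ with $j\neq i$, the inner sum over $a_i\in\V_i\cup\{\emptyset\}$ is affine in each $\pi_{i,m}$, with coefficient $f_t(\{v_{i,m}\}\cup A_{-i})$ on $\pi_{i,m}$ and the weight $p(\emptyset\mid\uppi_i)=1-\sum_{m}\pi_{i,m}$ carried on $f_t(A_{-i})$. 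Differentiating in $\pi_{i,m}$ therefore replaces the $a_i$-average by $f_t(\{v_{i,m}\}\cup A_{-i})-f_t(A_{-i})=f_t(v_{i,m}\mid A_{-i})$, and re-attaching the outer expectation over $a_j,\,j\neq i$, gives the stated formula. Part 2 is then immediate: monotonicity of $f_t$ makes every marginal $f_t(v_{i,m}\mid A_{-i})$ nonnegative, so the defining expectation of $\partial F_t/\partial\pi_{i,m}$ is nonnegative.

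For Part 3 I would use a monotone coupling. Since $\uppi_i^a\le\uppi_i^b$ coordinate-wise, for each agent $j$ I can couple the draws $a_j^a\sim\uppi_j^a$ and $a_j^b\sim\uppi_j^b$ through a single uniform variable so that $a_j^a\in\{a_j^b,\emptyset\}$; in particular, whenever $a_j^a\neq\emptyset$ it equals $a_j^b$, which forces the nesting $A_{-i}^a\subseteq A_{-i}^b$ of the realized action sets for every $i$. Applying the $\alpha$-weak DR-submodularity inequality $f_t(v_{i,m}\mid A_{-i}^a)\ge\alpha f_t(v_{i,m}\mid A_{-i}^b)$ pointwise on this coupling and taking expectations yields $\partial F_t/\partial\pi_{i,m}(\uppi^a)\ge\alpha\,\partial F_t/\partial\pi_{i,m}(\uppi^b)$, i.e.\ the claimed $\alpha$-weak continuous DR-submodularity.

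For Part 4 I would again argue per realization and then average, using Part 1 to identify each gradient coordinate with $\E[f_t(v_{i,m}\mid A_{-i})]$. In the $\alpha$-weakly DR-submodular case this is clean: writing $S=\{s_1,\dots,s_k\}$ with $s_\ell=v_{i_\ell,m_\ell}$ (the agents $i_\ell$ distinct by feasibility), I telescope $f_t(S\cup A)-f_t(A)=\sum_\ell f_t(s_\ell\mid A\cup\{s_1,\dots,s_{\ell-1}\})$, bound each marginal by $\tfrac1\alpha f_t(s_\ell\mid A_{-i_\ell})$ via DR-submodularity (the conditioning set contains $A_{-i_\ell}$), and invoke monotonicity $f_t(S)\le f_t(S\cup A)$; rearranging gives $\alpha(f_t(S)-f_t(A))\le\sum_{(i,m):v_{i,m}\in S}f_t(v_{i,m}\mid A_{-i})$, and taking expectations produces \eqref{equ:thm1.4.1}.

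The main obstacle is the $(\gamma,\beta)$-weakly submodular bound \eqref{equ:thm1.4.2}, where no monotone comparison of marginals is available, so the clean telescoping above breaks down. Here I would combine both ratios. The lower ratio $\gamma$ converts sums of marginals into value gaps, e.g.\ $\sum_{v\in S\setminus A}f_t(v\mid A)\ge\gamma\big(f_t(S\cup A)-f_t(A)\big)\ge\gamma\big(f_t(S)-f_t(A)\big)$ via monotonicity; the upper ratio $\beta$ is then needed to control the second-order discrepancies $f_t(v_{i,m}\mid A_{-i})-f_t(v_{i,m}\mid A)$ that arise because each partial derivative conditions on the per-coordinate set $A_{-i}$ rather than on the common set $A$. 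Carefully chaining these—two applications of the lower ratio to extract $f_t(S)$ from the marginal sum, together with the accumulated upper-ratio corrections weighted by the non-submodular slack $(1-\gamma)$—should produce the coefficient $\gamma^2$ in front of $f_t(S)$ and assemble the corrections into $\phi(\gamma,\beta)=\beta(1-\gamma)+\gamma^2$ in front of $F_t$, after which taking expectations over $A$ yields \eqref{equ:thm1.4.2}. I expect this final step—signing and aggregating the $\beta$-controlled corrections so that exactly $\phi(\gamma,\beta)$ emerges—to be the most delicate part of the argument.
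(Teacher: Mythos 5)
Your arguments for parts \textbf{1)}--\textbf{3)} and for inequality \eqref{equ:thm1.4.1} are correct and essentially identical to the paper's own proof: part \textbf{1)} via the affine dependence of the defining sum on $\pi_{i,m}$ (the paper phrases this as $\partial p(a_{i}|\uppi_{i})/\partial\pi_{i,m}\in\{+1,-1,0\}$), part \textbf{2)} via nonnegativity of marginals, part \textbf{3)} via a monotone coupling under which the action drawn from $\uppi^{a}_{j}$ is either $\emptyset$ or equals the action drawn from $\uppi^{b}_{j}$ (the paper implements this with two uniforms $X_{j},Y_{j}$; your single-uniform nested-interval version is equivalent, and disjointness of the $\V_{j}$ guarantees $v_{i,m}$ never lies in the larger conditioning set), and \eqref{equ:thm1.4.1} via telescoping, weak DR-submodularity, and monotonicity.

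The genuine gap is the $(\gamma,\beta)$ case of part \textbf{4)} --- exactly the step you flag as unresolved --- and the role you assign to $\beta$ there is the wrong way around. The paper's missing ingredient is a two-element exchange lemma that follows from the \emph{lower} ratio alone: for $B\subseteq\V$ and $a_{1},a_{2}\notin B$, applying $\gamma$-weak lower submodularity to $B\subseteq B\cup\{a_{1},a_{2}\}$ and telescoping the resulting gap gives
\begin{equation*}
f_{t}(a_{1}|B)\ \ge\ \gamma\, f_{t}\big(a_{1}|B\cup\{a_{2}\}\big)-(1-\gamma)\,f_{t}(a_{2}|B).
\end{equation*}
Instantiated with $a_{1}=v_{i,m}$, $a_{2}=a_{i}$, $B=A_{-i}$ (your notation), this is what converts conditioning on $A_{-i}$ into conditioning on $A$; i.e.\ it is the lower ratio, not the upper one, that controls your ``second-order discrepancies,'' at the price of a correction term $(1-\gamma)f_{t}(a_{i}|A_{-i})$. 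Summing over $v_{i,m}\in S$ yields $\sum_{(i,m):v_{i,m}\in S}f_{t}(v_{i,m}|A_{-i})\ge\gamma\sum_{(i,m):v_{i,m}\in S}f_{t}(v_{i,m}|A)-(1-\gamma)\sum_{(i,m):v_{i,m}\in S}f_{t}(a_{i}|A_{-i})$; a second application of the lower ratio plus monotonicity bounds the first sum below by $\gamma\big(f_{t}(S)-f_{t}(A)\big)$, which is where $\gamma^{2}$ comes from. Only now does $\beta$ enter, to bound the accumulated corrections: $\sum_{(i,m):v_{i,m}\in S}f_{t}(a_{i}|A_{-i})\le\sum_{i\in\N}f_{t}\big(a_{i}|A\setminus\{a_{i}\}\big)\le\beta f_{t}(A)$, using $|S\cap\V_{i}|\le1$, nonnegativity of marginals, and the upper ratio applied to the pair $\emptyset\subseteq A$. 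Taking expectations assembles $\gamma^{2}f_{t}(S)-\big(\gamma^{2}+(1-\gamma)\beta\big)F_{t}$, i.e.\ \eqref{equ:thm1.4.2}. Your plan as stated --- $\beta$ bounding $f_{t}(v_{i,m}|A_{-i})-f_{t}(v_{i,m}|A)$ directly, with both $\gamma$ applications spent on ``extracting $f_{t}(S)$'' --- cannot be completed: the upper ratio only provides \emph{upper} bounds on sums of leave-one-out marginals $f_{t}(v|T\setminus\{v\})$, which is the wrong direction for lower-bounding $f_{t}(v_{i,m}|A_{-i})$, and without the exchange lemma carrying its own factor of $\gamma$ you would end up with a coefficient $\gamma$ rather than $\gamma^{2}$ in front of $f_{t}(S)$.
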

\begin{remark}\label{remark:different}
Part $\textbf{3)}$ indicates that when $f_{t}$ is submodular, namely $\alpha=1$,  our policy-based continuous extension $F_{t}$
	belongs to the well-studied continuous DR-submodular functions. However, it is important to  highlight that these existing results for continuous DR-submodular maximization~\citep{hassani2020stochastic,hassani2017gradient,pedramfar2024unified,pedramfar2023a,wan2023bandit,zhang2022boosting} cannot be directly extended to our policy-based continuous extension $F_{t}$. This is because all of them heavily rely on the  inequality $\langle \mathbf{y} - \mathbf{x}, \nabla G(\mathbf{x}) \rangle\ge G(\mathbf{y} \vee \mathbf{x}) - G(\mathbf{x})$ where  $G$ is a monotone continuous DR-submodular function and $\vee$ denotes coordinate-wise maximum operation. Note that the previous inequality requires that the domain of $G$ must be closed under the maximum operation $\vee$. However, the domain $\prod_{i=1}^{n}\Delta_{\kappa_{i}}$ of our proposed $F_{t}$ does not meet this requirement.
\end{remark}
\section{Multi-Agent Policy Learning}
This section explores how to utilize the policy-based continuous extension introduced in \cref{sec:Continuous-Relaxation} to address our concerned MA-OC problem. Broadly speaking, a notable advantage of continuous-relaxation techniques is that they enable the use of gradient-based methods, such as gradient ascent~\citep{bertsekas2015convex,lan2020first,nesterov2013introductory} and Frank-Wolfe method~\citep{braun2022conditional,lacoste2016convergence}, to tackle discrete optimization problems. Moreover, as is well established in the literature~\citep{allen2018make,drori2020complexity,ghadimi2013stochastic,lacoste2016convergence}, under mild conditions, a wide range of gradient-based algorithms can converge to the stationary points of their target objectives.  Motivated by these findings, we next investigate the stationary points of our proposed policy-based continuous extension $F_{t}$.
\subsection{Stationary Points and Surrogate Functions}\label{sec:surrogate_function}
At first, we recall the definition of stationary points for maximization problem, that is,  
\begin{definition}\label{def:stationary} Given a differentiable function $G:\mathcal{K}\rightarrow\R$ and a domain $\C\subseteq\mathcal{K}$, a point $\x\in\C$ is called as a stationary point for the function $G$ over the domain $\C$ if and only if $\max_{\y\in\C}\langle\y-\x,\nabla G(\x)\rangle\le 0$.
\end{definition}
 Next, we examine the performance of the stationary points of our proposed policy-based continuous extension $F_{t}$ relative to the maximum value of problem~\eqref{equ_problem}. Specifically, we have that
\begin{theorem}[Proof in \cref{appendix_proof_thm0}]\label{thm2}
Given a set function $f_{t}:2^{\V}\rightarrow\R_{+}$, if $(\uppi^{s}_{1},\dots,\uppi^{s}_{n})$ is a stationary point  of its policy-based continuous extension $F_{t}$ over the domain $\prod_{i=1}^{n}\Delta_{\kappa_{i}}$ and $S^{*}$ denotes the optimal solution of the corresponding maximization problem~\eqref{equ_problem}, then the following inequalities hold: 

\textbf{1):} When $f_{t}$ is  monotone submodular with curvature $c\in[0,1]$, $F_{t}(\uppi^{s}_{1},\dots,\uppi^{s}_{n})\ge\Big(\frac{1}{1+c}\Big)f_{t}(S^{*})$;

\textbf{2):} When  $f_{t}$ is  monotone $\alpha$-weakly DR-submodular, $F_{t}(\uppi^{s}_{1},\dots,\uppi^{s}_{n})\ge\Big(\frac{\alpha^{2}}{1+\alpha^{2}}\Big)f_{t}(S^{*})$;

\textbf{3):} When  $f_{t}$ is  monotone $(\gamma,\beta)$-weakly submodular, $F_{t}(\uppi^{s}_{1},\dots,\uppi^{s}_{n})\ge\Big(\frac{\gamma^{2}}{\beta+\beta(1-\gamma)+\gamma^{2}}\Big)f_{t}(S^{*})$.
\end{theorem}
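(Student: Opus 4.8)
The plan is to evaluate the stationarity condition of \cref{def:stationary} along the single feasible direction pointing toward the optimal set, and then to sandwich the resulting gradient inner product between a lower bound from \cref{thm1} and an upper bound from a telescoping identity. Let $\uppi^{s}=(\uppi^{s}_{1},\dots,\uppi^{s}_{n})$ be the stationary point, write $A=\cup_{i}\{a_{i}\}$ for the random set with $a_{i}\sim\uppi^{s}_{i}$, and let $\mathbf{1}_{S^{*}}\in\prod_{i=1}^{n}\Delta_{\kappa_{i}}$ be the indicator of $S^{*}$, which is feasible since $S^{*}$ selects at most one action per agent. Stationarity applied to $\y=\mathbf{1}_{S^{*}}$ gives $\langle\mathbf{1}_{S^{*}}-\uppi^{s},\nabla F_{t}(\uppi^{s})\rangle\le0$, i.e.
\begin{equation*}
\sum_{(i,m):v_{i,m}\in S^{*}}\tfrac{\partial F_{t}}{\partial\pi_{i,m}}(\uppi^{s})\ \le\ \big\langle\uppi^{s},\nabla F_{t}(\uppi^{s})\big\rangle\ \triangleq\ X.
\end{equation*}
The left-hand side is precisely the quantity lower-bounded in Part~\textbf{4)} of \cref{thm1}, so I would substitute \eqref{equ:thm1.4.1} (respectively \eqref{equ:thm1.4.2}) with $S=S^{*}$ to convert this into a scalar inequality relating $f_{t}(S^{*})$, $F_{t}(\uppi^{s})$ and the single number $X$.

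The second ingredient is an upper bound $X\le\lambda F_{t}(\uppi^{s})$. Using Part~\textbf{1)} of \cref{thm1} I would write $X=\sum_{i}\E[f_{t}(a_{i}\mid\cup_{j\neq i}\{a_{j}\})]$ and compare it with the telescoping identity $F_{t}(\uppi^{s})=\sum_{i}\E[f_{t}(a_{i}\mid\cup_{j<i}\{a_{j}\})]$, obtained by expanding $f_{t}(\cup_{i}\{a_{i}\})$ under a fixed agent ordering. Since $\cup_{j<i}\{a_{j}\}\subseteq\cup_{j\neq i}\{a_{j}\}$, weak DR-submodularity gives termwise $f_{t}(a_{i}\mid\cup_{j<i}\{a_{j}\})\ge\alpha f_{t}(a_{i}\mid\cup_{j\neq i}\{a_{j}\})$, hence $F_{t}(\uppi^{s})\ge\alpha X$, i.e. $\lambda=1/\alpha$ for Part~\textbf{2)}. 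For Part~\textbf{3)} I would instead invoke $\beta$-weak submodularity from above with $S=\emptyset,\,T=A$, which reads $\sum_{i}f_{t}(a_{i}\mid A\setminus\{a_{i}\})\le\beta f_{t}(A)$ and yields $X\le\beta F_{t}(\uppi^{s})$. Feeding $\lambda=1/\alpha$ into $\alpha(f_{t}(S^{*})-F_{t})\le X$ gives $\alpha^{2}f_{t}(S^{*})\le(1+\alpha^{2})F_{t}$, which is Part~\textbf{2)}; feeding $\lambda=\beta$ into $\gamma^{2}f_{t}(S^{*})-(\beta(1-\gamma)+\gamma^{2})F_{t}\le X$ gives Part~\textbf{3)} after collecting terms. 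Both of these I expect to be routine.

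Part~\textbf{1)} is the step I expect to be the main obstacle. The generic inequality \eqref{equ:thm1.4.1} with $\alpha=1$ only delivers the factor $\tfrac{1}{2}=\tfrac{1}{1+c}\big|_{c=1}$, and it is genuinely lossy when $c<1$: for a modular $f_{t}$ one has $F_{t}=f_{t}(S^{*})$ at the stationary point, yet the chain above only certifies $F_{t}\ge\tfrac12 f_{t}(S^{*})$. To recover the curvature-dependent $\tfrac{1}{1+c}$, I would re-derive the first-order inequality from scratch while tracking curvature, aiming at the refined bound $f_{t}(S^{*})-F_{t}(\uppi^{s})\le c\,X$; combined with the already-established $X\le F_{t}(\uppi^{s})$ (the $\alpha=1$ instance of the telescoping step) this immediately yields $f_{t}(S^{*})\le(1+c)F_{t}(\uppi^{s})$. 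Concretely I would start from $f_{t}(S^{*})\le f_{t}(S^{*}\cup A)=f_{t}(A)+\sum_{o\in S^{*}\setminus A}f_{t}(o\mid\cdot)$, bound each marginal on the base $\cup_{j\neq i(o)}\{a_{j}\}$ by submodularity, and then, instead of discarding the self-marginals $f_{t}(a_{i}\mid A\setminus\{a_{i}\})$, discount them by the curvature factor $(1-c)$ via $f_{t}(v\mid S)\ge(1-c)f_{t}(v\mid\emptyset)$ together with the symmetry of discrete second differences.

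The delicate point, and the crux of the whole theorem, is that this curvature accounting cannot be performed agent-by-agent: the per-agent version of the refined inequality already fails for modular $f_{t}$, where it would demand that a positive singleton marginal be nonpositive. Instead the $(1-c)X$ saving must be produced only after summation over all agents, by exploiting the stationarity slack $\tfrac{\partial F_{t}}{\partial\pi_{i,m}}(\uppi^{s})\le\langle\uppi^{s}_{i},\nabla_{i}F_{t}\rangle$ on precisely the coordinates where $S^{*}$ and the support of $\uppi^{s}$ disagree, so that the gains on the agreeing coordinates (where the curvature-discounted self-term contributes $c\,f_{t}(a_{i}\mid\cdot)\ge0$) compensate the losses elsewhere. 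Getting this aggregate bookkeeping to close is where I anticipate the real work of the proof to lie.
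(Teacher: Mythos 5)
Your handling of parts \textbf{2)} and \textbf{3)} is correct and essentially the paper's own argument: lower-bound $\sum_{(i,m):v_{i,m}\in S^{*}}\frac{\partial F_{t}}{\partial\pi_{i,m}}(\uppi^{s})$ via part \textbf{4)} of \cref{thm1}, upper-bound $X\triangleq\langle\uppi^{s},\nabla F_{t}(\uppi^{s})\rangle$ by $\beta F_{t}(\uppi^{s})$ (your telescoping derivation of $X\le\frac{1}{\alpha}F_{t}(\uppi^{s})$ in the DR case is exactly the proof that $\alpha$-weak DR-submodularity implies $\frac{1}{\alpha}$-weak submodularity from above, which is the fact the paper invokes), and apply stationarity once.

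Part \textbf{1)} is where your proposal has a genuine gap, and your diagnosis of the difficulty is a misdiagnosis. Your route tries to harvest the curvature saving from the self-marginals $f_{t}(a_{i}|A\setminus\{a_{i}\})$ of actions in $S^{*}\cap A$, and, as you yourself observe, this collapses when the support of $\uppi^{s}$ is disjoint from $S^{*}$ (then $S^{*}\cap A=\emptyset$ and nothing is saved). The repair you gesture at---per-coordinate "stationarity slack" and aggregate bookkeeping---is never carried out, so the proof is incomplete precisely at the step that distinguishes $\frac{1}{1+c}$ from $\frac12$. Moreover, your intermediate target $f_{t}(S^{*})-F_{t}(\uppi^{s})\le c\,X$ is strictly stronger than what the theorem needs, and the natural set-level estimates cannot deliver it: they produce $c\,\E\big[f_{t}(A)-f_{t}(A\cap S^{*})\big]$, which is not bounded by $c\,X$ in general (take $n\ge2$ agents whose actions all cover the same single ground element and a stationary point supported off $S^{*}$; then $X=0$ while $\E\big[f_{t}(A)-f_{t}(A\cap S^{*})\big]=1$).

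The missing idea is that no coordinate-wise accounting is needed at all: the curvature saving comes from the \emph{gain} side, pointwise in each realization $A$ of the random action set, with stationarity used only once at the end. Writing
\begin{equation*}
\langle\one_{S^{*}}-\uppi^{s},\nabla F_{t}(\uppi^{s})\rangle=\E\Big[\sum_{v\in S^{*}\setminus A}f_{t}\big(v\big|\cup_{j\neq i(v)}\{a_{j}\}\big)-\sum_{a_{i}\in A\setminus S^{*}}f_{t}\big(a_{i}\big|\cup_{j\neq i}\{a_{j}\}\big)\Big],
\end{equation*}
where $i(v)$ is the agent owning $v$, submodular telescoping gives, for each fixed $A$, that the gain sum is at least $f_{t}(A\cup S^{*})-f_{t}(A)$ and the loss sum is at most $f_{t}(A)-f_{t}(A\cap S^{*})$; the set-level curvature inequality $f_{t}(A\cup S^{*})-f_{t}(S^{*})\ge(1-c)\big(f_{t}(A)-f_{t}(A\cap S^{*})\big)$ (each element of $A\setminus S^{*}$ has marginal at least $(1-c)$ times its singleton value, hence at least $(1-c)$ times its marginal on the corresponding base inside $A$) then yields, realization by realization,
\begin{equation*}
\text{(gain)}-\text{(loss)}\ \ge\ f_{t}(S^{*})-(1+c)f_{t}(A)+c\,f_{t}(A\cap S^{*})\ \ge\ f_{t}(S^{*})-(1+c)f_{t}(A).
\end{equation*}
Taking expectations and using $\langle\one_{S^{*}}-\uppi^{s},\nabla F_{t}(\uppi^{s})\rangle\le0$ gives $f_{t}(S^{*})\le(1+c)F_{t}(\uppi^{s})$ directly. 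This works regardless of whether $S^{*}$ and the support of $\uppi^{s}$ intersect, which is exactly the case your approach cannot handle.
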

\begin{remark}
It is worth noting that, in a certain sense, the approximation guarantees established in Theorem~\ref{thm2} is \textbf{tight}. In \cref{appendix:special_case}, we will present a simple instance of a submodular function $f_{t}$, i.e., $c=\alpha=\gamma=\beta=1$, whose policy-based continuous extension $F_{t}$ can attain a $\frac{1}{2}$-approximation guarantee at a stationary point that is also a local maximum.
\end{remark}
 Theorem~\ref{thm2} indicates that when the original set function $f_{t}$ is monotone  submodular with curvature $c$, $\alpha$-weakly DR-submodular or $(\gamma,\beta)$-weakly submodular, the direct application of gradient-based methods on our proposed policy-based continuous extension $F_{t}$ can ensure an approximation ratio of $(\frac{1}{1+c})$, $(\frac{\alpha^{2}}{1+\alpha^{2}})$ or $(\frac{\gamma^{2}}{\beta+\beta(1-\gamma)+\gamma^{2}})$ to the problem~\eqref{equ_problem}, respectively. However, as shown in \citep{sviridenko2017optimal,feldman2021guess}, the optimal approximation ratio for maximizing a monotone submodular function with curvature $c$ is $(1-\frac{c}{e})$, which significantly  exceeds the previous $(\frac{1}{1+c})$-approximation provided by stationary points of $F_{t}$. Similarly, \citep{harshaw2019submodular} pointed out that the optimal approximation for $\alpha$-weakly DR-submodular maximization is $(1-e^{-\alpha})$, which is also greater than the  $(\frac{\alpha^{2}}{1+\alpha^{2}})$-approximation established in \cref{thm2}. Motivated by these findings, we naturally
 wonder \emph{whether it is possible to enhance the approximation guarantees of the stationary points of our proposed policy-based continuous extension.}

	 Previously, the \emph{multi-linear extension} of submodular  functions also encountered a similar issue, namely, the stationary points of the \emph{multi-linear extension} only can guarantee a \emph{sub-optimal} $\frac{1}{2}$-approximation~\citep{filmus2012power,hassani2017gradient}. To overcome this drawback, prior studies~\citep{filmus2012tight,filmus2014monotone,zhang2022boosting,zhang2025nearoptimal} constructed a novel surrogate function for the \emph{multi-linear extension} and proved that the stationary points of this surrogate function can achieve the optimal $(1-\frac{1}{e})$-approximation. Inspired by this idea, we also hope to develop a surrogate function for our proposed policy-based continuous extension $F_{t}$ such that it can improve the approximation ratios of the stationary points of $F_{t}$. Specifically, we have that
	 \begin{theorem}[Proof in \cref{appendix_proof_thm0.5}]\label{thm4}Similar to \citep{zhang2022boosting,zhang2024boosting,zhang2025nearoptimal}, for any given policy-based continuous extension $F_{t}$ introduced in \cref{def_extension}, we consider a surrogate function $F_{t}^{s}:\prod_{i=1}^{n}\Delta_{K_{i}}\rightarrow\R_{+}$ whose gradient at each point $\x\in\prod_{i=1}^{n}\Delta_{\kappa_{i}}$ is a weighted average of the gradient $\nabla F_{t}(z*\x)$, i.e., $\nabla F_{t}^{s}(\x)=\int_{0}^{1} w(z)\nabla F_{t}(z*\boldsymbol{x})\mathrm{d}z$ where $w(z)$ is a positive weight function over $[0,1]$. After elaborately designing the weight function $w(z)$, we can show that:
	 	
		\textbf{1):} When $f_{t}$ is $\alpha$-weakly DR-submodular and $w(z)=e^{\alpha(z-1)}$, for any stationary point $(\uppi^{s}_{1},\dots,\uppi^{s}_{n})$ of the surrogate objective $F_{t}^{s}$ over $\prod_{i=1}^{n}\Delta_{\kappa_{i}}$, then  we have $F_{t}(\uppi^{s}_{1},\dots,\uppi^{s}_{n})\ge\big(1-e^{-\alpha}\big)f_{t}(S^{*})$;
	 	
\textbf{2):} When  $f_{t}$ is  $(\gamma,\beta)$-weakly submodular and $w(z)=e^{\phi(\gamma,\beta)(z-1)}$ where $\phi(\gamma,\beta)=\beta(1-\gamma)+\gamma^2$, for any stationary point $(\uppi^{s}_{1},\dots,\uppi^{s}_{n})$ of the surrogate objective $F_{t}^{s}$ over the domain $\prod_{i=1}^{n}\Delta_{\kappa_{i}}$, then we can show that $F_{t}(\uppi^{s}_{1},\dots,\uppi^{s}_{n})\ge\big(\frac{\gamma^{2}(1-e^{-\phi(\gamma,\beta)})}{\phi(\gamma,\beta)}\big)f_{t}(S^{*})$;

	 \textbf{3):} When $f_{t}$ is submodular with curvature $c$ and $w(z)=e^{z-1}$, for any stationary point $(\uppi^{s}_{1},\dots,\uppi^{s}_{n})$ of the objective $(F_{t}^{s}+\frac{G_{t}}{e})$ over $\prod_{i=1}^{n}\Delta_{\kappa_{i}}$, then we have $F_{t}(\uppi^{s}_{1},\dots,\uppi^{s}_{n})\ge\big(1-\frac{c}{e}\big)f_{t}(S^{*})$, where $S^{*}$ is the optimal subset of problem~\eqref{equ_problem} and $G_{t}(\uppi_{1},\dots,\uppi_{n})\triangleq \sum_{i=1}^{n}\sum_{m=1}^{\kappa_{i}}\big(f_{t}\big(v_{i,m}|\V-\{v_{i,m}\}\big)\big)\pi_{i,m}$.
	 \end{theorem}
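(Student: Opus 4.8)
The plan is to exploit the stationarity of $(\uppi^{s}_{1},\dots,\uppi^{s}_{n})$ against a single well-chosen feasible comparator and to collapse the weighted gradient integral defining $F_{t}^{s}$ by means of an ODE identity built into the weight $w$. Identifying each tuple of policies with its concatenation in $\R^{\sum_{i}\kappa_{i}}$, write $\x^{*}$ for the indicator vector of the optimal set $S^{*}$; since $|S^{*}\cap\V_{i}|\le 1$ this $\x^{*}$ is a genuine point of $\prod_{i=1}^{n}\Delta_{\kappa_{i}}$, so \cref{def:stationary} applies with $\y=\x^{*}$ and gives $\langle\x^{*}-\uppi^{s},\nabla F_{t}^{s}(\uppi^{s})\rangle\le 0$. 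The central object is the restriction of $F_{t}$ to the segment from the origin to $\uppi^{s}$: set $h(z)\triangleq F_{t}(z*\uppi^{s})$, so that $h(0)=f_{t}(\emptyset)=0$, $h(1)=F_{t}(\uppi^{s})$, and $h'(z)=\langle\uppi^{s},\nabla F_{t}(z*\uppi^{s})\rangle$ by the chain rule. Because $\nabla F_{t}^{s}(\x)=\int_{0}^{1}w(z)\nabla F_{t}(z*\x)\mathrm{d}z$, pairing with $\uppi^{s}$ gives $\langle\uppi^{s},\nabla F_{t}^{s}(\uppi^{s})\rangle=\int_{0}^{1}w(z)h'(z)\mathrm{d}z$, while pairing with $\x^{*}$ extracts exactly the coordinates $\{(i,m):v_{i,m}\in S^{*}\}$ of $\nabla F_{t}(z*\uppi^{s})$.

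For parts \textbf{1)} and \textbf{2)} I would feed the two inequalities in the fourth part of \cref{thm1}, evaluated along the ray at $z*\uppi^{s}$, into this coordinate sum (noting $z*\uppi^{s}$ stays feasible for $z\in[0,1]$). In the $\alpha$-weakly DR case this yields $\langle\x^{*},\nabla F_{t}^{s}(\uppi^{s})\rangle\ge\int_{0}^{1}w(z)\,\alpha\big(f_{t}(S^{*})-h(z)\big)\mathrm{d}z$, and in the $(\gamma,\beta)$-weakly case the same with $\alpha(f_{t}(S^{*})-h(z))$ replaced by $\gamma^{2}f_{t}(S^{*})-\phi(\gamma,\beta)h(z)$. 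The choice $w(z)=e^{\alpha(z-1)}$ (resp. $e^{\phi(\gamma,\beta)(z-1)}$) is engineered so that $w'=\alpha w$ (resp. $w'=\phi(\gamma,\beta)w$); hence by the product rule $\int_{0}^{1}w h'\,\mathrm{d}z+\alpha\int_{0}^{1}w h\,\mathrm{d}z=[w h]_{0}^{1}=w(1)h(1)-w(0)h(0)=F_{t}(\uppi^{s})$, using $h(0)=0$ and $w(1)=1$. Substituting the lower bound together with $\int_{0}^{1}w\,\mathrm{d}z=\frac{1-e^{-\alpha}}{\alpha}$ into $\langle\x^{*},\nabla F_{t}^{s}\rangle\le\langle\uppi^{s},\nabla F_{t}^{s}\rangle$ and carrying the $\alpha\int_{0}^{1}wh$ term to the right so that it recombines with $\int_{0}^{1}wh'$ into $F_{t}(\uppi^{s})$ leaves $(1-e^{-\alpha})f_{t}(S^{*})\le F_{t}(\uppi^{s})$; the $(\gamma,\beta)$ computation is identical and produces the factor $\frac{\gamma^{2}(1-e^{-\phi(\gamma,\beta)})}{\phi(\gamma,\beta)}$.

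The curvature case \textbf{3)} is the step I expect to be the main obstacle, because the extra linear term $G_{t}/e$ must be accounted for exactly rather than discarded (bounding $G_{t}\ge 0$ only yields $(1-\frac{c}{e})f_{t}(S^{*})\le F_{t}(\uppi^{s})+\frac{1}{e}G_{t}(\uppi^{s})$, which is too weak). The clean route I would take is to split the submodular $f_{t}$ into a modular and a residual part: set $\ell(S)\triangleq\sum_{v\in S}f_{t}(v\,|\,\V-\{v\})$ and $g\triangleq f_{t}-\ell$. A short check shows $g$ is monotone submodular, that the policy extension of $\ell$ is exactly $G_{t}$, and hence that the policy extension $F^{g}_{t}$ of $g$ equals $F_{t}-G_{t}$ (the extension map is linear in $f_{t}$). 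Since $\nabla G_{t}$ is constant, one verifies $\nabla\big(F_{t}^{s}+\tfrac{1}{e}G_{t}\big)=\nabla(F^{g})^{s}+\nabla G_{t}$, where $(F^{g})^{s}$ is the part-\textbf{1)} surrogate (weight $e^{z-1}$) of $g$; that is, the objective of part \textbf{3)} coincides up to a constant with the boosted surrogate of the submodular $g$ plus the full modular term $G_{t}$.

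With this reduction the argument of part \textbf{1)} (with $\alpha=1$) applied to $g$ gives $\langle\x^{*}-\uppi^{s},\nabla(F^{g})^{s}(\uppi^{s})\rangle\ge g(S^{*})(1-\tfrac{1}{e})-F^{g}_{t}(\uppi^{s})$, while linearity of $G_{t}$ gives $\langle\x^{*}-\uppi^{s},\nabla G_{t}\rangle=\ell(S^{*})-G_{t}(\uppi^{s})$. Summing and invoking stationarity of the combined objective yields $g(S^{*})(1-\tfrac{1}{e})+\ell(S^{*})\le F^{g}_{t}(\uppi^{s})+G_{t}(\uppi^{s})=F_{t}(\uppi^{s})$. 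It then remains to lower-bound the left side: writing $g(S^{*})=f_{t}(S^{*})-\ell(S^{*})$ turns it into $f_{t}(S^{*})(1-\tfrac{1}{e})+\tfrac{1}{e}\ell(S^{*})$, and the curvature definition with subadditivity gives $\ell(S^{*})\ge(1-c)\sum_{v\in S^{*}}f_{t}(\{v\})\ge(1-c)f_{t}(S^{*})$, whence $f_{t}(S^{*})(1-\tfrac{c}{e})\le F_{t}(\uppi^{s})$. The only delicate points I would double-check are that $z*\uppi^{s}$ is feasible for all $z\in[0,1]$ so the fourth part of \cref{thm1} applies along the whole ray, and that $F_{t}^{s}$ (and $(F^{g})^{s}$) defined through the prescribed gradient field is a well-defined $C^{1}$ function, which is what legitimizes the product-rule telescoping.
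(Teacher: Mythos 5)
Your proposal is correct and follows essentially the same route as the paper's proof: parts \textbf{1)} and \textbf{2)} via the integration-by-parts identity induced by $w'=\alpha w$ (resp.\ $w'=\phi w$) combined with part \textbf{4)} of \cref{thm1} along the ray $z*\uppi^{s}$, and part \textbf{3)} via the decomposition $f_{t}=(f_{t}-g_{t})+g_{t}$ into a monotone submodular residual and the modular part whose extension is $G_{t}$, followed by the curvature bound $g_{t}(S^{*})\ge(1-c)f_{t}(S^{*})$ (which you re-derive from subadditivity where the paper cites Lemma 2.1 of \citep{sviridenko2017optimal}).
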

	 \begin{remark} 
	 	Note that part $\textbf{3)}$ of \cref{thm4} considers the objective $(F_{t}^{s}+\frac{G_{t}}{e})$ instead of the surrogate $F_{t}^{s}$ alone. Here, $G_{t}$ is a special linear linear function related to the minimum margin gains of the corresponding submodular objective $f_{t}$, that is, $f_{t}\big(a|\V-\{a\}\big),\forall a\in\V$. Furthermore, it is important to emphasize that the proof of Theorem~\ref{thm4} is not a parallel copy of the  articles~\citep{zhang2022boosting,zhang2025nearoptimal} regarding the multi-linear extension. This is because the works~\citep{zhang2022boosting,zhang2025nearoptimal} also utilized the same inequality in \cref{remark:different},which requires our domain to be closed under the operation $\vee$. However, the domain of our $F_{t}$ does not satisfy this condition. \textbf{As a result, new techniques are required to verify \cref{thm4}}.
	 \end{remark}

\subsection{Multi-Agent Policy Learning via Surrogate Functions }
\begin{algorithm}[t]
	\caption{Multi-Agent Surrogate Policy Learning(\texttt{MA-SPL})}\label{alg:SPL}
	\KwIn{Time horizon $T$, weight function $w(z)$, action set $\V_{i}\triangleq\{v_{i,1},\dots,v_{i,\kappa_{i}}\}$,
		weight matrix $\W\triangleq[w_{ij}]_{n\times n}$, parameters $(\alpha,\gamma,\beta)$ and step size $\eta_{t},\forall t\in[T]$}
	\tcp{\textcolor{teal}{Policy Initialization (Lines 1-2)}}
	Initialize a policy vector $(\uppi_{i,1}(1),\dots,\uppi_{i,n}(1))$ for each agent $i\in\N$\; 
	Set the policy $\uppi_{i,i}(1)\triangleq\frac{1}{\kappa_{i}}\one_{\kappa_{i}}$ and $\uppi_{i,j}(1)\triangleq\boldsymbol{0}_{\kappa_{j}}$ when $i\neq j$ for any $i\in\N$\;
	\For{$t=1,\dots,T$}{
		\For{each agent $i\in\N$}{
			\tcp{\textcolor{teal}{Actions Sampling and Information Exchange (Lines 5-8)}}
			Compute the normalized policy $\p_{i}(t)\triangleq\frac{\uppi_{i,i}(t)}{\|\uppi_{i,i}(t)\|_{1}}$\;
			Utilize the normalized policy $\p_{i}(t)$ to sample an action $a_{i}(t)$ from $\V_{i}$\;
			Agent $i$ executes the sampled action  $a_{i}(t)$\;
			Exchange policy vector $\big(\uppi_{i,1}(t),\dots,\uppi_{i,n}(t)\big)$ with the neighboring agent $j\in\mathcal{N}_{i}$\;
			\tcp{\textcolor{teal}{Surrogate Gradient Estimation (Lines 9-16)}}
			Generate a random number $z_{i}(t)$ from r.v. $\mathcal{Z}$ where $\text{Pr}(\mathcal{Z}\le z)\triangleq\frac{\int_{0}^{z}w(a)\mathrm{d}a}{\int_{0}^{1}w(a)\mathrm{d}a},\forall z\in[0,1]$\;
			\For{$j=1,\dots,n$}{
				Utilize the weighted policy $z_{i}(t)*\uppi_{i,j}$ to sample an action $\widetilde{a}_{j}(t)\in\V_{i}\cup\{\emptyset\}$;}
			Compute the random subset of actions $S_{i}(t):=\cup_{j\neq i,j\in\N}\{\widetilde{a}_{j}(t)\}$\;
			Estimate the derivatives of the surrogate function based on \cref{thm1} and \ref{thm4}, i.e.,
			\begin{equation*}
				\widehat{\frac{\partial F^{s}_{t}}{\partial \pi_{i,m}}}(\uppi_{i,1}(t),\dots,\uppi_{i,n}(t)\big)\triangleq d_{i,m}(t)\triangleq(\int_{z=0}^{1}w(z)\mathrm{d}z)f_{t}\big(v_{i,m}\big|S_{i}(t)\big),\forall m\in[\kappa_{i}];
			\end{equation*}\vspace{-0.8em}\\
			\If{$f_{t}$ is submodular}{
				Update $d_{i,m}(t)=\big(d_{i,m}(t)+e^{-1}f_{t}(v_{i,m}|\V-\{v_{i,m}\})\big),\forall m\in[\kappa_{i}]$\;}
			Aggregate the surrogate gradient estimations $\big(d_{i,1}(t),\dots,d_{i,\kappa_{i}}(t)\big)$ as vector $\mathbf{d}_{i}(t)$\;
			\tcp{\textcolor{teal}{Policy Update (Lines 17-19)}}
			Update $\uppi_{i,j}(t+1)\triangleq\sum_{k\in\mathcal{N}_{i}\cup\{i\}}w_{ik}\uppi_{k,j}(t)$ for any $j\neq i$ and $j\in\N$\;
			Compute $\y_{i,i}(t+1)\triangleq\sum_{j\in\mathcal{N}_{i}\cup\{i\}}w_{ij}\uppi_{j,i}(t)+\eta_{t}\mathbf{d}_{i}(t)$\;
			Update $\uppi_{i,i}(t+1)\triangleq\mathop{\arg\min}_{\mathbf{b}\in\Delta_{\kappa_{i}}}\|\mathbf{b}-\y_{i,i}(t+1)\|_{2}$;}}
\end{algorithm}

\cref{thm4} suggests that, when the original set function $f_{t}$ is monotone submodular with curvature $c$, $\alpha$-weakly DR-submodular or  $(\gamma,\beta)$-weakly submodular,  the direct application of gradient-based
methods targeting stationary points to the surrogate function $F^{s}_{t}$ or its variant $(F_{t}^{s}+\frac{G_{t}}{e})$ can achieve a tight approximation ratio of $(1-\frac{c}{e})$, $(1-e^{-\alpha})$ or $\big(\frac{\gamma^{2}(1-e^{-(\beta(1-\gamma)+\gamma^2)})}{\beta(1-\gamma)+\gamma^2}\big)$ to the set function maximization problem~\eqref{equ_problem}, respectively. Furthermore, recent study~\citep{zhang2025nearoptimal} developed an effective online algorithm for the multi-agent \emph{submodular} coordination problem based on the well-studied consensus technique~\citep{nedic2009distributed,shahrampour2017distributed,yuan2024multi,yuan2016convergence} and the surrogate functions of \emph{multi-linear extension}~\citep{zhang2022boosting,zhang2025nearoptimal}. Thus, in order to address the unexplored weakly submodular scenarios and simultaneously achieve the optimal $(1-\frac{c}{e})$-approximation for submodular settings, we naturally consider replacing the surrogate functions of \emph{multi-linear extension} in the algorithms of \citep{zhang2025nearoptimal} with those of our proposed policy-based continuous extension $F_{t}$. Motivated by this idea, we then present a general online algorithm named \texttt{MA-SPL} for MA-OC problem, with details presented in \cref{alg:SPL}.

In \cref{alg:SPL}, at every time step $t\in[T]$, each agent $i \in\N$ will maintain a local policy vector $\big(\uppi_{i,1}(t),\dots,\uppi_{i,n}(t)\big)$. Here, $\uppi_{i,i}(t)$ represents the policy being executed by agent $i$, while $\uppi_{i,j}(t),j\neq i$ reflects agent $i$'s current estimate of the policy $\uppi_{j,j}(t)$ being taken by other agent $j$. After that, each agent $i$ selects an action $a_{i}(t)$ from $\V_{i}$ based on the normalized policy $\p_{i}(t)\triangleq\frac{\uppi_{i,i}(t)}{\|\uppi_{i,i}(t)\|_{1}}$ and shares its local policy vector with the neighboring agent $j\in\N_{i}$, where $\N_{i}$ denotes the neighbors of agent $i$. Then, according to the results of \cref{thm1} and \cref{thm4}, each agent $i$ estimates the first-order partial derivatives of our proposed surrogate function $F_{t}^{s}$ at every coordinate $\pi_{i,m}$. Specifically,  agent $i$ initially samples a random number $z_{i}(t)$ from the random variable $\mathcal{Z}$ with distribution $\text{Pr}(\mathcal{Z}\le z)\triangleq\frac{\int_{0}^{z}w(a)\mathrm{d}a}{\int_{0}^{1}w(a)\mathrm{d}a},\forall z\in[0,1]$ and then approximates each $\frac{\partial F^{s}_{t}}{\partial \pi_{i,m}}(\uppi_{i,1}(t),\dots,\uppi_{i,n}(t)\big)$ by $d_{i,m}(t)\triangleq(\int_{z=0}^{1}w(z)\mathrm{d}z)f_{t}\big(v_{i,m}\big|S_{i}(t)\big)$ where $S_{i}(t)$ is a random set generated from the weighted policy vector $z_{i}(t)*\big(\uppi_{i,1}(t),\dots,\uppi_{i,n}(t)\big)$. Furthermore, when the set objective function $f_{t}$ is \emph{submodular}, \cref{alg:SPL} will further adjust each surrogate gradient estimation $d_{i,m}(t)$ with the minimum marginal contribution $f_{t}(v_{i,m}|\V-\{v_{i,m}\}),\forall m\in[\kappa_{i}]$ (See Lines 14-15). Finally, each agent $i$ updates its policy $\uppi_{i,i}(t)$ throughout a projected ascent along the direction $\mathbf{d}_{i}(t)\triangleq\big(d_{i,1}(t),\dots,d_{i,\kappa_{i}}(t)\big)$.

In sharp contrast with the previous MA-OSMA and MA-OSEA algorithms in \citep{zhang2025nearoptimal}, the key innovation of \cref{alg:SPL} lies in the use of our proposed policy-based continuous extension and its surrogate functions to update the policy vector in Lines 9-19, rather than relying on the well-studied \emph{multi-linear extension}~\citep{calinescu2011maximizing}. Moreover, our proposed policy-based continuous extension naturally aligns with the actions sampling process in Lines 5-7, ensuring that, for any set function $f_{t}$, the function value at the executed actions $\cup_{i\in\N}\{a_{i}(t)\}$ is at least as large as the expected function value  $F_{t}(\uppi_{1,1},\dots,\uppi_{n,n})$, namely, $\E\left(f_{t}(\cup_{i\in\N}\{a_{i}(t)\})\right)\ge F_{t}(\uppi_{1,1},\dots,\uppi_{n,n})$. This is a significant advantage over the \emph{multi-linear extension}, as the latter generally requires the \emph{submodular} assumption to guarantee the losslessness of the action sampling of Lines 5-7 (See Lemma 13 in \citep{zhang2025nearoptimal}).


Next, we provide the theoretical analysis for the proposed \cref{alg:SPL}. Before that, we introduce some standard assumptions about the communication graph $G(\N,\mathcal{E})$ and the weight matrix $\mathbf{W}$, i.e.,

\begin{assumption}\label{ass:1}
The graph $G(\N,\mathcal{E})$ is connected. Furthermore, the weight matrix $\mathbf{W}\triangleq[w_{ij}]_{n\times n}\in\R_{+}^{n\times n}$ is symmetric and doubly 
	stochastic, namely, $\W^{T}=\W$ and $\W\one_{n}=\one_{n}$. That is to say, $\tau<1$ where $\tau=\max(|\lambda_{2}(\W)|,|\lambda_{n}(\W)|)$ is the second largest magnitude of the eigenvalues of $\W$ and  $\lambda_{i}(\W)$ denotes the $i$-th largest eigenvalue of matrix $\W$.
\end{assumption}
With this \cref{ass:1}, we then can get the following convergence results for our \cref{alg:SPL}, i.e.,
\begin{theorem}[Proof  in \cref{appendix_proof_thm1}]\label{thm:result1}
	Under \cref{ass:1}, when each set objective function $f_{t}$ is monotone submodular with curvature $c$, $\alpha$-weakly DR-submodular or  $(\gamma,\beta)$-weakly submodular, if we set the weight function $w(z)$ according to \cref{thm4} and choose $\eta_{t}=\mathcal{O}\big(\sqrt{\frac{(1-\tau)P_{T}}{T}}\big)$ where $P_{T}\triangleq\sum_{t=2}^{T}|\mathcal{A}_{t}^{*}\triangle\mathcal{A}_{t-1}^{*}|$ is the deviation of maximizer sequence and the symbol $\triangle$ denotes the symmetric difference, namely, $S\triangle T=(S\setminus T)\cup(T\setminus S)$,  then our proposed \cref{alg:SPL} can achieve a dynamic $\rho$-regret  bound of $\mathcal{O}\left(\sqrt{\frac{P_{T}T}{1-\tau}}\right)$, that is, $\E\left(R_{\rho}^{*}(T)\right)\le\mathcal{O}\left(\sqrt{\frac{P_{T}T}{1-\tau}}\right)$, where $\rho=(1-\frac{c}{e})$, $\rho=(1-e^{-\alpha})$ or $\rho=\big(\frac{\gamma^{2}(1-e^{-(\beta(1-\gamma)+\gamma^2)})}{\beta(1-\gamma)+\gamma^2}\big)$, respectively.
\end{theorem}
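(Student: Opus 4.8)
The plan is to reduce the dynamic $\rho$-regret to a dynamic-regret bound for a consensus-based online projected gradient ascent run on the surrogate objectives $F_t^s$, and then invoke the structural inequalities of \cref{thm1} and \cref{thm4}. The starting point is a pointwise ``lifting'' inequality that upgrades the stationary-point guarantees of \cref{thm4} into a gradient-gap form valid at \emph{every} iterate. Concretely, I would apply \cref{thm1} part~\textbf{4)} along the ray $z*\uppi$, multiply by the weight $w(z)$ prescribed in \cref{thm4}, and integrate over $z\in[0,1]$; using $\nabla F_t^s(\uppi)=\int_0^1 w(z)\nabla F_t(z*\uppi)\,\mathrm{d}z$ together with $\frac{\mathrm{d}}{\mathrm{d}z}F_t(z*\uppi)=\langle\uppi,\nabla F_t(z*\uppi)\rangle$ and $F_t(\mathbf 0)=f_t(\emptyset)=0$, an integration by parts yields, writing $\uppi$ for the stacked global policy $(\uppi_{1,1},\dots,\uppi_{n,n})$,
\[
\rho\,f_t(S^*)-F_t(\uppi)\le\big\langle \mathbf 1_{S^*}-\uppi,\;\nabla F_t^s(\uppi)\big\rangle,
\]
where $\mathbf 1_{S^*}$ is the indicator vector of the optimal set and $\rho$ is the relevant approximation ratio (in the submodular case the linear term $G_t/e$ supplies the curvature factor $1-\tfrac{c}{e}$). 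Summing over $t$ converts the $\rho$-regret into the accumulated Frank--Wolfe gap $\sum_t\langle \mathbf 1_{S^*_t}-\uppi(t),\nabla F_t^s(\uppi(t))\rangle$ against the time-varying comparator $\mathbf 1_{S^*_t}$.

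Second, I would control this gap with a consensus-based online ascent analysis. The update in Lines~17--19 is exactly decentralized projected ascent: each coordinate $j$ is diffused by the doubly stochastic $\W$, and the owning agent $i$ adds the stochastic surrogate gradient $\mathbf{d}_{i}(t)$. Because $\tau=\max(|\lambda_2(\W)|,|\lambda_n(\W)|)<1$ under \cref{ass:1}, the disagreement between each local estimate $\uppi_{i,j}(t)$ and the network average $\bar\uppi_j(t)$ contracts geometrically and accumulates as $\mathcal O\!\big(\tfrac{\eta_t}{1-\tau}\big)$ times a uniform bound on $\|\mathbf{d}_{i}(t)\|$; this is the standard consensus-tracking estimate. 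In parallel, \cref{thm1} part~\textbf{1)} together with the sampling rule for $z_i(t)$ shows the estimator is conditionally unbiased, $\E[\mathbf{d}_{i}(t)\mid\mathcal{F}_{t}]=\nabla_{\uppi_{i,i}}F_t^s$ evaluated at agent $i$'s local vector, with bounded second moment since each marginal $f_t(v_{i,m}\mid S)$ is bounded. Applying the textbook dynamic-regret inequality for projected ascent with comparator path length $P_T=\sum_{t\ge2}|\mathcal A^*_t\triangle\mathcal A^*_{t-1}|$ gives, per agent, a bound of order $\tfrac{1+P_T}{\eta_t}+\eta_t\sum_t\|\mathbf{d}_{i}(t)\|^2$ plus consensus cross terms.

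Third, I would assemble the pieces: take expectations to annihilate the gradient noise via unbiasedness, bound the discrepancy between the surrogate gradient at the global policy $\uppi(t)$ and at each agent's local estimate by the consensus error from the previous step (using smoothness of $F_t^s$), and sum over agents so the per-agent gaps reconstruct $\sum_t\langle\mathbf 1_{S^*_t}-\uppi(t),\nabla F_t^s(\uppi(t))\rangle$. Choosing $\eta_t=\mathcal O\!\big(\sqrt{(1-\tau)P_T/T}\big)$ balances $\tfrac{1+P_T}{\eta_t}$ against $\eta_t T$ and yields the claimed $\mathcal O\!\big(\sqrt{P_T T/(1-\tau)}\big)$ rate. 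Finally, to return from $F_t(\uppi(t))$ to the realized reward, I would use monotonicity (the normalized sampling policy $\p_i(t)$ dominates $\uppi_{i,i}(t)$ coordinatewise) and the lossless identity $\E\big(f_t(\cup_i\{a_i(t)\})\big)=F_t(\p_1(t),\dots,\p_n(t))\ge F_t(\uppi(t))$, which holds for \emph{any} $f_t$ under our policy-based extension.

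The main obstacle I anticipate is coupling the three error sources at once: the surrogate gradients are evaluated at each agent's drifting local estimate rather than at the shared global policy, so the lifting inequality and the consensus bound must be reconciled without forfeiting the factor $\rho$. Forcing the consensus-error terms to enter only at order $\eta_t/(1-\tau)$ — hence to be absorbed into the same $\sqrt{P_T T/(1-\tau)}$ budget instead of degrading it — is the delicate quantitative core of the argument; the submodular case adds the further bookkeeping of carrying the $G_t/e$ correction consistently through both the lifting inequality and the gradient estimator.
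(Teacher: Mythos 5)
Your proposal is correct and follows essentially the same route as the paper's proof in \cref{appendix_proof_thm1}: the integration-by-parts ``lifting'' of \cref{thm1} part \textbf{4)} into a Frank--Wolfe-gap inequality is exactly the paper's \cref{thm:C1}/\cref{thm:C4}, and your consensus-contraction, unbiased-surrogate-gradient, and path-length telescoping steps correspond to the paper's \cref{lemma:D6}, \cref{lemma:D7}, and \cref{lemma:D8}, finishing with the same monotonicity-plus-lossless-rounding conversion and step-size balancing as \cref{lemma:D9}. The only cosmetic difference is that you anchor the lifting inequality at the stacked diagonal policy $(\uppi_{1,1},\dots,\uppi_{n,n})$ while the paper anchors it at the network average $(\bar{\uppi}_{\cdot,1},\dots,\bar{\uppi}_{\cdot,n})$; both gaps are closed by the same consensus estimate, so this does not change the argument.
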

\begin{remark}
To the best of our knowledge, this is the first result that achieves the tight $(1-\frac{c}{e})$-approximation for the MA-OC problem with submodular objectives and simultaneously can tackle the previously unexplored  $(\gamma,\beta)$-weakly submodular and $\alpha$-weakly DR-submodular scenarios.
\end{remark}
\begin{remark}\label{remark:limitation}
Note that when considering the weakly submodular objectives, Line 9 of \cref{alg:SPL} requires  prior knowledge of the ratios $\alpha,\gamma,\beta$ to set the weight function $w(z)$. However, in general, accurately computing these parameters will incur exponential computations. To overcome this drawback, we further present a \textbf{parameter-free} online algorithm named \texttt{MA-MPL} in \cref{appendix:alg2}.
\end{remark}
\section{Numerical Experiments}
In this section, we validate the effectiveness of our proposed \texttt{MA-SPL} and \texttt{MA-MPL} algorithms via two different multi-target tracking scenarios. Especially in \cref{graph_show_1},\ref{graph_show_2},\ref{graph_show_3}, we consider a \emph{submodular} facility-location objective function~\citep{xu2023online,zhang2025nearoptimal} with different proportions of `\underline{R}andom', `\underline{A}dversarial' and `\underline{P}olyline' targets. According to the results in \cref{graph_show_1}-\ref{graph_show_3}, we can find that the average utility of our \texttt{MA-SPL} can significantly exceed the state-of-the-art MA-OSMA and MA-OSEA algorithms in \citep{zhang2025nearoptimal}, which is consistent with our  \cref{thm:result1}. Note that the suffixes in \cref{graph_show_1}-\ref{graph_show_4} represent two different choices for communication graphs, where `c' stands for a complete graph and `r' denotes an Erdos-Renyi random graph with average degree $4$.  In contrast, \cref{graph_show_4}-\ref{graph_show_6} adopt a bayesian A-optimal criterion for the target tracking task, which will lead to a $\alpha$-weakly DR-submodular utility function~\citep{harshaw2019submodular,hashemi2019submodular,thiery2022two}. Given the unknown DR ratio $\alpha$, in \cref{graph_show_5}-\ref{graph_show_6}, we perform a brute-force $0.1$-network search to find the  optimal parameter settings for \texttt{MA-SPL} algorithm. Subsequently, we report the best $\alpha=0.1$ case and the worse $\alpha=1$ scenario in \cref{graph_show_4}. Similarly, from \cref{graph_show_4}, we also find that our proposed \texttt{MA-MPL} and  \texttt{MA-SPL} can substantially outperform the `RANDOM' baseline, which is in accord with our theoretical findings. Particularly in `RANDOM' baseline, we let  each agent $i$ randomly execute an action from its own action set $\V_{i}$. Due to space limitations, more discussions about experiment setups and results are presented in \cref{appendix:experiments}.
\begin{figure*}[h]
	\subfigure[\tiny`R':`A':`P'=8:1:1\label{graph_show_1}]{\includegraphics[scale=0.0925]{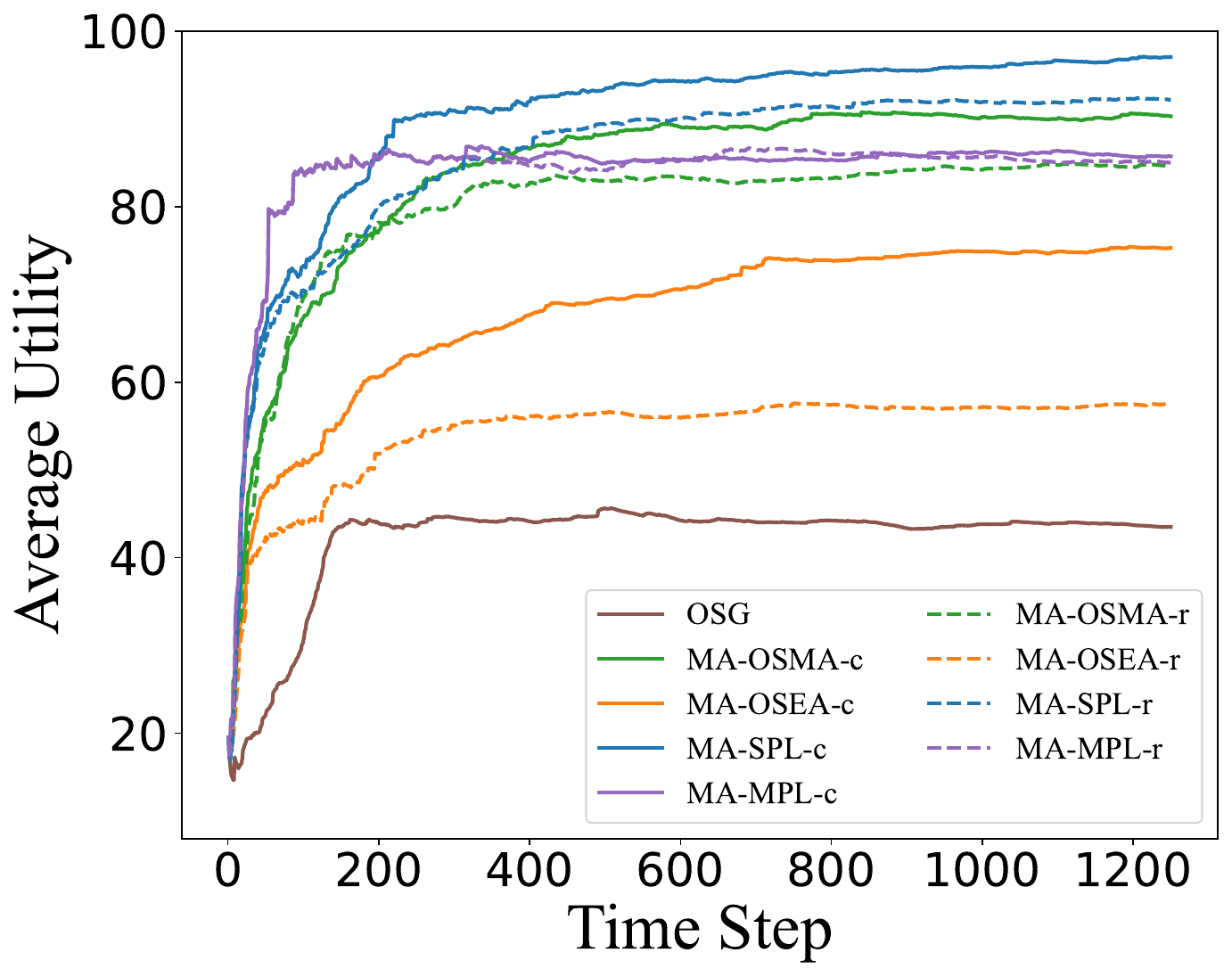}}
	\subfigure[\tiny `R':`A':`P'=6:3:1\label{graph_show_2}]{\includegraphics[scale=0.0925]{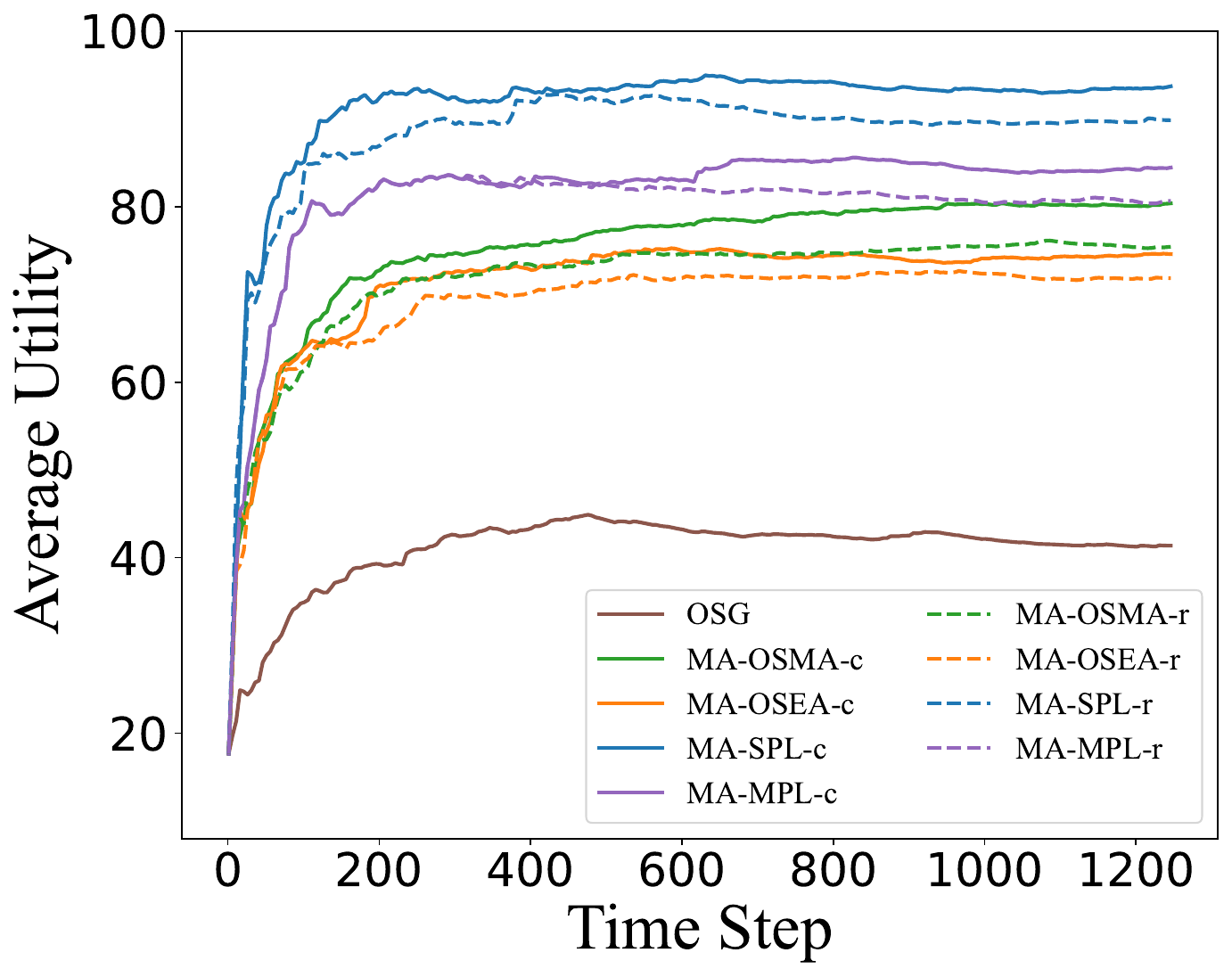}}
	\subfigure[\tiny `R':`A':`P'=4:5:1\label{graph_show_3}]{\includegraphics[scale=0.0925]{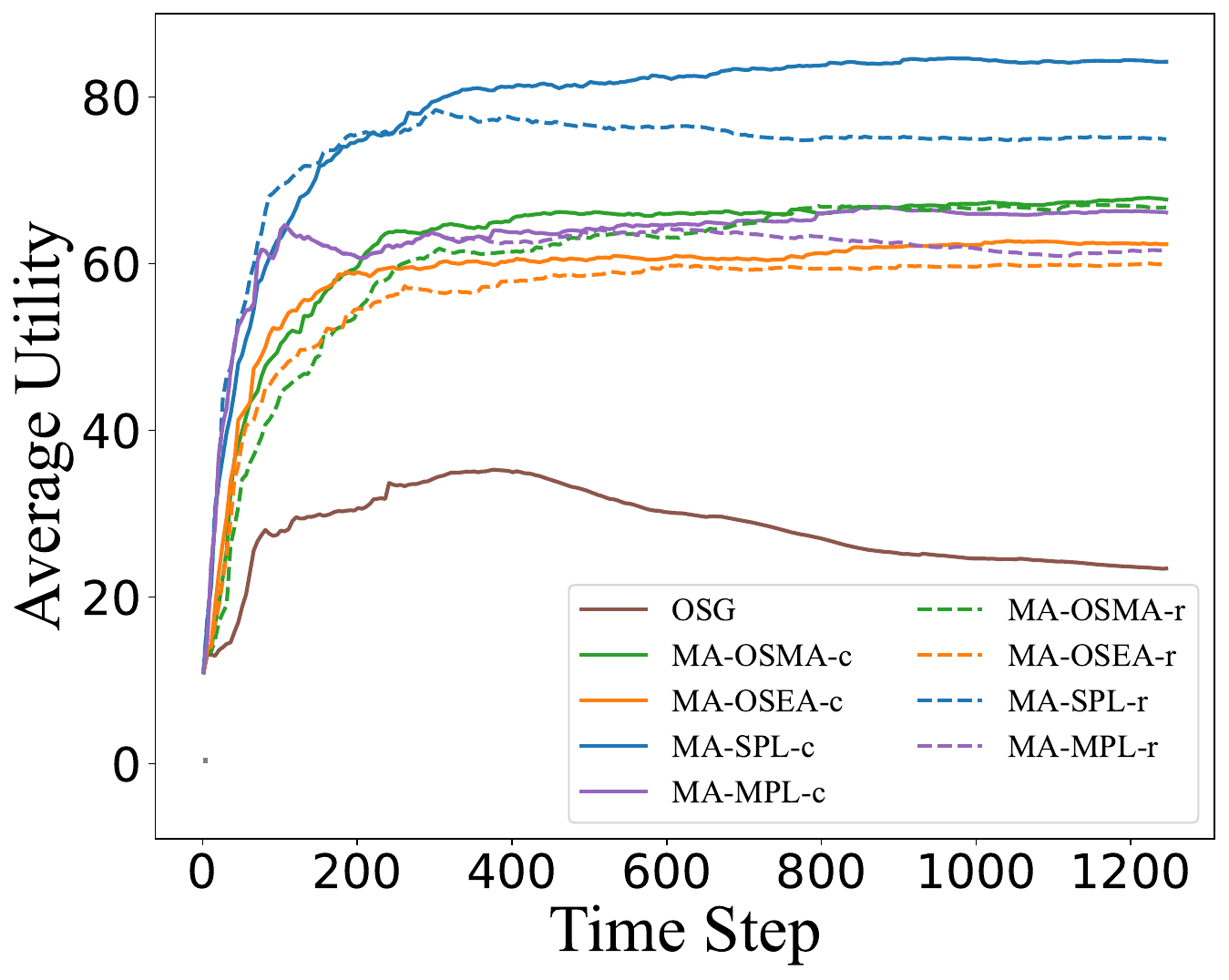}}
	\subfigure[\tiny A-Optimal Design\label{graph_show_4}]{\includegraphics[scale=0.0925]{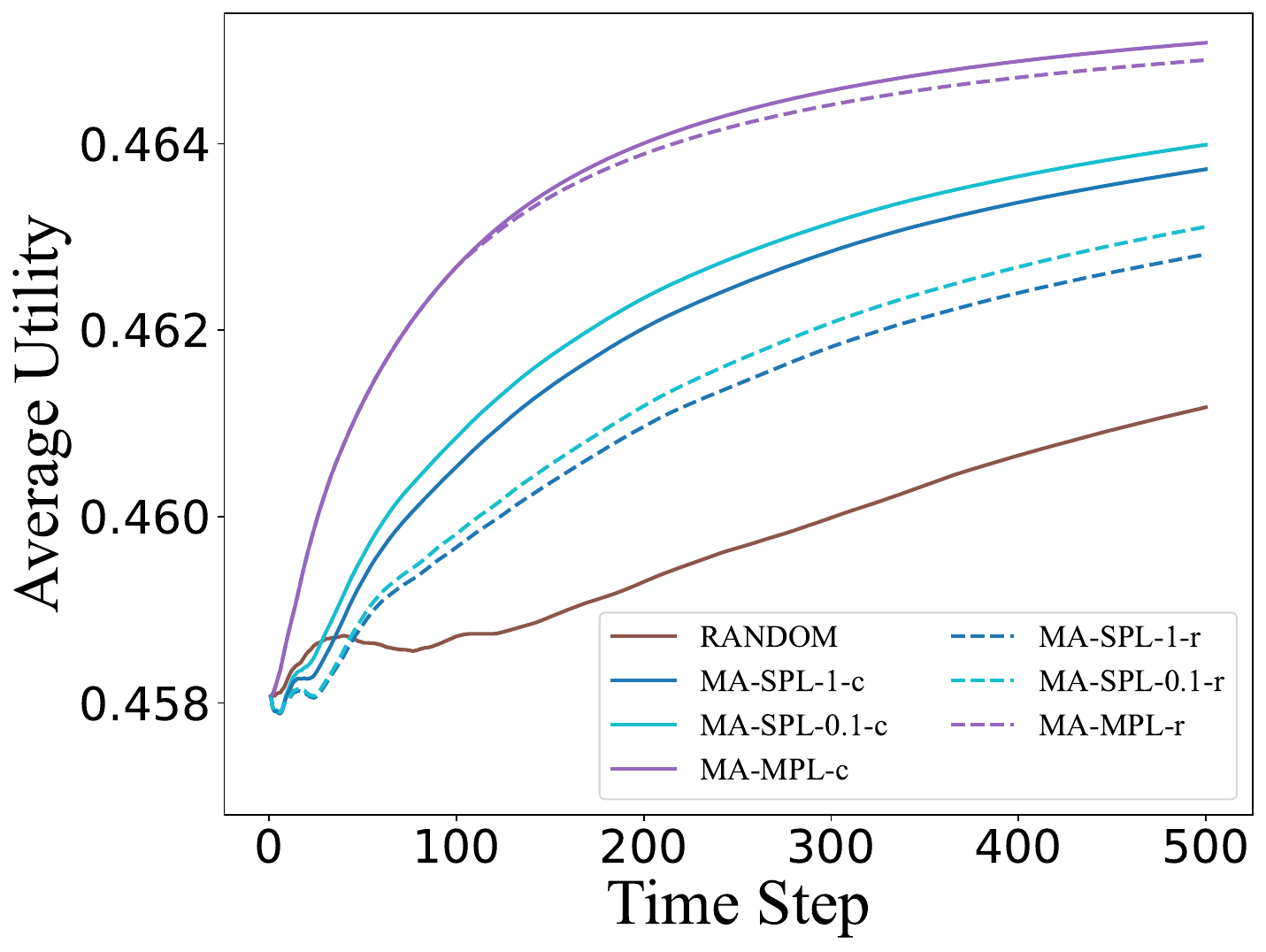}}
	\subfigure[\tiny Complete Graph\label{graph_show_5}]{\includegraphics[scale=0.0925]{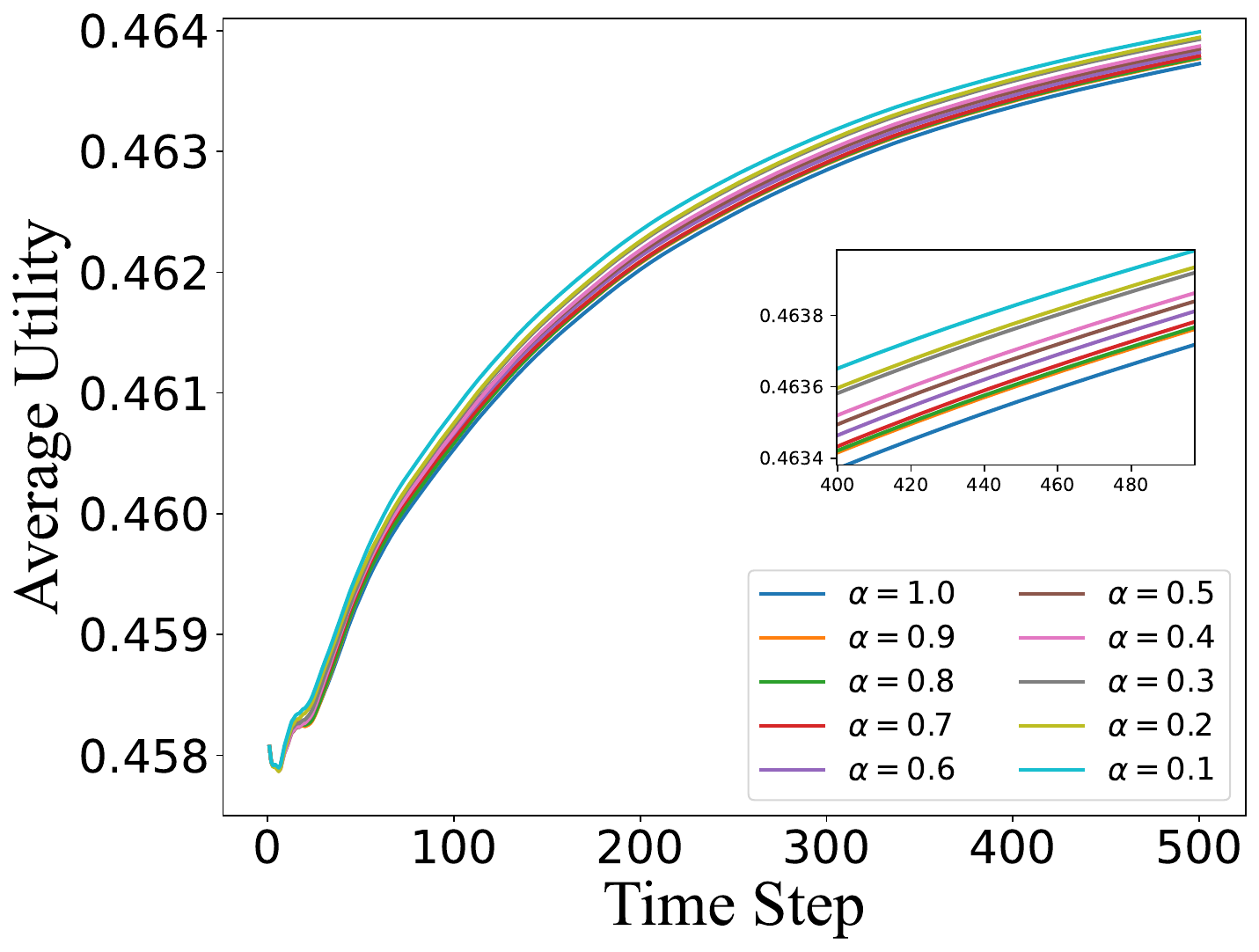}}
	\subfigure[\tiny Erdos-Renyi  Graph\label{graph_show_6}]{\includegraphics[scale=0.0925]{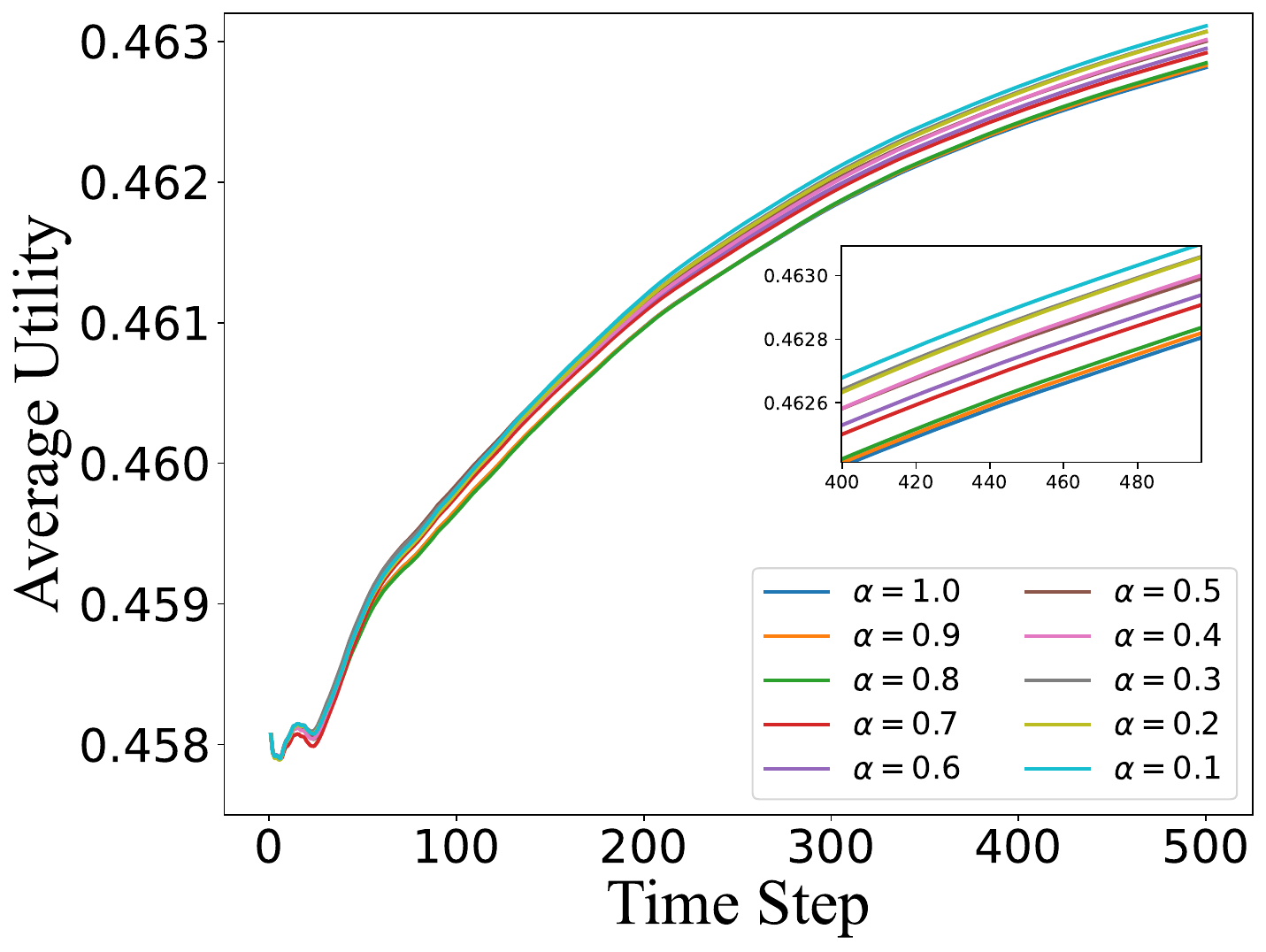}}
	\vspace{-0.5em}
	\caption{A comparison of the running average utility across two distinct target tracking simulations.}\label{graph_show}
\end{figure*}

\section{Conclusion}
In this paper, we primarily introduce an effective online policy learning algorithm named \texttt{MA-SPL} for the concerned MA-OC problem. Compared with the state-of-art MA-OSMA and MA-OSEA algorithms~\citep{zhang2025nearoptimal} , our proposed \texttt{MA-SPL} not only can guarantee a tight $(1-\frac{c}{e})$-approximation for MA-OC problem with submodular objectives but also can address the unexplored  $(\gamma,\beta)$-weakly submodular and $\alpha$-weakly DR-submodular scenarios. Subsequently, to eliminate the dependence on the unknown DR ratio and submodularity ratio in our \texttt{MA-SPL} algorithm, we further present a \emph{parameter-free} online algorithm named \texttt{MA-MPL} for the MA-OC problem. The key cornerstone of our \texttt{MA-SPL} and \texttt{MA-MPL} algorithms is a novel continuous relaxation termed as \emph{policy-based continuous extension}. In sharp contrast with the well-studied \emph{multi-linear extension}~\citep{calinescu2011maximizing}, a notable advantage of this new \emph{policy-based continuous extension} is its ability to provide a lossless rounding scheme for any set function, thereby enabling us to tackle  the challenging  weakly submodular objective functions. 
\section*{Acknowledgment}
This project is supported by the National Research Foundation, Singapore, under its NRF Professorship Award No. NRF-P2024-001.

 \bibliographystyle{plain}
\bibliography{neurips_2025}
\clearpage
\appendix
\onecolumn

\begin{center}
	\LARGE \textbf{Appendices for  ``Effective Multi-Agent Online Coordination Beyond Submodular Objectives''} \\
\end{center}
\vspace{-3.0em}
\addcontentsline{toc}{section}{} 
\part{} 
\parttoc 
\clearpage
\section{Literature Review}\label{appendix:related_work}
In this section, we aim to provide a comprehensive review of the related literature.
\subsection{Submodular Maximization}
 A set function $f:2^{\V}\rightarrow\R_{+}$ is called submodular if and only if it satisfies the diminishing-return property, namely, for any two subsets $A\subseteq B\subseteq\V$ and $v\in\V\setminus B$, $f(A\cup\{v\})-f(A)\ge f(B\cup\{v\})-f(B)$. It is widely recognized that the maximization of submodular functions is $\textbf{NP}$-hard, implying that no polynomial-time algorithms can solve it optimally. To overcome this challenge, \citep{nemhauser1978analysis} introduced a greedy algorithm for solving the monotone submodular maximization problem under a cardinality constraint and demonstrated that this greedy algorithm can achieve an approximation ratio of $(1-e^{-1})$. Subsequently, \citep{feige1998threshold} showed that this $(1-e^{-1})$-approximation guarantee is tight for monotone submodular maximization under reasonable complexity-theoretic assumptions. After that, \citep{fisher1978analysis} extended the greedy algorithm to the general matroid constraint.  However,  \citep{fisher1978analysis}  also pointed out that under such a matroid constraint, the approximation ratio achievable by the greedy algorithm diminishes from $(1-e^{-1})$ to $1/2$. To achieve the tight $(1-e^{-1})$-approximation under matroid constraint, then \citep{calinescu2011maximizing} proposed a continuous greedy algorithm for submodular functions. A key innovation of this continuous greedy algorithm is  a novel continuous-relaxation technique termed as the \emph{multi-linear extension}. Furthermore, there has been extensive research dedicated to the non-monotone submodular maximization~\citep{buchbinder2019constrained,buchbinder2024constrained,chekuri2014submodular,chen2024continuousnonmonotonedrsubmodularmaximization,chenyixin2024,feldman2011unified,han2020deterministic,mualem2023resolving,niazadeh2020optimal,zhang2023online}
\subsection{Non-Submodular Maximization}
 Recently, numerous studies have found that there exist various real-world applications inducing utility functions that are close to submodular, but not strictly submodular. Examples include variable selection~\citep{das2018approximate, elenberg2018restricted}, data summarization~\citep{gong2014diverse,mirzasoleiman2018streaming,wang2025data}, neural network pruning~\citep{el2022data,sun2025maskpro}, target tracking~\citep{hashemi2019submodular,hashemi2020randomized} and sparse optimal transport~\citep{manupriya2024submodular}.
 
 \textbf{Weakly Submodular Maximization.} A important class of close-to-submodular functions is known as $\gamma$-weakly submodular functions. Specifically, for a set function $f:2^{\V}\rightarrow\R_{+}$, it is called $\gamma$-weakly submodular if and only if, for any two subsets $A\subseteq B\subseteq\V$, the following inequality holds: $\sum_{v\in B\setminus A}f(v|A)\ge \gamma\big(f(B)-f(A)\big)$. The $\gamma$-weakly submodular functions were originally introduced by the work~\citep{das2011submodular}. Furthermore, when considering the simple cardinality constraint, \citep{das2011submodular} also show that the standard greedy algorithm can achieve an approximation ratio of $(1-e^{-\gamma})$ for the maximization of a $\gamma$-weakly submodular function. Subsequently, \citep{harshaw2019submodular} proved that, for any $\epsilon>0$, no polynomial-time algorithm can achieve a $(1-e^{-\gamma}+\epsilon)$-approximation for the problem of maximizing a $\gamma$-weakly submodular function subject to a cardinality constraint. As for more complicated matroid constraints,  \citep{chen2018weakly} showed that the residual random greedy method of \citep{buchbinder2014submodular} can achieve an approximation ratio of  $\frac{\gamma^{2}}{(1+\gamma)^{2}}$ for the problem of maximizing a monotone $\gamma$-weakly submodular functions. After that, \citep{gatmiry2019nonsubmodular} examined the approximation performance of the standard greedy algorithm on the $\gamma$-weakly submodular maximization problem over a matroid constraint, which indicated that the standard greedy algorithm only can offer an approximation ratio of $\frac{0.4\gamma^{2}}{\sqrt{r\gamma}+1}$ where $r$ is the rank of the matroid. It is important to note that this approximation ratio $\frac{0.4\gamma^{2}}{\sqrt{r\gamma}+1}$ is  not a constant guarantee and highly depends on the matroid rank $r$. In order to improve the approximation performance of these greedy-based algorithms over matroid constraints, \citep{thiery2022two} introduced the notion of upper submodularity ratio $\beta$ and developed a more powerful distorted local-search algorithm for $(\gamma,\beta)$-weakly submodular maximization problem. More importantly, this distorted local search can guarantee a $\frac{\gamma^{2}(1-e^{-(\beta(1-\gamma)+\gamma^2)})}{\beta(1-\gamma)+\gamma^2}$-approximation for the problem of maximizing a monotone $(\gamma,\beta)$-weakly submodular functions subject to a matroid constraint. Note that, when the $(\gamma,\beta)$-weakly submodular function is closer to being submodular, i.e., $\gamma,\beta\rightarrow 1$, the approximation ratio $\frac{\gamma^{2}(1-e^{-(\beta(1-\gamma)+\gamma^2)})}{\beta(1-\gamma)+\gamma^2}$ will approach the tight $(1-1/e)$. Conversely, when $\gamma,\beta\rightarrow 1$, the approximation ratio $\frac{\gamma^{2}}{(1+\gamma)^{2}}$ of the residual random greedy method~\citep{chen2018weakly}  will trend to the sub-optimal $1/4$. 

\textbf{Weakly DR-Submodular Maximization.} The other  important class of  non-submodular  functions is known as $\alpha$-weakly DR-submodular functions, where $\alpha$ is variously referred to as the diminishing-return(DR) ratio~\citep{kuhnle2018fast}, the generalized curvature~\citep{bogunovic2018robust} or the generic submodularity ratio~\citep{GONG202116}. The work of \citep{gatmiry2019nonsubmodular} is the first to explore the problem of maximizing a $\alpha$-weakly DR-submodular maximization subject to general matroid constraints and proved that the standard greedy algorithm achieves
approximation ratios of $\frac{\alpha}{1+\alpha}$ for the matroid-constrained $\alpha$-weakly DR-submodular maximization. Recently, \citep{JOGO-continuous-loss}  also showed that the continuous greedy combined with the contention resolution scheme~\citep{chekuri2014submodular} can obtain a sub-optimal approaximation ratio of $\big(\alpha(1-1/e)(1-e^{-\alpha})\big)$ for the problem of maximizing a monotone $\alpha$-weakly DR-submodular functions subject to a matroid constraint. To achieve the tight $(1-e^{-\alpha})$-approximation guarantee, \citep{JOGO-Lu} presented a novel distorted local-search method for the problem of maximizing a $\alpha$-weakly DR-submodular maximization subject to al matroid constraint, which is motivated via the non-oblivious search~\citep{filmus2014monotone,filmus2012tight,filmus2012power}.
\subsection{Multi-Agent Submodular Maximization}
\textbf{Multi-Agent Offline Submodular Maximization.} Coordinating multiple agents to collaboratively maximize a submodular function is a critical task with numerous applications in machine learning, robot planning and control. A common solution for multi-agent submodular maximization problem heavily depends on the distributed implementation of the classic sequential greedy method~\citep{fisher1978analysis}, which can ensure a $(\frac{1}{1+c})$-approximation~\citep{conforti1984submodular} when the submodular function possesses a curvature of $c\in[0,1]$. However, this distributed greedy algorithm requires each agent to have full access to the decisions of all previous agents, thereby forming a \emph{complete} directed communication graph. Subsequently, several studies~\citep{grimsman2018impact,gharesifard2017distributed,marden2016role} have investigated how the topology of the communication network affects the performance of the distributed greedy method. Particularly, \citep{grimsman2018impact} pointed out that the worst-case performance of the distributed greedy algorithm will deteriorate in proportion to the size of the largest independent group of agents in the communication graph. In order to overcome these challenges, various studies \citep{rezazadeh2023distributed,robey2021optimal,du2022jacobi} utilized the \emph{multi-linear extension}~\citep{calinescu2011maximizing} to design algorithms for solving mutli-agent submodular maximization problem. Specifically, \citep{du2022jacobi} proposed
a multi-agent variant of gradient ascent for  mutli-agent submodular maximization problem and showed that this multi-agent variant of gradient ascent can attain $\frac{1}{2}OPT-\epsilon$ over any connected communication graph where $OPT$ is the optimal value. After that, to achieve the tight  $(1-1/e)$-approximation, \citep{robey2021optimal} developed a multi-agent variant of continuous greedy method~\citep{calinescu2011maximizing,chekuri2014submodular}. However, this multi-agent continuous greedy~\citep{robey2021optimal} requires the exact knowledge of the \emph{multi-linear extension}, which will lead to the exponential number of function evaluations. To tackle this drawback, \citep{rezazadeh2023distributed} also proposed a stochastic variant of continuous greedy method~\citep{calinescu2011maximizing,chekuri2014submodular}, which can enjoy $(\frac{1-e^{-c}}{c})OPT-\epsilon$. Here, $c$ is the curvature of the investigated submodular objectives.

According to the hardness result in \emph{centralized} submodular maximization~\citep{sviridenko2017optimal}, the optimal achievable approximation guarantee for maximizing a submodular function with curvature $c$ is  $(1-\frac{c}{e})$. However, as previously discussed, the state-of-the-art approximation guarantee for multi-agent offline submodular maximization problems is $(\frac{1-e^{-c}}{c})$, which highly mismatches the best possible guarantee of $(1-\frac{c}{e})$ established in \citep{sviridenko2017optimal}. Thus, if in \cref{alg:SPL} we treat any incoming objective function $f_{t},\forall t\in[T]$ as a fixed submodular set objective $f$, our \cref{alg:SPL} can be naturally transformed into an approximation algorithm with the tight $(1-\frac{c}{e})$  guarantee for multi-agent offline submodular maximization problems. Similar to \citep{zhang2025nearoptimal}, we compare this offline version of our \cref{alg:SPL} with existing algorithms for multi-agent offline submodular maximization problems in \cref{tab:Comparison1}.
\begin{table}[t]
	\centering
	\caption{\small Comparison of the different algorithms for multi-agent offline submodular maximization problems. Note that 	`Approx.' denotes the obtained approximation result, `\#Com.' represents the number of communication, `\#Queries' denotes the number of queries to the set objective functions, `Proj-free' indicates whether the method does not require projection, `Para-free'  indicates whether the method does not require the knowledge of curvature $c$, $OPT$ denotes the optimal function value, $d(G)$ is the diameter of the graph $G$, $\tau$ is the spectral gap of the weight matrix, $\kappa\triangleq\sum_{i=1}^{n}\kappa_{i}$ and $\alpha(G)\ge1$ is the number of nodes in the largest independent set in graph $G$.}\label{tab:Comparison1}
	\vspace{0.5em}
	\resizebox{1.01\textwidth}{!}{
		\setlength{\tabcolsep}{2.5mm}{
			\begin{tabular}{ccccccccc}
				\toprule[1.3pt]
				Method&Type&Para-free&Proj-free&Approx.&Graph($G$)&\#Com.&\#Queries&Reference \\
				\hline
				Greedy Method &det.&\ding{52}&\ding{52}&$(\frac{1}{1+c})OPT$ &\textbf{complete}&$\mathcal{O}(n)$&$\mathcal{O}(\kappa n)$&\citep{conforti1984submodular}\\
				PGA&sto.&\ding{52}&\ding{56}&$\frac{1}{2}OPT-\epsilon$ &connected&$\mathcal{O}(\frac{1}{(1-\tau)\epsilon^{2}})$ &$\mathcal{O}\big(\frac{\kappa}{(1-\tau)\epsilon^{2}}\big)$&\citep{du2022jacobi}\\
				Greedy Method&det.&\ding{52}&\ding{52} &$(\frac{1}{1+\alpha(G)})OPT$ &connected&$\mathcal{O}\big(n\big)$ &$\mathcal{O}(\kappa n)$&\citep{grimsman2018impact,gharesifard2017distributed}\\
				CDCG&det.&\ding{52}&\ding{52}&$(1-\frac{1}{e})OPT-\epsilon$ &connected&$\mathcal{O}\big(\frac{1}{(1-\tau)\epsilon}\big)$ &$\mathcal{O}\big(\frac{\kappa2^{\kappa}}{(1-\tau)\epsilon}\big)$&\citep{robey2021optimal}\\
				Distributed-CG&sto.&\ding{52}&\ding{52}&$(\frac{1-e^{-c}}{c})OPT-\epsilon$ &connected&$\mathcal{O}\big(\frac{d(G)}{\epsilon}\big)$ &$\widetilde{\mathcal{O}}\Big(\frac{\kappa d^{3}(G)}{\epsilon^{3}}\Big)$&\citep{rezazadeh2023distributed}\\
				MA-OSMA&sto.&\ding{56}&\ding{56}&$(\frac{1-e^{-c}}{c})OPT-\epsilon$ &connected&$\mathcal{O}\big(\frac{1}{(1-\tau)\epsilon^{2}}\big)$ &$\mathcal{O}\big(\frac{\kappa}{(1-\tau)\epsilon^{2}}\big)$ &\citep{zhang2025nearoptimal}\\
				MA-OSEA&sto.&\ding{56}&\ding{52}&$(\frac{1-e^{-c}}{c})OPT-\epsilon$ &connected&$\mathcal{O}\big(\frac{1}{(1-\tau)\epsilon^{2}}\big)$ &$\mathcal{O}\Big(\frac{\kappa\log(\frac{1}{\epsilon})}{(1-\tau)\epsilon^{2}}\Big)$ &\citep{zhang2025nearoptimal}\\	\midrule[1.3pt]\rowcolor{gray!25}
		\cref{alg:SPL}&sto.&\ding{52}&\ding{56}&$(1-\frac{c}{e})OPT-\epsilon$ &connected&$\mathcal{O}\big(\frac{1}{(1-\tau)\epsilon^{2}}\big)$ &$\mathcal{O}\big(\frac{\kappa}{(1-\tau)\epsilon^{2}}\big)$&\cref{thm:result1}\\
				\midrule[1.3pt]
					\end{tabular}}}\vspace{-1.0em}
\end{table}
 
\textbf{Multi-Agent Online Submodular Maximization.}
 \citep{xu2023online} is the first one to explore the multi-agent submodular maximization problems in time-varying environments. Moreover,  \citep{xu2023online}  also proposed an \emph{online sequence greedy}(OSG) algorithm for multi-agent online submodular maximization problems and proved this OSG algorithm can achieve a sub-optimal $(\frac{1}{1+c})$-approximation over a \emph{complete} communication graph, where $c$ is the joint curvature of the investigated submodular objectives. Concurrently, \citep{xu2023bandit} extended this OSG algorithm into bandit settings. In order to improve this sub-optimal $(\frac{1}{1+c})$-approximation guarantee and reduce the rigid requirement of a fully connected
communication network of OSG algorithm, \citep{zhang2025nearoptimal} utilized the non-oblivious auxiliary functions presented in \citep{zhang2022boosting} to design two multi-agent variants of online gradient ascent algorithm, namely, MA-OSMA algorithm and MA-OSEA algorithm, for the multi-agent online submodular maximization  problem. Furthermore, \citep{zhang2025nearoptimal} also showed that these two MA-OSMA and MA-OSEA algorithms can attain a regret bound of $\widetilde{O}(\sqrt{\frac{\mathcal{P}_{T}T}{1-\tau}})$ against a  $(\frac{1-e^{-c}}{c})$-approximation to the best comparator in hindsight, where $\mathcal{P}_{T}$ is the deviation of maximizer sequence, $\tau$ is the spectral gap of the network and $c$ is the joint curvature of submodular objectives.
 \subsection{Multilinear Extension}
As almost all state-of-the-art algorithms for multi-agent submodular coordination~\citep{rezazadeh2023distributed,robey2021optimal,zhang2025nearoptimal} rely on the \emph{multi-linear extension} of \citep{calinescu2011maximizing}, thus, in this subsection, we review the concept of \emph{multi-linear extension} and compare it with our proposed policy-based continuous extension in \cref{sec:Continuous-Relaxation}.

At first, we define $\kappa\triangleq\sum_{i=1}^{n}\kappa_{i}$. Furthermore, from the definition of $\V$ in \cref{sec:problem}, we can know that $\kappa=|\V|$ such that we can re-define $\V\triangleq[\kappa]\triangleq\{1,\dots,\kappa\}$. Then, we can show that 
\begin{definition}\label{def1:multi-linear}
	For a set function $f:2^{\V}\rightarrow\R_{+}$, we define its multi-linear extension  as 
	\begin{equation}
		\label{equ:multi-linea}
		G(\x)=\sum_{\mathcal{A}\subseteq\V}\Big(f(\mathcal{A})\prod_{a\in\mathcal{A}}x_{a}\prod_{a\notin\mathcal{A}}(1-x_{a})\Big)=\E_{\mathcal{R}\sim\x}\Big(f(\mathcal{R})\Big),
	\end{equation} where $\x=(x_{1},\dots,x_{\kappa})\in [0,1]^{\kappa}$ and $\mathcal{R}\subseteq\V$ is a random set that contains each element $a\in\V$ independently with probability $x_{a}$ and excludes it with probability $1-x_{a}$. We write $\mathcal{R}\sim\x$ to denote that $\mathcal{R}\subseteq\V$ is a random set sampled according to $\x$. 
\end{definition}
With this \emph{multi-linear extension}, we then can transfer the previous discrete subset selection problem~\eqref{equ_problem} into a continuous maximization which aims at learning the optimal independent probability for each element $a\in\V$, that is,
\begin{equation}\label{equ:multilinear_continuous_max}
	\max_{\x} G(\x),\ \ \text{ s.t.}\ \x\in[0,1]^{\kappa}\ \text{and}\ \sum_{a\in\V_{i}}x_{a}\le 1,\forall i\in\N.
\end{equation}
It is important to note that, if we round any point $\x$ included into the constraint of problem~\eqref{equ:multilinear_continuous_max} by the definition of multi-linear extension, namely, Eq.\eqref{equ:multi-linea}, there is a certain probability that the resulting subset  will violate the partition constraint of the subset selection problem~\eqref{equ_problem}.  Therefore, for \emph{multi-linear extension}, we need to design a specific rounding methods based on the properties of the investigated set objective functions. However, current known lossless rounding schemes for multi-linear extension, such as pipage rounding~\citep{ageev2004pipage}, swap rounding~\citep{chekuri2010submodular} and contention resolution~\citep{chekuri2014submodular}, are heavily dependent on the \emph{submodular} assumption. Currently, how to losslessly round the multi-linear extension of \emph{non-submodular} set functions, e.g. $(\gamma,\beta)$-weakly submodular and $\alpha$-weakly DR-submodular functions, still remains an open question~\citep{thiery2022two}.  Conversely, our proposed policy-based continuous extension in \cref{sec:Continuous-Relaxation}  does not assign probabilities to any subsets that are out of the partition constraint of problem~\eqref{equ_problem}, which means that, for any set function $f:2^{\V}\rightarrow\R_{+}$ and any given policy vector $(\uppi_{1},\dots,\uppi_{n})\in\prod_{i=1}^{n}\Delta_{\kappa_{i}}$, we can, through the \cref{def_extension}, easily produce a subset that conforms to the constraint of problem~\eqref{equ_problem} without any loss in terms of the expected function value $F_{t}(\uppi_{1},\dots,\uppi_{n})$. 
\newpage
\section{Parameter-free Multi-Agent Policy Learning}\label{appendix:alg2}
It is important to note that, when considering the weakly submodular scenarios, the Line 9 of \cref{alg:SPL} requires the prior knowledge of the unknown submodularity ratio $(\gamma,\beta)$ or the diminishing-return(DR) ratio $\alpha$ to set the weight function $w(z)$. However, in general, accurately computing these parameters will incur exponential computations.  To overcome this drawback, in this section, we explore how to design a \emph{parameter-free} online algorithm for the MA-OC problem with $(\gamma,\beta)$-weakly submodular or $\alpha$-weakly DR-submodular objectives.

Note that, in part $\textbf{4)}$ of \cref{thm1}, we establish a novel relationship between our proposed policy-based continuous extension $F_{t}$ and its original set function $f_{t}$. More specifically, when the $f_{t}$ is monotone $\alpha$-weakly DR-submodular or $(\gamma,\beta)$-weakly submodular,  the weighted discrepancy  between any $f_{t}(S)$ and its policy-based continuous extension $F_{t}(\uppi_{1},\dots,\uppi_{n})$ can be bounded by some sum of first-order derivatives, i.e., $\sum_{(i,m): v_{i,m}\in S}\frac{\partial F_{t}}{\partial \pi_{i,m}}(\uppi_{1},\dots,\uppi_{n})$. It is worth noting that the computation of this sum $\sum_{(i,m): v_{i,m}\in S}\frac{\partial F_{t}}{\partial \pi_{i,m}}(\uppi_{1},\dots,\uppi_{n})$ does not rely on the knowledge of $\alpha,\gamma,\beta$. Therefore, we naturally consider whether it is possible to devise a \emph{parameter-free} online algorithm for the MA-OC problem by controlling this aforementioned sum $\sum_{(i,m): v_{i,m}\in S}\frac{\partial F_{t}}{\partial \pi_{i,m}}(\uppi_{1},\dots,\uppi_{n})$  to narrow the gap between $F_{t}(\uppi_{1},\dots,\uppi_{n})$ and $f_{t}(S)$.

 Before that, \citep{chen2018online,pedramfar2024from,zhang2019online,zhu2021projection} utilized the idea of meta-actions~\citep{streeter2008online} to devise a Meta-Frank-Wolfe algorithm for online submodular maximization problems. The core of this Meta-Frank-Wolfe algorithm lies in iteratively optimizing an upper bound of the gap of the corresponding \emph{multi-linear extension} $G_{t}$, i.e., $\langle\y, \nabla G_{t}(\x)\rangle\ge G_{t}(\y)-G_{t}(\x)$, which is very similar to our previously discussed idea. Motivated by this finding, we then leverage the idea of meta-actions~\citep{streeter2008online} and the insight from part $\textbf{4)}$ of \cref{thm1} to design a \emph{parameter-free} \texttt{MA-MPL} algorithm for the  MA-OC problem with $(\gamma,\beta)$-weakly submodular or $\alpha$-weakly DR-submodular objectives,
as shown in \cref{alg:MPL}.

\begin{algorithm}[t]
	\caption{Multi-Agent Meta-Policy Learning(\texttt{MA-MPL})}\label{alg:MPL}
	\KwIn{Time horizon $T$, action set $\V_{i}\triangleq\{v_{i,1},\dots,v_{i,\kappa_{i}}\}$, Communication Graph $G(\N,\mathcal{E})$, number of online linear maximization oracles, namely, $K$ and batch size $L$}
	\tcp{\textcolor{teal}{Policy Vector and Online Linear Oracles Initialization (Lines 1-2)}}
	Initialize a policy vector $(\uppi^{(0)}_{i,1}(t),\dots,\uppi^{(0)}_{i,n}(t))\triangleq\boldsymbol{0},\forall t\in[T]$ for each agent $i\in\N$\; 
	Initialize $K$ online linear oracles over $\Delta_{\kappa_{i}}$, i.e., $\{Q_{i}^{(1)},\dots,Q_{i}^{(K)}\}$, for each agent $i\in\N$\; 
	\For{$t=1,\dots,T$}{
		\For{each agent $i\in\N$}{
			\tcp{\textcolor{teal}{Policy Update and Information Exchange (Lines 5-10)}}
			\For{$k=1,\dots,K$}{
				Set the shared vector $\y^{(k)}_{i,m}(t)\triangleq\uppi^{(k-1)}_{i,m}(t)$ for any $m\neq i$ and $m\in\N$\;
				Obtain the update direction $\mathbf{v}_{i}^{(k)}(t)\in\Delta_{\kappa_{i}}$ from oracle $Q_{i}^{(k)}$\;
				Compute $\y^{(k)}_{i,i}(t)\triangleq\uppi^{(k-1)}_{i,i}(t)+\frac{1}{K}\mathbf{v}_{i}^{(k)}(t)$\;
				Exchange  the vector $\big(\y^{(k)}_{i,1}(t),\dots,\y^{(k)}_{i,n}(t)\big)$ with the neighboring agent $j\in\mathcal{N}_{i}$\;
				Set $\uppi^{(k)}_{i,m}\triangleq\max_{j\in\N_{i}\cup\{i\}}\big(\y_{j,m}^{(k)}(t)\big),\forall m\in\N$\;
			}
			\tcp{\textcolor{teal}{Actions Sampling (Lines 11-13)}}
			Compute the normalized policy $\p_{i}(t) \triangleq\uppi^{(K)}_{i,i}(t)\big/\|\uppi^{(K)}_{i,i}(t)\|_{1}$\;
			Utilize the normalized policy $\p_{i}(t)$ to sample an action $a_{i}(t)$ from $\V_{i}$\;
			Agent $i$ executes the sampled action  $a_{i}(t)$\;
			\tcp{\textcolor{teal}{Batch Gradient Estimation and Linear Oracles Update(Lines 14-21)}}
			\For{$k=1,\dots,K$}{
				Set the gradient estimation $\mathbf{d}_{i}^{(k)}(t):=\mathbf{0}_{\kappa_{i}}$\;
				\For{$l=1,\dots,L$}{Utilize the policy $\uppi^{(k)}_{i,j}(t)$ to sample an action $\widetilde{a}_{j}(t)\in\V_{i}\cup\{\emptyset\}$ for any $j\in\N$\;
					Estimate the derivatives of $F_{t}$ based on \cref{thm1}, i.e.,
					\begin{equation*}
						\widehat{\frac{\partial F_{t}}{\partial \pi_{i,m}}}\big(\uppi^{(k)}_{i,1}(t),\dots,\uppi^{(k)}_{i,n}(t)\big)\triangleq g^{(k)}_{i,m}(t)\triangleq f_{t}\big(v_{i,m}\big|\cup_{j\neq i}\{\widetilde{a}_{j}(t)\}\big),\forall m\in[\kappa_{i}];
					\end{equation*}\vspace{-0.8em}\\
					Aggregate the gradient estimations $\big(g^{(k)}_{i,1}(t),\dots,g^{(k)}_{i,\kappa_{i}}(t)\big)$ as vector $\mathbf{g}^{(k)}_{i}(t)$\;
					Update $\mathbf{d}^{(k)}_{i}(t)\triangleq\mathbf{d}^{(k)}_{i}(t)+\frac{1}{L}\mathbf{g}^{(k)}_{i}(t)$\;}
				Feed back the batch gradient estimation $\mathbf{d}^{(k)}_{i}(t)$ to the linear oracle $Q_{i}^{(k)}$\;}}}
\end{algorithm}

Like the \cref{alg:SPL}, the core of our \cref{alg:MPL} is primarily composed of three interleaved components, namely, Policy Update and Information Exchange (Lines 5-10), Actions Sampling(Lines 11-13), Surrogate Gradient Estimation(Lines 9-16) and Batch Gradient Estimation and Linear Oracles Update (Lines 14-21). Firstly, at every time step $t\in[T]$, each agent $i \in\N$ employs $K$ online linear  oracles $\{Q_{i}^{(1)},\dots,Q_{i}^{(K)}\}$ to mimic the process of decentralized Meta-Frank-Wolfe~\citep{mokhtari2018decentralized,rezazadeh2023distributed} for maximizing our proposed policy-based continuous extension  $F_{t}$. Specifically, during every inner iteration $k\in[K]$, each agent $i$ pushes the $i$-th local policy vector $\uppi_{i,i}^{(k-1)}$ along the direction $\mathbf{v}_{i}^{(k)}\in\Delta_{\kappa_{i}}$ provided by the oracle $Q_{i}^{(k)}$, while maintaining other policies $\uppi_{i,j}^{(k-1)},j\neq i$ unchanged. Then, agent $i$ exchanges the updated policy vector $\big(\y^{(k)}_{i,1}(t),\dots,\y^{(k)}_{i,n}(t)\big)$ with the neighboring agent $j\in\mathcal{N}_{i}$ and simultaneously utilizes these received information to initialize the next policy $\uppi^{(k)}_{i,m}$, i.e., $\uppi^{(k)}_{i,m}\triangleq\max_{j\in\N_{i}\cup\{i\}}\big(\y_{j,m}^{(k)}(t)\big),\forall m\in\N$, where $\y^{(k)}_{i,i}(t)\triangleq\uppi^{(k-1)}_{i,i}(t)+\frac{1}{K}\mathbf{v}_{i}^{(k)}(t)$ and $\y^{(k)}_{i,m}(t)\triangleq\uppi^{(k-1)}_{i,m}(t),\forall m\neq i$. After completing all $K$ iterations, each agent $i$ normalizes the final policy vector $\uppi_{i,i}^{(K)}(t)$ to select an action $a_{i}(t)$ from $\V_{i}$(See Lines 11-13). Next, in Lines 14-20, each agent $i$ uses a $L$-batch stochastic estimation $\mathbf{d}^{(k)}_{i}(t)$  to approximate the first-order partial derivatives of our proposed policy-based continuous extension $F_{t}$ at every policy vector $\big(\uppi^{(k)}_{i,1}(t),\dots,\uppi^{(k)}_{i,n}(t)\big)$ and every coordinate $\pi_{i,m},\forall m\in[\kappa_{i}]$. Finally, each agent $i$ feeds the obtained $L$-batch gradient estimation $\mathbf{d}^{(k)}_{i}(t)$ back to its corresponding linear oracle $Q_{i}^{(k)}$. 

Note that in the process of \cref{alg:MPL}, each agent $i\in\N$ only needs to evaluate the margin contributions of the actions within its own action set $\V_{i}$(See Line 18).

It is important to highlight that, compared with the previous studies~\citep{chen2018online,streeter2008online,zhang2025nearoptimal,zhu2021projection},  the innovations of our proposed \texttt{MA-MPL} algorithm are threefold: First, we utilize a policy-based continuous extension instead of the well-studied \emph{multi-linear extension} to adjust  the linear oracles in Lines 14-21. Second, rather than using a weight matrix to aggregate the received information, we employ the coordinate-wise \emph{maximization operation}  to update the policy vector(See Line 10). Third, to reduce communication complexity, we implement  the batch gradient estimation in Lines 16-20.

Next, we provide the theoretical analysis for the proposed \cref{alg:MPL}. Before that, we introduce some standard assumptions about the linear maximization oracles $Q_{i}^{(k)},\forall i\in\N,k\in[K]$, namely, 

\begin{assumption}\label{ass:2}
	Each linear maximization oracle $Q_{i}^{(k)}$ can achieve a dynamic regret of $\mathcal{O}(\sqrt{V_{T}T})$ where $V_{T}$ is the variation of any feasible path $(\mathbf{u}_1,\dots,\mathbf{u}_T)\in\prod_{t=1}^{T}\Delta_{\kappa_{i}}$, that is to say, $V_{T}\triangleq\sum_{t=2}^{T}\|\mathbf{u}_{t}-\mathbf{u}_{t-1}\|_{2}$ for any path $(\mathbf{u}_1,\dots,\mathbf{u}_T)\in\prod_{t=1}^{T}\Delta_{\kappa_{i}}$
\end{assumption}
\begin{remark}
	It is worth noting that there exist several effective and efficient algorithms that can achieve a regret bound of $\mathcal{O}(\sqrt{V_{T}T})$  for online linear maximization problem, for instance, online Frank-Wolfe~\citep{pmlr-v195-wan23a} and online gradient ascent~\citep{,yang2016tracking,zhao2021improved,zinkevich2003online}.
\end{remark}

\begin{theorem}[Proof provided in \cref{appendix_proof_thm2}]\label{thm:result2}
	Under \cref{ass:2}, when the communication graph $G(\N,\mathcal{E})$ is connected and each set  function $f_{t}$ is monotone $\alpha$-weakly DR-submodular or  $(\gamma,\beta)$-weakly submodular, if we set $L=\mathcal{O}(T)$ and $K=\mathcal{O}(\sqrt{T})$, our proposed \texttt{MA-MPL} algorithm can achieve a dynamic $\rho$-regret bound of $\mathcal{O}\left(d\left(G\right)\sqrt{P_{T}T}\right)$, that is, $\E\left(R_{\rho}^{*}(T)\right)\le\mathcal{O}\left(d\left(G\right)\sqrt{P_{T}T}\right),$ where $\rho=(1-e^{-\alpha})$ or $\rho=\big(\frac{\gamma^{2}(1-e^{-(\beta(1-\gamma)+\gamma^2)})}{\beta(1-\gamma)+\gamma^2}\big)$, respectively.
\end{theorem}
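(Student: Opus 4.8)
The plan is to analyze the inner loop of \cref{alg:MPL} as a decentralized Meta-Frank-Wolfe procedure that drives the policy-based continuous extension $F_t$ up towards $f_t(\mathcal{A}_t^*)$, and then to translate the resulting per-round continuous guarantee into a dynamic $\rho$-regret bound. The conceptual crux, as emphasized in \cref{remark:different}, is that the classical inequality $\langle\y-\x,\nabla G(\x)\rangle\ge G(\y\vee\x)-G(\x)$ is unavailable on the domain $\prod_{i}\Delta_{\kappa_i}$; consequently, the entire Frank-Wolfe recursion must instead be powered by part $\textbf{4)}$ of \cref{thm1}, which lower-bounds $\sum_{(i,m):v_{i,m}\in S}\frac{\partial F_t}{\partial\pi_{i,m}}(\uppi)$ by $\alpha\big(f_t(S)-F_t(\uppi)\big)$ in the DR case and by $\gamma^2 f_t(S)-\phi(\gamma,\beta)F_t(\uppi)$ in the $(\gamma,\beta)$ case. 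First I would fix the comparator path $\mathbf{u}_{i,t}\triangleq\mathbf{1}_{\mathcal{A}_t^*\cap\V_i}\in\Delta_{\kappa_i}$, observe that its variation is controlled by $P_T$ (the indicator vectors differ in $|\mathcal{A}_t^*\triangle\mathcal{A}_{t-1}^*|$ coordinates), and invoke \cref{ass:2} so that, for every inner index $k$, $\sum_{t}\langle\mathbf{v}_i^{(k)}(t),\mathbf{d}_i^{(k)}(t)\rangle\ge\sum_t\langle\mathbf{u}_{i,t},\mathbf{d}_i^{(k)}(t)\rangle-\mathcal{O}(\sqrt{P_T T})$.

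The core step is a one-iteration progress inequality for the global policy $\uppi^{(k)}(t)\triangleq(\uppi_{1,1}^{(k)},\dots,\uppi_{n,n}^{(k)})$. A useful preliminary observation is that the coordinate-wise maximization in Line 10 does not corrupt the own-coordinate update: since only agent $m$ increments $\uppi_{m,m}$ by the non-negative amount $\tfrac{1}{K}\mathbf{v}_m^{(k)}$ while the others merely track it, the global policy genuinely satisfies $\uppi^{(k)}=\uppi^{(k-1)}+\tfrac{1}{K}\mathbf{v}^{(k)}$. Then monotone continuous DR-submodularity of $F_t$ (part $\textbf{3)}$ of \cref{thm1}) makes $s\mapsto F_t(\uppi^{(k-1)}+s\,\mathbf{v}^{(k)})$ concave along $\mathbf{v}^{(k)}\ge\mathbf{0}$, which yields $F_t(\uppi^{(k)})-F_t(\uppi^{(k-1)})\ge\frac{1}{K}\langle\mathbf{v}^{(k)},\nabla F_t(\uppi^{(k)})\rangle$. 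Combining this with the oracle guarantee and applying part $\textbf{4)}$ of \cref{thm1} to the resulting term $\langle\mathbf{1}_{\mathcal{A}_t^*},\nabla F_t(\uppi^{(k)})\rangle$ produces, after summing over $t$, a recursion of the form $\Phi_k-\Phi_{k-1}\ge\frac{1}{K}\big(\gamma^2\,\mathrm{OPT}-\phi(\gamma,\beta)\Phi_k\big)-R_k$ (respectively, with $\alpha$ replacing both $\gamma^2$ and $\phi(\gamma,\beta)$ in the DR case), where $\Phi_k\triangleq\sum_tF_t(\uppi^{(k)}(t))$, $\mathrm{OPT}\triangleq\sum_tf_t(\mathcal{A}_t^*)$ and $R_k$ collects the error terms. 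Telescoping this geometric recursion over the $K$ inner iterations, using $\Phi_0=0$ since $\uppi^{(0)}=\mathbf{0}$, gives $\Phi_K\ge\big(\frac{\gamma^2(1-e^{-\phi(\gamma,\beta)})}{\phi(\gamma,\beta)}\big)\mathrm{OPT}-\sum_kR_k$ (and $\Phi_K\ge(1-e^{-\alpha})\mathrm{OPT}-\sum_kR_k$ in the DR case), recovering exactly the two target approximation ratios.

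It then remains to control $\sum_kR_k$ and to convert $\Phi_K$ into an actual-reward bound. The $L$-batch estimation error from Lines 16-20 has magnitude $\mathcal{O}(1/\sqrt{L})$ per evaluation, so over all $T$ rounds it contributes $\mathcal{O}(T/\sqrt{L})=\mathcal{O}(\sqrt{T})$ once $L=\mathcal{O}(T)$, while the accumulated oracle regret contributes $\sum_k\frac{1}{K}\mathcal{O}(\sqrt{P_TT})=\mathcal{O}(\sqrt{P_TT})$. Finally, the lossless rounding property of \cref{def_extension}, together with monotonicity and the normalization $\p_i(t)=\uppi_{i,i}^{(K)}(t)/\|\uppi_{i,i}^{(K)}(t)\|_1$, gives $\E\big(f_t(\cup_{i}\{a_i(t)\})\big)\ge F_t(\uppi^{(K)}(t))$, so that $\sum_t\E(f_t(\cup_i\{a_i(t)\}))\ge\Phi_K$ and the claimed dynamic $\rho$-regret follows.

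I expect the main obstacle to be the decentralized consensus analysis: unlike the doubly-stochastic averaging of \cref{alg:SPL}, \cref{alg:MPL} propagates the policies through the coordinate-wise \emph{maximization} in Line 10, and the lag it induces is the source of the $d(G)$ factor. The key lemma I would need is that, because each coordinate $\uppi_{m,m}^{(k)}$ is modified only by agent $m$ and increases monotonically in $k$ by non-negative increments of size $\tfrac{1}{K}\|\mathbf{v}^{(k)}\|$, the max-consensus makes every agent $i$'s cross-estimate $\uppi_{i,m}^{(k)}$ equal to the true value delayed by at most $\dist(i,m)\le d(G)$ rounds, so the consensus gap entering the \emph{gradient estimates} $\mathbf{d}_i^{(k)}$ is $\mathcal{O}(d(G)/K)$ per coordinate. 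Passing this through the smoothness of $F_t$ and summing over the $K$ inner steps and $T$ rounds yields a consensus contribution of $\mathcal{O}(d(G)T/K)$; balancing this against the $\mathcal{O}(\sqrt{T})$ and $\mathcal{O}(\sqrt{P_TT})$ terms is precisely what forces the choice $K=\mathcal{O}(\sqrt{T})$ and produces the $d(G)$ factor in the final $\mathcal{O}(d(G)\sqrt{P_TT})$ bound. Verifying this monotone-delay behaviour of the max-consensus rigorously, and confirming that the asymmetric information lag does not break the per-agent oracle guarantee of \cref{ass:2}, is the part I would treat most carefully.
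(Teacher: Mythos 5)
Your overall architecture matches the paper's proof in \cref{appendix_proof_thm2}: the comparator path $\one_{(\mathcal{A}_t^*\cap\V_i)}$ fed to the online linear oracles under \cref{ass:2}, the observation that the max-consensus of Line 10 leaves the global update $\uppi^{(k)}=\uppi^{(k-1)}+\frac{1}{K}\mathbf{v}^{(k)}$ intact (\cref{lemma:E3}) while cross-estimates lag by at most $d(G)$ inner rounds (\cref{lemma:E2} and \cref{lemma:E4}), the $\mathcal{O}(T/\sqrt{L})$ batch-estimation error, the geometric recursion telescoped over the $K$ inner steps, the lossless rounding via monotonicity, and the parameter balance $K=\mathcal{O}(\sqrt{T})$, $L=\mathcal{O}(T)$ are all exactly the steps the paper takes.

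However, your one-iteration progress inequality has a genuine gap. You derive $F_t(\uppi^{(k)})-F_t(\uppi^{(k-1)})\ge\frac{1}{K}\langle\mathbf{v}^{(k)},\nabla F_t(\uppi^{(k)})\rangle$ from concavity along non-negative directions, invoking part \textbf{3)} of \cref{thm1}. This fails on both regimes covered by \cref{thm:result2}. For $(\gamma,\beta)$-weakly submodular objectives, part \textbf{3)} gives no property of $F_t$ whatsoever---weak submodularity does not imply weak DR-submodularity---so there is no concavity-type control along $\mathbf{v}^{(k)}\ge\mathbf{0}$ and your recursion has no starting point in this case. For $\alpha$-weakly DR-submodular objectives with $\alpha<1$, part \textbf{3)} only gives $\nabla F_t(\x)\ge\alpha\nabla F_t(\y)$ for $\x\le\y$; integrating along the segment therefore yields only $F_t(\uppi^{(k)})-F_t(\uppi^{(k-1)})\ge\frac{\alpha}{K}\langle\mathbf{v}^{(k)},\nabla F_t(\uppi^{(k)})\rangle$, and the extra factor $\alpha$ propagates through the recursion (the contraction coefficient becomes $\alpha^2/K$ instead of $\alpha/K$), producing an approximation ratio of $\big(1-e^{-\alpha^2}\big)$ rather than the claimed $\big(1-e^{-\alpha}\big)$. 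The paper avoids both problems by never using part \textbf{3)} here: in \cref{lemma:E5} the first-order lower bound comes from the $\left(M\sum_{i=1}^{n}\kappa_{i}\right)$-smoothness of $F_t$ (\cref{lemma:D4}), which holds for any monotone set function with bounded marginal gains, uses the gradient at the \emph{old} point $\uppi^{(k-1)}$, and costs only a quadratic penalty $\mathcal{O}\big(nM(\sum_{i}\kappa_{i})/K^{2}\big)$ per inner step, which sums to $\mathcal{O}(T/K)=\mathcal{O}(\sqrt{T})$ and is absorbed into the regret. Substituting this smoothness step for your concavity step repairs the argument; the rest of your outline then goes through essentially as in the paper.
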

\begin{remark}
	Note that, in \cref{thm:result2}, $d(G)$ represents the diameter of graph $G(\N,\mathcal{E})$ ,i.e., the length of the shortest path between the most distanced nodes. Moreover, $P_{T}\triangleq\sum_{t=2}^{T}|\mathcal{A}_{t}^{*}\triangle\mathcal{A}_{t-1}^{*}|$ is the deviation of maximizer sequence and $\triangle$ denotes the symmetric difference.
\end{remark}

\section{Additional Experimental Details and Results}\label{appendix:experiments}
In this section, we test the effectiveness of our proposed \cref{alg:SPL} and  \cref{alg:MPL}  in two different multi-target tracking scenarios.
\subsection{Target Tracking with Facility-Location Objective Functions}\label{sec:facility-location}
In line with the prior studies~\citep{xu2023online,zhang2025nearoptimal}, we consider a 2-dimensional plane where 20 agents are deployed to track $30$ moving targets over a duration of 25 seconds, subdivided into $T\triangleq1250$ discrete iterations. At every iteration, agents must determine their movement direction from ``up", ``down", ``left", ``right", or ``diagonally". Simultaneously, agents also need to adjust their speeds from a predefined set of $5$, $10$, or $15$ units/s. As for targets, we categorize them into three different types: the unpredictable `Random' and the structured `Polyline' as well as the challenging `Adversarial'. More specifically, at every iteration,  `Random' target will change its movement angle $\theta$ randomly from $[0,2\pi]$ and steers at a random speed between 5 units/s and 10 units/s. In contrast,  the `Polyline' target will maintain its trajectory and only
behaves like the `Random' target at the specific $\{0, \lfloor\frac{T}{k}\rfloor, 2\lfloor\frac{T}{k}\rfloor\dots, (k-1)\lfloor\frac{T}{k}\rfloor\}$-th iteration where $T$ is the predefined total iterations and $k$ is a random number from $\{1,2,4\}$. Regarding the `Adversarial' target, it mimics the `Random' targets when all agents are beyond a 20 units. Nevertheless, upon detecting an agent within the 20-unit range, the `Adversarial' target will evade at a speed of 15 units/s for one second, pointing to the direction that maximizes the average distance of all agents. 

Generally speaking, every motion of any agent can be characterized by three key parameters, namely, its movement angle $\theta$, speed  $s$ and the unique identifier $i$. With all these three parameters, then the action set  $\V_{i}$  available to each agent $i\in[20]$ can be mathematically represented as:
\begin{equation*}
\V_{i}=\{(\theta,s,i):s\in\{5,10,15\}\text{units/s},\theta\in\{\frac{\pi}{4},\frac{\pi}{2},\frac{3\pi}{4},\pi,\dots,2\pi\}\},\forall i\in[20],
\end{equation*} where each tuple $(\theta,s,i)$ encodes a specific action of agent $i$, that is, it will move at a speed of $s$ in the direction of  $\theta$. Furthermore, at each time step $t\in[T]$, we denote the location of each target $j\in[30]$ as $o_{j}(t)$. Similarly, we also utilize the symbol $o_{(\theta,s,i)}(t)$ to represent the new position of agent $i$ after moving from its previous location at time $(t-1)$ with a movement angle $\theta$ and speed $s$.

To enhance the tracking quality, a common strategy is to minimize the distances between agents and targets. Inspired by this idea, many studies~\citep{corah2021scalable,xu2023bandit,xu2023online,zhang2025nearoptimal} naturally consider the following facility-location objective function for agents at every iteration $t\in[T]$, that is,
\begin{equation*}
	f_{t}(S)\triangleq\sum_{j=1}^{30}\max_{(\theta,s,i)\in S}\frac{1}{\|o_{(\theta,s,i)}(t)-o_{j}(t)\|_{2}},
\end{equation*} where $\|o_{(\theta,s,i)}(t)-o_{j}(t)\|_{2}$ represents the Euclidean distance between the location  $o_{j}(t)$ of target $j$ and the new position $o_{(\theta,s,i)}(t)$ after agent $i$ executing the action $(\theta,s,i)$ and $S$ is a subset of the ground action set $\V\triangleq\cup_{i=1}^{n}\V_{i}$. It is important to note that the larger the value of $\frac{1}{\|o_{(\theta,s,i)}(t)-o_{j}(t)\|_{2}}$ becomes, the closer the action  $(\theta,s,i)$ drives agent $i$ to the target $j$. 

Considering this facility-location utility set function $f_{t}$ and the truth that each agent only can execute  one decision from $\V_{i}$ at every time $t\in[T]$,  then we can easily transform the aforementioned multi-target tracking problem as a multi-agent online coordination(MA-OC) problem introduced in \cref{sec:problem}. Particularly, at each time $t\in[T]$, we need to tackle the following facility-location utility set function maximization problem in a multi-agent manner, namely,
\begin{equation}\label{appendix:simulation:problem1}
	\max f_{t}(S),\ \ \text{ s.t.}\ S\subseteq\V\ \text{and}\ |S\cap\V_{i}|=1,\forall i\in\N.
\end{equation}  
Furthermore, numerous studies~\citep{prajapat2024submodular,xu2023online,zhang2025nearoptimal} have verified that this facility-location objective function $f_{t}$ is \emph{monotone submodular}. As a result, the problem~\eqref{appendix:simulation:problem1} can be equivalently reformulated as the problem~\eqref{equ_problem} in \cref{sec:problem}, i.e., 
\begin{equation*}
	\max f_{t}(S),\ \ \text{ s.t.}\ S\subseteq\V\ \text{and}\ |S\cap\V_{i}|\le1,\forall i\in\N.
\end{equation*} 
\newpage
\begin{figure*}[h]
	\centering
	\subfigure[Average Utility\label{graph1}]{\includegraphics[scale=0.195]{New_Coordination-1-8-1.pdf}}
	\subfigure[Average Distance \label{graph2}]{\includegraphics[scale=0.195]{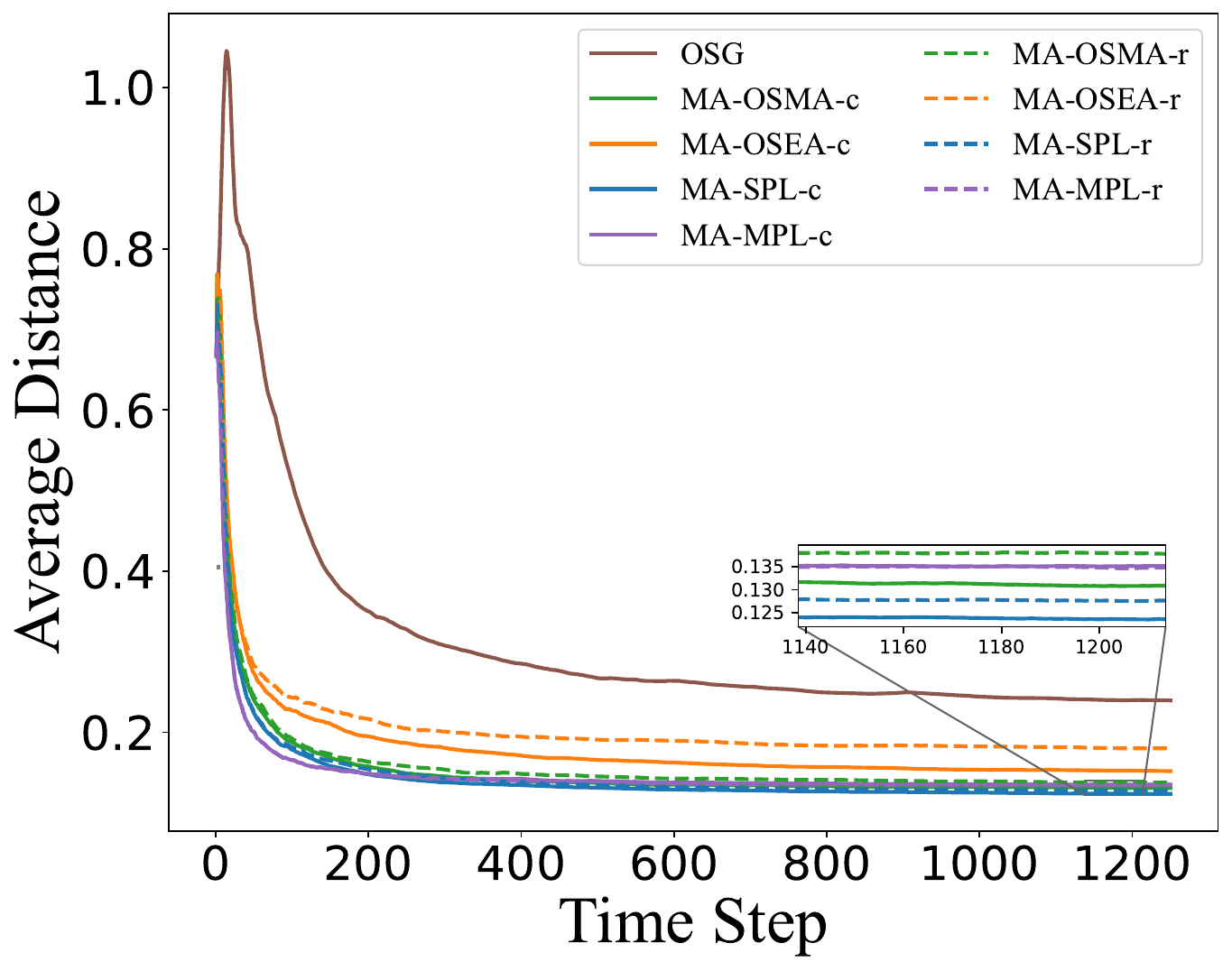}}
	\subfigure[Average Number \label{graph3}]{\includegraphics[scale=0.195]{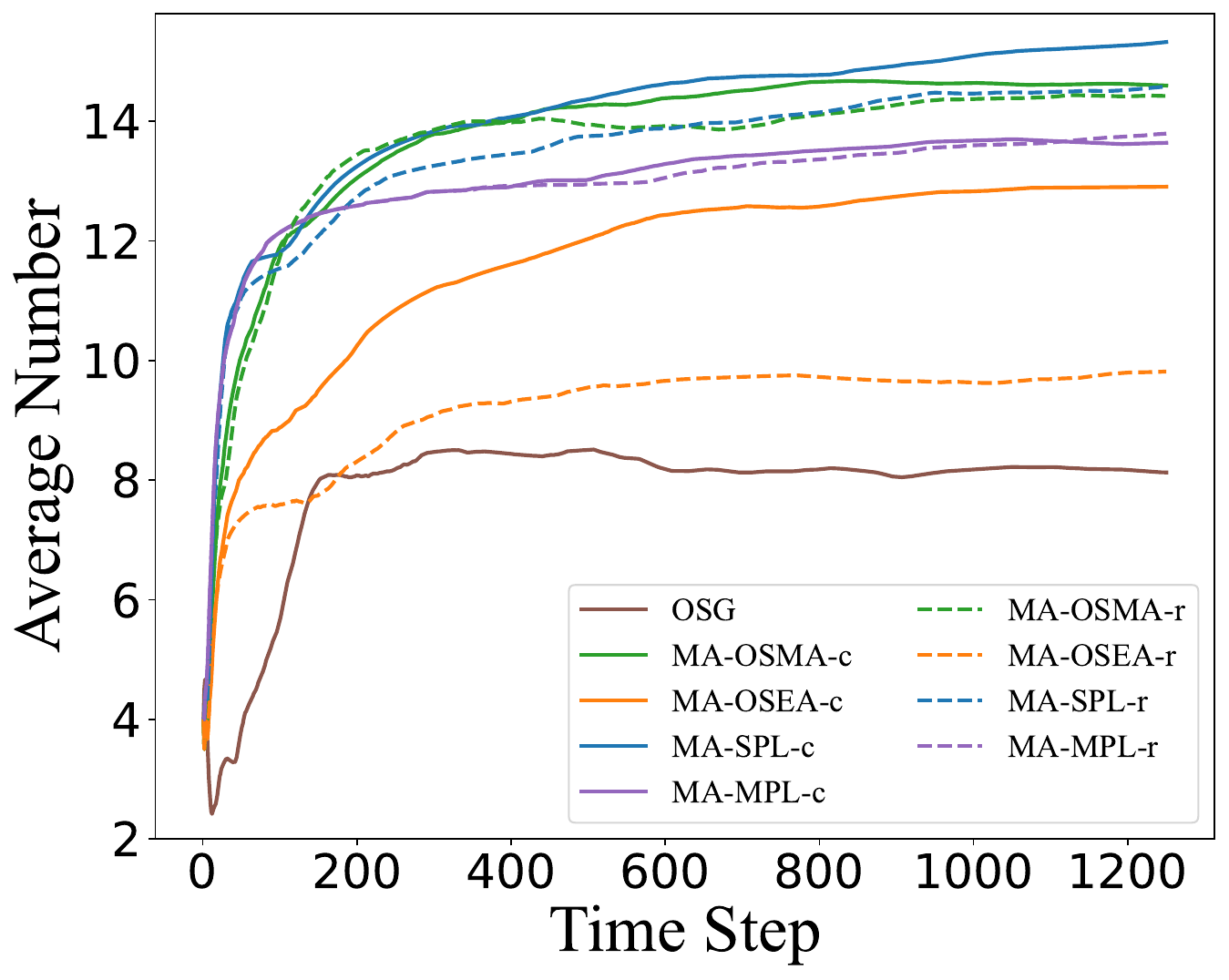}}
	\vspace{-0.5em}
	\caption{Comparison of the running average utility and the runing average number of targets within 1 unit as  well as the running average distance of Top-5 nearest targets of OSG, MA-OSMA, MA-OSEA, MA-OSEA with our proposed \texttt{MA-SPL} and \texttt{MA-MPL} algorithms on the multi-target tracking scenario with `Random':`Adversarial':`Polyline'=$8$:$1$:$1$.}\label{graph:total1}\vspace{-1.0em}
\end{figure*}
\begin{figure*}[h]
	\centering
	\subfigure[Average Utility\label{graph12}]{\includegraphics[scale=0.195]{New_Coordination-1-6-3.pdf}}
	\subfigure[Average Distance \label{graph22}]{\includegraphics[scale=0.195]{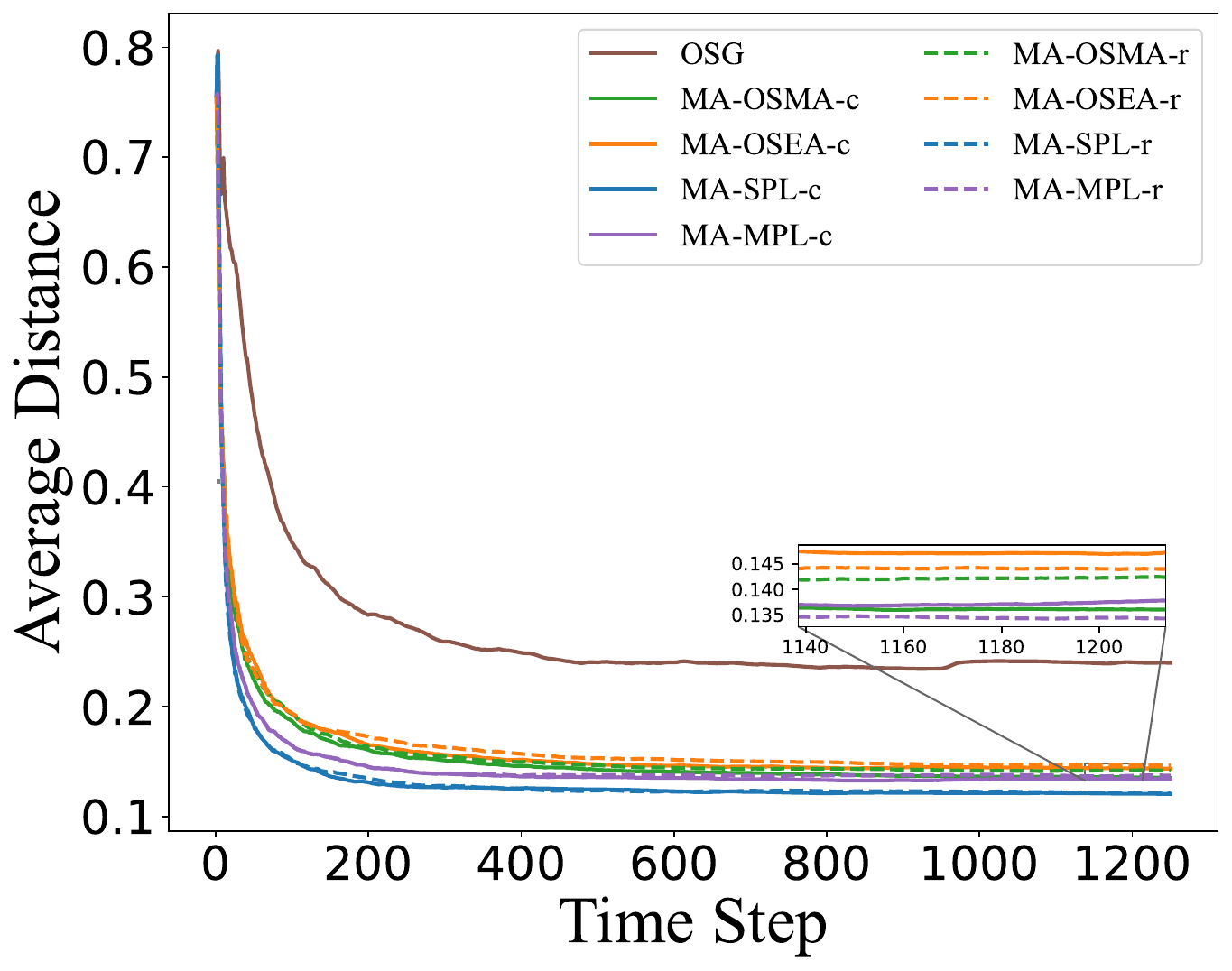}}
	\subfigure[Average Number\label{graph32}]{\includegraphics[scale=0.195]{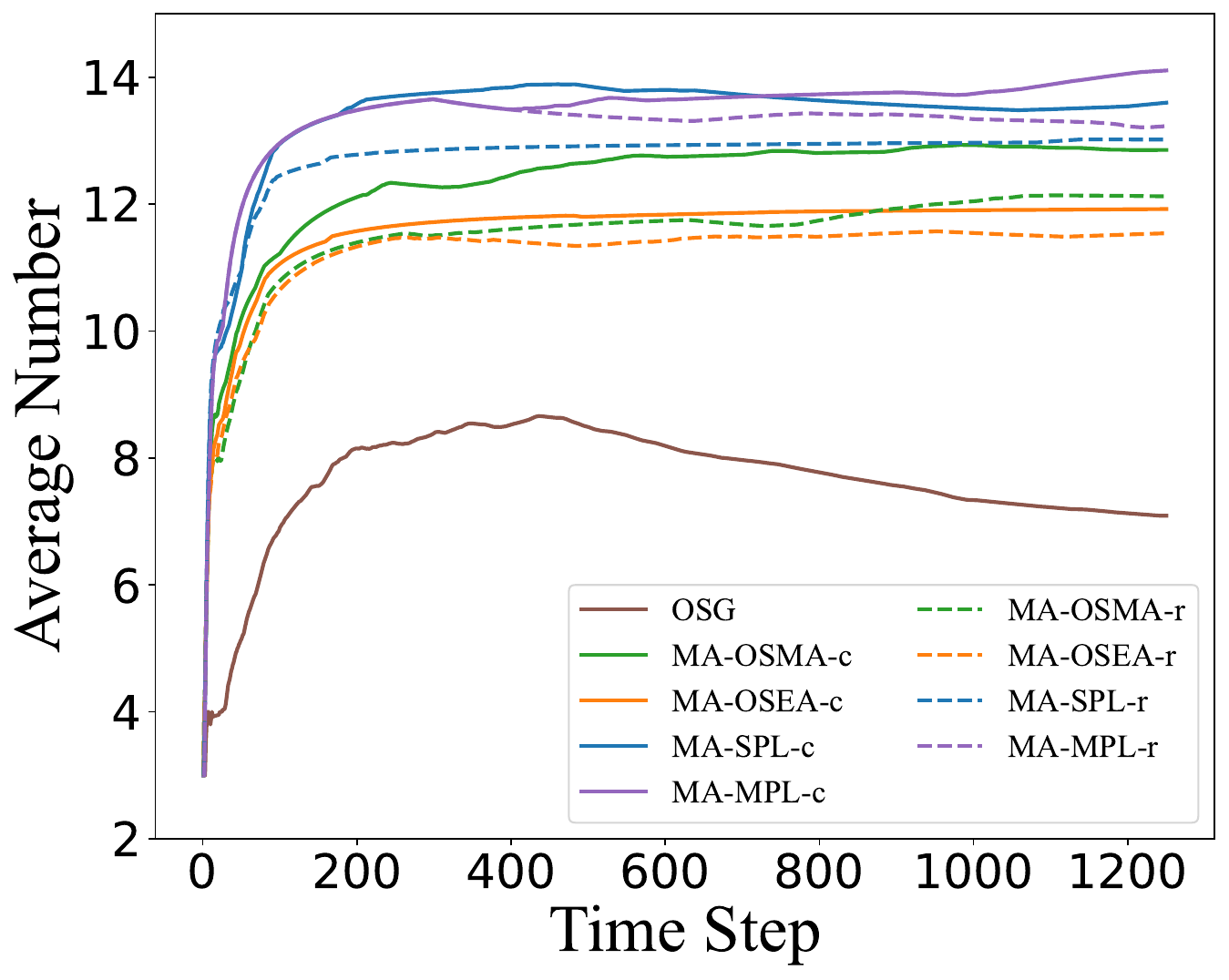}}
	\vspace{-0.5em}
	\caption{Comparison of the running average utility and the runing average number of targets within 1 unit as  well as the running average distance of Top-5 nearest targets of OSG, MA-OSMA, MA-OSEA, MA-OSEA with our proposed \texttt{MA-SPL} and \texttt{MA-MPL} algorithms on the multi-target tracking scenario with `Random':`Adversarial':`Polyline'=$6$:$3$:$1$.}\label{graph:total2}\vspace{-1.0em}
\end{figure*}
\begin{figure*}[h]
	\centering
	\subfigure[Average Utility\label{graph13}]{\includegraphics[scale=0.195]{New_Coordination-1-4-5.pdf}}
	\subfigure[Average Distance \label{graph23}]{\includegraphics[scale=0.195]{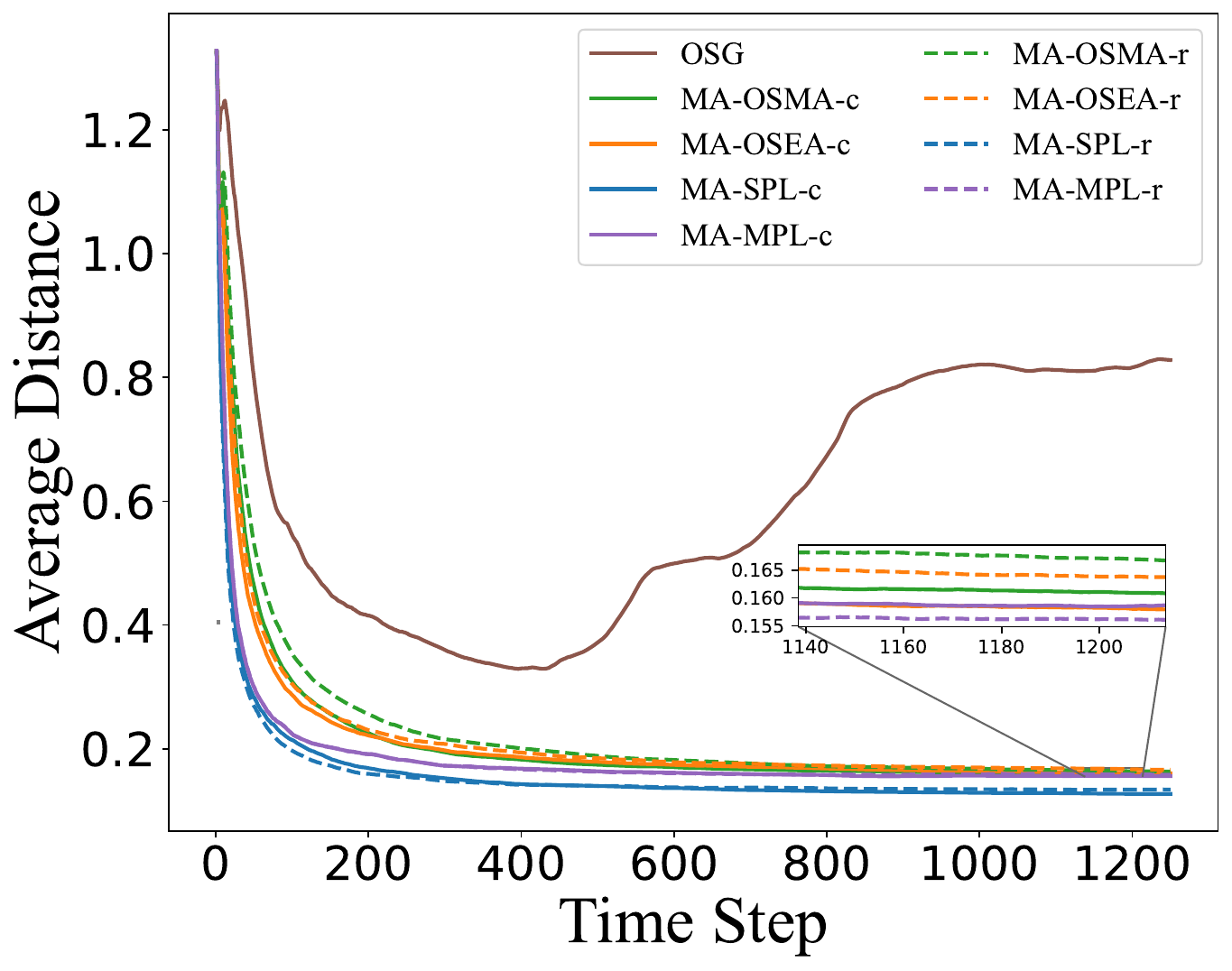}}
	\subfigure[Average Number\label{graph33}]{\includegraphics[scale=0.195]{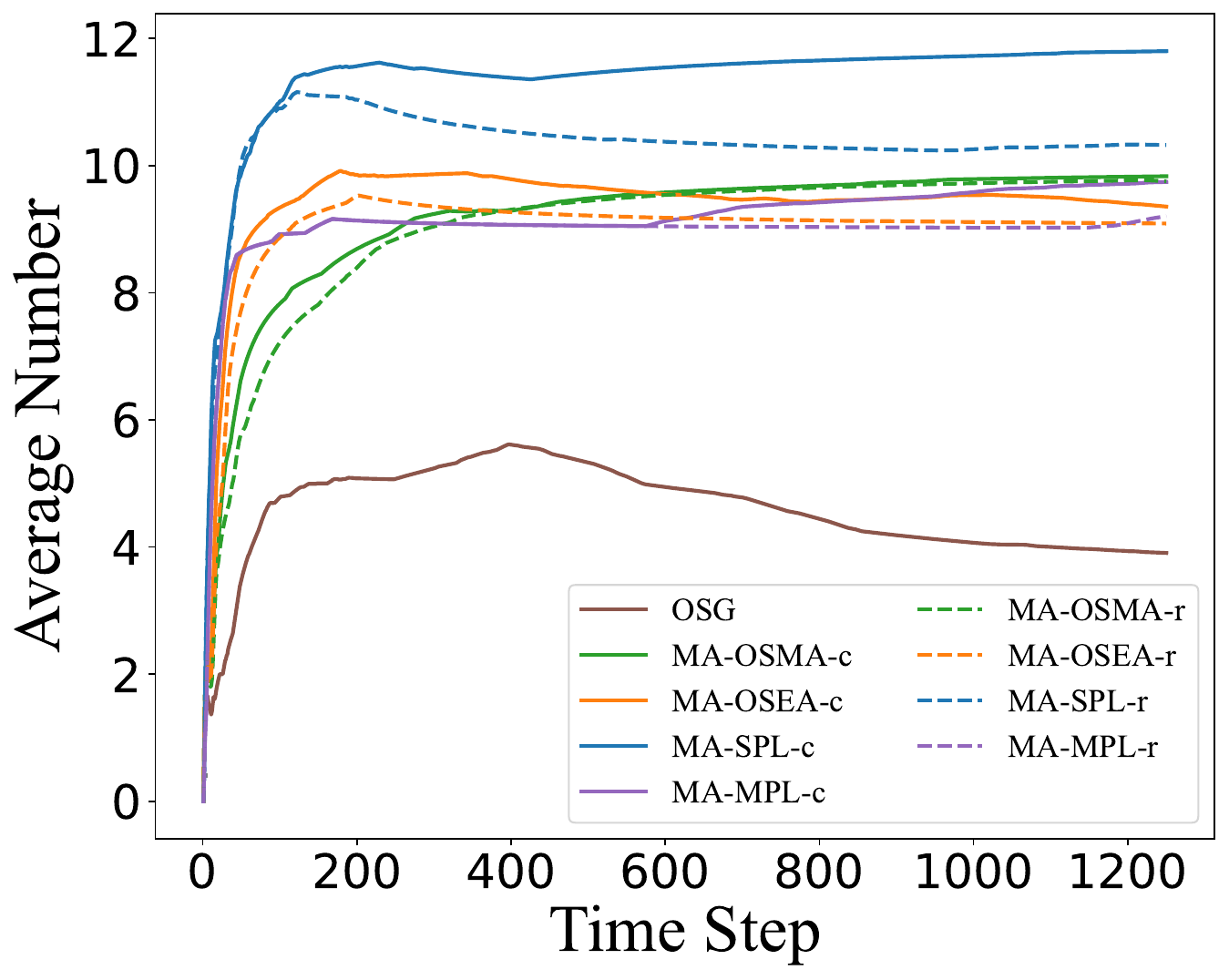}}
	\vspace{-0.5em}
	\caption{Comparison of the running average utility and the runing average number of targets within 1 unit as  well as the running average distance of Top-5 nearest targets of OSG, MA-OSMA, MA-OSEA, MA-OSEA with our proposed \texttt{MA-SPL} and \texttt{MA-MPL} algorithms on the multi-target tracking scenario with `Random':`Adversarial':`Polyline'=$4$:$5$:$1$.}\label{graph:total3}
\end{figure*}
In simulations, we initialize the starting positions of all agents and targets randomly within 20-unit radius circle centered at the origin. Furthermore, we  consider different proportions of `Random', `Polyline', and `Adversarial' targets. Specifically, we set the proportions of targets as `Random':`Adversarial':`Polyline'=$8$:$1$:$1$ in \cref{graph:total1} and $6$:$3$:$1$ in \cref{graph:total2} as well as $4$:$5$:$1$ in \cref{graph:total3}. Like~\citep{zhang2025nearoptimal}, we also use the suffixes to represent two different choices for communication graphs, where `c' stands for a complete graph and `r' denotes an Erdos-Renyi random graph with average degree $4$. According to the results in \cref{graph1}, \cref{graph12} and \cref{graph13}, we can find that the average utility of our proposed \texttt{MA-SPL} algorithm can significantly exceed the state-of-the-art MA-OSMA and MA-OSEA algorithms in \citep{zhang2025nearoptimal}, which is consistent with our  \cref{thm:result1}, that is to say, our proposed \texttt{MA-SPL} algorithm can achieve a tight $(1-\frac{c}{e})$-approximation guarantee for submodular objectives, while the MA-OSMA and MA-OSEA in \citep{zhang2025nearoptimal} only can guarantee a sub-optimal $(\frac{1-e^{-c}}{c})$-approximation. Like \citep{zhang2025nearoptimal}, in \cref{graph2}, \cref{graph22} and \cref{graph23}, we also compare the average distance from agents to their closest five targets of our proposed \texttt{MA-SPL} and \texttt{MA-MPL} algorithms against OSG~\citep{xu2023online}, MA-OSMA and MA-OSEA algorithms. Similarly, we observe that our proposed  \texttt{MA-SPL} can effectively reduce the average distance between agents and targets. Furthermore, from  \cref{graph3}, \cref{graph32}, and \cref{graph33}, we also can infer that, compared with OSG, MA-OSMA, and MA-OSEA algorithms, our proposed \texttt{MA-SPL} can make more targets gather within 1 unit of agents. It is worth noting that as the proportion of `Adversarial' targets increases, both the maximum runing average utility and the maximum running average number of targets within 1 unit exhibit a downward trend.
\subsection{Target Tracking under Extended Kalman-Filter Framework}
In previous \cref{sec:facility-location}, we considered a simplified multi-target tracking model, that is, we assume that the monitoring quality of each moving target $j$ only depends on its nearest agent. However, in many real-world scenarios, due to agents' different sensing capabilities and varying observation angles, relying \emph{solely} on the information collected by the nearest agent for accurately tracking the state of each target is unrealistic. Instead, we generally need to aggregate the data from multiple agents to comprehensively reconstruct the targets' behaviors. 

Recently, in order to obtain an accurate estimation of every target's location, \citep{hashemi2019submodular,hashemi2020randomized} employed an extend Kalman-filter(EKF) framework to process the observations of multiple distinct agents. Specifically, let us consider a target tracking task using a swarm of agents(UAVs) equipped with GPS and radar systems. In this scenario, at each time $t\in[T]$, each agent $i$ can measure the range to each target $j$ throughout its radar system and also can obtain its own position  information from GPS. Furthermore, like the work~\citep{hashemi2019submodular}, we also assume that the range measurements of the radar systems follow a quadratic model, namely,
\begin{equation}\label{equality:EKF}
r_{(\theta,s,i)\rightarrow j}(t)=\frac{1}{2}\|o_{(\theta,s,i)}(t)-o_{j}(t)\|^{2}_{2}+\xi_{(\theta,s,i)\rightarrow j}(t),
\end{equation} where  the symbol $r_{(\theta,s,i)\rightarrow j}(t)$ denotes the range measurement of agent $i$ to target $j$ after agent $i$ executes the action $(\theta,s,i)$, $o_{j}(t)$ is the location of target $j$, $\xi_{(\theta,s,i)\rightarrow j}(t)$ is the additive independent noise and the symbol $o_{(\theta,s,i)}(t)$ represents the new position of agent $i$ after moving from its previous location at time $t-1$ with a movement angle $\theta$ and speed $s$.

Note that when the new position $o_{(\theta,s,i)}(t)$ and the range measurement $r_{(\theta,s,i)\rightarrow j}(t)$ are known, we can view \cref{equality:EKF}  as a random quadratic experiment of the unknown location $o_{j}(t)$. In other words, different action $(\theta,s,i)$ can lead to  distinct random quadratic observation of the unknown parameter $o_{j}(t)$. Inspired by this perspective,  we can formulated the action selection problem in multi-target tracking task as an experimental design problem, which aims at selecting a feasible subset from the whole collection of  experiments $\{r_{(\theta,s,i)\rightarrow j}(t)=\frac{1}{2}\|o_{(\theta,s,i)}(t)-o_{j}(t)\|^{2}_{2}+\xi_{(\theta,s,i)\rightarrow j}(t) \Big|(\theta,s,i)\in\V\}$ such that the measurements of these selected sub-experiments can accurately estimate the location of every target. Particularly, under the classical Van Trees’ inequality~\citep{van2004detection}, the work~\citep{hashemi2019submodular} established a lower bound for the covariance matrix associated with the EKF estimator of the location $o_{j}(t)$. Specifically, at time step $t\in[T]$, if we consider the sub-experiment $S\subseteq \V$,  then the lower-bound matrix $\textbf{B}^{j}_{S}(t)$ in the Van
Trees’ inequality for the EKF estimate of the unknown location $o_{j}(t)$ can be expressed as (Please refer to Theorem 2 in \citep{hashemi2019submodular}):
\begin{equation}\label{EKF_lower}
	\textbf{B}^{j}_{S}(t)=\left(\sum_{(\theta,s,i)\in S}\frac{1}{\sigma^{2}_{(\theta,s,i)\rightarrow j}(t)}\left(\textbf{P}+\textbf{z}_{(\theta,s,i)\rightarrow j}\textbf{z}_{(\theta,s,i)\rightarrow j}^{T}\right)+\textbf{I}_{j}(t)\right)^{-1},
\end{equation} where $\textbf{z}_{(\theta,s,i)\rightarrow j}\triangleq o_{(\theta,s,i)}(t)-o_{j}(t-1)-\E\left(o_{j}(t)-o_{j}(t-1)\right)$, $\sigma^{2}_{(\theta,s,i)\rightarrow j}(t)\triangleq\text{Var}\left(\xi_{(\theta,s,i)\rightarrow j}(t)\right)$, $\textbf{P}\triangleq\text{Cov}\left(o_{j}(t)-o_{j}(t-1)\right)$ and $\textbf{I}_{j}(t)$ is the  Fisher information matrix  associated with the normalized random gap $\big(o_{j}(t)-o_{j}(t-1)-\E\left(o_{j}(t)-o_{j}(t-1)\right)\big)$. Note that in \cref{EKF_lower}, the tuple $(\theta,s,i)$ represents not only a selected action but also an observation experiment.

With this lower-bound covariance matrix $\textbf{B}^{j}_{S}(t)$, we then can utilize various so-called alphabetical criteria~\citep{chaloner1995bayesian,chamon2017approximate} to design a utility function for selecting a superior action set such that the resulting EKF estimation  can accurately approximate the location of each target. One commonly used strategy is to employ the A-optimality, i.e., we consider minimizing the trace of the lower-bound covariance matrix  $\textbf{B}^{j}_{S}(t)$  or equivalently maximize:
\begin{equation}\label{EKF_obj}
	f_{t}(S)=\sum_{j=1}^{30}\left(\text{Tr}\left(\textbf{I}^{-1}_{j}(t)\right)-\text{Tr}\left(\textbf{B}^{j}_{S}(t)\right)\right)
\end{equation} where ``$\text{Tr}$'' is the trace of matrix and we also consider $30$ moving targets in simulation. 

Thus, under the A-optimality criterion, we can model the agents' action selection task as a special instance of the multi-agent online coordination problem with the set objective function $f_{t}$ defined in \cref{EKF_obj}. Furthermore, according to  recent studies~\citep{harshaw2019submodular,hashemi2019submodular,thiery2022two}, we can show that the utility function $f_{t}$ in \cref{EKF_obj} is monotone $\alpha$-weakly DR-submodular and $(\gamma,\beta)$-weakly submodular (See Theorem 6 in \citep{hashemi2019submodular} and
Theorem C.2 in \citep{thiery2022two}).

In our simulations, to simplify the computation of the lower-bound matrix  $\textbf{B}^{j}_{S}(t)$,  we model the movement of each target as a two-dimensional Brownian motion. Specifically, we set $o_{j}(t)\triangleq o_{j}(t-1)+0.02*\N(\textbf{0}_{2},\textbf{I}_{2})$ where $\textbf{I}_{2}$ is the 2-dimensional identity matrix.  Moreover, we assume that the noise $\xi_{(\theta,s,i)\rightarrow j}$ follows an independent normal distribution, i.e., $\xi_{(\theta,s,i)\rightarrow j}\sim\N(0,0.01)$. As for agents, at every iteration $t\in[T]$, we adjust their speeds from a set of $2$, $7$, or $12$ units/s and simultaneously change their movement directions from ``up", ``down", ``left", ``right", or ``diagonally". As a result, the action set  $\V_{i}$  available to each agent $i\in[20]$ can be mathematically formulated as:
\begin{equation*}
	\V_{i}=\{(\theta,s,i):s\in\{2,7,12\}\text{units/s},\theta\in\{\frac{\pi}{4},\frac{\pi}{2},\frac{3\pi}{4},\pi,\dots,2\pi\}\},\forall i\in[20],
\end{equation*} where $\theta$ denotes the movement angle , $s$ is the speed  and  $i$ represents the unique identifier. 

\begin{figure*}[t]
	\centering
	\subfigure[Average Utility\label{graph41}]{\includegraphics[scale=0.18]{EKF_Utility.pdf}}
	\subfigure[Complete Graph \label{graph42}]{\includegraphics[scale=0.18]{EKF_Utility_MA_SPL_c.pdf}}
	\subfigure[Erdos-Renyi  Graph \label{graph43}]{\includegraphics[scale=0.18]{EKF_Utility_MA_SPL_r.pdf}}
	\vspace{-0.5em}
	\caption{ In \cref{graph41}, we compare the running average utility of our proposed \texttt{MA-SPL} and \texttt{MA-MPL} algorithms with that of the `RANDOM'  baseline in a multi-target tracking simulation with A-optimal objective functions. In `RANDOM' baseline, each agent $i$ randomly selects an action from its own action set $\V_{i}$ at every iteration. As for \cref{graph42} and \cref{graph43}, we illustrate the impact of different DR ratios $\alpha\in\{0.1,0.2,\dots,1\}$ on \texttt{MA-SPL} algorithm.   Particularly in \cref{graph42}, we considers a complete communication graph among agents, whereas we employs an Erdos-Renyi graph with an average degree of $4$ in \cref{graph43}. Note that \cref{graph41} only shows  the results for the best scenario ($\alpha=0.1$) and the worst-case scenario ($\alpha=1$) for our proposed \texttt{MA-SPL} algorithm.}\label{graph:total4}\vspace{-1.0em}
\end{figure*}

 Given the unknown diminishing-return(DR) ratio $\alpha$ of our investigated set function $f_{t}$, we test $10$ different configurations from $\alpha=0.1$ to $\alpha=1$ for our proposed \texttt{MA-SPL} and then present the results  in \cref{graph42} and \cref{graph43}. Particularly in \cref{graph42}, we considers a complete communication graph among agents, whereas we employs an Erdos-Renyi graph with an average degree of $4$ in \cref{graph43}. Subsequently, we compare the results of the best-case scenario ($\alpha=0.1$) and the worst-case scenario ($\alpha=1$) of our proposed \texttt{MA-SPL} algorithm with the \emph{parameter-free} \texttt{MA-MPL} algorithm and `RANDOM' baseline in \cref{graph41}. According to the curves in \cref{graph41}, we can find that the running average utility of our proposed \texttt{MA-MPL} and \texttt{MA-SPL} algorithms can significantly exceed the baseline `RANDOM' algorithm, which is in accord with our  \cref{thm:result1} and \cref{thm:result2}. Moreover, we also find that the \emph{parameter-free} \texttt{MA-MPL} algorithm can effectively outperform the \texttt{MA-SPL} algorithm associated with a $0.1$-network search.
It is worth noting that no previous works explore the MA-OC problem with weakly submodular objectives. Thus, we adopt the `RANDOM' algorithm as a baseline in \cref{graph41}.  In `RANDOM', each agent $i$ randomly selects an action from its own action set $\V_{i}$.  Like the previous \cref{sec:facility-location}, we also use the suffixes to represent two different choices for communication graphs in \cref{graph41}, where `c' stands for a complete graph and `r' denotes an Erdos-Renyi random graph with average degree $4$.
 
\subsection{More Details on Experimental Setups}
This subsection discusses some additional details about our experiments. 

At first, we describe the parameter configurations about our proposed `\texttt{MA-SPL}', `\texttt{MA-MPL}', `MA-OSMA' and `MA-OSEA'. Specifically, we make the following setups:
\begin{itemize}
	\item In `\texttt{MA-SPL}', namely, the \cref{alg:SPL}, we set the step size $\eta_{t}=\frac{1}{\sqrt{T}}$ and employ a $10$ batch of stochastic estimation to approximate the surrogate gradient of our proposed policy-based continuous extension in \cref{sec:surrogate_function}. As for the projection in Line 19 of \cref{alg:SPL}, we utilize the CVX optimization solver~\citep{grant2014cvx}.
	\item In `\texttt{MA-MPL}', namely, the \cref{alg:MPL} of \cref{appendix:alg2}, we set the batch size $L=10$ and the number of oracles $K=15$. As for the online linear maximization oracle, we utilize the online gradient ascent algorithm~\citep{yang2016tracking,zinkevich2003online} with step size $\eta=\mathcal{O}(\frac{1}{\sqrt{T}})$.
	\item In `MA-OSMA', namely Algorithm 1 of \citep{zhang2025nearoptimal}, we consider the Euclidean distance with $\phi(\x)=\frac{\|\x\|_{2}^{2}}{2}$, set the step size $\eta_{t}=\frac{1}{\sqrt{T}}$ and implement a $10$ batch of stochastic estimation to approximate the surrogate gradient of \emph{multi-linear extension}. Similarly, we also use the CVX solver for projection operations.
	\item In `MA-OSEA', namely Algorithm 2 of \citep{zhang2025nearoptimal}, we also set the step size $\eta_{t}=\frac{1}{\sqrt{T}}$ and consider the mixing parameter $\gamma=1/T^{1.5}$.
\end{itemize}
As for the communication graph $G$,  we consider two different setups:
\begin{itemize}
	\item `Complete graph' where we set the weight $w_{ij}=\frac{1}{n},\forall i,j\in[n]$ where $n=|\N|$ is the number of agents. 
	\item `Erdos-Renyi random graph with average degree $4$ where if the edge $(i,j)$ is an edge of the graph, we set $w_{ij}\triangleq1/(1+\max(d_{i},d_{j}))$ where $d_{i}$ is the degree of agent $i\in\N$ and when $(i,j)$ is not an edge of the graph and $i\neq j$, we consider $w_{ij}=0$. Finally, we set $w_{ii}\triangleq1-\sum_{j\in\mathcal{N}_{i}}w_{ij}$ where $\N_{i}$ is the neighboring nodes of agent $i$. 
\end{itemize}
Furthermore, for all curves related to \texttt{MA-SPL},  \texttt{MA-MPL}, MA-OSMA, MA-OSEA and OSG, we repeat these algorithms \textbf{five runs} and then report the average result. Note that, in \cref{graph1}, \cref{graph12}, \cref{graph13} and \cref{graph41}, the running average utility at any time $t$ is defined as $\big(\sum_{t_{1}\in[t]}\frac{f_{t}(\cup_{i\in[n]}\{a_{i}(t_{1})\})}{t}\big)$ where $a_{i}(t_{1})$ is the action chosen by agent $i\in[n]$ at time $t_{1}$.

\section{Proof of \texorpdfstring{\cref{thm1}}{}}\label{appendix_proof_thm}
In this section, we prove the \cref{thm1}.

$\textbf{1):}$ From \cref{def_extension}, we have that
\begin{equation}\label{Appendix_A.1}
	\begin{aligned}
&\frac{\partial F_{t}}{\partial \pi_{i,m}} (\uppi_{1},\dots,\uppi_{n})\\&=\sum_{a_{j}\in\V_{j}\cup\{\emptyset\},\forall j\in\N}\Big(f_{t}\big(\cup_{j=1}^{n}\{a_{j}\}\big)\frac{\partial\left(\prod_{j=1}^{n}p(a_{j}|\uppi_{j})\right)}{\partial \pi_{i,m}}\Big)\\
&=\sum_{a_{j}\in\V_{j}\cup\{\emptyset\},\forall j\in\N}\Big(f_{t}\big(\cup_{j=1}^{n}\{a_{j}\}\big)\frac{\partial p(a_{i}|\uppi_{i})}{\partial \pi_{i,m}}\prod_{j\neq i,j\in\N}p(a_{j}|\uppi_{j})\Big).
	\end{aligned}
\end{equation}

Note that $p(v_{i,m}|\uppi_{i}) =\pi_{i,m},\forall i\in[n],\forall m\in[\kappa_{i}]$ and $p(\emptyset | \uppi_{i}) = 1 - \sum_{m=1}^{K_{i}} \pi_{i,m},\forall i\in\N$. Therefore, we have $\frac{\partial p(a_{i}|\uppi_{i})}{\partial \pi_{i,m}}=1$ when $a_{i}=v_{i,m}$, $\frac{\partial p(a_{i}|\uppi_{i})}{\partial \pi_{i,m}}=-1$ when $a_{i}=\emptyset$ and  $\frac{\partial p(a_{i}|\uppi_{i})}{\partial \pi_{i,m}}=0$ when $a_{i}\notin\{v_{i,m},\emptyset\}$. Then, according to Eq.\eqref{Appendix_A.1}, we can get the following equality:
\begin{equation}\label{Appendix_A.2}
	\begin{aligned}
		&\frac{\partial F_{t}}{\partial \pi_{i,m}} (\uppi_{1},\dots,\uppi_{n})\\
		&=\sum_{a_{j}\in\V_{j}\cup\{\emptyset\},\forall j\in\N}\Big(f_{t}\big(\cup_{j=1}^{n}\{a_{j}\}\big)\frac{\partial p(a_{i}|\uppi_{i})}{\partial \pi_{i,m}}\prod_{j\neq i,j\in\N}p(a_{j}|\uppi_{j})\Big)\\
		&=\sum_{a_{j}\in\V_{j}\cup\{\emptyset\},\forall j\neq i}\left(\sum_{a_{i\in\V_{i}}}\Big(f_{t}\big(\cup_{j=1}^{n}\{a_{j}\}\big)\frac{\partial p(a_{i}|\uppi_{i})}{\partial \pi_{i,m}}\prod_{j\neq i,j\in\N}p(a_{j}|\uppi_{j})\Big)\right)\\
		&=\sum_{a_{j}\in\V_{j}\cup\{\emptyset\},\forall j\neq i}\left(\left(f_{t}\big(\{v_{i,m}\}\bigcup\left(\cup_{j\neq i}\{a_{j}\}\right)\big)-f_{t}\big(\cup_{j\neq i,j\in\N}\{a_{j}\}\big)\right)\prod_{j\neq i,j\in\N}p(a_{j}|\uppi_{j})\right)\\
		&=\sum_{a_{j}\in\V_{j}\cup\{\emptyset\},\forall j\neq i}\left(f_{t}\big(v_{i,m}\big|\cup_{j\neq i,j\in\N}\{a_{j}\}\big)\prod_{j\neq i,j\in\N}p(a_{j}|\uppi_{j})\right)\\
		&=\E_{a_{j}\sim\uppi_{j},\forall j\in\N}\Big(f_{t}\big(v_{i,m}\big|\cup_{j\neq i,j\in\N}\{a_{j}\}\big)\Big),
	\end{aligned}
\end{equation} where $a_{j}\sim\uppi_{j}$ indicates that action $a_{j}$
is  randomly selected from $\V_{j}\cup\{\emptyset\}$ based on the policy $\uppi_{j}$.

$\textbf{2):}$ When $f_{t}$ is monotone, we can know that $f_{t}\big(v_{i,m}\big|\cup_{j\neq i,j\in\N}\{a_{j}\}\big)\ge0$ such that $\frac{\partial F_{t}}{\partial \pi_{i,m}} (\uppi_{1},\dots,\uppi_{n})\ge0$. In other words, for any $(\uppi_{1},\dots,\uppi_{n})\in\prod_{i=1}^{n}\Delta_{\kappa_{i}}$, we have $\nabla F_{t}(\uppi_{1},\dots,\uppi_{n})\ge\mathbf{0}$, namely, for any two point $(\uppi^{a}_{1},\dots,\uppi^{a}_{n})\in\prod_{i=1}^{n}\Delta_{\kappa_{i}}$ and $(\uppi^{b}_{1},\dots,\uppi^{b}_{n})\in\prod_{i=1}^{n}\Delta_{\kappa_{i}}$, if  $\uppi^{a}_{i}\le\uppi^{b}_{i}$ for any $i\in\N$, we have $ F_{t}(\uppi^{a}_{1},\dots,\uppi^{a}_{n})\le F_{t}(\uppi^{b}_{1},\dots,\uppi^{b}_{n})$. So $F_{t}$ is monotone.

\textbf{3):} For any fixed two policy vector $(\uppi^{a}_{1},\dots,\uppi^{a}_{n})\in\prod_{i=1}^{n}\Delta_{\kappa_{i}}$ and $(\uppi^{b}_{1},\dots,\uppi^{b}_{n})\in\prod_{i=1}^{n}\Delta_{\kappa_{i}}$ where  $\uppi^{a}_{i}\le\uppi^{b}_{i},\forall i\in\N$, we consider an unified sampling strategy to generate actions. Before that, we set each $\uppi^{a}_{i}=(\pi_{i,1}^{a},\dots,\pi_{i,\kappa_{i}}^{a})$ and $\uppi^{b}_{i}=(\pi_{i,1}^{b},\dots,\pi_{i,\kappa_{i}}^{b}),\forall i\in\N$. At first, we transfer the sampling process according to policy $\uppi^{a}_{i}$ to a uniform random variable $X_{i}\in[0,1]$, namely,
\begin{equation}\label{def1}
	a(X_{i},\uppi^{a}_{i})\triangleq\left\{\begin{aligned}
		&v_{i,1}\ \ \ \text{If}\ X_{i}\in[0,\pi_{i,1}^{a})\\
		&v_{i,m}\ \ \text{If}\ X_{i}\in[\sum_{k=1}^{m-1}\pi_{i,k}^{a},\sum_{k=1}^{m}\pi_{i,k}^{a})\ \text{for any integer}\ m\in [2,\kappa_{i}]\\
		&\emptyset\ \ \ \ \ \text{If}\ X_{i}\ge\sum_{k=1}^{\kappa_{i}}\pi_{i,k}^{a}
	\end{aligned}\right.
\end{equation}
When $X_{i}$ is an uniform random variable over range $[0,1]$, it is easy to check that the $a(X_{i},\uppi^{a}_{i})$ follows the same law as the policy $\uppi^{a}_{i}$, that is, $\text{Pr}(a(X_{i},\uppi^{a}_{i})=v_{i,m})=\pi^{a}_{i,m}$ and $\text{Pr}(a(X_{i},\uppi^{a}_{i})=\emptyset)=1-\sum_{m=1}^{\kappa_{i}}\pi^{a}_{i,m}$ where the symbol `Pr' denotes the probability.

Similarly, we also can transfer the sampling process according to policy $\uppi^{b}_{i}$ to two independent uniform random variables $X_{i},Y\in[0,1]$, namely,
\begin{equation}\label{def2}
	a(X_{i},Y,\uppi^{a}_{i},\uppi^{b}_{i})\triangleq\left\{\begin{aligned}
		&a(X_{i},\uppi^{a}_{i})\ \ \ \ \ \ \text{If}\ X_{i}<\sum_{k=1}^{\kappa_{i}}\pi_{i,k}^{a}\\
		&v_{i,1}\ \ \ \text{If}\ X_{i}\ge\sum_{k=1}^{\kappa_{i}}\pi_{i,k}^{a}\ \text{and}\ Y\in[0,\frac{\pi_{i,1}^{b}-\pi_{i,1}^{a}}{1-\sum_{k=1}^{\kappa_{i}}\pi_{i,k}^{a}})\\
		&v_{i,m}\ \ \text{If}\ X_{i}\ge\sum_{k=1}^{\kappa_{i}}\pi_{i,k}^{a}\ \text{and}\ Y\in\big[\frac{\sum_{k=1}^{m-1}(\pi_{i,k}^{b}-\pi_{i,k}^{a})}{1-\sum_{k=1}^{\kappa_{i}}\pi_{i,k}^{a}},\frac{\sum_{k=1}^{m}(\pi_{i,k}^{b}-\pi_{i,k}^{a})}{1-\sum_{k=1}^{\kappa_{i}}\pi_{i,k}^{a}}\big),\forall m\in [2,\kappa_{i}]\\
		&\emptyset\ \ \ \ \ \ \text{If}\ X_{i}\ge\sum_{k=1}^{\kappa_{i}}\pi_{i,k}^{a}\ \text{and}\ Y\ge\sum_{k=1}^{\kappa_{i}}(\pi_{i,k}^{b}-\pi_{i,k}^{a})
	\end{aligned}\right.
\end{equation} 

From part \textbf{1)} and fixed a sequence of independent uniform variables $X_{i},Y_{i},\forall i\in\N$, we can have that
\begin{equation*}
\frac{\partial F_{t}}{\partial \pi_{i,m}} (\uppi^{a}_{1},\dots,\uppi^{a}_{n})=\E_{X_{i}}\Big(f_{t}\big(v_{i,m}\big|\cup_{j\neq i,j\in\N}\{a(X_{i},\uppi_{i}^{a})\}\big)\Big),
\end{equation*}
and
\begin{equation*}
	\frac{\partial F_{t}}{\partial \pi_{i,m}} (\uppi^{b}_{1},\dots,\uppi^{b}_{n})=\E_{(X_{i},Y_{i})}\Big(f_{t}\big(v_{i,m}\big|\cup_{j\neq i,j\in\N}\{a(X_{i},Y_{i},\uppi_{i}^{a},\uppi_{i}^{b})\}\big)\Big).
\end{equation*}
From Eq.\eqref{def1} and Eq.\eqref{def2}, we can know that $\{a(X_{i},\uppi^{a}_{i})\}\subseteq\{a(X_{i},Y_{i},\uppi^{a}_{i},\uppi^{b}_{i})\}$ such that $\cup_{j\neq i,j\in\N}\{a(X_{i},\uppi_{i}^{a})\}\subseteq\cup_{j\neq i,j\in\N}\{a(X_{i},Y_{i},\uppi_{i}^{a},\uppi_{i}^{b})\}$. Then, from the definition of $\alpha$-DR submodularity, we have that $f_{t}\big(v_{i,m}\big|\cup_{j\neq i,j\in\N}\{a(X_{i},\uppi_{i}^{a})\}\big)\ge\alpha f_{t}\big(v_{i,m}\big|\cup_{j\neq i,j\in\N}\{a(X_{i},Y_{i},\uppi_{i}^{a},\uppi_{i}^{b})\}\big)$, so $\nabla F_{t}(\uppi^{a}_{1},\dots,\uppi^{a}_{n})\ge \alpha \nabla F_{t}(\uppi^{b}_{1},\dots,\uppi^{b}_{n})$.

\textbf{4):} For any subset $S$ within the constraint of problem~\eqref{equ_problem} and any point $(\uppi_{1},\dots,\uppi_{n})\in\prod_{i=1}^{n}\Delta_{\kappa_{i}}$, when $f_{t}$ is monotone $(\gamma,\beta)$-weakly monotone submodular,  we firstly can show that for any two actions $a_{1},a_{2}\in\V$ and the subset $B\subseteq\V$, we have that  
\begin{equation*}
	\gamma\big(f_{t}(a_1| B\cup\{a_{2}\})+f_{t}(a_{2}|B)\big)=\gamma\big(f_{t}(B\cup\{a_{1},a_{2}\})-f(B)\big)\le f_{t}(a_{1}|B)+ f_{t}(a_{2}|B),
\end{equation*}such that 
\begin{equation}\label{equ_appendix2_3}
	f_{t}(a_{1}|B)\ge \gamma f_{t}(a_1| B\cup\{a_{2}\})-(1-\gamma)f_{t}(a_{2}|B).
\end{equation}
Therefore, we can show that
\begin{equation}\label{equ_appendix2_4}
	\begin{aligned}
		&\sum_{(i,m): v_{i,m}\in S}\frac{\partial F_{t}}{\partial \pi_{i,m}}(\uppi_{1},\dots,\uppi_{n})\\&=\sum_{(i,m): v_{i,m}\in S}\E_{a_{j}\sim\uppi_{j},\forall j\in\N}\Big( f_{t}\big(v_{i,m}\big|\cup_{j\neq i,j\in\N}\{a_{j}\}\big)\Big)\\
		&\ge\sum_{(i,m): v_{i,m}\in S}\E_{a_{j}\sim\uppi_{j},\forall j\in\N}\Big( \gamma f_{t}\big(v_{i,m}\big|\cup_{j\in\N}\{a_{j}\}\big)-(1-\gamma)f_{t}\big(a_{i}\big|\cup_{j\neq i,j\in\N}\{a_{j}\}\big)\Big),
	\end{aligned}
\end{equation} where the first equality follows from part $\textbf{1)}$ of \cref{thm1} and the second inequality comes from the Eq.\eqref{equ_appendix2_3}. Then, from the $\gamma$-weakly submodularity,  
\begin{equation}\label{equ_appendix2_5}
	\begin{aligned}
		&\sum_{(i,m): v_{i,m}\in S}\E_{a_{j}\sim\uppi_{j},\forall j\in\N}\Big( f_{t}\big(v_{i,m}\big|\cup_{j\in\N}\{a_{j}\}\big)\Big)\\
		&\ge\gamma\E_{a_{j}\sim\uppi_{j},\forall j\in\N}\Big( \gamma f_{t}\big(S\big|\cup_{j\in\N}\{a_{j}\}\big)\Big)\\
		&\ge\gamma\Big(f_t(S)-F_{t}(\uppi_{1},\dots,\uppi_{n})\Big),
	\end{aligned}
\end{equation} where the final inequality follows from the monotonicity.

Furthermore,  from the $\beta$-weakly  upper submodularity,we also have, 
\begin{equation}\label{equ_appendix2_6}
	\begin{aligned}
		&\sum_{(i,m): v_{i,m}\in S}\E_{a_{j}\sim\uppi_{j},\forall j\in\N}\Big(f_{t}\big(a_{i}\big|\cup_{j\neq i,j\in\N}\{a_{j}\}\big)\Big)\\
		&\le\sum_{i\in\N}\E_{a_{j}\sim\uppi_{j},\forall j\in\N}\Big(f_{t}\big(a_{i}\big|\cup_{j\neq i,j\in\N}\{a_{j}\}\big)\Big)\\
		&\le\beta \E_{a_{j}\sim\uppi_{j},\forall j\in\N}\Big(f_{t}\big(\cup_{j\in\N}\{a_{j}\}\big)-f_{t}(\emptyset)\Big)\\
		&=\beta F_{t}(\uppi_{1},\dots,\uppi_{n}),
	\end{aligned}
\end{equation}where the second inequality follows from the definition of  $\beta$-upper submodularity.

Merging Eq.\eqref{equ_appendix2_6} and Eq.\eqref{equ_appendix2_5} into Eq.\eqref{equ_appendix2_4}, we have that
\begin{equation*}
	\begin{aligned}
		&\sum_{(i,m): v_{i,m}\in S}\frac{\partial F_{t}}{\partial \pi_{i,m}}(\uppi_{1},\dots,\uppi_{n})\\
		&\ge\sum_{(i,m): v_{i,m}\in S}\E_{a_{j}\sim\uppi_{j},\forall j\in\N}\Big( \gamma f_{t}\big(v_{i,m}\big|\cup_{j\in\N}\{a_{j}\}\big)-(1-\gamma)f_{t}\big(a_{i}\big|\cup_{j\neq i,j\in\N}\{a_{j}\}\big)\Big)\\
		&\ge(\gamma^{2}\Big(f_t(S)-F_{t}(\uppi_{1},\dots,\uppi_{n})\Big)-(1-\gamma)\beta F_{t}(\uppi_{1},\dots,\uppi_{n}))\\
		&=\gamma^{2}f_t(S)-\left((1-\gamma)\beta+\gamma^{2}\right)F_{t}(\uppi_{1},\dots,\uppi_{n})).
	\end{aligned}
\end{equation*}

As for $f_{t}$ is monotone $\alpha$-weakly DR-submodular, we also can show that 
\begin{equation*}
	\begin{aligned}
		\sum_{(i,m): v_{i,m}\in S}\frac{\partial F_{t}}{\partial \pi_{i,m}}(\uppi_{1},\dots,\uppi_{n})&=\E_{a_{j}\sim\uppi_{j},\forall j\in\N}\Big(\sum_{(i,m): v_{i,m}\in S} f_{t}\big(v_{i,m}\big|\cup_{j\neq i,j\in\N}\{a_{j}\}\big)\Big)\\
		&\ge \alpha\E_{a_{j}\sim\uppi_{j},\forall j\in\N}\Big(f_{t}\left(S| \cup_{j\in\N}\{a_{j}\}\right)\Big)\\
		&\ge\alpha\left(f_{t}(S)-F_{t}(\uppi_{1},\dots,\uppi_{n})\right),
	\end{aligned} 
\end{equation*} where the first inequality follows from the $\alpha$-DR submodularity and the final inequality comes from the monotonicity of $f_{t}$.

\section{Proof of \texorpdfstring{\cref{thm2}}{}}\label{appendix_proof_thm0}
In this section, we verify the \cref{thm2}.

Before that, we firstly suppose that the symbol $\one_{S}$ represents the indicator function over the subset $S\subseteq\V$, namely, for any action $v_{i,m}\in S$, the vector $\one_{S}$ sets the corresponding probability for this action $v_{i,m}$ as $1$. Note that, when $S$ satisfies the constraints of problem \eqref{equ_problem}, namely, $|S\cap\V_{i}|\le1$, we can infer that $\one_{S}\in\prod_{i=1}^{n}\Delta_{\kappa_{i}}$.

\subsection{Proof of Part \textbf{2)} and Part \textbf{3)} in \texorpdfstring{\cref{thm2}}{}}
With the previously defined symbol $\one_{S}$, we can rewrite  the $\textbf{4)}$ of \cref{thm1} as:

 \textbf{i)}: when $f_{t}$ is monotone $\alpha$-weakly DR-submodular, the following inequality holds:
\begin{equation*}
	\alpha\Big(f_{t}(S)-F_{t}(\uppi_{1},\dots,\uppi_{n})\Big)\le\langle\one_{S},\nabla F_{t}(\uppi_{1},\dots,\uppi_{n})\rangle;
\end{equation*} 

\textbf{ii):} when $f_{t}$ is monotone $(\gamma,\beta)$-weakly submodular, we can show that
\begin{equation*}
	\Big(\gamma^{2}f_{t}(S)-(\beta(1-\gamma)+\gamma^{2})F_{t}(\uppi_{1},\dots,\uppi_{n})\Big)\le\langle\one_{S},\nabla F_{t}(\uppi_{1},\dots,\uppi_{n})\rangle.
\end{equation*}

In order to prove \cref{thm2}, we next show the relationship between $\langle(\uppi_{1},\dots,\uppi_{n}),\nabla F_{t}(\uppi_{1},\dots,\uppi_{n})\rangle$ and  $F_{t}(\uppi_{1},\dots,\uppi_{n})$.

From Eq.\eqref{Appendix_A.2}, if $\uppi_{i}=(\pi_{i,1},\dots,\pi_{i,\kappa_{i}}),\forall i\in\N$, we can show
\begin{equation}\label{Appendix_B.2}
	\begin{aligned}
		&\langle(\uppi_{1},\dots,\uppi_{n}),\nabla F_{t}(\uppi_{1},\dots,\uppi_{n})\rangle\\
		&=\sum_{i=1}^{n}\sum_{m=1}^{\kappa_{i}}\pi_{i,m}\frac{\partial F_{t}}{\partial \pi_{i,m}} (\uppi_{1},\dots,\uppi_{n})\\
		&=\sum_{i=1}^{n}\sum_{m=1}^{\kappa_{i}}\pi_{i,m}\E_{a_{j}\sim\uppi_{j},\forall j\in\N}\Big(f_{t}\big(v_{i,m}\big|\cup_{j\neq i,j\in\N}\{a_{j}\}\big)\Big)\\
		&=\sum_{i=1}^{n}\sum_{m=1}^{\kappa_{i}}\left(\pi_{i,m}\sum_{a_{j}\in\V_{j}\cup\{\emptyset\},\forall j\neq i}\left(f_{t}\big(v_{i,m}\big|\cup_{j\neq i,j\in\N}\{a_{j}\}\big)\prod_{j\neq i,j\in\N}p(a_{j}|\uppi_{j})\right)\right)\\
		&=\sum_{i=1}^{n}\sum_{m=1}^{\kappa_{i}}\left(p(v_{i,m}|\uppi_{i})\sum_{a_{j}\in\V_{j}\cup\{\emptyset\},\forall j\neq i}\left(f_{t}\big(v_{i,m}\big|\cup_{j\neq i,j\in\N}\{a_{j}\}\big)\prod_{j\neq i,j\in\N}p(a_{j}|\uppi_{j})\right)\right)\\
		&=\sum_{i=1}^{n}\sum_{m=1}^{\kappa_{i}}\left(\sum_{a_{j}\in\V_{j}\cup\{\emptyset\},\forall j\neq i}\left(f_{t}\big(v_{i,m}\big|\cup_{j\neq i,j\in\N}\{a_{j}\}\big)p(v_{i,m}|\uppi_{i})*\prod_{j\neq i,j\in\N}p(a_{j}|\uppi_{j})\right)\right)\\
		&=\sum_{i=1}^{n}\sum_{a_{i}\in\V_{i}\cup\{\emptyset\}}\left(\sum_{a_{j}\in\V_{j}\cup\{\emptyset\},\forall j\neq i}\left(f_{t}\big(a_{i}\big|\cup_{j\neq i,j\in\N}\{a_{j}\}\big)\prod_{j\in\N}p(a_{j}|\uppi_{j})\right)\right)\\
		&=\sum_{i=1}^{n}\E_{a_{j}\sim\uppi_{j},\forall j\in\N}\Big(f_{t}\big(a_{i}\big|\cup_{j\neq i,j\in\N}\{a_{j}\}\big)\Big),
	\end{aligned}
\end{equation} where the second equality follows from the $\textbf{1)}$ of \cref{thm1}, the third equality comes from the final equality of the Eq.\eqref{Appendix_A.2}, the fourth equality follows from  $p(v_{i,m}|\uppi_{i})=\pi_{i,m}$ and the final equality from the fact that the random element $a_{i}$ is drawn from the policy $\uppi_{i}$.

Therefore, if the original set function $f_{t}$ is $\beta$-weakly submodular from above, we can show that

\begin{equation}\label{Appendix_B.1}
	\begin{aligned}
		&\langle(\uppi_{1},\dots,\uppi_{n}),\nabla F_{t}(\uppi_{1},\dots,\uppi_{n})\rangle\\
		&=\sum_{i=1}^{n}\E_{a_{j}\sim\uppi_{j},\forall j\in\N}\Big(f_{t}\big(a_{i}\big|\cup_{j\neq i,j\in\N}\{a_{j}\}\big)\Big)\\
		&=\E_{a_{j}\sim\uppi_{j},\forall j\in\N}\Big(\sum_{i=1}^{n}f_{t}\big(a_{i}\big|\cup_{j\neq i,j\in\N}\{a_{j}\}\big)\Big)\\
		&\le\beta\E_{a_{j}\sim\uppi_{j},\forall j\in\N}\Big(f_{t}(\cup_{j\in\N}\{a_{j}\})-f_{t}(\emptyset)\Big)\\
		&=\beta F_{t}(\uppi_{1},\dots,\uppi_{n}).
	\end{aligned}
\end{equation}

Therefore, by merging Eq.\eqref{Appendix_B.1} and the \textbf{4)} of \cref{thm1}, when the set function $f_{t}$ is monotone and $(\gamma,\beta)$-weakly submodular, for any subset $S$ within the constraint of problem~\eqref{equ_problem} and $(\uppi_{1},\dots,\uppi_{n})\in\prod_{i=1}^{n}\Delta_{\kappa_{i}}$, we have that
\begin{equation*}
	\left\langle\one_{S}-(\uppi_{1},\dots,\uppi_{n}),\nabla F_{t}(\uppi_{1},\dots,\uppi_{n})\right\rangle\ge\gamma^{2} f_{t}(S)-(\beta+\beta(1-\gamma)+\gamma^{2})F_{t}(\uppi_{1},\dots,\uppi_{n}).
\end{equation*}

Similarly, from the definition of $\alpha$-weakly DR-submodular function, we know that $\alpha$-weakly DR-submodular function automatically satisfies the conditions for being $\frac{1}{\alpha}$-weakly submodular from above, we also have that, when $f_{t}$ is monotone $\alpha$-weakly DR-submodular, 
\begin{equation*}
		\left\langle\one_{S}-(\uppi_{1},\dots,\uppi_{n}),\nabla F_{t}(\uppi_{1},\dots,\uppi_{n})\right\rangle\ge\alpha f_{t}(S)-(\alpha+\frac{1}{\alpha})F_{t}(\uppi_{1},\dots,\uppi_{n}).
\end{equation*} 

Therefore, when $(\uppi^{s}_{1},\dots,\uppi^{s}_{n})$ is a stationary point  of $F_{t}$ over the domain $\prod_{i=1}^{n}\Delta_{\kappa_{i}}$, we have

\textbf{i):} $f_{t}$ is monotone $(\gamma,\beta)$-weakly submodular, for any $S$ within the constraint of problem~\eqref{equ_problem},
\begin{equation*}
\gamma^{2} f_{t}(S)-(\beta+\beta(1-\gamma)+\gamma^{2})F_{t}(\uppi^{s}_{1},\dots,\uppi^{s}_{n})\le	\left\langle\one_{S}-(\uppi^{s}_{1},\dots,\uppi^{s}_{n}),\nabla F_{t}(\uppi^{s}_{1},\dots,\uppi^{s}_{n})\right\rangle\le0.
\end{equation*}
In other words,
\begin{equation*}
F_{t}(\uppi^{s}_{1},\dots,\uppi^{s}_{n})\ge\frac{\gamma^{2}}{\beta+\beta(1-\gamma)+\gamma^{2}} f_{t}(S^{*}),
\end{equation*} where $S^{*}$ is the optimal subset of problem~\eqref{equ_problem}.

We get the \textbf{3)} in \cref{thm2}.

Similarly, \textbf{ii):} when $f_{t}$ is monotone $\alpha$-weakly DR-submodular, for any $S$ within the constraint of problem~\eqref{equ_problem},
\begin{equation*}
	\alpha f_{t}(S)-(\alpha+\frac{1}{\alpha})F_{t}(\uppi^{s}_{1},\dots,\uppi^{s}_{n})\le	\left\langle\one_{S}-(\uppi^{s}_{1},\dots,\uppi^{s}_{n}),\nabla F_{t}(\uppi^{s}_{1},\dots,\uppi^{s}_{n})\right\rangle\le0.
\end{equation*}
In other words,
\begin{equation*}
	F_{t}(\uppi^{s}_{1},\dots,\uppi^{s}_{n})\ge\frac{\alpha^{2}}{1+\alpha} f_{t}(S^{*}).
\end{equation*} where $S^{*}$ is the optimal subset of problem~\eqref{equ_problem}.

We get the \textbf{2)} in \cref{thm2}.
\subsection{Proof of Part \textbf{1)}  in \texorpdfstring{\cref{thm2}}{}}
In this subsection, we prove the part \textbf{1)} in \cref{thm2}.

From Eq.\eqref{equ_appendix2_4}, we have

\begin{equation*}
	\langle\one_{S},\nabla F_{t}(\uppi_{1},\dots,\uppi_{n})\rangle=\sum_{(i,m): v_{i,m}\in S}\E_{a_{j}\sim\uppi_{j},\forall j\in\N}\Big( f_{t}\big(v_{i,m}\big|\cup_{j\neq i,j\in\N}\{a_{j}\}\big)\Big).
\end{equation*}

From Eq.\eqref{Appendix_B.2}, we have 

\begin{equation*}
\langle(\uppi_{1},\dots,\uppi_{n}),\nabla F_{t}(\uppi_{1},\dots,\uppi_{n})\rangle=\sum_{i=1}^{n}\E_{a_{j}\sim\uppi_{j},\forall j\in\N}\Big(f_{t}\big(a_{i}\big|\cup_{j\neq i,j\in\N}\{a_{j}\}\big)\Big).
\end{equation*}

Therefore, we have that
\begin{equation}\label{Appendix.B.2.0}
\begin{aligned}
	&\langle\one_{S}-(\uppi_{1},\dots,\uppi_{n}),\nabla F_{t}(\uppi_{1},\dots,\uppi_{n})\rangle\\
	&=\sum_{(i,m): v_{i,m}\in S}\E_{a_{j}\sim\uppi_{j},\forall j\in\N}\Big( f_{t}\big(v_{i,m}\big|\cup_{j\neq i,j\in\N}\{a_{j}\}\big)\Big)-\sum_{i=1}^{n}\E_{a_{j}\sim\uppi_{j},\forall j\in\N}\Big(f_{t}\big(a_{i}\big|\cup_{j\neq i,j\in\N}\{a_{j}\}\big)\Big)\\
	&=\E_{a_{j}\sim\uppi_{j},\forall j\in\N}\Big(\sum_{(i,m): v_{i,m}\in S}f_{t}\big(v_{i,m}\big|\cup_{j\neq i,j\in\N}\{a_{j}\}\big)-\sum_{i=1}^{n}f_{t}\big(a_{i}\big|\cup_{j\neq i,j\in\N}\{a_{j}\}\big)\Big)\\
	&=\E_{a_{j}\sim\uppi_{j},\forall j\in\N}\Big(\sum_{v_{i,m}\in\left( S\setminus\cup_{j\in\N}\{a_{j}\}\right)}f_{t}\big(v_{i,m}\big|\cup_{j\neq i,j\in\N}\{a_{j}\}\big)-\sum_{a_{i}\in\left( \cup_{j\in\N}\{a_{j}\}\setminus S\right)}f_{t}\big(a_{i}\big|\cup_{j\neq i,j\in\N}\{a_{j}\}\big)\Big).
\end{aligned}
\end{equation}

When $f_{t}$ is monotone submodular, we have 
\begin{equation}\label{Appendix.B.2.1}
\sum_{v_{i,m}\in\left( S\setminus\cup_{j\in\N}\{a_{j}\}\right)}f_{t}\big(v_{i,m}\big|\cup_{j\neq i,j\in\N}\{a_{j}\}\big)\ge f_{t}\big(\cup_{j\in\N}\{a_{j}\})\cup S\big)-f_{t}(\cup_{j\in\N}\{a_{j}\}).
\end{equation}

Furthermore, from the definition of curvature $c$, we also have
\begin{equation}\label{Appendix.B.2.2}
\begin{aligned}
f_{t}\big(\cup_{j\in\N}\{a_{j}\})\cup S\big)-f_{t}(S)\ge(1-c)\Big(f_{t}(\cup_{j\in\N}\{a_{j}\})-f_{t}(\cup_{j\in\N}\{a_{j}\}\cap S)\Big),
\end{aligned}
\end{equation} where this inequality follows from $\big(\cup_{j\in\N}\{a_{j}\})\cup S\big)\setminus S=\big(\cup_{j\in\N}\{a_{j}\})\big)\setminus\big(\cup_{j\in\N}\{a_{j}\}\cap S\big)$.

Similarly, from the submodularity, we have 
\begin{equation}\label{Appendix.B.2.3}
\sum_{a_{i}\in\left(\cup_{j\in\N}\{a_{j}\}\setminus S\right)}f_{t}\big(a_{i}\big|\cup_{j\neq i,j\in\N}\{a_{j}\}\big)\le\Big(f_{t}(\cup_{j\in\N}\{a_{j}\})-f_{t}(\cup_{j\in\N}\{a_{j}\}\cap S)\Big),
\end{equation} where the inequality comes from that $\big(\cup_{j\in\N}\{a_{j}\}\cap S\big)\subseteq\big(\cup_{j\neq i,j\in\N}\{a_{j}\}\big)$ if $a_{i}\in \cup_{j\in\N}\{a_{j}\}\setminus S$.

Merging Eq.\eqref{Appendix.B.2.1},Eq.\eqref{Appendix.B.2.2} and Eq.\eqref{Appendix.B.2.3} into Eq.\eqref{Appendix.B.2.0}, we can show that
\begin{equation*}
\begin{aligned}
	&\langle\one_{S}-(\uppi_{1},\dots,\uppi_{n}),\nabla F_{t}(\uppi_{1},\dots,\uppi_{n})\rangle\\
	&=\E_{a_{j}\sim\uppi_{j},\forall j\in\N}\Big(\sum_{v_{i,m}\in S\setminus\cup_{j\in\N}\{a_{j}\}}f_{t}\big(v_{i,m}\big|\cup_{j\neq i,j\in\N}\{a_{j}\}\big)-\sum_{a_{i}\in \cup_{j\in\N}\{a_{j}\}\setminus S}f_{t}\big(a_{i}\big|\cup_{j\neq i,j\in\N}\{a_{j}\}\big)\Big)\\
	&\ge\E_{a_{j}\sim\uppi_{j},\forall j\in\N}\Bigg( f_{t}(S)-f_{t}(\cup_{j\in\N}\{a_{j}\})-c\Big(f_{t}(\cup_{j\in\N}\{a_{j}\})-f_{t}(\cup_{j\in\N}\{a_{j}\}\cap S)\Big)\Bigg)\\
&\ge\E_{a_{j}\sim\uppi_{j},\forall j\in\N}\Big( f_{t}(S)-(1+c)f_{t}(\cup_{j\in\N}\{a_{j}\})\Big)\\
&=f_{t}(S)-(1+c)F_{t}(\uppi_{1},\dots,\uppi_{n}).
\end{aligned}
\end{equation*}

Thus, if $(\uppi^{s}_{1},\dots,\uppi^{s}_{n})$ is a stationary point  of $F_{t}$ over the domain $\prod_{i=1}^{n}\Delta_{\kappa_{i}}$, when $f_{t}$ is monotone submodular with curvature $c$, we have, for any $S$ in the constraint of problem \eqref{equ_problem}, we have
\begin{equation*}
	f_{t}(S)-(1+c)F_{t}(\uppi^{s}_{1},\dots,\uppi^{s}_{n})\le 	\langle\one_{S}-(\uppi^{s}_{1},\dots,\uppi^{s}_{n}),\nabla F_{t}(\uppi^{s}_{1},\dots,\uppi^{s}_{n})\rangle\le0.
\end{equation*}
So, \begin{equation*}
F_{t}(\uppi^{s}_{1},\dots,\uppi^{s}_{n})\ge\frac{1}{1+c}f_{t}(S^{*}),
\end{equation*} where $S^{*}$ is the optimal subset of problem~\eqref{equ_problem}.
\subsection{A Policy-based Continuous Extension with \texorpdfstring{$1/2$-Approximation Stationary Point}{}}\label{appendix:special_case}
In this subsection  we consider a special set function. 

At first, let  the universe set $U$ consist of $n-1$ elements $\{x_{1},\dots,x_{n-1}\}$ and $n-k$ elements $\{y_{1},\dots,y_{n-k}\}$, all of weight $1$, and $n-1$ elements $\{\epsilon_{1},\dots,\epsilon_{n-1}\}$ of arbitrarily
small weight $\epsilon>0$. Then, we define two different types of sets namely, $A_{i}$ and $A_{i+n}$ for any $i\in[n]$, that is to say ,
\begin{equation*}
	\begin{aligned}
		&A_{i}\triangleq\{\epsilon_{i}\}\text{\ for $1\le i\le n-1$,}\ \ \ \ \ \ \ \ \ \ \ \ \ \ \ \ \ \ \ A_{n}\triangleq\{x_{1},\dots,x_{n-1}\},\\
	&A_{n+i}\triangleq\{x_{i}\}\text{\ for $1\le i\le n-1$,}\ \ \ \ \ \ \ \ \ \ \ \ \ \ A_{2n}\triangleq\{y_{1},\dots,y_{n-k}\}.\\
	\end{aligned}
\end{equation*}
After that, we define a coverage set function $f:2^{\V}\rightarrow\R_{+}$ over these $2n$ distinct set $\{A_{1},\dots,A_{n},A_{n+1},\dots,A_{2n}\}$ where $\V=[2n]$. Specifically,for any subset $S\subseteq\V$,
\begin{equation}\label{equ:cover_set}
	f(S)\triangleq\sum_{v\in\bigcup_{i\in S}A_{i}}w(v),
\end{equation} where $w(v)$ is the weight of element $v$.

Moreover, we consider a partition constraint that contains at most one of $\{A_{i},A_{n+i}\}$ for any $i\in[n]$. If we set $\V_{i}\triangleq\{i,i+n\}$ and $\V\triangleq\bigcup_{i\in[n]}\V_{i}\triangleq[2n]$, we naturally obtain the following coverage maximization problem:
\begin{equation}\label{equ:coverage}
	\max_{S\subseteq\V} f(S)\ \ \text{s.t.}\ |S\cap\V_{i}|\le 1\ \ \forall i\in[n].	
\end{equation}
Note that this problem~\eqref{equ:coverage} is a special case of the concern problem~\eqref{equ_problem}. From the result of \citep{filmus2012power}, we know that the coverage function $f$ in \cref{equ:cover_set} is a submodular set function, namely, $\alpha=\beta=\gamma=1$.

A key feature of the coverage maximization problem~\eqref{equ:coverage} is that \citep{filmus2012power} found that the standard greedy~\citep{nemhauser1978analysis} will be stuck at a local maximum subset $\{A_{1},\dots,A_{n}\}$ where $S=[n]$ and $f(S)=(1+\epsilon)n$. In contrast, when $\epsilon$ is very small,  the optimal subset for the  problem~\eqref{equ:coverage}  is $\{A_{n+1},\dots,A_{2n}\}$ where $S=\{n+1,\dots,2n\}$ and $f(S)=2n-k-1$. Note that $\lim_{n\rightarrow\infty}\lim_{\epsilon\rightarrow0\text{\ and\ }k\rightarrow0}\frac{(1+\epsilon)n}{2n-k-1}=\frac{1}{2}$.
Motivated by this finding of \citep{filmus2012power}, we also can show that the 
point $\one_{[n]}\triangleq(\underbrace{1,\dots,1}_{n},0,\dots,0)$ is a local stationary point of the policy-based continuous extension of the set function $f$ in \cref{equ:cover_set}. More specifically, we have the following theorem:
\begin{theorem}
The point $\one_{[n]}$ is a stationary point of the policy-based continuous extension $F$ of the set function $f$ in \cref{equ:cover_set}. Moreover, we can show $\frac{F(\one_{[n]})}{f(\{n+1,\dots,2n\})}=\frac{(1+\epsilon)n}{2n-k-1}\rightarrow \frac{1}{2}$.
\end{theorem}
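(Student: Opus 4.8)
The plan is to exploit that $\one_{[n]}$ is a vertex of the feasible region $\prod_{i=1}^{n}\Delta_{2}$ (here every $\kappa_{i}=2$), so it encodes a \emph{deterministic} policy in which each agent $i$ plays action $i$, i.e. selects the set $A_{i}$, with probability one. Consequently the expectation in \cref{def_extension} collapses and $F(\one_{[n]})=f([n])$, and by part \textbf{1)} of \cref{thm1} each partial derivative at $\one_{[n]}$ reduces to a single marginal gain evaluated at the fixed index set $[n]\setminus\{i\}$, namely $\frac{\partial F}{\partial\pi_{i,1}}(\one_{[n]})=f(i\mid[n]\setminus\{i\})$ and $\frac{\partial F}{\partial\pi_{i,2}}(\one_{[n]})=f(i+n\mid[n]\setminus\{i\})$.

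First I would read off these gradient entries from the coverage structure. For $i\in[n-1]$, the set $\bigcup_{j\in[n]\setminus\{i\}}A_{j}$ already contains all of $\{x_{1},\dots,x_{n-1}\}$ (through $A_{n}$) together with every $\epsilon_{j}$, $j\neq i$; thus adding $A_{i}=\{\epsilon_{i}\}$ contributes weight $\epsilon$, while adding $A_{i+n}=\{x_{i}\}$ contributes $0$. For $i=n$, the set $\bigcup_{j\in[n-1]}A_{j}$ equals $\{\epsilon_{1},\dots,\epsilon_{n-1}\}$, so adding $A_{n}$ contributes the $n-1$ newly covered $x$-elements and adding $A_{2n}$ contributes the $n-k$ elements $y_{1},\dots,y_{n-k}$. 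Hence the only nonzero entries are $\frac{\partial F}{\partial\pi_{i,1}}=\epsilon$ for $i\in[n-1]$, and $\frac{\partial F}{\partial\pi_{n,1}}=n-1$, $\frac{\partial F}{\partial\pi_{n,2}}=n-k$.

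Next I would verify the stationarity condition of \cref{def:stationary}. Since the domain is a product of simplices, $\langle\y-\one_{[n]},\nabla F(\one_{[n]})\rangle=\sum_{i=1}^{n}\langle\y_{i}-(1,0),\nabla_{i}F\rangle$ separates over agents (where $\nabla_{i}F$ denotes the block of $\nabla F(\one_{[n]})$ indexed by agent $i$), so it suffices to check $\max_{\y_{i}\in\Delta_{2}}\langle\y_{i}-(1,0),\nabla_{i}F\rangle\le0$ for each $i$. For $i\in[n-1]$ this contribution equals $(y_{i,1}-1)\epsilon\le0$ because $y_{i,1}\le1$. The decisive case is agent $n$, whose contribution is $(y_{n,1}-1)(n-1)+y_{n,2}(n-k)$; maximizing this linear form over $\Delta_{2}$ places all mass at the larger-coefficient vertex, and the worst admissible move is the switch $(1,0)\to(0,1)$, giving $-(n-1)+(n-k)=1-k$. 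I expect this single inequality to be the main obstacle: stationarity holds exactly when $k\ge1$, i.e. when the marginal value of the alternative block $A_{2n}$ does not strictly exceed that of the greedily favoured $A_{n}$. This is precisely the place where the construction's parameters must be pinned down, and it matches the local-maximum behaviour of greedy reported in \citep{filmus2012power}.

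Finally, the ratio follows by direct evaluation. $F(\one_{[n]})=f([n])$ is the total weight of $\{\epsilon_{1},\dots,\epsilon_{n-1}\}\cup\{x_{1},\dots,x_{n-1}\}$, i.e. $(n-1)(1+\epsilon)$ (asymptotically $(1+\epsilon)n$), whereas $f(\{n+1,\dots,2n\})$ is the weight of $\{x_{1},\dots,x_{n-1}\}\cup\{y_{1},\dots,y_{n-k}\}$, namely $2n-k-1$. Dividing and then letting $\epsilon\to0$ and $k\to0$ (retaining $k\ge1$ so that $\one_{[n]}$ stays stationary) followed by $n\to\infty$ yields $\frac{F(\one_{[n]})}{f(\{n+1,\dots,2n\})}\to\frac{1}{2}$, which establishes the claimed tightness of the $\tfrac{1}{2}$ bound at a stationary point of $F$.
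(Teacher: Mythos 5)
Your proposal is correct and follows essentially the same route as the paper's proof: you compute the same gradient entries $\bigl(\epsilon$ for $\pi_{i,1}$, $i\le n-1$; $n-1$ and $n-k$ for agent $n$; zeros elsewhere$\bigr)$ via part \textbf{1)} of \cref{thm1}, reduce stationarity to a linear inequality over $\prod_{i=1}^{n}\Delta_{2}$ whose decisive term is $1-k\le 0$ (the paper uses exactly this inequality, you merely organize the check agent-by-agent and make the requirement $k\ge 1$ explicit), and evaluate the ratio identically. Your observation that $F(\one_{[n]})=f([n])=(1+\epsilon)(n-1)$ rather than $(1+\epsilon)n$ matches the paper's own proof and is only an asymptotically irrelevant discrepancy with the theorem statement.
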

\begin{remark}
This theorem indicates that when the objective set function is submodular, namely, $c=\alpha=\gamma=\beta=1$, the approximation guarantees established in Theorem~\ref{thm2} is \textbf{tight}.
\end{remark}
\begin{proof}
At first, for any $i\in[n]$, we assume $\uppi_{i}=(\pi_{i,1},\pi_{i,2})$. Then, according to \cref{def_extension}, we have that the policy-based continuous extension $F$ of the set function $f$ in \cref{equ:cover_set} can be formulated as:
	\begin{equation}
		F(\uppi_{1},\dots,\uppi_{n})\triangleq\sum_{i=1}^{n}\sum_{a_{i}\in\V_{i}\cup\{\emptyset\}}\Big(f\big(\cup_{i=1}^{n}\{a_{i}\}\big)\prod_{i=1}^{n}p(a_{i}|\uppi_{i})\Big),
	\end{equation} where $p(i| \uppi_i) = \pi_{i,1}$, $p(n+i| \uppi_i) = \pi_{i,2}$ and $p(\emptyset | \uppi_{i}) = 1-\pi_{i,1}-\pi_{i,2}$.
	
	From part \textbf{1)} of \cref{thm1}, we also can show that,
	\begin{equation*}
		\begin{aligned}
			&\frac{\partial F}{\partial\pi_{i,1}}(\one_{[n]})=\epsilon\text{\ for $1\le i\le n-1$,}\ \ \ \ \ \ \ \ \ \ \ \ \ \ \ \ \ \ \ \frac{\partial F}{\partial\pi_{n,1}}(\one_{[n]})=n-1,\\
			&\frac{\partial F}{\partial\pi_{i,2}}(\one_{[n]})=0\text{\ for $1\le i\le n-1$,}\ \ \ \ \ \ \ \ \ \ \ \ \ \ \ \ \ \  \frac{\partial F}{\partial\pi_{n,2}}(\one_{[n]})=n-k.\\
		\end{aligned}
	\end{equation*}
	As a result, for any $(\uppi_{1},\dots,\uppi_{n})\in\prod_{i=1}^{n}\Delta_{2}$, we have 
	\begin{equation*}
		\begin{aligned}
			&\left\langle(\uppi_{1},\dots,\uppi_{n})-\one_{[n]},\nabla F(\one_{[n]})\right\rangle\\
			&=\sum_{i=1}^{n}\Big(\pi_{i,1}\frac{\partial F}{\partial\pi_{i,1}}(\one_{[n]})+\pi_{i,2}\frac{\partial F}{\partial\pi_{i,2}}(\one_{[n]})\Big)-\sum_{i=1}^{n}\frac{\partial F}{\partial\pi_{i,1}}(\one_{[n]})\\
			&=\sum_{i=1}^{n}\Big(\pi_{i,1}\frac{\partial F}{\partial\pi_{i,1}}(\one_{[n]})+\pi_{i,2}\frac{\partial F}{\partial\pi_{i,2}}(\one_{[n]})\Big)-(1+\epsilon)(n-1)\\
			&=\epsilon\sum_{i=1}^{n-1}\pi_{i,1}+(n-1)\pi_{n,1}+(n-k)\pi_{n,2}-(1+\epsilon)(n-1)\\
			&=\epsilon\sum_{i=1}^{n-1}\pi_{i,1}+(n-1)(\pi_{n,1}+\pi_{n,2})+\big((n-k)-(n-1)\big)\pi_{n,2}-(1+\epsilon)(n-1)\\
			&=\epsilon\sum_{i=1}^{n-1}(\pi_{i,1}-1)+\big((n-k)-(n-1)\big)\pi_{i,2}-(n-1)(1-\pi_{n,1}-\pi_{n,2})\\
		&=\epsilon\sum_{i=1}^{n-1}(\pi_{i,1}-1)+(1-k)\pi_{i,2}-(n-1)(1-\pi_{n,1}-\pi_{n,2})\le 0,
		\end{aligned}
	\end{equation*} where the final inequality follows from $1-k\le0$ and $1-\pi_{n,1}-\pi_{n,2}\ge0$.
	As a result, from \cref{def:stationary}, we can know that
	the point $\one_{[n]}$ is a stationary point of the policy-based continuous extension $F$. Note that $\F(\one_{[n]})=f([n])=(1+\epsilon)(n-1)$ such that $\frac{F(\one_{[n]})}{f(\{n+1,\dots,2n\})}=\frac{(1+\epsilon)n}{2n-k-1}$. Particularly when $k\rightarrow0$, $\epsilon\rightarrow 0 $ and $n\rightarrow\infty$, $\frac{F(\one_{[n]})}{f(\{n+1,\dots,2n\})}=\frac{(1+\epsilon)n}{2n-k-1}\rightarrow\frac{1}{2}$.
\end{proof}

\section{Proof of \texorpdfstring{\cref{thm4}}{}}\label{appendix_proof_thm0.5}
In this section, we cut the proof of \cref{thm4} in two distinct subsections. Specifically, \cref{Sec:C1} provides the proof for parts \textbf{1)} and \textbf{2)}, while  \cref{Sec:C2} addresses the part \textbf{3)}.
\subsection{Proof of Part \textbf{1)} and Part \textbf{2)}  in \texorpdfstring{\cref{thm4}}{}}\label{Sec:C1}
Before verifying the parts \textbf{1)} and \textbf{2)}, we firstly prove the following theorem:
\begin{theorem}\label{thm:C1}
	Given a monotone set function $f_{t}$, for its policy-based continuous extension $F_{t}$ introduced in \cref{def_extension}, if we consider a surrogate function  $F_{t}^{s}:\prod_{i=1}^{n}\Delta_{\kappa_{i}}\rightarrow\R_{+}$ whose gradient at each point $\x\in\prod_{i=1}^{n}\Delta_{\kappa_{i}}$ is a weighted average of the gradient $\nabla F_{t}(z*\x)$, namely, $\nabla F_{t}^{s}(\x)\triangleq\int_{0}^{1} w(z)\nabla F_{t}(z*\boldsymbol{x})\mathrm{d}z$ where $w(z)$ is a positive weight function over $[0,1]$, we have that:
	
	\textbf{i):} When$f_{t}$ is $\alpha$-weakly DR-submodular and $w(z)=e^{\alpha(z-1)}$, for any policy vector $(\uppi_{1},\dots,\uppi_{n})\in\prod_{i=1}^{n}\Delta_{\kappa_{i}}$ and any subset $S$ within the constraint of problem~\eqref{equ_problem}, then the following inequality holds:
	\begin{equation}\label{equ:boosting1}
		\begin{aligned}
		&\Big\langle\one_{S}-(\uppi_{1},\dots,\uppi_{n}),\nabla F_{t}^{s}(\uppi_{1},\dots,\uppi_{n})\Big\rangle\\&
		=\left\langle\one_{S}-(\uppi_{1},\dots,\uppi_{n}),\int_{0}^{1}e^{\alpha(z-1)}\nabla F_{t}\left(z*(\uppi_{1},\dots,\uppi_{n})\right)\mathrm{d}z\right\rangle\\&
		\ge\left(1-e^{-\alpha}\right)f_{t}(S)-F_{t}(\uppi_{1},\dots,\uppi_{n});
		\end{aligned}
	\end{equation}
	
\textbf{ii):} When  $f_{t}$ is  $(\gamma,\beta)$-weakly submodular and $w(z)=e^{\phi(\gamma,\beta)(z-1)}$ where $\phi(\gamma,\beta)=\beta(1-\gamma)+\gamma^2$, for any policy vector $(\uppi_{1},\dots,\uppi_{n})\in\prod_{i=1}^{n}\Delta_{\kappa_{i}}$ and any subset $S$ within the constraint of problem~\eqref{equ_problem},  then the following inequality holds:
\begin{equation}\label{equ:boosting2}
	\begin{aligned}
		&\Big\langle\one_{S}-(\uppi_{1},\dots,\uppi_{n}),\nabla F_{t}^{s}(\uppi_{1},\dots,\uppi_{n})\Big\rangle\\&
		=\left\langle\one_{S}-(\uppi_{1},\dots,\uppi_{n}),\int_{0}^{1}e^{\phi(\gamma,\beta)(z-1)}\nabla F_{t}\left(z*(\uppi_{1},\dots,\uppi_{n})\right)\mathrm{d}z\right\rangle\\&
		\ge\left(\frac{\gamma^{2}(1-e^{-\phi(\gamma,\beta)})}{\phi(\gamma,\beta)}\right)f_{t}(S)-F_{t}(\uppi_{1},\dots,\uppi_{n}).
	\end{aligned}
\end{equation}
\end{theorem}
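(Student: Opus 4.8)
The plan is to reduce both inequalities to a single computation and exploit the special exponential form of the weight. Write $\x\triangleq(\uppi_1,\dots,\uppi_n)$ for brevity. First I would expand the surrogate inner product using the definition $\nabla F_t^s(\x)=\int_0^1 w(z)\nabla F_t(z*\x)\,\mathrm{d}z$ and split by linearity:
\begin{equation*}
\langle\one_S-\x,\nabla F_t^s(\x)\rangle=\int_0^1 w(z)\langle\one_S,\nabla F_t(z*\x)\rangle\,\mathrm{d}z-\int_0^1 w(z)\langle\x,\nabla F_t(z*\x)\rangle\,\mathrm{d}z.
\end{equation*}
The two integrals are then handled by two different tools.

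For the first integral I would apply part \textbf{4)} of \cref{thm1} at the feasible point $z*\x\in\prod_{i=1}^n\Delta_{\kappa_i}$ (feasible since $z\in[0,1]$): in the DR case this gives $\langle\one_S,\nabla F_t(z*\x)\rangle\ge\alpha\big(f_t(S)-F_t(z*\x)\big)$, and in the $(\gamma,\beta)$ case $\langle\one_S,\nabla F_t(z*\x)\rangle\ge\gamma^2 f_t(S)-\phi(\gamma,\beta)F_t(z*\x)$. For the second integral, set $h(z)\triangleq F_t(z*\x)$; by the chain rule $h'(z)=\langle\x,\nabla F_t(z*\x)\rangle$, so that integral equals $\int_0^1 w(z)h'(z)\,\mathrm{d}z$, which I would integrate by parts.

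The crux is the choice $w(z)=e^{c(z-1)}$ with $c=\alpha$ (resp.\ $c=\phi(\gamma,\beta)$), since it satisfies $w'(z)=c\,w(z)$ and $w(1)=1$. Integration by parts gives $\int_0^1 w(z)h'(z)\,\mathrm{d}z=h(1)-w(0)h(0)-c\int_0^1 w(z)h(z)\,\mathrm{d}z$. Meanwhile the lower bound on the first integral contributes a term $-c\int_0^1 w(z)F_t(z*\x)\,\mathrm{d}z=-c\int_0^1 w(z)h(z)\,\mathrm{d}z$ from the $F_t(z*\x)$ pieces (with $c=\alpha$ or $c=\phi(\gamma,\beta)$ respectively). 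Subtracting the second integral, these two integral terms cancel exactly. Using $h(1)=F_t(\x)$, the normalization $h(0)=F_t(\mathbf{0})=f_t(\emptyset)=0$, and $\int_0^1 w(z)\,\mathrm{d}z=\tfrac{1-e^{-c}}{c}$, the coefficient of $f_t(S)$ becomes $\alpha\cdot\tfrac{1-e^{-\alpha}}{\alpha}=1-e^{-\alpha}$ in the DR case and $\gamma^2\cdot\tfrac{1-e^{-\phi(\gamma,\beta)}}{\phi(\gamma,\beta)}$ in the weakly submodular case, while the $h(1)$ boundary term yields exactly $-F_t(\x)$, delivering the two claimed inequalities.

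The main obstacle I anticipate is ensuring the cancellation is exact and checking the boundary term $h(0)$. The cancellation hinges entirely on matching the decay rate $c$ of the weight to the submodularity constant appearing in part \textbf{4)} of \cref{thm1}; any mismatch would leave a residual integral $\int_0^1 w(z)h(z)\,\mathrm{d}z$ of indefinite sign and the bound would collapse, so the deliberate engineering of $w$ is the heart of the argument rather than a routine step. I would also confirm $F_t(\mathbf{0})=0$: at the all-zero policy each agent selects $\emptyset$ with probability one, so the sampled set is empty and $F_t(\mathbf{0})=f_t(\emptyset)=0$ by normalization.
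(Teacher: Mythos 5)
Your proposal is correct and follows essentially the same route as the paper's proof: both split $\langle\one_{S}-\x,\nabla F_{t}^{s}(\x)\rangle$ into the $\one_{S}$ part (bounded via part \textbf{4)} of \cref{thm1} applied at the feasible point $z*\x$) and the $\x$ part (rewritten through the chain rule as $\frac{\mathrm{d}}{\mathrm{d}z}F_{t}(z*\x)$ and integrated by parts), with the exponential weight chosen so the two $\int_{0}^{1}w(z)F_{t}(z*\x)\,\mathrm{d}z$ terms cancel and $F_{t}(\mathbf{0})=0$ killing the boundary term. No gaps; the argument matches the paper's in every essential step.
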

\begin{proof}
For \textbf{i)}: At first, we verify the following relationship:
\begin{equation*}
	\begin{aligned}
		&\left\langle (\uppi_{1},\dots,\uppi_{n}),\int_{0}^{1}e^{\alpha(z-1)}\nabla F_{t}\left(z*(\uppi_{1},\dots,\uppi_{n})\right)\mathrm{d}z\right\rangle\\
		&=\int_{0}^{1}e^{\alpha(z-1)}\Big\langle (\uppi_{1},\dots,\uppi_{n}),\nabla F_{t}\left(z*(\uppi_{1},\dots,\uppi_{n})\right)\Big\rangle\mathrm{d}z\\
		&=\int_{0}^{1}e^{\alpha(z-1)}\frac{\mathrm{d}F_{t}\left(z*(\uppi_{1},\dots,\uppi_{n})\right)}{\mathrm{d}z}\mathrm{d}z\\
		&=\int_{0}^{1}e^{\alpha(z-1)}\mathrm{d}F_{t}\left(z*(\uppi_{1},\dots,\uppi_{n})\right)\\
		&=e^{\alpha(z-1)}F_{t}\left(z*(\uppi_{1},\dots,\uppi_{n})\right)|_{z=0}^{z=1}-\int_{0}^{1}F_{t}\left(z*(\uppi_{1},\dots,\uppi_{n})\right)\mathrm{d}\left(e^{\alpha(z-1)}\right)\\
		&=F_{t}(\uppi_{1},\dots,\uppi_{n})-\alpha\int_{0}^{1}e^{\alpha(z-1)}F_{t}\left(z*(\uppi_{1},\dots,\uppi_{n})\right)\mathrm{d}z,
	\end{aligned}
\end{equation*} where the final equality follows from $F_{t}(\mathbf{0})=0$.

Then, we have the following inequality:
\begin{equation*}
\begin{aligned}
	&\left\langle \one_{S},\int_{0}^{1}e^{\alpha(z-1)}\nabla F_{t}\left(z*(\uppi_{1},\dots,\uppi_{n})\right)\mathrm{d}z\right\rangle\\
	&=\int_{0}^{1}e^{\alpha(z-1)}\Big\langle \one_{S},\nabla F_{t}\left(z*(\uppi_{1},\dots,\uppi_{n})\right)\Big\rangle\mathrm{d}z\\
	&\ge\alpha\int_{0}^{1}e^{\alpha(z-1)}\Big(f_{t}(S)-F_{t}\big(z*(\uppi_{1},\dots,\uppi_{n})\big)\Big)\mathrm{d}z\\
	&=\Big(\alpha\int_{0}^{1}e^{\alpha(z-1)}\mathrm{d}z\Big)f_{t}(S)-\alpha\int_{0}^{1}e^{\alpha(z-1)}F_{t}\left(z*(\uppi_{1},\dots,\uppi_{n})\right)\mathrm{d}z\\
	&=\Big(1-e^{-\alpha}\Big)f_{t}(S)-\alpha\int_{0}^{1}e^{\alpha(z-1)}F_{t}\left(z*(\uppi_{1},\dots,\uppi_{n})\right)\mathrm{d}z,
\end{aligned}
\end{equation*}where the first inequality follows from the part \textbf{4)} of \cref{thm1}.

As a result, we have 
	\begin{equation*}
		\left\langle\one_{S}-(\uppi_{1},\dots,\uppi_{n}),\int_{0}^{1}e^{\alpha(z-1)}\nabla F_{t}\left(z*(\uppi_{1},\dots,\uppi_{n})\right)\mathrm{d}z\right\rangle
		\ge\left(1-e^{-\alpha}\right)f_{t}(S)-F_{t}(\uppi_{1},\dots,\uppi_{n}).
\end{equation*}

As for \textbf{ii):} Similarly, when  $f_{t}$ is  monotone $(\gamma,\beta)$-weakly submodular and $w(z)=e^{\phi(\gamma,\beta)(z-1)}$ where $\phi(\gamma,\beta)=\beta(1-\gamma)+\gamma^2$, we have 

\begin{equation*}
	\begin{aligned}
		&\left\langle (\uppi_{1},\dots,\uppi_{n}),\int_{0}^{1}e^{\phi(\gamma,\beta)(z-1)}\nabla F_{t}\left(z*(\uppi_{1},\dots,\uppi_{n})\right)\mathrm{d}z\right\rangle\\
		&=\int_{0}^{1}e^{\phi(\gamma,\beta)(z-1)}\mathrm{d}F_{t}\left(z*(\uppi_{1},\dots,\uppi_{n})\right)\\
		&=e^{\phi(\gamma,\beta)(z-1)}F_{t}\left(z*(\uppi_{1},\dots,\uppi_{n})\right)|_{z=0}^{z=1}-\phi(\gamma,\beta)\int_{0}^{1}e^{\phi(\gamma,\beta)(z-1)}F_{t}\left(z*(\uppi_{1},\dots,\uppi_{n})\right)\mathrm{d}z\\
		&=F_{t}(\uppi_{1},\dots,\uppi_{n})-\phi(\gamma,\beta)\int_{0}^{1}e^{\phi(\gamma,\beta)(z-1)}F_{t}\left(z*(\uppi_{1},\dots,\uppi_{n})\right)\mathrm{d}z.
	\end{aligned}
\end{equation*} 

Then, we also have 
\begin{equation*}
	\begin{aligned}
		&\left\langle \one_{S},\int_{0}^{1}e^{\phi(\gamma,\beta)(z-1)}\nabla F_{t}\left(z*(\uppi_{1},\dots,\uppi_{n})\right)\mathrm{d}z\right\rangle\\
		&=\int_{0}^{1}e^{\phi(\gamma,\beta)(z-1)}\Big\langle \one_{S},\nabla F_{t}\left(z*(\uppi_{1},\dots,\uppi_{n})\right)\Big\rangle\mathrm{d}z\\
		&\ge\int_{0}^{1}e^{\phi(\gamma,\beta)(z-1)}\Big(\gamma^{2}f_{t}(S)-\phi(\gamma,\beta)F_{t}\big(z*(\uppi_{1},\dots,\uppi_{n})\big)\Big)\mathrm{d}z\\
		&=\Big(\gamma^{2}\int_{0}^{1}e^{\phi(\gamma,\beta)(z-1)}\mathrm{d}z\Big)f_{t}(S)-\phi(\gamma,\beta)\int_{0}^{1}e^{\phi(\gamma,\beta)(z-1)}F_{t}\left(z*(\uppi_{1},\dots,\uppi_{n})\right)\mathrm{d}z\\
		&=\left(\frac{\gamma^{2}(1-e^{-\phi(\gamma,\beta)})}{\phi(\gamma,\beta)}\right)f_{t}(S)-\phi(\gamma,\beta)\int_{0}^{1}e^{\phi(\gamma,\beta)(z-1)}F_{t}\left(z*(\uppi_{1},\dots,\uppi_{n})\right)\mathrm{d}z,
	\end{aligned}
\end{equation*}where the first inequality follows from the part \textbf{4)} of \cref{thm1}.

As a result, we have 
\begin{equation*}
	\left\langle\one_{S}-(\uppi_{1},\dots,\uppi_{n}),\int_{0}^{1}e^{\phi(\gamma,\beta)(z-1)}\nabla F_{t}\left(z*(\uppi_{1},\dots,\uppi_{n})\right)\mathrm{d}z\right\rangle
	\ge\left(\frac{\gamma^{2}(1-e^{-\phi(\gamma,\beta)})}{\phi(\gamma,\beta)}\right)f_{t}(S)-F_{t}(\uppi_{1},\dots,\uppi_{n}).
\end{equation*}
\end{proof}

From \cref{def:stationary}, we can show that, when the policy vector $(\uppi^{s}_{1},\dots,\uppi^{s}_{n})$ is the stationary point of the surrogate objective $F_{t}^{s}$ over domain $\prod_{i=1}^{n}\Delta_{\kappa_{i}}$ and $S$ satisfies the constraint of problem~\eqref{equ_problem}, due to $\one_{S}\in\prod_{i=1}^{n}\Delta_{\kappa_{i}}$ , the following inequality holds:
\begin{equation*}
	\left\langle\one_{S}-(\uppi^{s}_{1},\dots,\uppi^{s}_{n}),\int_{0}^{1}w(z)\nabla F_{t}\left(z*(\uppi^{s}_{1},\dots,\uppi^{s}_{n})\right)\mathrm{d}z\right\rangle\le0,
\end{equation*} where $w(z)$ is the corresponding weight function.

Then, following \cref{thm:C1}, we can show that when$f_{t}$ is monotone $\alpha$-weakly DR-submodular and $w(z)=e^{\alpha(z-1)}$, for any policy vector $(\uppi_{1},\dots,\uppi_{n})\in\prod_{i=1}^{n}\Delta_{\kappa_{i}}$ and any subset $S$ within the constraint of problem~\eqref{equ_problem},
\begin{equation*}
F_{t}(\uppi^{s}_{1},\dots,\uppi^{s}_{n})\ge\big(1-e^{-\alpha}\big)f_{t}(S).
\end{equation*}

So we get the result of  part \textbf{1)} of \cref{thm4}. Similarly, from \textbf{ii)} of \cref{thm:C1}, we also can achieve the part \textbf{2)} of \cref{thm4}.
\subsection{Proof of Part \textbf{3)}  in \texorpdfstring{\cref{thm4}}{}}\label{Sec:C2}
In this subsection, we prove \textbf{3)}  in \cref{thm4}.

At first, we define a modular function $g_{t}$ over $\V$, namely, for any subset $S\subseteq\V$, $g_{t}(S)\triangleq\sum_{a\in S}f_{t}\left(a|\V\setminus\{a\}\right)$. Then, when $f_{t}$ is monotone submodular, we can show that the set function $\left(f_{t}-g_{t}\right)$ is also a monotone submodular function. That is,
\begin{theorem}\label{thm:C2}
For a monotone submodular function $f_{t}:\V\rightarrow\R_{+}$, we define a modular function $g_{t}(S)\triangleq\sum_{a\in S}f_{t}\left(a|\V\setminus\{a\}\right),\forall S\subseteq\V$, then we also can show that the set function $\left(f_{t}-g_{t}\right)$ is monotone submodular and non-negative.
\end{theorem}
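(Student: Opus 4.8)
The plan is to verify the three claimed properties of $f_{t}-g_{t}$ — submodularity, monotonicity, and non-negativity — in that logical order, the whole argument hinging on the elementary observation that $g_{t}$ is a \emph{modular} function whose marginals are constant. Concretely, I would first record that for any $S\subseteq\V$ and any $a\in\V\setminus S$ the marginal of $g_{t}$ is
\[
g_{t}(a|S)=g_{t}(S\cup\{a\})-g_{t}(S)=f_{t}\big(a|\V\setminus\{a\}\big),
\]
which is independent of $S$ because the sum defining $g_{t}$ telescopes. This single identity drives everything that follows.

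For submodularity, I would use the fact that submodularity is equivalent to the diminishing-return inequality on marginals, together with the observation that subtracting a modular function shifts every marginal $g_{t}(a|S)$ by the same $S$-independent constant $f_{t}(a|\V\setminus\{a\})$. Since the diminishing-return inequality for $f_{t}$ is preserved under such a uniform shift of marginals, it transfers verbatim to $f_{t}-g_{t}$, so $f_{t}-g_{t}$ is submodular. The key step — and the only place where the submodularity of $f_{t}$ is genuinely invoked — is the monotonicity argument. For $S\subseteq\V$ and $a\in\V\setminus S$, I would compute
\[
(f_{t}-g_{t})(a|S)=f_{t}(a|S)-g_{t}(a|S)=f_{t}(a|S)-f_{t}\big(a|\V\setminus\{a\}\big),
\]
using the modular marginal identity. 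Because $a\notin S$ we have $S\subseteq\V\setminus\{a\}$, so the diminishing-return property of the submodular $f_{t}$ gives $f_{t}(a|S)\ge f_{t}(a|\V\setminus\{a\})$. Hence every marginal of $f_{t}-g_{t}$ is non-negative, which is exactly monotonicity.

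Non-negativity then follows with no extra work: both $f_{t}$ and $g_{t}$ are normalized, so $(f_{t}-g_{t})(\emptyset)=0$, and combining this with the monotonicity just established yields $(f_{t}-g_{t})(S)\ge(f_{t}-g_{t})(\emptyset)=0$ for every $S\subseteq\V$.

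I do not anticipate a substantive obstacle here; the proof is short and rests entirely on the constant-marginal property of the modular $g_{t}$. The only point requiring minor care is to ensure that the constant marginal of $g_{t}$ equals precisely $f_{t}(a|\V\setminus\{a\})$ — the exact weight used in the definition of $g_{t}$ — so that the terms cancel correctly in the monotonicity computation; once that bookkeeping is in place, submodularity is preserved by construction and monotonicity reduces directly to the diminishing-return inequality.
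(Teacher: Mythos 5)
Your proposal is correct and follows essentially the same route as the paper: the paper likewise notes that $f_{t}-g_{t}$ is submodular because it is a submodular function plus a modular one (citing this as a known fact where you spell out the constant-marginal-shift argument), establishes monotonicity via exactly your computation $(f_{t}-g_{t})(a|S)=f_{t}(a|S)-f_{t}\big(a|\V\setminus\{a\}\big)\ge 0$ using $S\subseteq\V\setminus\{a\}$ and the diminishing-return property, and deduces non-negativity from $f_{t}(\emptyset)=g_{t}(\emptyset)=0$ together with monotonicity. No gaps.
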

\begin{proof}
 Because the set function $\left(f_{t}-g_{t}\right)$ is the sum of  a submodular function  $f_{t}$ and a modular function $-g_{t}$,  $\left(f_{t}-g_{t}\right)$  must be submodular~\cite{sviridenko2017optimal}. Then, we show the monotonicity of the set function $\left(f_{t}-g_{t}\right)$. At first,
  for any subset $S\subseteq\V$ and any action $a\in\V\setminus S$, we have that
\begin{equation}\label{equ:thm:C2:1}
\left(f_{t}-g_{t}\right)\left(a|S\right)=f_{t}\left(a|S\right)-g_{t}\left(a|S\right)=f_{t}\left(a|S\right)-f_{t}\left(a|\V\setminus\{a\}\right)\ge0,
\end{equation} where the final inequality follows from the submodularity of $f_{t}$.

From Eq.\eqref{equ:thm:C2:1}, we can know that $\left(f_{t}-g_{t}\right)$ is monotone. Moreover, due to $f_{t}(\emptyset)=g_{t}(\emptyset)=0$,  $\left(f_{t}-g_{t}\right)$ is non-negative.
\end{proof}

Next, we show the  policy-based continuous extension of our  aforementioned modular function $g_{t}$.
\begin{theorem}\label{thm:C3}
	Given a monotone submodular function $f_{t}:\V\rightarrow\R_{+}$, we define its related modular function as $g_{t}(S)\triangleq\sum_{a\in S}f_{t}\left(a|\V\setminus\{a\}\right),\forall S\subseteq\V$. Then we can show that the policy-based continuous extension of $g_{t}$ is as follows: $G_{t}(\uppi_{1},\dots,\uppi_{n})\triangleq \sum_{i=1}^{n}\sum_{m=1}^{\kappa_{i}}\big(f_{t}\big(v_{i,k}|\V-\{v_{i,m}\}\big)\big)\pi_{i,m}$ if $\uppi_{i}=(\pi_{i,1},\dots,\pi_{i,\kappa_{i}})\in\Delta_{\kappa_{i}},\forall i\in\N$.
\end{theorem}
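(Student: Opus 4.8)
The plan is to apply \cref{def_extension} directly to the modular function $g_t$ and exploit its linearity. By definition, the policy-based continuous extension of $g_t$ is $G_t(\uppi_1,\dots,\uppi_n)=\E_{a_i\sim\uppi_i,\forall i\in\N}\big(g_t(\cup_{i=1}^{n}\{a_i\})\big)$, where each $a_i$ is drawn from $\V_i\cup\{\emptyset\}$ according to $\uppi_i$. The first step is to rewrite the integrand using modularity: since $g_t(S)=\sum_{a\in S}f_t(a|\V\setminus\{a\})$ and the action sets $\V_i$ are mutually disjoint, the random set $\cup_{i=1}^{n}\{a_i\}$ contains exactly those sampled actions that are not $\emptyset$, so $g_t(\cup_{i=1}^{n}\{a_i\})=\sum_{i\in\N}f_t(a_i|\V\setminus\{a_i\})\,\one[a_i\neq\emptyset]$, with the convention that the $i$-th summand vanishes when $a_i=\emptyset$.

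Next I would invoke linearity of expectation to push the expectation inside the sum over agents, giving $G_t=\sum_{i\in\N}\E\big(f_t(a_i|\V\setminus\{a_i\})\,\one[a_i\neq\emptyset]\big)$. Because each coefficient $f_t(a_i|\V\setminus\{a_i\})$ depends only on $a_i$, the expectation factorizes over agents and the marginal law of $a_i$ is governed solely by $\uppi_i$, so the contributions of the other policies integrate to one. Expanding the single-agent expectation over $a_i\in\V_i\cup\{\emptyset\}$ and using $p(v_{i,m}|\uppi_i)=\pi_{i,m}$ then yields $\sum_{m=1}^{\kappa_i}f_t(v_{i,m}|\V\setminus\{v_{i,m}\})\pi_{i,m}$ for each $i$, and summing over $i\in\N$ produces exactly the claimed expression for $G_t$.

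There is no substantive obstacle here: the result is a routine consequence of the definition together with the modularity of $g_t$ and the disjointness of the action sets. The only point demanding care is the bookkeeping around the empty action $\emptyset$---one must verify that a sampled $\emptyset$ contributes nothing to $g_t(\cup_{i}\{a_i\})$, which is immediate since $\emptyset$ adds no element to the union---and the observation that the marginal probability of drawing $v_{i,m}$ under $\uppi_i$ is precisely $\pi_{i,m}$. If one prefers to avoid the probabilistic phrasing, the identical computation can be carried out by writing out the defining sum of \cref{def_extension}, swapping the order of summation, and using $\sum_{a_j\in\V_j\cup\{\emptyset\}}p(a_j|\uppi_j)=1$ for each $j\neq i$ to collapse the factors associated with the other agents.
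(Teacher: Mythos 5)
Your proposal is correct and takes essentially the same approach as the paper's own proof: both expand Definition~\ref{def_extension}, use the modular structure of $g_{t}$ to decompose the value into per-agent terms, swap the order of summation, and collapse the factors of the other agents because their probabilities sum to one. Your probabilistic phrasing via expectations (with the explicit $\one[a_{i}\neq\emptyset]$ bookkeeping, which the paper handles implicitly since $f_{t}(\emptyset|\V)=0$) is just the same computation in different notation, as you yourself observe in your final paragraph.
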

\begin{proof}
By the definition of $g_{t}$ and \cref{def_extension}, we can show that
\begin{equation*}
\begin{aligned}
	&G_{t}(\uppi_{1},\dots,\uppi_{n})=\sum_{a_{i}\in\V_{i}\cup\{\emptyset\},\forall i\in\N}\Big(g_{t}\big(\cup_{i=1}^{n}\{a_{i}\}\big)\prod_{i=1}^{n}p(a_{i}|\uppi_{i})\Big)\\
	&=\sum_{a_{i}\in\V_{i}\cup\{\emptyset\},\forall i\in\N}\Big(\Big(\sum_{i=1}^{n}f_{t}(a_{i}|\V\setminus\{a_{i}\})\Big)\prod_{i=1}^{n}p(a_{i}|\uppi_{i})\Big)\\
	&=\sum_{i=1}^{n}\sum_{a_{i}\in\V_{i}\cup\{\emptyset\}}f_{t}(a_{i}|\V\subseteq\{a_{i}\})p(a_{i}|\uppi_{i})\\
	&=\sum_{i=1}^{n}\sum_{m=1}^{\kappa_{i}}\big(f_{t}\big(v_{i,k}|\V-\{v_{i,m}\}\big)\big)\pi_{i,m},
\end{aligned}
\end{equation*} where the final equality follows from $\V_{i}=\{v_{i,1},\dots,v_{i,\kappa_{i}}\},\forall i\in\N$ .
\end{proof}

From \cref{thm:C2}, we know that, when $f_{t}$ is monotone submodular, so is $\left(f_{t}-g_{t}\right)$. Then, if we apply  \cref{thm:C1} to the monotone submodular function $\left(f_{t}-g_{t}\right)$, we can have the following relationship:
\begin{lemma}\label{lemma:C3}
Given a monotone submodular set function $f_{t}$, we define a modular function $g_{t}(S)\triangleq\sum_{a\in S}f_{t}\left(a|\V\setminus\{a\}\right),\forall S\subseteq\V$.  then, for any policy vector $(\uppi_{1},\dots,\uppi_{n})\in\prod_{i=1}^{n}\Delta_{\kappa_{i}}$ and any subset $S$ within the constraint of problem~\eqref{equ_problem}, the following inequality holds:
\begin{equation}\label{equ:boosting3}
	\begin{aligned}
		&\left\langle\one_{S}-(\uppi_{1},\dots,\uppi_{n}),\int_{0}^{1}e^{z-1}\nabla \left(F_{t}-G_{t}\right)\left(z*(\uppi_{1},\dots,\uppi_{n})\right)\mathrm{d}z\right\rangle\\
	&\ge(1-1/e)\left(f_{t}-g_{t}\right)(S)-\left(F_{t}-G_{t}\right)(\uppi_{1},\dots,\uppi_{n}),
	\end{aligned}
		\end{equation} where $F_{t}$ and $G_{t}$ is the corresponding policy-based continuous extension of $f_{t}$ and $g_{t}$, respectively.\end{lemma}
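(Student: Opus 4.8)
The plan is to deduce \cref{lemma:C3} as a direct specialization of part \textbf{i)} of \cref{thm:C1}, applied not to $f_{t}$ itself but to the auxiliary set function $(f_{t}-g_{t})$ with diminishing-return ratio $\alpha=1$. Three ingredients make this reduction work: first, $(f_{t}-g_{t})$ is monotone submodular, hence in particular monotone, which is exactly the standing hypothesis of \cref{thm:C1}; second, the policy-based continuous extension operator is \emph{linear} in the underlying set function, so that the extension of $(f_{t}-g_{t})$ is precisely $(F_{t}-G_{t})$; and third, every monotone submodular function is $1$-weakly DR-submodular, so that $\alpha=1$ and the weight $w(z)=e^{z-1}$ are the admissible choices in \cref{thm:C1}.

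First I would establish the linearity of the extension. Inspecting \cref{def_extension}, the value $F_{t}(\uppi_{1},\dots,\uppi_{n})=\sum_{a_{i}\in\V_{i}\cup\{\emptyset\}}f_{t}(\cup_{i}\{a_{i}\})\prod_{i}p(a_{i}|\uppi_{i})$ is a weighted sum of the values of $f_{t}$, with weights $\prod_{i}p(a_{i}|\uppi_{i})$ that do not depend on the choice of set function. Consequently, replacing $f_{t}$ by the difference $(f_{t}-g_{t})$ shows that the extension of $(f_{t}-g_{t})$ equals the difference of the two extensions. By \cref{thm:C3}, the extension of the modular function $g_{t}$ is exactly $G_{t}$, so the extension of $(f_{t}-g_{t})$ is $(F_{t}-G_{t})$, and the same linearity carries over to the gradient, yielding $\nabla(F_{t}-G_{t})$.

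Next I would verify that $(f_{t}-g_{t})$ is a legitimate input to \cref{thm:C1}. By \cref{thm:C2}, whenever $f_{t}$ is monotone submodular, the function $(f_{t}-g_{t})$ is monotone submodular, non-negative and normalized. Since a monotone submodular function $h$ obeys $h(v|S)\ge h(v|T)$ for all $S\subseteq T$ and $v\notin T$, it is $\alpha$-weakly DR-submodular with $\alpha=1$. I would then apply part \textbf{i)} of \cref{thm:C1} to $h\triangleq f_{t}-g_{t}$ with weight $w(z)=e^{\alpha(z-1)}=e^{z-1}$. Substituting $\alpha=1$ into the conclusion of that theorem gives exactly
\begin{equation*}
\left\langle\one_{S}-(\uppi_{1},\dots,\uppi_{n}),\int_{0}^{1}e^{z-1}\nabla(F_{t}-G_{t})\left(z*(\uppi_{1},\dots,\uppi_{n})\right)\mathrm{d}z\right\rangle\ge(1-e^{-1})(f_{t}-g_{t})(S)-(F_{t}-G_{t})(\uppi_{1},\dots,\uppi_{n}),
\end{equation*}
which is precisely the claimed inequality~\eqref{equ:boosting3}, since $1-e^{-1}=1-1/e$ and $\one_{S}\in\prod_{i=1}^{n}\Delta_{\kappa_{i}}$ whenever $S$ lies within the constraint of problem~\eqref{equ_problem}.

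Because the entire argument is a composition of the already-established \cref{thm:C1,thm:C2,thm:C3}, I do not anticipate a genuinely hard analytic step. The one point that must be handled with care is the linearity bookkeeping: I need the extension of a \emph{difference} of set functions to equal the difference of their extensions, and I must confirm that $(f_{t}-g_{t})$ satisfies every standing hypothesis (non-negativity, normalization, monotonicity) required by the extension framework so that \cref{thm:C1} is applicable — both of which are supplied by \cref{thm:C2}.
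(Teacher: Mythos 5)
Your proposal is correct and follows exactly the paper's route: the paper likewise obtains \cref{lemma:C3} by applying part \textbf{i)} of \cref{thm:C1} with $\alpha=1$ (every monotone submodular function is $1$-weakly DR-submodular) to the function $(f_{t}-g_{t})$, whose monotone submodularity comes from \cref{thm:C2} and whose extension is identified as $(F_{t}-G_{t})$ via \cref{thm:C3}. Your explicit verification of the linearity of the extension operator and of the normalization $(f_{t}-g_{t})(\emptyset)=0$ only spells out bookkeeping the paper leaves implicit.
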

\begin{remark}
Note that the submodular objective function is also $1$-weakly DR-submodular fucntion. Thus, the results of \cref{lemma:C3} come from \cref{thm:C1} and \cref{thm:C2}.
\end{remark}
Moreover, from the definition of $G_{t}$ in \cref{thm:C3}, we know that $G_{t}$ is a linear function over $\prod_{i=1}\Delta_{\kappa_{i}}$ such that 
\begin{equation}\label{equ:boosting4}
\Big\langle\one_{S}-(\uppi_{1},\dots,\uppi_{n}),\nabla G_{t}(\uppi_{1},\dots,\uppi_{n})\Big\rangle=G_{t}(\one_{S})-G_{t}(\uppi_{1},\dots,\uppi_{n})=g_{t}(S)-G_{t}(\uppi_{1},\dots,\uppi_{n}).
\end{equation}

Merging Eq.\eqref{equ:boosting3} and  Eq.\eqref{equ:boosting4}, we get the following theorem:
\begin{theorem}\label{thm:C4}
Given a monotone submodular set function $f_{t}$ with curvature $c$, for its policy-based continuous extension $F_{t}$ introduced in \cref{def_extension}, if we consider a surrogate function  $F_{t}^{s}:\prod_{i=1}^{n}\Delta_{\kappa_{i}}\rightarrow\R_{+}$ whose gradient at each point $\x\in\prod_{i=1}^{n}\Delta_{\kappa_{i}}$ is a weighted average of the gradient $\nabla F_{t}(z*\x)$, namely, $\nabla F_{t}^{s}(\x)\triangleq\int_{0}^{1} e^{(z-1)}\nabla F_{t}(z*\boldsymbol{x})\mathrm{d}z$, we have that, for any policy vector $(\uppi_{1},\dots,\uppi_{n})\in\prod_{i=1}^{n}\Delta_{\kappa_{i}}$ and any subset $S$ within the constraint of problem~\eqref{equ_problem},  the following inequality holds:
\begin{equation}\label{equ:boosting_final}
	\begin{aligned}
		&\Big\langle\one_{S}-(\uppi_{1},\dots,\uppi_{n}),\nabla\left(F_{t}^{s}+\frac{G_{t}}{e}\right)(\uppi_{1},\dots,\uppi_{n})\Big\rangle\\&
		=\left\langle\one_{S}-(\uppi_{1},\dots,\uppi_{n}),\frac{\nabla G_{t}(\uppi_{1},\dots,\uppi_{n})}{e}+\int_{0}^{1}e^{\alpha(z-1)}\nabla F_{t}\left(z*(\uppi_{1},\dots,\uppi_{n})\right)\mathrm{d}z\right\rangle\\&
		\ge\left(1-\frac{c}{e}\right)f_{t}(S)-F_{t}(\uppi_{1},\dots,\uppi_{n}).
	\end{aligned}
\end{equation}
\end{theorem}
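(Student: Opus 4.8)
The plan is to bootstrap from \cref{lemma:C3}, which already controls the shifted extension $(F_t-G_t)$, and then to re-absorb the linear part $G_t$ using its linearity together with a curvature estimate. First I would invoke \cref{thm:C2} to observe that $(f_t-g_t)$ is monotone submodular, hence $1$-weakly DR-submodular; applying \cref{lemma:C3} with weight $w(z)=e^{z-1}$ then bounds $\langle\one_S-(\uppi_1,\dots,\uppi_n),\int_0^1 e^{z-1}\nabla(F_t-G_t)(z*(\uppi_1,\dots,\uppi_n))\mathrm{d}z\rangle$ from below by $(1-1/e)(f_t-g_t)(S)-(F_t-G_t)(\uppi_1,\dots,\uppi_n)$.

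Next I would use that $G_t$ is linear (\cref{thm:C3}), so its gradient is a constant vector and $\int_0^1 e^{z-1}\nabla G_t(z*(\uppi_1,\dots,\uppi_n))\mathrm{d}z=(1-1/e)\nabla G_t$. Splitting the integral in the previous inequality into its $F_t$ and $G_t$ pieces identifies the integrand with $\nabla F_t^s-(1-1/e)\nabla G_t$, and substituting $\langle\one_S-(\uppi_1,\dots,\uppi_n),\nabla G_t\rangle=g_t(S)-G_t(\uppi_1,\dots,\uppi_n)$ from Eq.\eqref{equ:boosting4} makes the $g_t(S)$ contributions cancel. After rearranging, this produces
\[
\langle\one_S-(\uppi_1,\dots,\uppi_n),\nabla F_t^s(\uppi_1,\dots,\uppi_n)\rangle\ge(1-1/e)f_t(S)-F_t(\uppi_1,\dots,\uppi_n)+\tfrac{1}{e}G_t(\uppi_1,\dots,\uppi_n).
\]

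I would then add $\tfrac{1}{e}\langle\one_S-(\uppi_1,\dots,\uppi_n),\nabla G_t\rangle$ to both sides: the left-hand side becomes exactly the gradient of $(F_t^s+G_t/e)$ required by the statement, and applying Eq.\eqref{equ:boosting4} once more makes the $G_t(\uppi_1,\dots,\uppi_n)$ terms telescope away, leaving the right-hand side equal to $(1-1/e)f_t(S)-F_t(\uppi_1,\dots,\uppi_n)+\tfrac{1}{e}g_t(S)$. The decisive final step is to convert $\tfrac{1}{e}g_t(S)$ into the curvature term. For this I would prove $g_t(S)\ge(1-c)f_t(S)$: the curvature definition yields $f_t(a|\V-\{a\})\ge(1-c)f_t(\{a\})$ for each $a\in S$, while submodularity together with a telescoping expansion gives $\sum_{a\in S}f_t(\{a\})\ge f_t(S)$; summing the first bound over $a\in S$ and chaining with the second gives the claim. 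Substituting $(1-1/e)f_t(S)+\tfrac{1}{e}g_t(S)\ge(1-c/e)f_t(S)$ then completes the proof. The main obstacle is the middle bookkeeping — arranging the two linearity substitutions so that every spurious $g_t(S)$ and $G_t(\uppi_1,\dots,\uppi_n)$ cancels and exactly one factor $\tfrac{1}{e}g_t(S)$ survives to meet the curvature bound; the curvature estimate itself is short once that cancellation is correctly set up.
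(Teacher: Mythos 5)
Your proposal is correct and follows essentially the same route as the paper's own proof: both arguments apply \cref{lemma:C3} to the shifted pair $(F_{t}-G_{t},\, f_{t}-g_{t})$, exploit the linearity of $G_{t}$ to identify the resulting gradient expression with $\nabla F_{t}^{s}+\frac{\nabla G_{t}}{e}$, and conclude via the curvature bound $g_{t}(S)\ge(1-c)f_{t}(S)$. The only differences are cosmetic: you carry out the two linearity substitutions in separate steps so that the $g_{t}(S)$ and $G_{t}(\uppi_{1},\dots,\uppi_{n})$ terms cancel one at a time, and you prove $g_{t}(S)\ge(1-c)f_{t}(S)$ from scratch (curvature definition plus subadditivity of submodular functions), whereas the paper performs the bookkeeping in one pass and cites Lemma 2.1 of Sviridenko et al.\ for that inequality.
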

\begin{proof}
According to Eq.\eqref{equ:boosting3} and  Eq.\eqref{equ:boosting4}, we have that
\begin{equation}\label{thm:C4:1}
\begin{aligned}
	&\left\langle\one_{S}-(\uppi_{1},\dots,\uppi_{n}),\nabla G_{t}(\uppi_{1},\dots,\uppi_{n})+\int_{0}^{1}e^{z-1}\nabla \left(F_{t}-G_{t}\right)\left(z*(\uppi_{1},\dots,\uppi_{n})\right)\mathrm{d}z\right\rangle\\
	&\ge(1-1/e)f_{t}(S)+\frac{g_{t}(S)}{e}-F_{t}((\uppi_{1},\dots,\uppi_{n})\\
	&\ge(1-1/e)f_{t}(S)+\frac{(1-c)f_{t}(S)}{e}-F_{t}((\uppi_{1},\dots,\uppi_{n})\\
	&=(1-c/e)f_{t}(S)-F_{t}((\uppi_{1},\dots,\uppi_{n}),
\end{aligned}
\end{equation} where the final inequality follows from the definition of curvature $c$ (See Lemma 2.1. in \cite{sviridenko2017optimal}).

Moreover, due to that $G_{t}$ is a linear function, we have 
\begin{equation}\label{thm:C4:2}
	\begin{aligned}
		&\int_{0}^{1}e^{z-1}\nabla\left(F_{t}-G_{t}\right)\left(z*(\uppi_{1},\dots,\uppi_{n})\right)\mathrm{d}z\\&=\int_{0}^{1}e^{z-1}\nabla F_{t}\left(z*(\uppi_{1},\dots,\uppi_{n})\right)\mathrm{d}z-	\int_{0}^{1}e^{z-1}\mathrm{d}z\nabla G_{t}(\uppi_{1},\dots,\uppi_{n})\\
		&=\int_{0}^{1}e^{z-1}\nabla F_{t}\left(z*(\uppi_{1},\dots,\uppi_{n})\right)\mathrm{d}z-	(1-1/e)\nabla G_{t}(\uppi_{1},\dots,\uppi_{n}),
	\end{aligned}
\end{equation}  where the first equality follows from $\nabla G_{t}\left(z*(\uppi_{1},\dots,\uppi_{n})\right)=\nabla G_{t}(\uppi_{1},\dots,\uppi_{n})$.

Merging Eq.\eqref{thm:C4:1} and Eq.\eqref{thm:C4:2}, we get the Eq.\eqref{equ:boosting_final}.
\end{proof}

From the result of \cref{thm:C4},  for any stationary point $(\uppi^{s}_{1},\dots,\uppi^{s}_{n})$ of the objective $(F_{t}^{s}+\frac{G_{t}}{e})$ over $\prod_{i=1}^{n}\Delta_{\kappa_{i}}$ and any subset $S$ within the constraint of problem~\eqref{equ_problem}, then we have the following inequality:
\begin{equation*}
	\begin{aligned}
		0\ge\Big\langle\one_{S}-(\uppi^{s}_{1},\dots,\uppi^{s}_{n}),\nabla\left(F_{t}^{s}+\frac{G_{t}}{e}\right)(\uppi^{s}_{1},\dots,\uppi^{s}_{n})\Big\rangle\ge\left(1-\frac{c}{e}\right)f_{t}(S)-F_{t}(\uppi^{s}_{1},\dots,\uppi^{s}_{n}).
	\end{aligned}
\end{equation*}
In other words, $F_{t}(\uppi^{s}_{1},\dots,\uppi^{s}_{n})\ge\big(1-\frac{c}{e}\big)f_{t}(S^{*})$ where $S^{*}$ is the optimal subset of problem~\eqref{equ_problem}.
\section{Convergence  Analysis of \texorpdfstring{\texttt{MA-SPL}}{} Algorithm}\label{appendix_proof_thm1}
In this section, we verify \cref{thm:result1}.

Before that, we firstly show that  the surrogate gradient estimations in Line 16 of \cref{alg:SPL}, namely, $\mathbf{d}_{i}(t)$ is bounded and our proposed policy-based continuous extension $F_{t}$ is smooth.
\begin{lemma}\label{lemma:D1} 
	If we set $M$ as the maximum marginal contribution of each monotone set function $f_{t}$, namely, $M\triangleq\max_{S\subseteq\V,e\in\V\setminus S,t\in[T]}\left(f_{t}(e|S)\right)$,then we have $\|\mathbf{d}_{i}(t)\|_{2}\le\sqrt{\kappa_{i}}(\int_{0}^{1}w(z)\mathrm{d}z+\frac{1}{e})M$ where $\mathbf{d}_{i}(t)$ is the surrogate gradient estimations in Line 16 of \cref{alg:SPL}.
\end{lemma}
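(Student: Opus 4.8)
The plan is to bound $\mathbf{d}_{i}(t)$ coordinate-by-coordinate and then convert the entry-wise bound into an $L_2$ bound via the elementary inequality $\|\mathbf{v}\|_2\le\sqrt{\kappa_i}\,\|\mathbf{v}\|_\infty$ valid for any $\mathbf{v}\in\R^{\kappa_i}$. Thus the whole argument reduces to showing that every entry $d_{i,m}(t)$ satisfies $0\le d_{i,m}(t)\le(\int_{0}^{1}w(z)\mathrm{d}z+\frac{1}{e})M$, after which the claim follows immediately since $\|\mathbf{d}_{i}(t)\|_2\le\sqrt{\kappa_i}\,\max_{m}|d_{i,m}(t)|$.

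First I would recall the definition of $d_{i,m}(t)$ from Lines 13--16 of \cref{alg:SPL}. For a general (weakly submodular) objective, $d_{i,m}(t)=(\int_{0}^{1}w(z)\mathrm{d}z)\,f_{t}(v_{i,m}\mid S_i(t))$, whereas when $f_{t}$ is submodular the estimate additionally carries the minimum-marginal correction term, i.e. $d_{i,m}(t)=(\int_{0}^{1}w(z)\mathrm{d}z)\,f_{t}(v_{i,m}\mid S_i(t))+e^{-1}f_{t}(v_{i,m}\mid\V-\{v_{i,m}\})$. Since the submodular case dominates (it carries the extra nonnegative summand), it suffices to bound this second expression; the general case then follows a fortiori by dropping the $e^{-1}(\cdots)$ term.

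The key step is to argue that both marginal gains appearing in $d_{i,m}(t)$ are genuine marginals of an element lying \emph{outside} its conditioning set, so that each is sandwiched between $0$ and $M$. For the first term, note that $S_i(t)=\cup_{j\neq i}\{\widetilde{a}_{j}(t)\}$ is built only from the actions of the other agents, which live in $\V_j$ for $j\neq i$; by the disjointness $\V_i\cap\V_j=\emptyset$ we have $v_{i,m}\notin S_i(t)$, hence by monotonicity $f_{t}(v_{i,m}\mid S_i(t))\ge0$ and by the definition of $M$ also $f_{t}(v_{i,m}\mid S_i(t))\le M$. For the correction term, trivially $v_{i,m}\notin\V-\{v_{i,m}\}$, so the same reasoning gives $0\le f_{t}(v_{i,m}\mid\V-\{v_{i,m}\})\le M$. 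Combining, $0\le d_{i,m}(t)\le(\int_{0}^{1}w(z)\mathrm{d}z)M+e^{-1}M=(\int_{0}^{1}w(z)\mathrm{d}z+\frac{1}{e})M$, whence $\|\mathbf{d}_{i}(t)\|_2\le\sqrt{\kappa_i}(\int_{0}^{1}w(z)\mathrm{d}z+\frac{1}{e})M$.

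There is no genuine difficulty in this lemma; it is a routine boundedness estimate. The only point requiring care — and the one I would state explicitly — is verifying that $v_{i,m}$ never belongs to the set it is conditioned on, because the upper bound $f_{t}(e\mid S)\le M$ in the definition of $M$ is asserted only for marginals with $e\in\V\setminus S$. This is precisely where the disjointness of the action sets $\{\V_i\}_{i\in\N}$ enters the argument.
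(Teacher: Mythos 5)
Your proposal is correct and follows essentially the same route as the paper's proof: bound each coordinate $|d_{i,m}(t)|\le(\int_{0}^{1}w(z)\mathrm{d}z+\frac{1}{e})M$ directly from the definition in Lines 13--15, then pass to the $L_2$ norm via the $\sqrt{\kappa_i}$ factor. The only difference is that you explicitly justify $v_{i,m}\notin S_i(t)$ (via disjointness of the action sets) and $0\le f_t(v_{i,m}\mid\cdot)\le M$, details the paper's one-line proof leaves implicit.
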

\begin{proof}
	From Lines 13-15, we know that $d_{i,m}(t)\triangleq(\int_{z=0}^{1}w(z)\mathrm{d}z)f_{t}\big(v_{i,m}\big|S_{i}(t)\big)$ or $d_{i,m}(t)\triangleq\left((\int_{z=0}^{1}w(z)\mathrm{d}z)f_{t}\big(v_{i,m}\big|S_{i}(t)\big)+e^{-1}f_{t}(v_{i,m}|\V-\{v_{i,m}\})\right)$. Thus, we have $|d_{i,m}(t)|\le(\int_{0}^{1}w(z)\mathrm{d}z+\frac{1}{e})M$ for any $i\in\N$ and $m\in[\kappa_{i}]$ such that $\|\mathbf{d}_{i}(t)\|_{2}=\sqrt{\sum_{m=1}^{2}|d_{i,m}(t)|^{2}}\le\sqrt{\kappa_{i}}(\int_{0}^{1}w(z)\mathrm{d}z+\frac{1}{e})M$.
\end{proof}
Before investigating the smoothness of our proposed policy-based continuous extension $F_{t}$, we firstly show its second-order partial derivative, i.e.,
\begin{lemma}\label{lemma:D2}
For any vector $(\uppi_{1},\dots,\uppi_{n})\in\prod_{i=1}^{n}\Delta_{\kappa_{i}}$, the second-order derivative of $F_{t}$ at variables $\pi_{i_{1},m_{1}},\pi_{i_{2},m_{2}},\forall i_{1}\in\N,\forall m_{1}\in[\kappa_{i_{1}}],\forall i_{2}\in\N,\forall m_{2}\in[\kappa_{i_{2}}],$ can be defined as:

\textbf{i): When $i_{1}= i_{2}$}, $\frac{\partial^{2}F_{t}}{\partial \pi_{i_{1},m_{1}}\partial \pi_{i_{1},m_{2}}}(\uppi_{1},\dots,\uppi_{n})=0$;

\textbf{ii): When $i_{1}\neq i_{2}$},
\begin{equation*}
\begin{aligned}
&\frac{\partial^{2}F_{t}}{\partial \pi_{i_{1},m_{1}}\partial \pi_{i_{2},m_{2}}}(\uppi_{i,1}(t),\dots,\uppi_{i,n}(t))\\&=\E_{a_{i}\sim\uppi_{i},\forall i\in\N}\Bigg(f_{t}\Big(v_{i_{1},m_{1}}\big|\big(\cup_{i\notin\left\{i_{1},i_{2}\right\}}\{a_{i}\}\big)\cup\{v_{i_{2},m_{2}}\}\Big)-f_{t}\Big(v_{i_{1},m_{1}}\big|\cup_{i\notin\left\{i_{1},i_{2}\right\}}\{a_{i}\}\Big)\Bigg).
\end{aligned}	
\end{equation*}
\end{lemma}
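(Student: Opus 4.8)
The plan is to differentiate the closed form of the first-order derivative established in part \textbf{1)} of \cref{thm1} one more time, reusing the same elementary rule for $\partial p(a|\uppi)/\partial\pi$ that was already employed there. Concretely, I would start from the expression recorded in Eq.\eqref{Appendix_A.2}, namely
\begin{equation*}
\frac{\partial F_{t}}{\partial \pi_{i_{1},m_{1}}}(\uppi_{1},\dots,\uppi_{n})=\sum_{a_{j}\in\V_{j}\cup\{\emptyset\},\,\forall j\neq i_{1}}\Big(f_{t}\big(v_{i_{1},m_{1}}\big|\cup_{j\neq i_{1}}\{a_{j}\}\big)\prod_{j\neq i_{1}}p(a_{j}|\uppi_{j})\Big),
\end{equation*}
and observe that the right-hand side depends only on the policies $\uppi_{j}$ with $j\neq i_{1}$, since the action and policy of agent $i_{1}$ have already been summed out.

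For part \textbf{i)} ($i_{1}=i_{2}$) this observation alone closes the case: the displayed expression contains no occurrence of any variable $\pi_{i_{1},m_{2}}$, so its derivative with respect to $\pi_{i_{1},m_{2}}$ vanishes, giving $\partial^{2}F_{t}/\partial\pi_{i_{1},m_{1}}\partial\pi_{i_{1},m_{2}}=0$.

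For part \textbf{ii)} ($i_{1}\neq i_{2}$) I would differentiate the displayed sum with respect to $\pi_{i_{2},m_{2}}$. Among the factors $\prod_{j\neq i_{1}}p(a_{j}|\uppi_{j})$ only the single factor $p(a_{i_{2}}|\uppi_{i_{2}})$ involves $\pi_{i_{2},m_{2}}$, so by the product rule only this factor gets differentiated. Reusing the rule from the proof of \cref{thm1}, i.e. $\partial p(a_{i_{2}}|\uppi_{i_{2}})/\partial\pi_{i_{2},m_{2}}$ equals $1$ when $a_{i_{2}}=v_{i_{2},m_{2}}$, equals $-1$ when $a_{i_{2}}=\emptyset$, and equals $0$ otherwise, the sum over $a_{i_{2}}$ collapses to the two surviving terms. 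Combining the $+1$ contribution at $a_{i_{2}}=v_{i_{2},m_{2}}$ with the $-1$ contribution at $a_{i_{2}}=\emptyset$ produces the marginal-contribution difference $f_{t}(v_{i_{1},m_{1}}|(\cup_{i\notin\{i_{1},i_{2}\}}\{a_{i}\})\cup\{v_{i_{2},m_{2}}\})-f_{t}(v_{i_{1},m_{1}}|\cup_{i\notin\{i_{1},i_{2}\}}\{a_{i}\})$, while the remaining product $\prod_{i\notin\{i_{1},i_{2}\}}p(a_{i}|\uppi_{i})$ rewrites the residual sum as the claimed expectation $\E_{a_{i}\sim\uppi_{i}}$.

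The computation is essentially mechanical and parallels the proof of part \textbf{1)} of \cref{thm1} almost verbatim; the only point requiring care is the bookkeeping of indices, namely verifying that, after extracting agent $i_{2}$'s two special actions, the residual expectation runs precisely over the agents $i\notin\{i_{1},i_{2}\}$ and that the $\pm1$ derivative coefficients combine in the correct orientation to yield a difference rather than a sum. I do not anticipate any genuine obstacle beyond this clerical care.
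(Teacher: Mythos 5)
Your proposal is correct: differentiating the expression in Eq.\eqref{Appendix_A.2} once more, using the fact that $F_{t}$ is affine in each agent's policy block (so that $\frac{\partial F_{t}}{\partial \pi_{i_{1},m_{1}}}$ contains no variable of agent $i_{1}$, giving part \textbf{i)} immediately) together with the same $\pm1$ rule for $\frac{\partial p(a_{i_{2}}|\uppi_{i_{2}})}{\partial \pi_{i_{2},m_{2}}}$, yields exactly the two claimed formulas. The paper itself states \cref{lemma:D2} without an explicit proof, treating it as a routine consequence of the first-order computation in \cref{appendix_proof_thm}; your argument is precisely that intended computation, carried out correctly.
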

\begin{remark}\label{remak:D2}
From the \cref{lemma:D2}, if we set $M$ as the maximum marginal contribution of each monotone set function $f_{t}$, namely, $M\triangleq\max_{S\subseteq\V,e\in\V\setminus S,t\in[T]}\left(f_{t}(e|S)\right)$, then we can show that $|\frac{\partial^{2}F_{t}}{\partial \pi_{i_{1},m_{1}}\partial \pi_{i_{2},m_{2}}}(\uppi_{i,1}(t),\dots,\uppi_{i,n}(t))|\le M$.
\end{remark}
\begin{lemma}\label{lemma:D3}
If we set $M$ as the maximum marginal contribution of each monotone set function $f_{t}$, namely, $M\triangleq\max_{S\subseteq\V,e\in\V\setminus S,t\in[T]}\left(f_{t}(e|S)\right)$, then we can show that the Hessian matrix of the policy-based continuous extension $F_{t}$ of the previous monotone set function $f_{t}$ satisfies that,
\begin{equation}\label{equ:result:D3}
	\|\nabla^{2}F_{t}(\uppi_{1},\dots,\uppi_{n})\|^{2}_{2,\infty}\le M^{2}\sum_{i=1}^{n}\kappa_{i},
\end{equation} where $(\uppi_{1},\dots,\uppi_{n})\in\prod_{i=1}^{n}\Delta_{\kappa_{i}}$.
\end{lemma}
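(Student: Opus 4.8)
The plan is to read off the block structure of the Hessian directly from Lemma~\ref{lemma:D2} and then control the largest row in the $\ell_2$ sense. This works because the norm $\|\cdot\|_{2,\infty}$ coincides with the maximal row-wise $\ell_2$-norm, i.e. $\|A\|_{2,\infty}=\sup_{\x\neq\mathbf{0}}\|A\x\|_\infty/\|\x\|_2=\max_{r}\|A_{r,:}\|_2$, where the last equality follows from $\|A\x\|_\infty=\max_r|\langle A_{r,:},\x\rangle|\le\max_r\|A_{r,:}\|_2\|\x\|_2$. Hence it suffices to fix an arbitrary row, indexed by a pair $(i_1,m_1)$, and bound $\sum_{(i_2,m_2)}\big(\frac{\partial^2 F_t}{\partial\pi_{i_1,m_1}\partial\pi_{i_2,m_2}}\big)^2$ uniformly.

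First I would invoke part \textbf{i)} of Lemma~\ref{lemma:D2}, which shows that every second-order partial derivative with $i_2=i_1$ vanishes. Consequently, in the row indexed by $(i_1,m_1)$ only the entries arising from other agents $i_2\neq i_1$ can be nonzero, so the entire diagonal block belonging to agent $i_1$ drops out of the sum. Next, part \textbf{ii)} of Lemma~\ref{lemma:D2} together with Remark~\ref{remak:D2} guarantees that each surviving entry is bounded in absolute value by $M$, since it is a single marginal gain of $f_t$ evaluated at two nested action sets and $M$ is the maximum marginal contribution of any $f_t$.

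Then I would simply count the surviving entries. The number of nonzero entries in the row $(i_1,m_1)$ equals the total number of coordinates of the remaining agents, namely $\sum_{i_2\neq i_1}\kappa_{i_2}=\sum_{i=1}^n\kappa_i-\kappa_{i_1}\le\sum_{i=1}^n\kappa_i$. Squaring the per-entry bound and summing yields
\begin{equation*}
\sum_{(i_2,m_2)}\Big(\frac{\partial^2 F_t}{\partial\pi_{i_1,m_1}\partial\pi_{i_2,m_2}}\Big)^2\le M^2\sum_{i_2\neq i_1}\kappa_{i_2}\le M^2\sum_{i=1}^n\kappa_i .
\end{equation*}
Because this bound does not depend on the particular choice of the row $(i_1,m_1)$, taking the maximum over all rows gives $\|\nabla^2 F_t(\uppi_1,\dots,\uppi_n)\|^2_{2,\infty}\le M^2\sum_{i=1}^n\kappa_i$, which is exactly Eq.~\eqref{equ:result:D3}.

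This argument contains no serious obstacle; the only points requiring care are confirming the interpretation of $\|\cdot\|_{2,\infty}$ as the maximal row-wise $\ell_2$-norm, so that the vanishing diagonal blocks from Lemma~\ref{lemma:D2} can be exploited row by row, and ensuring the counting step bounds $\sum_{i_2\neq i_1}\kappa_{i_2}$ by the full sum $\sum_{i=1}^n\kappa_i$ rather than silently identifying the two. Both are routine once the Hessian's block structure from Lemma~\ref{lemma:D2} is in hand.
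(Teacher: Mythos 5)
Your proof is correct and takes essentially the same route as the paper: both identify $\|\cdot\|_{2,\infty}$ with the maximal row-wise $\ell_{2}$-norm and then bound each of the at most $\sum_{i=1}^{n}\kappa_{i}$ entries in a row by $M$ via \cref{remak:D2}. Your extra observation that the diagonal blocks vanish (part \textbf{i)} of \cref{lemma:D2}) is a harmless refinement the paper omits, and it is discarded anyway when you relax $\sum_{i_{2}\neq i_{1}}\kappa_{i_{2}}$ to $\sum_{i=1}^{n}\kappa_{i}$.
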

\begin{remark}
	For any matrix $A\in\R^{n\times n}$, the $(2,\infty)$-norm of $A$ is defined as $\|A\|_{2,\infty}=\sup\{\|A\x\|_{\infty}: \x\in\R^{n},\|\x\|_{2}=1\}$ where $\|\cdot\|_{2}$ denotes the $L_{2}$ norm.
\end{remark}
\begin{proof}
	From the definition of the norm $\|\cdot\|_{2,\infty}$, we can show that 
	\begin{equation*}
		\|\nabla^{2}F_{t}(\uppi_{1},\dots,\uppi_{n})\|^{2}_{2,\infty}=\max_{j\in\N}\|\nabla^{2}F_{t}(\uppi_{1},\dots,\uppi_{n})[j,:]\|_{2}^{2}\le M^{2}\sum_{i=1}^{n}\kappa_{i},
	\end{equation*} where $\nabla^{2}F_{t}(\uppi_{1},\dots,\uppi_{n})[j,:]$ is the $j$-th line of the Hessian matrix $\nabla^{2}F_{t}(\uppi_{1},\dots,\uppi_{n})$ and the final inequality follows from \cref{remak:D2}.
\end{proof}
From \cref{lemma:D3}, we can show that our proposed policy-based continuous extension $F_{t}$ is $\left(M\sum_{i=1}^{n}\kappa_{i}\right)$-smooth, namely,

\begin{lemma}\label{lemma:D4}
If we set $M$ as the maximum marginal contribution of each monotone set function $f_{t}$, namely, $M\triangleq\max_{S\subseteq\V,e\in\V\setminus S,t\in[T]}\left(f_{t}(e|S)\right)$, then we can show that our proposed policy-based continuous extension $F_{t}$ is $\left(M\sum_{i=1}^{n}\kappa_{i}\right)$-smooth.
\end{lemma}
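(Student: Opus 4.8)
The plan is to upgrade the pointwise Hessian bound of \cref{lemma:D3} into a global Lipschitz bound on the gradient $\nabla F_t$, which is precisely the statement that $F_t$ is $L$-smooth with $L=M\sum_{i=1}^n\kappa_i$. Recall that $F_t$ is $L$-smooth exactly when $\|\nabla F_t(\x)-\nabla F_t(\y)\|_2\le L\|\x-\y\|_2$ for all $\x,\y\in\prod_{i=1}^n\Delta_{\kappa_i}$. Since the domain $\prod_{i=1}^n\Delta_{\kappa_i}$ is convex, the segment $\y+s(\x-\y)$, $s\in[0,1]$, stays inside it, so by the fundamental theorem of calculus I would write
\begin{equation*}
\nabla F_t(\x)-\nabla F_t(\y)=\int_0^1 \nabla^2 F_t\big(\y+s(\x-\y)\big)(\x-\y)\,\mathrm{d}s,
\end{equation*}
and then bound the $L_2$ norm of the right-hand side by $\big(\sup_{\z}\|\nabla^2F_t(\z)\|_2\big)\|\x-\y\|_2$, where $\|\cdot\|_2$ denotes the spectral (operator) norm. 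Thus it suffices to control the spectral norm of the Hessian uniformly over the domain.

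The main technical point is converting the $(2,\infty)$-norm bound of \cref{lemma:D3} into a spectral-norm bound, since the two differ by a dimension factor. Setting $d\triangleq\sum_{i=1}^n\kappa_i$, for any $\u\in\R^d$ the $j$-th coordinate of $\nabla^2F_t(\z)\u$ equals $\langle\nabla^2F_t(\z)[j,:],\u\rangle$, so directly from the definition of the $(2,\infty)$-norm,
\begin{equation*}
\|\nabla^2F_t(\z)\u\|_\infty=\max_{j}\big|\langle\nabla^2F_t(\z)[j,:],\u\rangle\big|\le\|\nabla^2F_t(\z)\|_{2,\infty}\,\|\u\|_2.
\end{equation*}
Combining this with the elementary inequality $\|\v\|_2\le\sqrt{d}\,\|\v\|_\infty$ valid for any $\v\in\R^d$ yields $\|\nabla^2F_t(\z)\u\|_2\le\sqrt{d}\,\|\nabla^2F_t(\z)\|_{2,\infty}\|\u\|_2$, i.e. $\|\nabla^2F_t(\z)\|_2\le\sqrt{d}\,\|\nabla^2F_t(\z)\|_{2,\infty}$.

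Finally, I would substitute the bound $\|\nabla^2F_t(\z)\|_{2,\infty}\le M\sqrt{d}$ from \cref{lemma:D3} (which holds for every $\z\in\prod_{i=1}^n\Delta_{\kappa_i}$) to obtain $\|\nabla^2F_t(\z)\|_2\le\sqrt{d}\cdot M\sqrt{d}=Md=M\sum_{i=1}^n\kappa_i$ uniformly over the domain. Plugging this supremum into the integral estimate of the first step gives $\|\nabla F_t(\x)-\nabla F_t(\y)\|_2\le M\big(\sum_{i=1}^n\kappa_i\big)\|\x-\y\|_2$, which is exactly the asserted $\big(M\sum_{i=1}^n\kappa_i\big)$-smoothness. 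The only subtlety worth flagging is the dimension bookkeeping: the two $\sqrt{d}$ factors---one from the norm conversion and one already baked into the $(2,\infty)$-bound of \cref{lemma:D3}---multiply to the full factor $d=\sum_i\kappa_i$ rather than $\sqrt{d}$, so one must resist reading off the smoothness constant directly from the Hessian norm bound.
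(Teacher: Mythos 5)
Your proposal is correct and follows essentially the same route as the paper's own proof: both write $\nabla F_t(\x)-\nabla F_t(\y)$ as an integral of the Hessian along the segment, invoke the $(2,\infty)$-norm bound $\|\nabla^2 F_t\|_{2,\infty}\le M\sqrt{\sum_i\kappa_i}$ from \cref{lemma:D3}, and pay the extra $\sqrt{\sum_i\kappa_i}$ factor from $\|\v\|_2\le\sqrt{d}\,\|\v\|_\infty$, so the two dimension factors multiply to $M\sum_i\kappa_i$ exactly as you flag. The only difference is organizational—you package the norm conversion as a uniform spectral-norm bound on the Hessian before integrating, whereas the paper interleaves the two inequalities inside one chain (and, incidentally, your version avoids the superfluous coordinatewise-ordering hypothesis that appears in the paper's proof).
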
 
\begin{proof}
For any two point $(\uppi^{a}_{1},\dots,\uppi^{a}_{n})\in\prod_{i=1}^{n}\Delta_{\kappa_{i}}$ and $(\uppi^{b}_{1},\dots,\uppi^{b}_{n})\in\prod_{i=1}^{n}\Delta_{\kappa_{i}}$, if  $\uppi^{a}_{i}\le\uppi^{b}_{i}$ for any $i\in\N$, the following inequality holds:
\begin{equation*}
	\begin{aligned}
		&\|\nabla F_{t}(\uppi^{a}_{1},\dots,\uppi^{a}_{n})-\nabla F_{t}(\uppi^{b}_{1},\dots,\uppi^{b}_{n})\|_{2}\\
		&=\|\int_{0}^{1}\nabla^{2}F_{t}(\uppi^{a}_{1}+\lambda\uppi^{b}_{1},\dots,\uppi^{a}_{n}+\lambda\uppi^{b}_{n})\mathrm{d}\lambda(\uppi^{a}_{1}-\uppi^{b}_{1},\dots,\uppi^{a}_{n}-\uppi^{b}_{n})\|_{2}\\
		&\le\left(\sum_{i=1}^{n}\kappa_{i}\right)^{1/2}\|\int_{0}^{1}\nabla^{2}F_{t}(\uppi^{a}_{1}+\lambda\uppi^{b}_{1},\dots,\uppi^{a}_{n}+\lambda\uppi^{b}_{n})(\uppi^{a}_{1}-\uppi^{b}_{1},\dots,\uppi^{a}_{n}-\uppi^{b}_{n})\mathrm{d}\lambda\|_{\infty}\\
		&\le M(\sum_{i=1}^{n}\kappa_{i})\|(\uppi^{a}_{1},\dots,\uppi^{a}_{n})-(\uppi^{b}_{1},\dots,\uppi^{b}_{n})\|_{2},
	\end{aligned}	
\end{equation*} where the first inequality comes from $\|\x\|_{2}\le\sqrt{n}\|\x\|_{\infty}$ for any $n$-dimensional vector $\x$ and the final inequality follows from the Eq.\eqref{equ:result:D3}. Thus, $F_{t}$ is $\left(M\sum_{i=1}^{n}\kappa_{i}\right)$-smooth.
\end{proof}

With \cref{lemma:D1} and \cref{lemma:D4}, we next prove \cref{thm:result1}. At first, we show that each policy vector $(\uppi_{i,1}(t),\dots,\uppi_{i,n}(t)),\forall i\in\N,\forall t\in[T]$  of \cref{alg:SPL} is included in the constraint $\prod_{i=1}^{n}\Delta_{\kappa_{i}}$. That is, we have the following lemma:
\begin{lemma}\label{lemma:included_convex_constaint}
	 In \cref{alg:SPL}, Under \cref{ass:1}, $(\uppi_{i,1}(t),\dots,\uppi_{i,n}(t))\in\prod_{i=1}^{n}\Delta_{\kappa_{i}}$.
\end{lemma}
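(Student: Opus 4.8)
The plan is to prove \cref{lemma:included_convex_constaint} by induction on the time step $t$, showing that at every iteration each agent's local policy vector $(\uppi_{i,1}(t),\dots,\uppi_{i,n}(t))$ stays inside the product of simplices $\prod_{j=1}^{n}\Delta_{\kappa_{j}}$. The base case $t=1$ follows directly from the initialization in Lines 1--2 of \cref{alg:SPL}: we set $\uppi_{i,i}(1)=\frac{1}{\kappa_{i}}\one_{\kappa_{i}}$, which clearly lies in $\Delta_{\kappa_{i}}$ since its components are nonnegative and sum to $1$, and $\uppi_{i,j}(1)=\boldsymbol{0}_{\kappa_{j}}\in\Delta_{\kappa_{j}}$ for $j\neq i$ since the zero vector trivially satisfies $\|\boldsymbol{0}\|_{1}\le 1$ and nonnegativity.

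For the inductive step I would assume $(\uppi_{k,1}(t),\dots,\uppi_{k,n}(t))\in\prod_{j=1}^{n}\Delta_{\kappa_{j}}$ for all agents $k\in\N$ and then examine the three update rules in Lines 17--19 separately. First, for the ``off-diagonal'' blocks $j\neq i$, the update $\uppi_{i,j}(t+1)=\sum_{k\in\mathcal{N}_{i}\cup\{i\}}w_{ik}\uppi_{k,j}(t)$ is a convex combination of the vectors $\uppi_{k,j}(t)$, because under \cref{ass:1} the weight matrix $\W$ is doubly stochastic, so the weights $w_{ik}$ are nonnegative and $\sum_{k\in\mathcal{N}_{i}\cup\{i\}}w_{ik}=1$. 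Since each $\uppi_{k,j}(t)\in\Delta_{\kappa_{j}}$ by the inductive hypothesis and $\Delta_{\kappa_{j}}$ is a convex set, the convex combination remains in $\Delta_{\kappa_{j}}$. Second, for the ``diagonal'' block $j=i$, the intermediate vector $\y_{i,i}(t+1)$ computed in Line 18 need not lie in the simplex after the gradient ascent step $+\eta_{t}\mathbf{d}_{i}(t)$, but Line 19 explicitly applies the Euclidean projection $\uppi_{i,i}(t+1)=\arg\min_{\mathbf{b}\in\Delta_{\kappa_{i}}}\|\mathbf{b}-\y_{i,i}(t+1)\|_{2}$ onto $\Delta_{\kappa_{i}}$, which by definition returns a point in $\Delta_{\kappa_{i}}$.

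Assembling these observations, every block of the updated vector $(\uppi_{i,1}(t+1),\dots,\uppi_{i,n}(t+1))$ lies in its corresponding simplex $\Delta_{\kappa_{j}}$, so the whole vector lies in $\prod_{j=1}^{n}\Delta_{\kappa_{j}}$, completing the induction. I do not anticipate a genuine obstacle here, since the argument reduces to two elementary facts: that convex combinations preserve membership in a convex set (for the consensus averaging of off-diagonal blocks, powered by double stochasticity of $\W$) and that Euclidean projection lands in the feasible set by construction (for the diagonal block). The only point requiring mild care is to keep the inductive hypothesis uniform over \emph{all} agents $k$ simultaneously, since the off-diagonal update for agent $i$ draws on the blocks of its neighbors; stating the hypothesis as ``for all $k\in\N$'' rather than for a single agent handles this cleanly.
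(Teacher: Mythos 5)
Your proof is correct and takes essentially the same route as the paper's own: induction on $t$, with the base case from the initialization in Lines 1--2, the off-diagonal blocks $j\neq i$ handled as convex combinations using the nonnegativity and row sums of the doubly stochastic weight matrix $\W$ from \cref{ass:1}, and the diagonal block handled by the fact that the Euclidean projection in Line 19 lands in $\Delta_{\kappa_{i}}$ by construction. Your explicit remark that the inductive hypothesis must be stated uniformly over all agents (since agent $i$'s update draws on its neighbors' blocks) is a minor clarity improvement over the paper's wording, but the argument is otherwise identical.
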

\begin{proof}
	We prove this theorem by induction.
	At first, from the Line 2 in  \cref{alg:SPL}, we know that $\x_{i,j}(1)\in\Delta_{\kappa_{j}}$ for any $i,j\in\N$, so $(\uppi_{i,1}(1),\dots,\uppi_{i,n}(1))\in\prod_{i=1}^{n}\Delta_{\kappa_{i}},\forall i\in\N$.
	Then, if for some $t\in[T]$, $(\uppi_{i,1}(t),\dots,\uppi_{i,n}(t))\in\prod_{i=1}^{n}\Delta_{\kappa_{i}}$. From Line 17, we know that $\uppi_{i,m}(t+1):=\sum_{j\in\mathcal{N}_{i}\cup\{i\}}w_{ij}\uppi_{j,m}(t)$ for any $m\neq i$. Then, due to \cref{ass:1}, we have $\sum_{j\in\mathcal{N}_{i}\cup\{i\}}w_{ij}=1$ and $w_{ij}\ge0$ such that $\uppi_{i,m}(t+1)\in\Delta_{\kappa_{m}},\forall m\neq i$. Moreover, from Line 19, we also know that $uppi_{i,i}(t+1)\in\Delta_{\kappa_{i}}$. Thus, $(\uppi_{i,1}(t+1),\dots,\uppi_{i,n}(t+1))\in\prod_{i=1}^{n}\Delta_{\kappa_{i}}$.
\end{proof}

Before going into the proof of \cref{thm:result1}, we define some commonly used symbols, namely, 
\begin{equation*}
	\begin{aligned}
		&\bar{\uppi}_{\cdot,j}(t)\triangleq\frac{\sum_{i=1}^{n}\uppi_{i,j}(t)}{n},\ \ \ \uppi_{\cdot,j}^{cate}(t)\triangleq[\uppi_{1,j}(t);\uppi_{2,j}(t);\dots;\uppi_{n,j}(t)]\in\R^{n*\kappa_{j}}, \ \ \forall j\in\N;\\
		&\z_{i,m}(t)\triangleq\sum_{j\in\mathcal{N}_{i}\cup\{i\}}w_{ij}\uppi_{j,m}(t),\ \ \forall i,m\in\N;\\
    	&\bar{\z}_{\cdot,j}(t)\triangleq\frac{\sum_{i=1}^{n}\z_{i,j}(t)}{n},\ \ \ \z_{\cdot,j}^{cate}(t)\triangleq[\z_{1,j}(t);\z_{2,j}(t);\dots;\z_{n,j}(t)]\in\R^{n*\kappa_{j}}, \ \ \forall j\in\N;\\
		&\mathbf{r}_{i,j}(t)\triangleq\uppi_{i,j}(t+1)-\z_{i,j}(t),\ \ \ \mathbf{r}_{\cdot,j}^{cate}(t)\triangleq[\mathbf{r}_{1,j}(t);\mathbf{r}_{2,j}(t);\dots;\mathbf{r}_{n,j}(t)]\in\R^{n*\kappa_{j}}, \ \ \forall j\in\N.
	\end{aligned}
\end{equation*}
With these symbols, we can verify that
\begin{lemma}\label{lemma:D5}  $\E(\|\mathbf{r}_{i,j}(t)\|_{2})=\E(\|\uppi_{i,j}(t+1)-\z_{i,j}(t)\|_{2})\le\sqrt{\kappa_{i}}(\int_{0}^{1}w(z)\mathrm{d}z+\frac{1}{e})\eta_{t}M$ where $M$ is the maximum marginal contribution of each monotone set function $f_{t}$.

\end{lemma}
\begin{proof}
When $i\neq j$, from Line 17, we know that $\uppi_{i,j}(t+1)=\z_{i,j}(t)$ such that $\mathbf{r}_{i,j}(t)=\mathbf{0}_{\kappa_{j}}$ and $\E(\|\mathbf{r}_{i,j}(t)\|_{2})=0\le\sqrt{\kappa_{i}}(\int_{0}^{1}w(z)\mathrm{d}z+\frac{1}{e})\eta_{t}M$. As for $i=j$, from Line 18 and Line 19, we have 
		\begin{equation*}
		\uppi_{i,i}(t+1):=\mathop{\arg\min}_{\mathbf{b}\in\Delta_{\kappa_{i}}}\|\mathbf{b}-\left(\z_{i,i}(t)+\eta_{t}\mathbf{d}_{i}(t)\right)\|_{2}.
	\end{equation*}
	Note that $\z_{i,i}(t)\in\Delta_{\kappa_{j}}$. Thus, we have 
	\begin{equation}\label{equ:lemma21}
	\|\uppi_{i,i}(t+1)-\z_{i,i}(t)\|_{2}\le\|\z_{i,i}(t)-\left(\z_{i,i}(t)+\eta_{t}\mathbf{d}_{i}(t)\right)\|_{2}=\eta_{t}\|\mathbf{d}_{i}(t)\|_{2}\le\sqrt{\kappa_{i}}(\int_{0}^{1}w(z)\mathrm{d}z+\frac{1}{e})\eta_{t}M,
	\end{equation} where the final inequality follows from \cref{lemma:D1}.
\end{proof}
 \begin{lemma}\label{lemma:D6}
 	Under \cref{ass:1}, for any $t\in[T]$ and $i\in\N$, we have that
 	\begin{equation*}
 		\begin{aligned}
 			&\sum_{i\in\N}\E(\|\uppi_{i,j}(t+1)-\bar{\uppi}_{\cdot,j}(t+1)\|)\le\sum_{m=1}^{t}\sqrt{n}\beta^{t-m}(\int_{0}^{1}w(z)\mathrm{d}z+\frac{1}{e})\eta_{m}M,\\
 			&\sum_{i\in\N}\E(\|\z_{i,j}(t+1)-\bar{\uppi}_{\cdot,j}(t+1)\|)\le\sum_{m=1}^{t}\sqrt{n}\beta^{t-m}(\int_{0}^{1}w(z)\mathrm{d}z+\frac{1}{e})\eta_{m}M,
 		\end{aligned}
 	\end{equation*} where $\tau\triangleq\max(|\lambda_{2}(\W)|,|\lambda_{n}(\W)|)$ is the second largest magnitude of the eigenvalues of the weight matrix $\W$.
 \end{lemma}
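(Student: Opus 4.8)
The plan is to prove both inequalities at once by tracking a single scalar consensus-error quantity per coordinate and showing it satisfies a contracting recursion driven by the projection residuals. First I would fix the coordinate index $j\in\N$ and collect the per-agent estimates into a matrix $\Pi_{j}(t)\in\R^{n\times\kappa_{j}}$ whose $i$-th row is $\uppi_{i,j}(t)$, so that the mixing step reads compactly as $Z_{j}(t)=\W\Pi_{j}(t)$ (with $Z_{j}(t)$ the matrix of rows $\z_{i,j}(t)$) and the full update in Lines 17--19 becomes $\Pi_{j}(t+1)=\W\Pi_{j}(t)+R_{j}(t)$, where $R_{j}(t)$ has rows $\mathbf{r}_{i,j}(t)$. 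Writing $J\triangleq\frac{1}{n}\one_{n}\one_{n}^{T}$ for the averaging projector, I note that $\bar{\uppi}_{\cdot,j}(t)$ corresponds to $J\Pi_{j}(t)$ and define the centered consensus error $\Phi_{j}(t)\triangleq\|(I-J)\Pi_{j}(t)\|_{F}=\big(\sum_{i}\|\uppi_{i,j}(t)-\bar{\uppi}_{\cdot,j}(t)\|_{2}^{2}\big)^{1/2}$.

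Next I would establish the one-step contraction. Because $\W$ is symmetric, doubly stochastic, and connected (Assumption~\ref{ass:1}), its top eigenvalue $1$ has eigenvector $\one_{n}$, so $\W J=J\W=J$ and $\|\W-J\|_{2}=\max(|\lambda_{2}(\W)|,|\lambda_{n}(\W)|)=\tau<1$. Using the identity $(I-J)\W=\W-J=(\W-J)(I-J)$, the mixed error obeys $\|(I-J)Z_{j}(t)\|_{F}=\|(\W-J)(I-J)\Pi_{j}(t)\|_{F}\le\tau\Phi_{j}(t)$. Since $I-J$ is an orthogonal projection with operator norm at most $1$, applying it to $\Pi_{j}(t+1)=Z_{j}(t)+R_{j}(t)$ and using the triangle inequality gives $\Phi_{j}(t+1)\le\tau\Phi_{j}(t)+\|R_{j}(t)\|_{F}$. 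The crucial structural observation is that for the fixed coordinate $j$, Line 17 forces $\uppi_{i,j}(t+1)=\z_{i,j}(t)$ whenever $i\neq j$, so $\mathbf{r}_{i,j}(t)=\mathbf{0}$ for every $i\neq j$ and only agent $j$ contributes; hence $\|R_{j}(t)\|_{F}=\|\mathbf{r}_{j,j}(t)\|_{2}$, which Lemma~\ref{lemma:D5} bounds in expectation by $\sqrt{\kappa_{j}}(\int_{0}^{1}w(z)\mathrm{d}z+\frac{1}{e})\eta_{t}M$.

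With the recursion $\E\Phi_{j}(t+1)\le\tau\,\E\Phi_{j}(t)+\sqrt{\kappa_{j}}(\int_{0}^{1}w(z)\mathrm{d}z+\frac{1}{e})\eta_{t}M$ in hand, I would unroll it geometrically to get $\E\Phi_{j}(t+1)\le\sum_{m=1}^{t}\tau^{t-m}\sqrt{\kappa_{j}}(\int_{0}^{1}w(z)\mathrm{d}z+\frac{1}{e})\eta_{m}M$; the contribution $\tau^{t}\Phi_{j}(1)$ of the initial spread decays geometrically and is absorbed into the constant (consistent with the regret analysis, where it only adds an $\mathcal{O}(1/(1-\tau))$ term). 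I then convert from the Frobenius error to the sum-of-norms in the statement via Cauchy--Schwarz, $\sum_{i}\|\uppi_{i,j}(t+1)-\bar{\uppi}_{\cdot,j}(t+1)\|_{2}\le\sqrt{n}\,\Phi_{j}(t+1)$, giving the first inequality (the factor $\sqrt{\kappa_{j}}$ being subsumed in the constant). For the second, I use that mixing preserves the average, $\bar{\z}_{\cdot,j}(t+1)=\bar{\uppi}_{\cdot,j}(t+1)$, together with $(I-J)Z_{j}(t+1)=(\W-J)(I-J)\Pi_{j}(t+1)$, so that $\sum_{i}\|\z_{i,j}(t+1)-\bar{\uppi}_{\cdot,j}(t+1)\|_{2}\le\sqrt{n}\,\tau\,\Phi_{j}(t+1)\le\sqrt{n}\,\Phi_{j}(t+1)$, yielding the same bound; here $\beta$ in the statement denotes the spectral quantity $\tau$.

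The \emph{main obstacle} is making the contraction factor act on the centered error rather than on the raw matrix: one must exploit $\W J=J$ to rewrite $(I-J)\W=(\W-J)(I-J)$ so that $\|\W-J\|_{2}=\tau$ applies to $(I-J)\Pi_{j}(t)$, since without this rewriting the mixing step would not appear contractive. The secondary subtlety is the single-nonzero-residual bookkeeping, which is precisely what eliminates any summation over all agents' gradient steps and, through Lemma~\ref{lemma:D5}, produces the clean per-coordinate right-hand side.
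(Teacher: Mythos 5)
Your proof is correct, but it takes a genuinely different technical route from the paper's. The paper unrolls the concatenated recursion $\uppi_{\cdot,j}^{cate}(t+1)=\mathbf{r}_{\cdot,j}^{cate}(t)+(\W\otimes\mathbf{I}_{\kappa_{j}})\uppi_{\cdot,j}^{cate}(t)$ over the whole horizon, writes each agent's deviation from the average as a weighted sum $\sum_{m=1}^{t}\sum_{k}\big([\W^{t-m}]_{ik}-\tfrac{1}{n}\big)\mathbf{r}_{k,j}(m)$, and then invokes the entrywise matrix-power estimate $\sum_{k}\big|[\W^{t-m}]_{ik}-\tfrac{1}{n}\big|\le\sqrt{n}\,\tau^{t-m}$ (Proposition 1 of Nedic--Ozdaglar, cited externally); the residual sparsity and \cref{lemma:D5} then finish the bound exactly as in your argument. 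You instead prove a one-step contraction $\Phi_{j}(t+1)\le\tau\Phi_{j}(t)+\|R_{j}(t)\|_{F}$ for the centered Frobenius error via the projection identity $(I-J)\W=(\W-J)(I-J)$ and $\|\W-J\|_{2}=\tau$, and then unroll a scalar recursion; this is self-contained (no external bound on powers of $\W$ is needed) and replaces per-entry bookkeeping with a single operator-norm estimate. For the second inequality the paper uses Jensen's inequality on the convex combination $\z_{i,j}(t+1)=\sum_{k}w_{ik}\uppi_{k,j}(t+1)$ together with double stochasticity, whereas you reapply the contraction; both give the same bound. Two loose ends are shared by both proofs, so neither counts against you: (i) the initial-condition term --- in your notation $\tau^{t}\Phi_{j}(1)$, which is nonzero under the Line-2 initialization --- is dropped; you at least acknowledge absorbing it, while the paper's unrolling silently omits $(\W^{t}\otimes\mathbf{I}_{\kappa_{j}})\uppi_{\cdot,j}^{cate}(1)$; and (ii) both arguments actually produce an extra $\sqrt{\kappa_{j}}$ factor coming from \cref{lemma:D5} that the lemma's stated right-hand side does not carry --- the paper drops it without comment in its final inequality, and you fold it into the constant. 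Your reading of the statement's $\beta$ as the spectral quantity $\tau$ is also the correct interpretation of that typo.
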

\begin{proof}
	From the definition of $\mathbf{r}_{i,j}(t)$, we can show that 
	\begin{equation}\label{lemma31}
		\x_{i,j}(t+1)=\mathbf{r}_{i,j}(t)+\z_{i,j}(t)=\mathbf{r}_{i,j}(t)+\sum_{k\in\N_{i}\cup\{i\}}w_{ik}\x_{k,j}.
	\end{equation} 
	
	As a result, from the Eq.\eqref{lemma31}, we also have that
	\begin{equation}\label{lemma32}
		\begin{aligned}
			\uppi_{\cdot,j}^{cate}(t+1)& =\mathbf{r}_{\cdot,j}^{cate}(t)+(\W\otimes\mathbf{I}_{\kappa_{j}})\uppi_{\cdot,j}^{cate}(t)\\
			&=\sum_{m=1}^{t}(\W\otimes\mathbf{I}_{\kappa_{j}})^{t-m}\mathbf{r}_{\cdot,j}^{cate}(m)\\
			&=\sum_{m=1}^{t}(\W^{t-m}\otimes\mathbf{I}_{\kappa_{j}})\mathbf{r}_{\cdot,j}^{cate}(m),
		\end{aligned}
	\end{equation} where the symbol $\otimes$ denotes the Kronecker product.
	
	If we also define $\bar{\uppi}_{\cdot,j}^{cate}(t)=[\bar{\uppi}_{\cdot,j}(t);\bar{\uppi}_{\cdot,j}(t);\dots;\bar{\uppi}_{\cdot,j}(t)]\in\R^{n\kappa_{j}}$ and from the Eq.\eqref{lemma32}, we also have that
	\begin{equation}\label{lemma33}
		\begin{aligned}				\bar{\uppi}_{\cdot,j}^{cate}(t)&=\left(\frac{\one_{n}\one_{n}^{T}}{n}\otimes\mathbf{I}_{\kappa_{j}}\right)\uppi_{\cdot,j}^{cate}(t)\\
			&=\sum_{m=1}^{t}(\frac{\one_{n}\one_{n}^{T}}{n}\otimes\mathbf{I}_{\kappa_{j}})\mathbf{r}_{\cdot,j}^{cate}(m).
		\end{aligned}
	\end{equation}
	Then, from the Eq.\eqref{lemma32} and Eq.\eqref{lemma33}, we have that , for any $i\in\N$,
	\begin{equation}\label{lemma34}
		\uppi_{\cdot,j}^{cate}(t+1)-\bar{\uppi}_{\cdot,j}^{cate}(t)=\sum_{m=1}^{t}\sum_{j\in\N_{i}\cup\{i\}}\left([\W^{t-m}]_{ij}-\frac{1}{N}\right)\mathbf{r}_{\cdot,j}^{cate}(m).
	\end{equation}
	Eq.\eqref{lemma34} indicates that
	\begin{equation*}
		\begin{aligned}
			\E\left(\left\|	\uppi_{\cdot,j}^{cate}(t+1)-\bar{\uppi}_{\cdot,j}^{cate}(t)\right\|\right)&=\E\left(\left\|\sum_{m=1}^{t}\sum_{j\in\N_{i}\cup\{i\}}\left([\W^{t-m}]_{ij}-\frac{1}{n}\right)\mathbf{r}_{\cdot,j}^{cate}(m)\right\|\right)
			\\&\le\E\left(\sum_{m=1}^{t}\sum_{j\in\N_{i}\cup\{i\}}|\left([\W^{t-m}]_{ij}-\frac{1}{n}\right)|*\|\mathbf{r}_{\cdot,j}^{cate}(m)\|_{2}\right)\\&\le\sum_{m=1}^{t}\sum_{j\in\N_{i}\cup\{i\}}|\left([\W^{t-m}]_{ij}-\frac{1}{n}\right)|\sqrt{\kappa_{i}}(\int_{0}^{1}w(z)\mathrm{d}z+\frac{1}{e})\eta_{m}M\\&\le\sum_{m=1}^{t}\sqrt{n}\tau^{t-m}(\int_{0}^{1}w(z)\mathrm{d}z+\frac{1}{e})\eta_{m}M,
		\end{aligned}
	\end{equation*} where the second inequality comes from \cref{lemma:D5} and the final inequality follows from $\sum_{j\in\N_{i}\cup\{i\}}|[\W^{t-m}]_{ij}-\frac{1}{n}|\le\sqrt{n}\tau^{t-m}$(See Proposition 1  in \citep{nedic2009distributed}).
	Due to $\z_{i,j}(t+1)=\sum_{k\in\N_{i}\cup\{i\}}w_{ik}\uppi_{k,j}(t+1)$ we also can have $\E(\|z_{i,j}(t+1)-\bar{\uppi}_{\cdot,j}(t+1)\|_{2})\le\sum_{j\in\N_{i}\cup\{i\}}w_{ij}\E(\|\uppi_{i,j}(t+1)-\bar{\uppi}_{\cdot,j}(t+1)\|_{2})\le\sum_{m=1}^{t}\sqrt{n}\tau^{t-m}(\int_{0}^{1}w(z)\mathrm{d}z+\frac{1}{e})\eta_{m}M$.
	\end{proof}
	\begin{lemma}\label{lemma:D7} 
	 When each objective function $f_{t}$  is monotone submodular with curvature $c$, $\alpha$-weakly DR-submodular or  $(\gamma,\beta)$-weakly submodular, we set the ratio $\rho=\left(1-\frac{c}{e}\right),\left(1-e^{-\alpha}\right),\left(\frac{\gamma^{2}(1-e^{-(\beta(1-\gamma)+\gamma^2)})}{\beta(1-\gamma)+\gamma^2}\right)$, respectively. Moreover, we use the symbol $M$ to denote the maximum marginal contribution of each monotone set function $f_{t}$, namely, $M\triangleq\max_{S\subseteq\V,e\in\V\setminus S,t\in[T]}\left(f_{t}(e|S)\right)$. Then, under \cref{ass:1}, if each set function $f_{t}$ is bounded $\forall t\in[T]$,  the following inequality holds for \cref{alg:SPL}:
	\begin{equation*}
		\begin{aligned}
			&\rho\sum_{t=1}^{T}f_{t}(\mathcal{A}_{t}^{*})-\sum_{t=1}^{T}F_{t}(\bar{\uppi}_{\cdot,1}(t),\dots,\bar{\uppi}_{\cdot,n}(t))\le5M^{2}n^{\frac{5}{2}}(\sum_{i=1}^{n}\kappa_{i})(\int_{0}^{1}w(z)\mathrm{d}z+\frac{1}{e})^{2}\sum_{t=1}^{T}\sum_{m=1}^{t}\tau^{t-m}\eta_{m}\\
			&+n\sqrt{\kappa_{i}}(\int_{0}^{1}w(z)\mathrm{d}z+\frac{1}{e})M\sum_{t=1}^{T}\frac{\eta_{t}}{2}+\sum_{t=1}^{T}\frac{1}{2\eta_{t}}\sum_{i=1}^{n}\Big(\|\z_{i,i}(t)-\one_{(\mathcal{A}_{t}^{*}\cap\V_{i})}\|_{2}^{2}-\|\uppi_{i,i}(t+1)-\one_{(\mathcal{A}_{t}^{*}\cap\V_{i})}\|^{2}_{2}\Big).
		\end{aligned}
	\end{equation*} where $\mathcal{A}_{t}^{*}$ is the optimal solution of problem~\eqref{equ_problem} and $\tau\triangleq\max(|\lambda_{2}(\W)|,|\lambda_{n}(\W)|)$ is the second largest magnitude of the eigenvalues of the weight matrix $\W$.
	\end{lemma}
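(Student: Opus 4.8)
The plan is to combine a standard projected-gradient regret decomposition with the surrogate stationarity inequalities of \cref{thm4} (equivalently \cref{thm:C1} and \cref{thm:C4}), bridging the distributed nature of the algorithm through the consensus estimates of \cref{lemma:D6}. First I would invoke the nonexpansiveness of the Euclidean projection onto $\Delta_{\kappa_i}$ together with the update rule $\uppi_{i,i}(t+1)=\arg\min_{\mathbf{b}\in\Delta_{\kappa_i}}\|\mathbf{b}-(\z_{i,i}(t)+\eta_t\mathbf{d}_i(t))\|_2$. Comparing against the feasible vector $\one_{\mathcal{A}_t^*\cap\V_i}\in\Delta_{\kappa_i}$, which lies in the simplex because $|\mathcal{A}_t^*\cap\V_i|\le1$, and expanding the square yields, for every agent $i$ and round $t$,
\begin{equation*}
\langle\mathbf{d}_i(t),\one_{\mathcal{A}_t^*\cap\V_i}-\z_{i,i}(t)\rangle\le\frac{1}{2\eta_t}\Big(\|\z_{i,i}(t)-\one_{\mathcal{A}_t^*\cap\V_i}\|_2^2-\|\uppi_{i,i}(t+1)-\one_{\mathcal{A}_t^*\cap\V_i}\|_2^2\Big)+\frac{\eta_t}{2}\|\mathbf{d}_i(t)\|_2^2.
\end{equation*}
Summing over $i\in\N$ and $t\in[T]$ immediately produces the telescoping term of the claimed bound and, after bounding $\|\mathbf{d}_i(t)\|_2$ through \cref{lemma:D1}, the $\sum_t\frac{\eta_t}{2}$ term on the right-hand side.

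The heart of the argument is to lower bound $\sum_i\langle\mathbf{d}_i(t),\one_{\mathcal{A}_t^*\cap\V_i}-\z_{i,i}(t)\rangle$ by $\rho f_t(\mathcal{A}_t^*)-F_t(\bar\uppi_{\cdot,1}(t),\dots,\bar\uppi_{\cdot,n}(t))$. Taking the conditional expectation over the action sampling in Lines~9--16, the construction of $\mathbf{d}_i(t)$ makes it an unbiased estimate of the $i$-th block of $\nabla F_t^s$ in the weakly submodular cases, or of $\nabla(F_t^s+\tfrac{G_t}{e})$ in the submodular case, where Lines~14--15 add exactly the linear correction $e^{-1}f_t(v_{i,m}|\V-\{v_{i,m}\})$ that \cref{thm:C3} identifies with $\tfrac{1}{e}\nabla G_t$, all evaluated at agent $i$'s \emph{local} copy $(\uppi_{i,1}(t),\dots,\uppi_{i,n}(t))$. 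I would then replace each local gradient block by the corresponding block at the consensus point $\bar\uppi(t)$ using the smoothness estimate of \cref{lemma:D4}, and replace $\z_{i,i}(t)$ by $\bar\uppi_{\cdot,i}(t)$; both substitutions cost an error controlled by the consensus distances $\|\uppi_{i,j}(t)-\bar\uppi_{\cdot,j}(t)\|$ and $\|\z_{i,i}(t)-\bar\uppi_{\cdot,i}(t)\|$, which \cref{lemma:D6} bounds (in expectation) by $\sum_{m=1}^{t}\sqrt{n}\,\tau^{t-m}(\int_0^1 w(z)\mathrm{d}z+\tfrac1e)\eta_m M$. Because each $\mathbf{d}_i(t)$ touches only the $i$-th coordinate block, after these substitutions the block sums reassemble cleanly into the single global inner product $\langle\nabla F_t^s(\bar\uppi(t)),\one_{\mathcal{A}_t^*}-\bar\uppi(t)\rangle$, to which I apply \cref{thm:C1} (respectively \cref{thm:C4} in the submodular case) with $S=\mathcal{A}_t^*$ to obtain the lower bound $\rho f_t(\mathcal{A}_t^*)-F_t(\bar\uppi(t))$ for each of the three stated values of $\rho$.

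Finally I would collect all the pieces: the reassembled stationarity bound, the telescoping projection term, the $\frac{\eta_t}{2}\|\mathbf{d}_i(t)\|_2^2$ term, and the accumulated consensus errors. Multiplying the smoothness constant $M\sum_i\kappa_i$ of \cref{lemma:D4} by the gradient-norm factor of \cref{lemma:D1} and by the consensus bound of \cref{lemma:D6}, then summing over the $n$ agents and over $t$, yields the $5M^2n^{5/2}(\sum_i\kappa_i)(\int_0^1 w(z)\mathrm{d}z+\tfrac1e)^2\sum_{t=1}^T\sum_{m=1}^{t}\tau^{t-m}\eta_m$ term, completing the inequality. The main obstacle is precisely this bridging step: the stationarity inequalities of \cref{thm4} are statements about a single global policy, whereas each agent only holds a drifting local estimate, so the delicate part is to account carefully for every gradient-point and comparator mismatch and to show that the induced errors aggregate into a geometrically decaying $\tau^{t-m}$ consensus term rather than accumulating uncontrollably.
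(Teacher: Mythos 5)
Your proposal is correct and follows essentially the same route as the paper's proof: the pathwise projection inequality for Line 19 (the paper's term \textcircled{4}), unbiasedness of $\mathbf{d}_i(t)$ for the surrogate gradient blocks (with the Lines 14--15 correction matching $\tfrac{1}{e}\nabla G_t$), the smoothness and consensus estimates of \cref{lemma:D1}, \cref{lemma:D4}, \cref{lemma:D6} to bridge local copies to the consensus point, and finally the inequalities of \cref{thm:C1}/\cref{thm:C4} at $S=\mathcal{A}_t^*$. The only difference is organizational—you build up from the per-agent projection step and reassemble into the global inner product, whereas the paper starts from the global stationarity inequality and decomposes downward—but the decomposition, error terms, and constants are the same.
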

	\begin{proof}
In order to unify the proofs in different setting, i.e., submodular objective with curvature $c$, $\alpha$-weakly DR-submodular objective and $(\gamma,\beta)$-weakly submodular objective, we define some auxiliary symbols. At first, when each objective function $f_{t}$  is monotone submodular with curvature $c$, $\alpha$-weakly DR-submodular or  $(\gamma,\beta)$-weakly submodular, we set $\rho=\left(1-\frac{c}{e}\right),\left(1-e^{-\alpha}\right),\left(\frac{\gamma^{2}(1-e^{-(\beta(1-\gamma)+\gamma^2)})}{\beta(1-\gamma)+\gamma^2}\right)$, respectively. Moreover, compared to weakly submodular scenarios, \cref{thm4} designs a different auxiliary function for submodular scenarios. Therefore, when each $f_{t}$  is monotone submodular with curvature $c$or  weakly submodular, we define $F_{t}^{a}=F_{t}^{s}$ or $F_{t}^{a}=\left(F_{t}^{s}+\frac{G_{t}}{e}\right)$, respectively. Note that the $F_{t}^{s}$ is the surrogate function considered in \cref{thm4} for different settings and $G_{t}$ is a linear function defined as $G_{t}(\uppi_{1},\dots,\uppi_{n})\triangleq \sum_{i=1}^{n}\sum_{m=1}^{\kappa_{i}}\big(f_{t}\big(v_{i,k}|\V-\{v_{i,m}\}\big)\big)\pi_{i,m}$. 

From \cref{thm:C1} and \cref{thm:C4}, we have that
		\begin{equation}
			\label{equ:lemma_result_0}
			\begin{aligned}
				&\rho f_{t}(\mathcal{A}_{t}^{*})-F_{t}(\bar{\uppi}_{\cdot,1}(t),\dots,\bar{\uppi}_{\cdot,n}(t))\\&\le\Big\langle\nabla F_{t}^{a}(\bar{\uppi}_{\cdot,1}(t),\dots,\bar{\uppi}_{\cdot,n}(t)),\one_{\mathcal{A}_{t}^{*}}-(\bar{\uppi}_{\cdot,1}(t),\dots,\bar{\uppi}_{\cdot,n}(t))\Big\rangle\\
				&=\underbrace{\Big\langle\nabla F_{t}^{a}(\bar{\uppi}_{\cdot,1}(t),\dots,\bar{\uppi}_{\cdot,n}(t))-\sum_{i\in\N}\Big[\nabla  F_{t}^{a}(\uppi_{i,1}(t),\dots,\uppi_{i,1}(t))\Big]_{\uppi_{i}},\one_{\mathcal{A}_{t}^{*}}-(\bar{\uppi}_{\cdot,1}(t),\dots,\bar{\uppi}_{\cdot,n}(t))\Big\rangle}_{\text{\textcircled{1}}}\\
				&\quad+\underbrace{\Big\langle\sum_{i\in\N}\Big[\nabla  F_{t}^{a}(\uppi_{i,1}(t),\dots,\uppi_{i,1}(t))\Big]_{\uppi_{i}},\one_{\mathcal{A}_{t}^{*}}-(\uppi_{i,1}(t),\dots,\uppi_{i,1}(t))\Big\rangle}_{\text{\textcircled{2}}}\\
				&\quad+\underbrace{\Big\langle\sum_{i\in\N}\Big[\nabla  F_{t}^{a}(\uppi_{i,1}(t),\dots,\uppi_{i,1}(t))\Big]_{\uppi_{i}},(\uppi_{i,1}(t),\dots,\uppi_{i,1}(t))-(\bar{\uppi}_{\cdot,1}(t),\dots,\bar{\uppi}_{\cdot,n}(t))\Big\rangle}_{\text{\textcircled{3}}}
			\end{aligned}
		\end{equation} where the symbol $\Big[\nabla  F_{t}^{a}(\uppi_{i,1}(t),\dots,\uppi_{i,1}(t))\Big]_{\uppi_{i}}$ is the projection over the policy $\uppi_{i}$, i.e., $\Big[\nabla  F_{t}^{a}(\uppi_{i,1}(t),\dots,\uppi_{i,1}(t))\Big]_{\uppi_{i}}$ represents a $\left(\sum_{i=1}^{n}\kappa_{i}\right)$-dimensional vector that only keeps the first-order derivative at variable $\pi_{i,m},\forall m\in[\kappa_{i}]$ and set other coordinates to $0$, that is to say, 
		\begin{equation*}
		\Big[\nabla  F_{t}^{a}(\uppi_{i,1}(t),\dots,\uppi_{i,1}(t))\Big]_{\uppi_{i}}\triangleq\Bigg(\dots,0,\dots,\underbrace{\frac{\partial F_{t}}{\partial\pi_{i,1}}(\x),\dots,\frac{\partial F_{t}}{\partial\pi_{i,\kappa_{i}}}(\x)}_{\kappa_{i}},\dots,0,\dots\Bigg),
		\end{equation*} where $\x\triangleq(\uppi_{i,1}(t),\dots,\uppi_{i,1}(t))$.

		For \textcircled{1}, we have
		\begin{equation}\label{equ:lemma_result_1}
			\begin{aligned}
				&\Big\langle\nabla F_{t}^{a}(\bar{\uppi}_{\cdot,1}(t),\dots,\bar{\uppi}_{\cdot,n}(t))-\sum_{i\in\N}\Big[\nabla  F_{t}^{a}(\uppi_{i,1}(t),\dots,\uppi_{i,n}(t))\Big]_{\uppi_{i}},\one_{\mathcal{A}_{t}^{*}}-(\bar{\uppi}_{\cdot,1}(t),\dots,\bar{\uppi}_{\cdot,n}(t))\Big\rangle\\
				&\le\left\|\nabla F_{t}^{a}(\bar{\uppi}_{\cdot,1}(t),\dots,\bar{\uppi}_{\cdot,n}(t))-\sum_{i\in\N}\Big[\nabla  F_{t}^{a}(\uppi_{i,1}(t),\dots,\uppi_{i,n}(t))\Big]_{\uppi_{i}}\right\|_{2}\left\|\one_{\mathcal{A}_{t}^{*}}-(\bar{\uppi}_{\cdot,1}(t),\dots,\bar{\uppi}_{\cdot,n}(t))\right\|_{2}\\
				&\le\sum_{i\in\N}\Big(\left\|\Big[\nabla F_{t}^{a}(\bar{\uppi}_{\cdot,1}(t),\dots,\bar{\uppi}_{\cdot,n}(t))\Big]_{\uppi_{i}}-\Big[\nabla  F_{t}^{a}(\uppi_{i,1}(t),\dots,\uppi_{i,n}(t))\Big]_{\uppi_{i}}\right\|_{2}\Big)\Big(\left\|\one_{\mathcal{A}_{t}^{*}}\|_{2}+\|(\bar{\uppi}_{\cdot,1}(t),\dots,\bar{\uppi}_{\cdot,n}(t))\right\|_{2}\Big)\\
				&\le2n\sum_{i\in\N}\Big(\left\|\nabla F_{t}^{a}(\bar{\uppi}_{\cdot,1}(t),\dots,\bar{\uppi}_{\cdot,n}(t))-\nabla  F_{t}^{a}(\uppi_{i,1}(t),\dots,\uppi_{i,n}(t))\right\|_{2}\Big)\\&=2n\sum_{i\in\N}\Big(\left\|\int_{z=1}^{1}w(z)\nabla F_{t}(z*\bar{\uppi}_{\cdot,1}(t),\dots,z*\bar{\uppi}_{\cdot,n}(t))\mathrm{d}z-\int_{z=1}^{1}w(z)\nabla  F_{t}(z*\uppi_{i,1}(t),\dots,z*\uppi_{i,n}(t))\mathrm{d}z\right\|_{2}\Big)\\
				&\le2Mn(\sum_{i=1}^{n}\kappa_{i})(\int_{0}^{1}w(z)z\mathrm{d}z)\|\sum_{i\in\N}\Big(\|(\uppi_{i,1}(t),\dots,\uppi_{i,n}(t))-(\bar{\uppi}_{\cdot,1}(t),\dots,\bar{\uppi}_{\cdot,n}(t))\|_{2}\Big)\\
				&\le2Mn(\sum_{i=1}^{n}\kappa_{i})(\int_{0}^{1}w(z)z\mathrm{d}z)\sum_{j\in\N}\sum_{i\in\N}\|\uppi_{i,j}(t)-\bar{\uppi}_{\cdot,j}(t)\|_{2})\\
				&\le2M^{2}n^{\frac{5}{2}}(\sum_{i=1}^{n}\kappa_{i})(\int_{0}^{1}w(z)z\mathrm{d}z)(\int_{0}^{1}w(z)\mathrm{d}z+\frac{1}{e})\sum_{m=1}^{t}\tau^{t-m}\eta_{m},
			\end{aligned}
		\end{equation} where  the fourth inequality follows \cref{lemma:D4} and the final inequality from \cref{lemma:D6}.
		
		For \textcircled{3}, under \cref{lemma:D1} and \cref{lemma:D6}, 
		we have, 
		\begin{equation}\label{equ:lemma_result_3}
			\begin{aligned}
			&\Big\langle\sum_{i\in\N}\Big[\nabla  F_{t}^{a}(\uppi_{i,1}(t),\dots,\uppi_{i,1}(t))\Big]_{\uppi_{i}},(\uppi_{i,1}(t),\dots,\uppi_{i,1}(t))-(\bar{\uppi}_{\cdot,1}(t),\dots,\bar{\uppi}_{\cdot,n}(t))\Big\rangle\\&\le \sqrt{\kappa_{i}}(\int_{0}^{1}w(z)\mathrm{d}z+\frac{1}{e})M\sum_{i\in\N}\sum_{j\in\N}\|\uppi_{i,j}(t)-\bar{\uppi}_{\cdot,j}(t)\|_{2}\\&\le\sqrt{\kappa_{i}}M^{2}n^{3/2}(\int_{0}^{1}w(z)\mathrm{d}z+\frac{1}{e})^{2}\sum_{m=1}^{t}\tau^{t-m}\eta_{m}.
			\end{aligned}
		\end{equation}
		As for \textcircled{2},
		we have, 
		\begin{equation}\label{equ:lemma_result_2}
			\begin{aligned}
				&\E\Big(\left\langle\sum_{i\in\N}\Big[\nabla  F_{t}^{a}(\uppi_{i,1}(t),\dots,\uppi_{i,1}(t))\Big]_{\uppi_{i}},\one_{\mathcal{A}_{t}^{*}}-(\uppi_{i,1}(t),\dots,\uppi_{i,n}(t))\right\rangle\Big)\\
				&=\E\left(\E\left(\left\langle\sum_{i\in\N}\Big[\nabla  F_{t}^{a}(\uppi_{i,1}(t),\dots,\uppi_{i,1}(t))\Big]_{\uppi_{i}},\one_{\mathcal{A}_{t}^{*}}-(\uppi_{i,1}(t),\dots,\uppi_{i,n}(t))\right\rangle\Big|\uppi_{i,j}(t),\forall i,j\in\N\right)\right)\\
				&=\E\left(\left\langle\E\left(\sum_{i\in\N}\Big[\nabla  F_{t}^{a}(\uppi_{i,1}(t),\dots,\uppi_{i,1}(t))\Big]_{\uppi_{i}}\Big|\uppi_{i,j}(t),\forall i,j\in\N\right),\one_{\mathcal{A}_{t}^{*}}-(\uppi_{i,1}(t),\dots,\uppi_{i,n}(t))\right\rangle\right)\\
				&=\E\Big(\Big\langle(\mathbf{d}_{1}(t),\dots,\mathbf{d}_{n}(t)),\one_{\mathcal{A}_{t}^{*}}-(\uppi_{i,1}(t),\dots,\uppi_{i,n}(t))\Big\rangle\Big)\\
				&=\underbrace{\E\Big(\Big\langle(\mathbf{d}_{1}(t),\dots,\mathbf{d}_{n}(t)),\one_{\mathcal{A}_{t}^{*}}-(\z_{i,1}(t),\dots,\z_{i,1}(t))\Big\rangle\Big)}_{\text{\textcircled{4}}}\\&+\underbrace{E\Big(\Big\langle(\mathbf{d}_{1}(t),\dots,\mathbf{d}_{n}(t)),(\z_{i,1}(t),\dots,\z_{i,n}(t))-(\uppi_{i,1}(t),\dots,\uppi_{i,n}(t))\Big\rangle\Big)}_{\text{\textcircled{5}}}.
			\end{aligned}
		\end{equation}
		For \textcircled{5}, we have that,
		\begin{equation}
			\label{equ:lemma_result_6}
			\begin{aligned}
				\text{\textcircled{5}}&\le\E\Big(\left\|(\mathbf{d}_{1}(t),\dots,\mathbf{d}_{n}(t))\right\|_{2}\left\|(\z_{i,1}(t),\dots,\z_{i,n}(t))-(\uppi_{i,1}(t),\dots,\uppi_{i,n}(t))\right\|_{2}\Big)
				\\
				&\le\sqrt{n\kappa_{i}}M(\int_{0}^{1}w(z)\mathrm{d}z+\frac{1}{e})\E\Big(\left\|(\z_{i,1}(t),\dots,\z_{i,n}(t))-(\uppi_{i,1}(t),\dots,\uppi_{i,n}(t))\right\|_{2}\Big)\\
				&\le\sqrt{n\kappa_{i}}M(\int_{0}^{1}w(z)\mathrm{d}z+\frac{1}{e})\E\Big(\left\|\left(\z_{i,1}(t)-\bar{\uppi}_{\cdot,1}(t),\dots,\z_{i,n}(t)-\bar{\uppi}_{\cdot,n}(t)\right)\right\|_{2}\Big)\\
				&+\sqrt{n\kappa_{i}}M(\int_{0}^{1}w(z)\mathrm{d}z+\frac{1}{e})\E\Big(\left\|\left(\uppi_{i,1}(t)-\bar{\uppi}_{\cdot,1}(t),\dots,\uppi_{i,n}(t)-\bar{\uppi}_{\cdot,n}(t)\right)\right\|_{2}\Big)\\
				&\le\sqrt{n\kappa_{i}}M(\int_{0}^{1}w(z)\mathrm{d}z+\frac{1}{e})\sum_{i\in\N}\sum_{j\in\N}\E\Big(\|\uppi_{i,j}(t)-\bar{\uppi}_{\cdot,j}(t)\|_{2}+\|\z_{i,j}(t)-\bar{\uppi}_{\cdot,j}(t)\|_{2}\Big)\\
				&\le2n^{2}\sqrt{\kappa_{i}}M^{2}(\int_{0}^{1}w(z)\mathrm{d}z+\frac{1}{e})^{2}\sum_{m=1}^{t}\tau^{t-m}\eta_{m}
			\end{aligned}
		\end{equation}
			For \textcircled{4}: At first, from the Line 19 of \cref{alg:SPL}, we know that, for any $\x\in\Delta_{\kappa_{i}}$, we have 
		 \begin{equation*}
			\|\uppi_{i,i}(t+1)-\x\|_{2}\le\|\z_{i,i}(t)+\eta_{t}\mathbf{d}_{i}(t)-\x\|_{2}.
		\end{equation*} 
	 If we slightly abuse the notation $\one_{(\mathcal{A}_{t}^{*}\cap\V_{i})}$ to denote $\kappa_{i}$-dimensional indicator vector over $(\mathcal{A}_{t}^{*}\cap\V_{i})$. Note that $\one_{\mathcal{A}_{t}^{*}}$ is a $\left(\sum_{i=1}^{n}\kappa_{i}\right)$-dimensional indicator vector over $\mathcal{A}_{t}^{*}$. Then, we have 
	 \begin{equation*}
	 	\begin{aligned}
	 	&\|\uppi_{i,i}(t+1)-\one_{(\mathcal{A}_{t}^{*}\cap\V_{i})}\|^{2}_{2}\\&\le\|\z_{i,i}(t)+\eta_{t}\mathbf{d}_{i}(t)-\one_{(\mathcal{A}_{t}^{*}\cap\V_{i})}\|_{2}^{2}\\
	 	&=\|\z_{i,i}(t)-\one_{(\mathcal{A}_{t}^{*}\cap\V_{i})}\|_{2}^{2}+2\eta_{t}\left\langle\mathbf{d}_{i}(t),\z_{i,i}(t)-\one_{(\mathcal{A}_{t}^{*}\cap\V_{i})}\right\rangle+\eta_{t}^{2}\|\mathbf{d}_{i}(t)\|_{2}^{2}.
	 	\end{aligned}
	\end{equation*} 
Therefore,
		\begin{equation*}
		\begin{aligned}
		&\left\langle\mathbf{d}_{i}(t),\one_{(\mathcal{A}_{t}^{*}\cap\V_{i})}-\z_{i,i}(t)\right\rangle\\&\le\frac{\eta_{t}}{2}\|\mathbf{d}_{i}(t)\|_{2}^{2}+\frac{1}{2\eta_{t}}\Big(\|\z_{i,i}(t)-\one_{(\mathcal{A}_{t}^{*}\cap\V_{i})}\|_{2}^{2}-\|\uppi_{i,i}(t+1)-\one_{(\mathcal{A}_{t}^{*}\cap\V_{i})}\|^{2}_{2}\Big)\\
		&\le\frac{\eta_{t}}{2}\sqrt{\kappa_{i}}(\int_{0}^{1}w(z)\mathrm{d}z+\frac{1}{e})M+\frac{1}{2\eta_{t}}\Big(\|\z_{i,i}(t)-\one_{(\mathcal{A}_{t}^{*}\cap\V_{i})}\|_{2}^{2}-\|\uppi_{i,i}(t+1)-\one_{(\mathcal{A}_{t}^{*}\cap\V_{i})}\|^{2}_{2}\Big),
		\end{aligned}	
		\end{equation*}where the final inequality follows from \cref{lemma:D1}.
		
	Note that
	\begin{equation}\label{equ:D_final1}
	\begin{aligned}
		&\E\Big(\Big\langle(\mathbf{d}_{1}(t),\dots,\mathbf{d}_{n}(t)),\one_{\mathcal{A}_{t}^{*}}-(\z_{i,1}(t),\dots,\z_{i,1}(t))\Big\rangle\Big)\\
		&=\sum_{i=1}^{n}\E\left(\left\langle\mathbf{d}_{i}(t),\one_{(\mathcal{A}_{t}^{*}\cap\V_{i})}-\z_{i,i}(t)\right\rangle\right)\\
		&\le\frac{\eta_{t}}{2}n\sqrt{\kappa_{i}}(\int_{0}^{1}w(z)\mathrm{d}z+\frac{1}{e})M+\frac{1}{2\eta_{t}}\sum_{i=1}^{n}\Big(\|\z_{i,i}(t)-\one_{(\mathcal{A}_{t}^{*}\cap\V_{i})}\|_{2}^{2}-\|\uppi_{i,i}(t+1)-\one_{(\mathcal{A}_{t}^{*}\cap\V_{i})}\|^{2}_{2}\Big).
	\end{aligned}
	\end{equation}
Merging Eq.\eqref{equ:D_final1}, Eq.\eqref{equ:lemma_result_6}, Eq.\eqref{equ:lemma_result_2},  Eq.\eqref{equ:lemma_result_3}, Eq.\eqref{equ:lemma_result_1} into Eq.\eqref{equ:lemma_result_0}, we finally have 
\begin{equation*}
\begin{aligned}
&\rho f_{t}(\mathcal{A}_{t}^{*})-F_{t}(\bar{\uppi}_{\cdot,1}(t),\dots,\bar{\uppi}_{\cdot,n}(t))\le5M^{2}n^{\frac{5}{2}}(\sum_{i=1}^{n}\kappa_{i})(\int_{0}^{1}w(z)\mathrm{d}z+\frac{1}{e})^{2}\sum_{m=1}^{t}\beta^{t-m}\eta_{m}\\
&+\frac{\eta_{t}}{2}n\sqrt{\kappa_{i}}(\int_{0}^{1}w(z)\mathrm{d}z+\frac{1}{e})M+\frac{1}{2\eta_{t}}\sum_{i=1}^{n}\Big(\|\z_{i,i}(t)-\one_{(\mathcal{A}_{t}^{*}\cap\V_{i})}\|_{2}^{2}-\|\uppi_{i,i}(t+1)-\one_{(\mathcal{A}_{t}^{*}\cap\V_{i})}\|^{2}_{2}\Big).
\end{aligned}
\end{equation*}
Therefore, we get \cref{lemma:D7}.
	\end{proof}
Next, we prove an upper bound of $\sum_{t=1}^{T}\frac{1}{2\eta_{t}}\sum_{i=1}^{n}\Big(\|\z_{i,i}(t)-\one_{(\mathcal{A}_{t}^{*}\cap\V_{i})}\|_{2}^{2}-\|\uppi_{i,i}(t+1)-\one_{(\mathcal{A}_{t}^{*}\cap\V_{i})}\|^{2}_{2}\Big)$, that is to say, 
\begin{lemma}\label{lemma:D8} Under Assumption \ref{ass:1}, we have that
	\begin{equation*}
		\sum_{t=1}^{T}\frac{1}{2\eta_{t}}\sum_{i=1}^{n}\Big(\|\z_{i,i}(t)-\one_{(\mathcal{A}_{t}^{*}\cap\V_{i})}\|_{2}^{2}-\|\uppi_{i,i}(t+1)-\one_{(\mathcal{A}_{t}^{*}\cap\V_{i})}\|^{2}_{2}\le \frac{4n}{\eta_{T+1}}+\sum_{t=1}^{T}\sum_{i=1}^{n}\frac{6}{\eta_{t+1}}\E\Big(\|\one_{(\mathcal{A}_{t}^{*}\cap\V_{i})}-\one_{(\mathcal{A}_{t+1}^{*}\cap\V_{i})}\|_{2}\Big),
	\end{equation*} where $\mathcal{A}_{t}^{*}$ is the optimal solution of problem~\eqref{equ_problem} and we slightly abuse the notation $\one_{(\mathcal{A}_{t}^{*}\cap\V_{i})}$ to denote $\kappa_{i}$-dimensional indicator vector over $(\mathcal{A}_{t}^{*}\cap\V_{i})$. Note that $\one_{\mathcal{A}_{t}^{*}}$ is a $\left(\sum_{i=1}^{n}\kappa_{i}\right)$-dimensional indicator vector over $\mathcal{A}_{t}^{*}$. 
\end{lemma}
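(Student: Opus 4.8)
The plan is to treat \cref{lemma:D8} as a standard dynamic-regret telescoping argument adapted to the decentralized update. Throughout, write $\ubf_{i}(t)\triangleq\one_{(\mathcal{A}_{t}^{*}\cap\V_{i})}$ and observe that $\z_{i,i}(t)$, $\uppi_{i,i}(t+1)$ and $\ubf_{i}(t)$ all lie in the simplex $\Delta_{\kappa_{i}}$, whose $L_{2}$-diameter is at most $\sqrt{2}$; hence every squared distance appearing on the left-hand side is bounded by $2$. The goal is to rewrite the per-round difference $\|\z_{i,i}(t)-\ubf_{i}(t)\|_{2}^{2}-\|\uppi_{i,i}(t+1)-\ubf_{i}(t)\|_{2}^{2}$ as a telescoping sum whose ``seam'' couples the end-of-round iterate $\uppi_{i,i}(t+1)$ with the start-of-round iterate $\z_{i,i}(t+1)$ of the next round.

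First I would establish the seam inequality at the aggregate level: for each $t$, I want to bound $\sum_{i}\|\z_{i,i}(t+1)-\ubf_{i}(t+1)\|_{2}^{2}$ from above by $\sum_{i}\|\uppi_{i,i}(t+1)-\ubf_{i}(t)\|_{2}^{2}$ plus a controllable drift term. Since $\z_{i,i}(t+1)=\sum_{j\in\N_{i}\cup\{i\}}w_{ij}\uppi_{j,i}(t+1)$ is a convex combination, convexity of $\|\cdot\|_{2}^{2}$ (Jensen) gives $\|\z_{i,i}(t+1)-\ubf_{i}(t+1)\|_{2}^{2}\le\sum_{j}w_{ij}\|\uppi_{j,i}(t+1)-\ubf_{i}(t+1)\|_{2}^{2}$; summing over $i$ and reorganizing the doubly-indexed sum through the symmetry and double stochasticity of $\W$ (\cref{ass:1}) lets the mixing weights collapse. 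A triangle-inequality step then swaps the comparator $\ubf_{i}(t)\to\ubf_{i}(t+1)$, and bounding the resulting cross terms by the simplex diameter produces a correction of the form $\frac{6}{\eta_{t+1}}\|\ubf_{i}(t)-\ubf_{i}(t+1)\|_{2}$, which is exactly the second summand on the right-hand side.

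Next I would handle the time-varying step size by summation by parts. After the seam bound, the leftover telescoping reduces to $\sum_{t}\frac{1}{2\eta_{t}}(c_{t}-c_{t+1})$ with $c_{t}\triangleq\sum_{i}\|\z_{i,i}(t)-\ubf_{i}(t)\|_{2}^{2}\le 2n$. Because $\eta_{t}$ is non-increasing, $\frac{1}{\eta_{t}}-\frac{1}{\eta_{t-1}}\ge0$, so summation by parts yields $\sum_{t}\frac{1}{2\eta_{t}}(c_{t}-c_{t+1})\le\frac{c_{1}}{2\eta_{1}}+\sum_{t\ge2}c_{t}\big(\frac{1}{2\eta_{t}}-\frac{1}{2\eta_{t-1}}\big)$, and replacing every $c_{t}$ by its diameter bound $2n$ telescopes the reciprocal step sizes down to a term of order $\frac{n}{\eta_{T}}$, which I would absorb into $\frac{4n}{\eta_{T+1}}$.

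The main obstacle is the seam step: one must carry out the Jensen bound, the $\W$-reorganization, and the comparator swap so that \emph{no} residual consensus-disagreement term survives on the right-hand side. A naive triangle inequality comparing $\z_{i,i}(t+1)$ to $\uppi_{i,i}(t+1)$ directly would leave an uncontrolled term $\sum_{t}\eta_{t+1}^{-1}\|\z_{i,i}(t+1)-\uppi_{i,i}(t+1)\|_{2}$, which is only $O(\sqrt{T})$ and does not fit the stated bound; it is precisely the symmetry and double stochasticity of $\W$ that allow the averaging to act non-expansively toward the agent-indexed comparators. A secondary but routine care point is the bookkeeping in the summation by parts, ensuring the diameter-squared factor lands on $1/\eta_{T+1}$ rather than on a smaller index.
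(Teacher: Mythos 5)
Your proposal follows essentially the same route as the paper's proof: the paper likewise splits the sum into a telescoping part, a comparator-swap part bounded by $\frac{6}{\eta_{t+1}}\E\big(\|\one_{(\mathcal{A}_{t}^{*}\cap\V_{i})}-\one_{(\mathcal{A}_{t+1}^{*}\cap\V_{i})}\|_{2}\big)$, a consensus ``seam'' term shown to be nonpositive via Jensen's inequality combined with the symmetry and double stochasticity of $\W$, and a residual step-size term bounded through the simplex diameter to produce $\frac{4n}{\eta_{T+1}}$. Your Abel-summation bookkeeping is merely a repackaging of the paper's separate handling of the $\big(\frac{1}{\eta_{t+1}}-\frac{1}{\eta_{t}}\big)$ term, so the two arguments coincide in substance.
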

\begin{proof}
	\begin{equation*}
		\begin{aligned}
			&\sum_{t=1}^{T}\frac{1}{2\eta_{t}}\sum_{i=1}^{n}\Big(\|\z_{i,i}(t)-\one_{(\mathcal{A}_{t}^{*}\cap\V_{i})}\|_{2}^{2}-\|\uppi_{i,i}(t+1)-\one_{(\mathcal{A}_{t}^{*}\cap\V_{i})}\|^{2}_{2}\Big)\\
			&=\underbrace{\sum_{t=1}^{T}\sum_{i=1}^{n}\Big(\frac{1}{\eta_{t}}\E(\|\z_{i,i}(t)-\one_{(\mathcal{A}_{t}^{*}\cap\V_{i})}\|_{2}^{2})-\frac{1}{\eta_{t+1}}\E(\|\z_{i,i}(t+1)-\one_{(\mathcal{A}_{t+1}^{*}\cap\V_{i})}\|_{2}^{2})\Big)}_{\text{\textcircled{1}}}\\
			&+\underbrace{\sum_{t=1}^{T}\sum_{i=1}^{n}\Bigg(\frac{1}{\eta_{t+1}}\E\Big(\|\z_{i,i}(t+1)-\one_{(\mathcal{A}_{t+1}^{*}\cap\V_{i})}\|_{2}^{2}-\|\z_{i,i}(t+1)-\one_{(\mathcal{A}_{t}^{*}\cap\V_{i})}\|_{2}^{2}\Big)\Bigg)}_{\text{\textcircled{2}}}\\
			&+\underbrace{\sum_{t=1}^{T}\sum_{i=1}^{n}\Bigg(\frac{1}{\eta_{t+1}}\E\Big(\|\z_{i,i}(t+1)-\one_{(\mathcal{A}_{t}^{*}\cap\V_{i})}\|_{2}^{2}-\|\uppi_{i,i}(t+1)-\one_{(\mathcal{A}_{t}^{*}\cap\V_{i})}\|_{2}^{2}\Big)\Bigg)}_{\text{\textcircled{3}}}\\
			&+\underbrace{\sum_{t=1}^{T}\sum_{i=1}^{n}\Big(\frac{1}{\eta_{t+1}}-\frac{1}{\eta_{t}}\Big)\E\Big(\|\uppi_{i,i}(t+1)-\one_{(\mathcal{A}_{t}^{*}\cap\V_{i})}\|_{2}^{2}\Big)}_{\text{\textcircled{4}}}.
		\end{aligned}
	\end{equation*}
	Firstly, we have \textcircled{1}$\le\frac{\sum_{i=1}^{n}\E(\|\z_{i,i}(1)-\one_{(\mathcal{A}_{1}^{*}\cap\V_{i})}\|_{2}^{2})}{\eta_{1}}\le\frac{4n}{\eta_{1}}$.
	
	Moreover, we have \textcircled{4}$\le 4n\Big(\frac{1}{\eta_{T+1}}-\frac{1}{\eta_{1}}\Big)$. 
	
	As for \textcircled{2}, we have 
\begin{equation*}
\begin{aligned}
&\sum_{t=1}^{T}\sum_{i=1}^{n}\Bigg(\frac{1}{\eta_{t+1}}\E\Big(\|\z_{i,i}(t+1)-\one_{(\mathcal{A}_{t+1}^{*}\cap\V_{i})}\|_{2}^{2}-\|\z_{i,i}(t+1)-\one_{(\mathcal{A}_{t}^{*}\cap\V_{i})}\|_{2}^{2}\Big)\Bigg)\\
&=\sum_{t=1}^{T}\sum_{i=1}^{n}\Bigg(\frac{1}{\eta_{t+1}}\E\Big(\|\one_{(\mathcal{A}_{t}^{*}\cap\V_{i})}-\one_{(\mathcal{A}_{t+1}^{*}\cap\V_{i})}\|_{2}^{2}+2\left\langle\z_{i,i}(t+1)-\one_{(\mathcal{A}_{t}^{*}\cap\V_{i})},\one_{(\mathcal{A}_{t}^{*}\cap\V_{i})}-\one_{(\mathcal{A}_{t+1}^{*}\cap\V_{i})}\right\rangle\Big)\Bigg)\\
&\le\sum_{t=1}^{T}\sum_{i=1}^{n}\frac{6}{\eta_{t+1}}\E\Big(\|\one_{(\mathcal{A}_{t}^{*}\cap\V_{i})}-\one_{(\mathcal{A}_{t+1}^{*}\cap\V_{i})}\|_{2}\Big),
\end{aligned}.
\end{equation*} where the final inequality from $\left\langle\z_{i,i}(t+1)-\one_{(\mathcal{A}_{t}^{*}\cap\V_{i})},\one_{(\mathcal{A}_{t}^{*}\cap\V_{i})}-\one_{(\mathcal{A}_{t+1}^{*}\cap\V_{i})}\right\rangle\le\|\z_{i,i}(t+1)-\one_{(\mathcal{A}_{t}^{*}\cap\V_{i})}\|_{2}*\|\one_{(\mathcal{A}_{t}^{*}\cap\V_{i})}-\one_{(\mathcal{A}_{t+1}^{*}\cap\V_{i})}\|_{2}\le2\|\one_{(\mathcal{A}_{t}^{*}\cap\V_{i})}-\one_{(\mathcal{A}_{t+1}^{*}\cap\V_{i})}\|_{2}$.
	
	Then, we have 
	\begin{equation*}
		\begin{aligned}
			&\text{\textcircled{3}}=\sum_{t=1}^{T}\sum_{i=1}^{n}\Bigg(\frac{1}{\eta_{t+1}}\E\Big(\|\z_{i,i}(t+1)-\one_{(\mathcal{A}_{t}^{*}\cap\V_{i})}\|_{2}^{2}-\|\uppi_{i,i}(t+1)-\one_{(\mathcal{A}_{t}^{*}\cap\V_{i})}\|_{2}^{2}\Big)\Bigg)\\
			&=\sum_{t=1}^{T}\sum_{i=1}^{n}\Bigg(\frac{1}{\eta_{t+1}}\E\Big(\|\sum_{j\in\N_{i}\cup\{i\}}w_{ij}\uppi_{j,i}(t+1)-\one_{(\mathcal{A}_{t}^{*}\cap\V_{i})}\|_{2}^{2}-\|\uppi_{i,i}(t+1)-\one_{(\mathcal{A}_{t}^{*}\cap\V_{i})}\|_{2}^{2}\Big)\Bigg)\\		
			&\le\sum_{t=1}^{T}\Bigg(\frac{1}{\eta_{t+1}}\E\Big(\sum_{i\in\N}\sum_{j\in\N_{i}\cup\{i\}}\Big(w_{ij}\|\uppi_{j,i}(t+1)-\one_{(\mathcal{A}_{t}^{*}\cap\V_{i})}\|_{2}^{2}\Big)-\sum_{i\in\N}\|\uppi_{i,i}(t+1)-\one_{(\mathcal{A}_{t}^{*}\cap\V_{i})}\|_{2}^{2}\Big)\Bigg)\\		
			&=\sum_{t=1}^{T}\Bigg(\frac{1}{\eta_{t+1}}\E\Big(\sum_{i\in\N}\Big((\sum_{j\in\N_{i}\cup\{i\}}w_{ji})\|\uppi_{j,i}(t+1)-\one_{(\mathcal{A}_{t}^{*}\cap\V_{i})}\|_{2}^{2}\Big)-\sum_{i\in\N}\|\uppi_{i,i}(t+1)-\one_{(\mathcal{A}_{t}^{*}\cap\V_{i})}\|_{2}^{2}\Big)\Bigg)\\		
			&=0,
		\end{aligned}
	\end{equation*} where the first inequality follows from the convexity of $\|\cdot\|^{2}_{2}$, the third equality is from $w_{ij}=w_{ji}$ and the final equality follows from \cref{ass:1}.
	
	We finally get 
	\begin{equation*}
		\sum_{t=1}^{T}\frac{1}{2\eta_{t}}\sum_{i=1}^{n}\Big(\|\z_{i,i}(t)-\one_{(\mathcal{A}_{t}^{*}\cap\V_{i})}\|_{2}^{2}-\|\uppi_{i,i}(t+1)-\one_{(\mathcal{A}_{t}^{*}\cap\V_{i})}\|^{2}_{2}\le \frac{4n}{\eta_{T+1}}+\sum_{t=1}^{T}\sum_{i=1}^{n}\frac{6}{\eta_{t+1}}\E\Big(\|\one_{(\mathcal{A}_{t}^{*}\cap\V_{i})}-\one_{(\mathcal{A}_{t+1}^{*}\cap\V_{i})}\|_{2}\Big).
	\end{equation*}
\end{proof}
As a result, we can show that 
\begin{lemma}\label{lemma:D9}
	Under Assumption \ref{ass:1}, when each set objective function $f_{t}$ is monotone submodular with curvature $c$, $\alpha$-weakly DR-submodular or  $(\gamma,\beta)$-weakly submodular, if we set the weight function $w(z)$ according to \cref{thm4}, we have that
\begin{equation*}
	\begin{aligned}
		&\rho\sum_{t=1}^{T}f_{t}(\mathcal{A}_{t}^{*})-\sum_{t=1}^{T}\E(f_{t}\left(\cup_{i=1}^{n}\{a_{i}(t)\}\right))\le6M^{2}n^{\frac{5}{2}}(\sum_{i=1}^{n}\kappa_{i})(\int_{0}^{1}w(z)\mathrm{d}z+\frac{1}{e})^{2}\sum_{t=1}^{T}\sum_{m=1}^{t}\beta^{t-m}\eta_{m}\\
		&+n\sqrt{\kappa_{i}}(\int_{0}^{1}w(z)\mathrm{d}z+\frac{1}{e})M\sum_{t=1}^{T}\frac{\eta_{t}}{2}+\frac{4n}{\eta_{T+1}}+\sum_{t=1}^{T}\sum_{i=1}^{n}\frac{6}{\eta_{t+1}}\E\Big(\|\one_{(\mathcal{A}_{t}^{*}\cap\V_{i})}-\one_{(\mathcal{A}_{t+1}^{*}\cap\V_{i})}\|_{2}\Big).
	\end{aligned}
\end{equation*}
\end{lemma}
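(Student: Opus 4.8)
The plan is to assemble \cref{lemma:D9} from the two building blocks \cref{lemma:D7} and \cref{lemma:D8}, after first bridging the gap between the \emph{averaged} policy that appears in \cref{lemma:D7} and the policy that is actually executed in Lines 5--7 of \cref{alg:SPL}. Concretely, \cref{lemma:D7} controls $\rho\sum_{t=1}^{T}f_{t}(\mathcal{A}_{t}^{*})-\sum_{t=1}^{T}F_{t}(\bar{\uppi}_{\cdot,1}(t),\dots,\bar{\uppi}_{\cdot,n}(t))$, i.e. it is phrased in terms of the consensus average $\bar{\uppi}_{\cdot,j}(t)$, whereas the quantity we ultimately want to lower bound is $\sum_{t=1}^{T}\E\left(f_{t}(\cup_{i=1}^{n}\{a_{i}(t)\})\right)$, where each $a_{i}(t)$ is drawn from agent $i$'s own normalized policy $\p_{i}(t)=\uppi_{i,i}(t)/\|\uppi_{i,i}(t)\|_{1}$.

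First I would establish the sampling inequality $\E\left(f_{t}(\cup_{i=1}^{n}\{a_{i}(t)\})\right)\ge F_{t}(\uppi_{1,1}(t),\dots,\uppi_{n,n}(t))$. This is exactly the lossless property of the policy-based continuous extension: sampling $a_{i}(t)$ from the normalized $\p_{i}(t)$ never returns $\emptyset$, whereas evaluating $F_{t}$ at the un-normalized own-policy $\uppi_{i,i}(t)$ assigns mass $1-\|\uppi_{i,i}(t)\|_{1}$ to $\emptyset$; coupling the two samplings and invoking the monotonicity of $f_{t}$ (part \textbf{2)} of \cref{thm1}) shows that replacing $\emptyset$ by a genuine action can only raise the objective, which yields the inequality.

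Next I would quantify the discrepancy between the own-policy evaluation $F_{t}(\uppi_{1,1}(t),\dots,\uppi_{n,n}(t))$ and the averaged evaluation $F_{t}(\bar{\uppi}_{\cdot,1}(t),\dots,\bar{\uppi}_{\cdot,n}(t))$. Since every first-order partial derivative of $F_{t}$ is bounded by $M$ (part \textbf{1)} of \cref{thm1} together with the definition of $M$), a mean-value plus Cauchy--Schwarz argument bounds $\bigl|F_{t}(\bar{\uppi}_{\cdot,1}(t),\dots)-F_{t}(\uppi_{1,1}(t),\dots)\bigr|$ by $M$ times the total consensus error $\sum_{j\in\N}\sum_{i\in\N}\|\uppi_{i,j}(t)-\bar{\uppi}_{\cdot,j}(t)\|_{2}$; plugging in the consensus estimate of \cref{lemma:D6} turns this into a term of order $M^{2}n^{5/2}(\sum_{i}\kappa_{i})(\int_{0}^{1}w(z)\mathrm{d}z+\tfrac1e)^{2}\sum_{m=1}^{t}\tau^{t-m}\eta_{m}$. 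Summed over $t$, this is precisely the extra copy of the leading term that promotes the coefficient $5$ in \cref{lemma:D7} to the coefficient $6$ in \cref{lemma:D9}.

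Finally I would chain the pieces: decompose $\rho\sum_{t}f_{t}(\mathcal{A}_{t}^{*})-\sum_{t}\E(f_{t}(\cup_{i}\{a_{i}(t)\}))$ as $\bigl[\rho\sum_{t}f_{t}(\mathcal{A}_{t}^{*})-\sum_{t}F_{t}(\bar{\uppi}_{\cdot,1}(t),\dots)\bigr]$ plus $\sum_{t}\bigl[F_{t}(\bar{\uppi}_{\cdot,1}(t),\dots)-\E(f_{t}(\cup_{i}\{a_{i}(t)\}))\bigr]$, bound the first bracket by \cref{lemma:D7} after substituting the drift estimate of \cref{lemma:D8} for its telescoping term, and bound the second bracket by the sampling inequality of the first step followed by the smoothness/consensus estimate of the third step. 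Collecting the leading terms gives the claimed bound. I expect the main obstacle to be the second step: correctly coupling the normalized executed sampling with the un-normalized extension while verifying that the monotonicity argument alone (rather than submodularity) suffices, since this is exactly where the policy-based extension departs from the multilinear extension and where the distinction between $\p_{i}(t)$ and $\uppi_{i,i}(t)$ must be handled without forfeiting the lossless-rounding guarantee.
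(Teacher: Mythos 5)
Your proposal is correct and follows essentially the same route as the paper's proof: merge \cref{lemma:D8} into \cref{lemma:D7}, bound $|F_{t}(\bar{\uppi}_{\cdot,1}(t),\dots,\bar{\uppi}_{\cdot,n}(t))-F_{t}(\uppi_{1,1}(t),\dots,\uppi_{n,n}(t))|$ via the Lipschitz property of $F_{t}$ together with the consensus estimate of \cref{lemma:D6} (this is exactly what bumps the coefficient from $5$ to $6$), and finally use $\E\left(f_{t}(\cup_{i}\{a_{i}(t)\})\right)=F_{t}(\p_{1}(t),\dots,\p_{n}(t))\ge F_{t}(\uppi_{1,1}(t),\dots,\uppi_{n,n}(t))$, which the paper gets in one line from part \textbf{2)} of \cref{thm1} and $\p_{i}(t)\ge\uppi_{i,i}(t)$ coordinate-wise — the same monotonicity mechanism as your coupling argument, so the step you flagged as the main obstacle poses no difficulty.
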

\begin{proof}
	Merging \cref{lemma:D8} into \cref{lemma:D7}, we can get that
	\begin{equation*}
		\begin{aligned}
			&\rho\sum_{t=1}^{T}f_{t}(\mathcal{A}_{t}^{*})-\sum_{t=1}^{T}F_{t}(\bar{\uppi}_{\cdot,1}(t),\dots,\bar{\uppi}_{\cdot,n}(t))\le5M^{2}n^{\frac{5}{2}}(\sum_{i=1}^{n}\kappa_{i})(\int_{0}^{1}w(z)\mathrm{d}z+\frac{1}{e})^{2}\sum_{t=1}^{T}\sum_{m=1}^{t}\tau^{t-m}\eta_{m}\\
			&+n\sqrt{\kappa_{i}}(\int_{0}^{1}w(z)\mathrm{d}z+\frac{1}{e})M\sum_{t=1}^{T}\frac{\eta_{t}}{2}+\frac{4n}{\eta_{T+1}}+\sum_{t=1}^{T}\sum_{i=1}^{n}\frac{6}{\eta_{t+1}}\E\Big(\|\one_{(\mathcal{A}_{t}^{*}\cap\V_{i})}-\one_{(\mathcal{A}_{t+1}^{*}\cap\V_{i})}\|_{2}\Big).
		\end{aligned}
	\end{equation*}
	From \cref{thm1}, we also can show that $|F_{t}(\x)-F_{t}(\y)|\le nM\|\x-\y\|_{2}$ for any $t\in[T]$ such that we have
	\begin{equation*}
	\begin{aligned}
	&|F_{t}(\bar{\uppi}_{\cdot,1}(t),\dots,\bar{\uppi}_{\cdot,n}(t))-F_{t}(\uppi_{1,1}(t),\uppi_{2,2}(t),\dots,\uppi_{n,n}(t))|\\&
	\le nM\|(\bar{\uppi}_{\cdot,1}(t),\dots,\bar{\uppi}_{\cdot,n}(t))-(\uppi_{1,1}(t),\dots,\uppi_{n,n}(t))\|_{2}\\&= nM\sum_{i=1}^{n}\|\bar{\uppi}_{\cdot,i}(t)-\uppi_{i,i}(t)\|_{2}\le nM\sum_{j=1}^{n}\sum_{i=1}^{n}\|\bar{\uppi}_{\cdot,i}(t)-\uppi_{j,i}(t)\|_{2}\le n^{5/2}M^{2}(\int_{0}^{1}w(z)\mathrm{d}z+\frac{1}{e})\sum_{m=1}^{t}\tau^{t-m}\eta_{m},
	\end{aligned}
	\end{equation*} where the final inequality follows from \cref{lemma:D6}.
	
	Moreover, from the monotonicity of $F_{t}$(See \textbf{2)} in \cref{thm1}) and Lines 5-8, we also can infer that
	\begin{equation*}
		\E\left(f_{t}\left(\cup_{i=1}^{n}\{a_{i}(t)\}\right)\right)=F_{t}(\p_{1}(t),\dots,\p_{n}(t))\ge F_{t}(\uppi_{1,1}(t),\uppi_{2,2}(t),\dots,\uppi_{n,n}(t)),
	\end{equation*} where the final inequality comes from $\p_{i}(t) :=\frac{\uppi_{i,i}(t)}{\|\uppi_{i,i}(t)\|_{1}}\ge\uppi_{i,i}(t)$.
Therefore, we have that
\begin{equation*}
	\begin{aligned}
		&\rho\sum_{t=1}^{T}f_{t}(\mathcal{A}_{t}^{*})-\sum_{t=1}^{T}\E(f_{t}\left(\cup_{i=1}^{n}\{a_{i}(t)\}\right))\le6M^{2}n^{\frac{5}{2}}(\sum_{i=1}^{n}\kappa_{i})(\int_{0}^{1}w(z)\mathrm{d}z+\frac{1}{e})^{2}\sum_{t=1}^{T}\sum_{m=1}^{t}\tau^{t-m}\eta_{m}\\
		&+n\sqrt{\kappa_{i}}(\int_{0}^{1}w(z)\mathrm{d}z+\frac{1}{e})M\sum_{t=1}^{T}\frac{\eta_{t}}{2}+\frac{4n}{\eta_{T+1}}+\sum_{t=1}^{T}\sum_{i=1}^{n}\frac{6}{\eta_{t+1}}\E\Big(\|\one_{(\mathcal{A}_{t}^{*}\cap\V_{i})}-\one_{(\mathcal{A}_{t+1}^{*}\cap\V_{i})}\|_{2}\Big).
	\end{aligned}
\end{equation*}
\end{proof}
From \cref{lemma:D9}, if we set the step size $\eta_{t}=\mathcal{O}\big(\sqrt{\frac{(1-\tau)\mathcal{P}_{T}}{T}}\big)$ where $\mathcal{P}_{T}\triangleq\sum_{t=1}^{T}|\mathcal{A}_{t+1}^{*}\triangle\mathcal{A}_{t}^{*}|$ and $\triangle$ denotes the symmetric difference, we can show that
\begin{equation*}
\begin{aligned}
&\rho\sum_{t=1}^{T}F_{t}(\one_{\mathcal{A}_{t}^{*}})-\sum_{t=1}^{T}\E(f_{t}\left(\cup_{i=1}^{n}\{a_{i}(t)\}\right))\\&\le\mathcal{O}(\sum_{t=1}^{T}\sum_{m=1}^{t}\tau^{t-m}\eta_{m})+\mathcal{O}(\sum_{t=1}^{T}\eta_{t})+\mathcal{O}(\frac{1}{\eta_{T+1}})+\mathcal{O}\Big(\sum_{t=1}^{T}\sum_{i=1}^{n}\frac{1}{\eta_{t+1}}\E(\|\one_{(\mathcal{A}_{t}^{*}\cap\V_{i})}-\one_{(\mathcal{A}_{t+1}^{*}\cap\V_{i})}\|_{2})\Big)\\
&=\mathcal{O}\left(\sqrt{\frac{\mathcal{P}_{T}T}{1-\tau}}\right),
\end{aligned}	
\end{equation*} where the final equality follows from $\mathcal{P}_{T}\triangleq\sum_{t=1}^{T}\sum_{i=1}^{n}\|\one_{(\mathcal{A}_{t}^{*}\cap\V_{i})}-\one_{(\mathcal{A}_{t+1}^{*}\cap\V_{i})}\|_{2}\triangleq\sum_{t=1}^{T}\sum_{i=1}^{n}\|\one_{(\mathcal{A}_{t}^{*}\cap\V_{i})}-\one_{(\mathcal{A}_{t+1}^{*}\cap\V_{i})}\|_{1}\triangleq\sum_{t=1}^{T}|\mathcal{A}_{t+1}^{*}\triangle\mathcal{A}_{t}^{*}|$.
\section{Convergence  Analysis of \texorpdfstring{\texttt{MA-MPL}}{} Algorithm}\label{appendix_proof_thm2}
In this section, we prove \cref{thm:result2}.

At first, we verify that each policy vector $(\uppi_{i,1}^{(k)}(t),\dots,\uppi_{i,n}^{(k)}(t))\in\prod_{i=1}\Delta_{\kappa_{i}}$, namely, we have the following lemma:
\begin{lemma}\label{lemma:E1}
When the communication graph $G(\V,\mathcal{E})$ is connected, we can show that $(\uppi_{i,1}^{(k)}(t),\dots,\uppi_{i,n}^{(k)}(t))\in\prod_{i=1}\Delta_{\kappa_{i}}$ for any $i\in\N$ and $0\le k\le K$ as well as $t\in[T]$.
\end{lemma}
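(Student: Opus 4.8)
The plan is to establish \cref{lemma:E1} by induction on the inner iteration index $k$, strengthening the statement into a three-part invariant that tracks how the local estimates relate to the agent that ``owns'' each coordinate block. For each fixed $m\in\N$, call $\uppi^{(k)}_{m,m}(t)$ the \emph{leader} value for block $m$ (agent $m$'s own policy). I would prove simultaneously, for all $i,m\in\N$ and every $0\le k\le K$, that: \textbf{(a)} $\uppi^{(k)}_{i,m}(t)\ge\mathbf{0}$ coordinate-wise; \textbf{(b)} $\uppi^{(k)}_{i,m}(t)\le\uppi^{(k)}_{m,m}(t)$ coordinate-wise; and \textbf{(c)} $\uppi^{(k)}_{m,m}(t)=\tfrac{1}{K}\sum_{\ell=1}^{k}\mathbf{v}_{m}^{(\ell)}(t)$, so that $\|\uppi^{(k)}_{m,m}(t)\|_{1}\le k/K\le 1$. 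Granting these, the lemma follows immediately: since $\mathbf{0}\le\uppi^{(k)}_{i,m}(t)\le\uppi^{(k)}_{m,m}(t)$ and $\|\uppi^{(k)}_{m,m}(t)\|_{1}\le 1$, monotonicity of the $\ell_1$-norm on the nonnegative orthant yields $\|\uppi^{(k)}_{i,m}(t)\|_{1}\le 1$, i.e. $\uppi^{(k)}_{i,m}(t)\in\Delta_{\kappa_{m}}$, and hence $(\uppi^{(k)}_{i,1}(t),\dots,\uppi^{(k)}_{i,n}(t))\in\prod_{i=1}^{n}\Delta_{\kappa_{i}}$.

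The base case $k=0$ is trivial since Line 1 sets every $\uppi^{(0)}_{i,m}(t)=\mathbf{0}$. For the inductive step, the crucial observation -- the one that makes the coordinate-wise maximum in Line 10 harmless -- is that only agent $m$ ever increments block $m$. Indeed, by Lines 6--8, $\y^{(k)}_{j,m}(t)=\uppi^{(k-1)}_{j,m}(t)$ for $j\neq m$, whereas $\y^{(k)}_{m,m}(t)=\uppi^{(k-1)}_{m,m}(t)+\tfrac{1}{K}\mathbf{v}_{m}^{(k)}(t)$. Using the inductive hypothesis \textbf{(b)} at step $k-1$ together with $\mathbf{v}_{m}^{(k)}(t)\ge\mathbf{0}$ (as $\mathbf{v}_{m}^{(k)}(t)\in\Delta_{\kappa_{m}}$), every competitor in the maximum obeys $\y^{(k)}_{j,m}(t)=\uppi^{(k-1)}_{j,m}(t)\le\uppi^{(k-1)}_{m,m}(t)\le\y^{(k)}_{m,m}(t)$. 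Consequently the maximum defining $\uppi^{(k)}_{m,m}(t)=\max_{j\in\N_{m}\cup\{m\}}\y^{(k)}_{j,m}(t)$ is attained at $j=m$, giving the clean recursion $\uppi^{(k)}_{m,m}(t)=\uppi^{(k-1)}_{m,m}(t)+\tfrac{1}{K}\mathbf{v}_{m}^{(k)}(t)$; unrolling from $\uppi^{(0)}_{m,m}(t)=\mathbf{0}$ proves \textbf{(c)}. For \textbf{(b)} at step $k$, the same bound shows that for an arbitrary agent $i$ each term of $\uppi^{(k)}_{i,m}(t)=\max_{j\in\N_{i}\cup\{i\}}\y^{(k)}_{j,m}(t)$ is $\le\uppi^{(k)}_{m,m}(t)$ (the term $j=m$, if present, equals $\uppi^{(k)}_{m,m}(t)$; every term $j\neq m$ is $\le\uppi^{(k-1)}_{m,m}(t)\le\uppi^{(k)}_{m,m}(t)$), whence the max is as well. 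Part \textbf{(a)} follows since all $\y^{(k)}_{j,m}(t)$ are nonnegative.

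The main obstacle to anticipate is precisely the non-expansiveness issue: a coordinate-wise maximum of two simplex points need not lie in the simplex (e.g.\ $\max\{(1,0),(0,1)\}=(1,1)$), so feasibility cannot be argued by treating Line 10 as a convex-combination/averaging step the way one would for the doubly-stochastic mixing in \cref{alg:SPL}. The domination invariant \textbf{(b)} is what resolves this: it certifies that for each block $m$ the maximum is never ``assembled'' from disagreeing coordinates but is always realized by the single monotonically-growing leader $\uppi^{(k)}_{m,m}(t)$, whose $\ell_1$-mass is controlled by the Frank--Wolfe stepsize $1/K$ over at most $K$ rounds. I note that connectivity of $G(\N,\mathcal{E})$ is not actually needed for feasibility -- the argument only uses $i\in\N_{i}\cup\{i\}$ and $\mathbf{v}_{m}^{(k)}(t)\in\Delta_{\kappa_{m}}$ -- though it is retained in the hypothesis for consistency with the downstream convergence analysis.
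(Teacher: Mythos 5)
Your proof is correct and takes essentially the same route as the paper's: both arguments hinge on the observation that only agent $m$ ever increments block $m$ (Line 8 of \cref{alg:MPL}), so the owner's policy unrolls to $\uppi^{(k)}_{m,m}(t)=\frac{1}{K}\sum_{\ell=1}^{k}\mathbf{v}^{(\ell)}_{m}(t)\in\Delta_{\kappa_{m}}$, while every other agent's block-$m$ value is controlled by this leader. The one substantive difference is in how the coordinate-wise maximum of Line 10 is handled: the paper simply \emph{asserts} that each $\uppi^{(k)}_{j,m}(t)$ with $j\neq m$ is a copy of some past iterate $\uppi^{(k_{1})}_{m,m}(t)$, $k_{1}\le k$ --- a claim that itself requires the monotonicity of $k\mapsto\uppi^{(k)}_{m,m}(t)$ to rule out the max assembling disagreeing coordinates --- whereas your domination invariant \textbf{(b)} supplies exactly this missing justification, and likewise justifies the clean recursion $\uppi^{(k)}_{m,m}(t)=\uppi^{(k-1)}_{m,m}(t)+\frac{1}{K}\mathbf{v}^{(k)}_{m}(t)$ that the paper reads off Lines 8 and 10 without comment. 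So your write-up is, if anything, the more rigorous of the two. Your closing remark is also accurate: connectivity of $G$ plays no role in feasibility and is only needed downstream (e.g., for the $d(G)/K$ bound in \cref{lemma:E2}).
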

\begin{proof}
From the Line 1 of \cref{alg:MPL}, we can know that $(\uppi^{(0)}_{i,1}(t),\dots,\uppi^{(0)}_{i,n}(t)):=\boldsymbol{0},\forall t\in[T]$ such that $(\uppi^{(0)}_{i,1}(t),\dots,\uppi^{(0)}_{i,n}(t))\in\prod_{i=1}\Delta_{\kappa_{i}}$. Moreover, from Lines 8, for a fixed $t\in[T]$, we know that only agent $i\in\N$ can change the policy $\uppi_{i,i}^{(k)}(t)$. Furthermore, from Line 10, we also can infer that, for other agent $j\neq i$, its policy $\uppi_{j,i}^{(k)}(t)$ is copy of some past iteration of $\uppi_{i,i}^{(k)}(t)$, namely, there exists a $k_{1}\le k$ such that $\uppi_{j,i}^{(k)}(t)=\uppi_{i,i}^{(k_{1})}(t)$. According to the Line 8 and Line 10, we have $\uppi_{i,i}^{(k)}(t)=\uppi_{i,i}^{(k-1)}(t)+\frac{1}{K}\mathbf{v}_{i}^{(k)}(t)$, so we have  $\uppi_{i,i}^{(k)}(t)=\frac{1}{K}\sum_{m=1}^{k}\mathbf{v}_{i}^{(m)}(t)\in\Delta_{\kappa_{i}}$ for any $k\in[K]$ ,$t\in[T]$ and $i\in\N$. Therefore, we  have $(\uppi_{i,1}^{(k)}(t),\dots,\uppi_{i,n}^{(k)}(t))\in\prod_{i=1}\Delta_{\kappa_{i}}$ for any $i\in\N$ and $k\in[K]$ as well as $t\in[T]$.
\end{proof}
Note that, from Lines 11-13, each agent $i\in\N$ only uses the policy $\uppi_{i,i}^{(K)}(t)$ to make decision, so the real policy vector taken by all agents is $(\uppi_{1,1}^{(K)}(t),\dots,\uppi_{n,n}^{(K)}(t))$. Motivated by this finding, we next investigate some relationships between the policy vector $(\uppi_{1,1}^{(k)}(t),\dots,\uppi_{n,n}^{(k)}(t))$ and $(\uppi_{i,1}^{(k)}(t),\dots,\uppi_{i,n}^{(k)}(t))$. That is, 
\begin{lemma}\label{lemma:E2}
When the communication graph $G(\V,\mathcal{E})$ is connected, if we set $\kappa\triangleq\sum_{i=1}^{n}\kappa_{i}$ and utilize the symbol $\one_{\kappa}$ to denote the $\kappa$-dimensional vector whose all elements are $1$, we can show that
\begin{equation*}
0\le\frac{1}{n}\left\langle\one_{\kappa},(\uppi_{1,1}^{(k)}(t),\dots,\uppi_{n,n}^{(k)}(t))-(\uppi_{i,1}^{(k)}(t),\dots,\uppi_{i,n}^{(k)}(t))\right\rangle\le\frac{d(G)}{K},
\end{equation*} where $d(G)$ is the diameter of a graph, that is to say, the length of the shortest path between the most distanced nodes.
\end{lemma}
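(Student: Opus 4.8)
The plan is to read off, from the max-consensus rule in Line~10 together with the closed form of the diagonal policies, exactly how \emph{stale} agent $i$'s estimate $\uppi_{i,m}^{(k)}(t)$ of agent $m$'s own policy is. By \cref{lemma:E1} we have $\uppi_{m,m}^{(k)}(t)=\frac{1}{K}\sum_{l=1}^{k}\mathbf{v}_{m}^{(l)}(t)$; since each oracle output $\mathbf{v}_{m}^{(l)}(t)\in\Delta_{\kappa_{m}}$ is nonnegative, the sequence $\{\uppi_{m,m}^{(k)}(t)\}_{k}$ is coordinate-wise nondecreasing in $k$. The cornerstone of the argument is the structural claim that \emph{every estimate coincides with some past diagonal state}: for all $i,m\in\N$ and $0\le k\le K$,
\begin{equation*}
\uppi_{i,m}^{(k)}(t)=\uppi_{m,m}^{(\tau_{i,m}^{(k)})}(t),\qquad \tau_{i,m}^{(k)}\triangleq\max\!\big(0,\,k-\delta(i,m)\big),\qquad \delta(i,m)\triangleq\max\!\big(0,\dist(i,m)-1\big),
\end{equation*}
where $\dist(i,m)$ denotes the graph distance between $i$ and $m$.

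I would prove this claim by induction on $k$. The base case $k=0$ holds because Line~1 sets every block to $\mathbf{0}=\uppi_{m,m}^{(0)}(t)$. For the inductive step I would use that in Line~10, $\uppi_{i,m}^{(k)}(t)=\max_{j\in\N_{i}\cup\{i\}}\y_{j,m}^{(k)}(t)$ is a coordinate-wise maximum where, by Lines~6 and~8, $\y_{m,m}^{(k)}(t)=\uppi_{m,m}^{(k)}(t)$ and $\y_{j,m}^{(k)}(t)=\uppi_{j,m}^{(k-1)}(t)$ for $j\neq m$. By the inductive hypothesis each operand is a past diagonal state $\uppi_{m,m}^{(s)}(t)$; because these states form a chain under the coordinate-wise order (monotonicity), their coordinate-wise maximum is simply the one with the largest superscript, so $\uppi_{i,m}^{(k)}(t)$ is again a single past diagonal state with $\tau_{i,m}^{(k)}$ equal to the largest of the operand superscripts. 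Identifying that maximum reduces to minimizing the lag $\delta(j,m)$ over $j\in\N_{i}\cup\{i\}$: the self term $j=i$ gives lag $\delta(i,m)$, the triangle inequality guarantees a neighbor $j^{*}$ lying on a shortest $i$--$m$ path with $\dist(j^{*},m)=\dist(i,m)-1$, and the fresh self-broadcast of $m$ (when $m\in\N_{i}$) handles the boundary $\dist(i,m)=1$. Carrying the recursion through yields exactly $\tau_{i,m}^{(k)}=\max(0,k-\delta(i,m))$.

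Given the claim, both inequalities follow quickly. Writing the concatenated inner product block-wise gives $\langle\one_{\kappa},(\uppi_{1,1}^{(k)}(t),\dots,\uppi_{n,n}^{(k)}(t))-(\uppi_{i,1}^{(k)}(t),\dots,\uppi_{i,n}^{(k)}(t))\rangle=\sum_{m=1}^{n}\langle\one_{\kappa_{m}},\uppi_{m,m}^{(k)}(t)-\uppi_{i,m}^{(k)}(t)\rangle$. Since $\tau_{i,m}^{(k)}\le k$ and $\{\uppi_{m,m}^{(k)}(t)\}_k$ is nondecreasing, each block difference equals $\frac{1}{K}\sum_{l=\tau_{i,m}^{(k)}+1}^{k}\mathbf{v}_{m}^{(l)}(t)$, which is coordinate-wise nonnegative and yields the lower bound $0$. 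For the upper bound, the number of summands is $k-\tau_{i,m}^{(k)}=\min(k,\delta(i,m))\le\delta(i,m)$, and each satisfies $\langle\one_{\kappa_{m}},\mathbf{v}_{m}^{(l)}(t)\rangle\le 1$ since $\mathbf{v}_{m}^{(l)}(t)\in\Delta_{\kappa_{m}}$; hence $\langle\one_{\kappa_{m}},\uppi_{m,m}^{(k)}(t)-\uppi_{i,m}^{(k)}(t)\rangle\le\delta(i,m)/K$. Summing over $m$, using $\delta(i,m)\le\dist(i,m)\le d(G)$, and dividing by $n$ gives $\frac{1}{nK}\sum_{m=1}^{n}\delta(i,m)\le\frac{d(G)}{K}$, as required.

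The main obstacle is the structural claim, and specifically pinning down the exact staleness $\delta(i,m)=\max(0,\dist(i,m)-1)$: one must simultaneously track that the max-consensus update preserves the ``single past diagonal state'' invariant (which hinges on the monotonicity that makes the operands coordinate-wise comparable) and carry out the shortest-path/triangle-inequality bookkeeping, including the off-by-one caused by agent $m$ broadcasting its \emph{already-incremented} policy $\uppi_{m,m}^{(k)}(t)$ within the same round. Once this lemma is established, the two bounds are routine.
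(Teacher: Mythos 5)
Your proof is correct and takes essentially the same route as the paper's: identify each off-diagonal block $\uppi_{i,m}^{(k)}(t)$ as a stale copy $\uppi_{m,m}^{(k_{m})}(t)$ of agent $m$'s own policy with staleness bounded by the graph diameter, write each block difference as $\frac{1}{K}\sum_{l}\mathbf{v}_{m}^{(l)}(t)$ over at most $d(G)$ simplex vectors, and bound the inner product with $\one_{\kappa_{m}}$ by $d(G)/K$ per block before averaging over $n$. The only difference is one of rigor, in your favor: you prove the staleness claim by induction (using the coordinate-wise monotonicity of the diagonal states to collapse the max-consensus update to a single past state, with the sharper lag $\max(0,\dist(i,m)-1)$), whereas the paper simply asserts from Lines 6--10 that $\uppi_{i,m}^{(k)}(t)=\uppi_{m,m}^{(k_{m})}(t)$ with $k_{m}\in[k-d(G),k]$.
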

\begin{proof}
From Lines 6-10, we can show that, for fixed $k\in[K]$ and $t\in[T]$, each $\uppi_{i,j}^{(k)}(t)=\uppi_{j,j}^{(k_{j})}(t)$ where $k_{j}\in[k-d(G),k]$. Moreover, according to the Line 8 and Line 10, we have $\uppi_{j,j}^{(k)}(t)=\uppi_{j,j}^{(k-1)}(t)+\frac{1}{K}\mathbf{v}_{j}^{(k)}(t)$, so we have  $\uppi_{j,j}^{(k)}(t)=\frac{1}{K}\sum_{m=1}^{k}\mathbf{v}_{j}^{(m)}(t)$ for any $k\in[K]$ ,$t\in[T]$ and $i\in\N$. Therefore, we have that $\uppi_{j,j}^{(k)}(t)-\uppi_{i,j}^{(k)}(t)=\frac{1}{K}\sum_{m=k_{j}}^{j}\mathbf{v}_{j}^{(m)}(t)$ such that $0\le\left\langle\one_{\kappa_{j}},\uppi_{j,j}^{(k)}(t)-\uppi_{i,j}^{(k)}(t)\right\rangle\le \frac{k-k_{j}}{K}\le\frac{d(G)}{K}$. So we have the result of \cref{lemma:E2}.
\end{proof}

Next, we investigate the relationships between  $(\uppi_{1,1}^{(k)}(t),\dots,\uppi_{n,n}^{(k)}(t))$  and  $(\uppi_{1,1}^{(k-1)}(t),\dots,\uppi_{n,n}^{(k-1)}(t))$.

\begin{lemma}\label{lemma:E3}
	When the communication graph $G(\V,\mathcal{E})$ is connected, we can show that
	\begin{equation*}
		\begin{aligned}
		&(\uppi_{1,1}^{(k)}(t),\dots,\uppi_{n,n}^{(k)}(t))-(\uppi_{1,1}^{(k-1)}(t),\dots,\uppi_{n,n}^{(k-1)}(t))=\frac{1}{K}(\mathbf{v}_{1}^{(k)}(t),\dots,(\mathbf{v}_{n}^{(k)}(t));\\
		&0\le\frac{1}{n}\left\langle\one_{\kappa},(\uppi_{1,1}^{(k)}(t),\dots,\uppi_{n,n}^{(k)}(t))-(\uppi_{1,1}^{(k-1)}(t),\dots,\uppi_{n,n}^{(k-1)}(t))\right\rangle\le\frac{1}{K},
		\end{aligned}
	\end{equation*} where $\kappa\triangleq\sum_{i=1}^{n}\kappa_{i}$ and $d(G)$ is the diameter of a graph, i.e., the length of the shortest path between the most distanced nodes.
\end{lemma}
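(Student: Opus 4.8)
The plan is to reduce the entire statement to the closed-form characterization of the executed (diagonal) policies already obtained in the proof of \cref{lemma:E1}, and then to read off both the difference identity and the scalar bound with essentially no further work. Recall that \cref{lemma:E1} shows, for the connected graph $G(\N,\mathcal{E})$, that each executed policy satisfies $\uppi_{i,i}^{(k)}(t)=\frac{1}{K}\sum_{m=1}^{k}\mathbf{v}_{i}^{(m)}(t)$ for every $i\in\N$, $k\in[K]$ and $t\in[T]$, where $\mathbf{v}_{i}^{(m)}(t)\in\Delta_{\kappa_{i}}$ is the direction returned by the oracle $Q_{i}^{(m)}$. The only delicate point behind this identity, namely that the coordinate-wise maximization in Line~10 never overrides agent $i$'s own diagonal entry, has already been settled there: each increment $\frac{1}{K}\mathbf{v}_{i}^{(k)}(t)$ is coordinate-wise non-negative, so the sequence $\{\uppi_{i,i}^{(k)}(t)\}_{k}$ is non-decreasing and agent $i$ always holds the largest copy of its own policy among its neighbors.

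First I would subtract two consecutive instances of this closed form: for each $i\in\N$,
\[
\uppi_{i,i}^{(k)}(t)-\uppi_{i,i}^{(k-1)}(t)=\frac{1}{K}\sum_{m=1}^{k}\mathbf{v}_{i}^{(m)}(t)-\frac{1}{K}\sum_{m=1}^{k-1}\mathbf{v}_{i}^{(m)}(t)=\frac{1}{K}\mathbf{v}_{i}^{(k)}(t).
\]
Concatenating these $n$ identities along the diagonal coordinates yields the first displayed equation, since $(\uppi_{1,1}^{(k)}(t),\dots,\uppi_{n,n}^{(k)}(t))$ is precisely the block stacking of the per-agent diagonal policies and $\frac{1}{K}(\mathbf{v}_{1}^{(k)}(t),\dots,\mathbf{v}_{n}^{(k)}(t))$ stacks the oracle directions in the same order.

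For the scalar bound I would take the inner product of this difference vector with $\one_{\kappa}$, where $\kappa\triangleq\sum_{i=1}^{n}\kappa_{i}$. Because the concatenation respects the block structure, $\langle\one_{\kappa},(\mathbf{v}_{1}^{(k)}(t),\dots,\mathbf{v}_{n}^{(k)}(t))\rangle=\sum_{i=1}^{n}\langle\one_{\kappa_{i}},\mathbf{v}_{i}^{(k)}(t)\rangle=\sum_{i=1}^{n}\|\mathbf{v}_{i}^{(k)}(t)\|_{1}$. Since each $\mathbf{v}_{i}^{(k)}(t)\in\Delta_{\kappa_{i}}$, the definition of the simplex gives $0\le\|\mathbf{v}_{i}^{(k)}(t)\|_{1}\le1$, so the sum lies in $[0,n]$ and hence
\[
0\le\frac{1}{n}\left\langle\one_{\kappa},\tfrac{1}{K}(\mathbf{v}_{1}^{(k)}(t),\dots,\mathbf{v}_{n}^{(k)}(t))\right\rangle=\frac{1}{nK}\sum_{i=1}^{n}\|\mathbf{v}_{i}^{(k)}(t)\|_{1}\le\frac{1}{K},
\]
which is exactly the second claim.

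Since every step follows mechanically from \cref{lemma:E1}, I do not expect a genuine obstacle here; the statement is really a corollary. The one point that warrants care is the convention that $\Delta_{\kappa_{i}}$ denotes the \emph{corner} simplex $\{\x:\|\x\|_{1}\le1,\ \x\ge0\}$ rather than the probability simplex, as it is precisely the one-sided constraint $\|\mathbf{v}_{i}^{(k)}(t)\|_{1}\le1$ that supplies the upper estimate $\frac{1}{K}$ while non-negativity supplies the lower bound $0$.
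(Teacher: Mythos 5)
Your proof is correct and follows essentially the same route as the paper: the paper's own proof simply invokes the per-agent recursion $\uppi_{j,j}^{(k)}(t)=\uppi_{j,j}^{(k-1)}(t)+\frac{1}{K}\mathbf{v}_{j}^{(k)}(t)$ from Lines 8 and 10 of \cref{alg:MPL} (the same fact you recover by differencing the closed form $\uppi_{i,i}^{(k)}(t)=\frac{1}{K}\sum_{m=1}^{k}\mathbf{v}_{i}^{(m)}(t)$ from \cref{lemma:E1}), with the scalar bound then following from $\mathbf{v}_{i}^{(k)}(t)\in\Delta_{\kappa_{i}}$ exactly as you argue. Your write-up is in fact more explicit than the paper's one-line proof, particularly in flagging that the coordinate-wise maximum in Line 10 cannot override agent $i$'s own diagonal entry and that $\Delta_{\kappa_{i}}$ is the corner simplex, both of which the paper leaves implicit.
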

\begin{proof}
	According to the Line 8 and Line 10, we have $\uppi_{j,j}^{(k)}(t)=\uppi_{j,j}^{(k-1)}(t)+\frac{1}{K}\mathbf{v}_{j}^{(k)}(t)$. Therefore, we get result of \cref{lemma:E3}.
\end{proof}
With \cref{lemma:E2} and \cref{lemma:E3}, we next give an upper bound about the error between the gradient $\nabla F_{t}(\uppi_{1,1}^{(k)}(t),\dots,\uppi_{n,n}^{(k)}(t))$ and $\nabla F_{t}(\uppi_{i,1}^{(k)}(t),\dots,\uppi_{i,n}^{(k)}(t))$.
\begin{lemma}\label{lemma:E4}
	When the communication graph $G(\V,\mathcal{E})$ is connected, if we set $M$ as the maximum marginal contribution of each monotone set function $f_{t}$, namely, $M\triangleq\max_{S\subseteq\V,e\in\V\setminus S,t\in[T]}\left(f_{t}(e|S)\right)$, we can show that
	\begin{equation*}
		\begin{aligned}
		 \left\|\nabla F_{t}(\uppi_{1,1}^{(k)}(t),\dots,\uppi_{n,n}^{(k)}(t))-\nabla F_{t}(\uppi_{i,1}^{(k)}(t),\dots,\uppi_{i,n}^{(k)}(t))\right\|_{2}\le\frac{\sqrt{\kappa}nMd(G)}{K}
		\end{aligned}
	\end{equation*} where $\kappa=\sum_{i=1}^{n}\kappa_{i}$ and $d(G)$ is the diameter of a graph, i.e., the length of the shortest path between the most distanced nodes.
\end{lemma}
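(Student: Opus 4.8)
The plan is to control this gradient discrepancy through the block structure of the Hessian of $F_{t}$ rather than through its global smoothness constant, since the latter would only yield a factor $\kappa$ instead of the sharper $\sqrt{\kappa}$ demanded by the statement. First I would record the elementary facts about the displacement vector $\delta\triangleq(\uppi_{1,1}^{(k)}(t),\dots,\uppi_{n,n}^{(k)}(t))-(\uppi_{i,1}^{(k)}(t),\dots,\uppi_{i,n}^{(k)}(t))$. Writing $\delta_{j}\triangleq\uppi_{j,j}^{(k)}(t)-\uppi_{i,j}^{(k)}(t)$ for its $j$-th block, the $i$-th block vanishes ($\delta_{i}=\mathbf{0}$) because both vectors carry $\uppi_{i,i}^{(k)}(t)$ there, while the computation already used inside the proof of \cref{lemma:E2} shows each remaining block has non-negative entries with $\langle\one_{\kappa_{j}},\delta_{j}\rangle\le d(G)/K$ (each $\mathbf{v}_{j}^{(m)}\in\Delta_{\kappa_{j}}$ is non-negative with $\ell_{1}$-mass at most $1$, and there are at most $d(G)$ summands). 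By \cref{lemma:E1} both endpoints lie in the convex set $\prod_{i=1}^{n}\Delta_{\kappa_{i}}$, so every interpolant also lies there.

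Next I would express the gradient difference by the fundamental theorem of calculus (suppressing the argument $t$ for brevity),
\begin{equation*}
\nabla F_{t}(\uppi_{1,1}^{(k)},\dots,\uppi_{n,n}^{(k)})-\nabla F_{t}(\uppi_{i,1}^{(k)},\dots,\uppi_{i,n}^{(k)})=\int_{0}^{1}\nabla^{2}F_{t}\big((\uppi_{i,1}^{(k)},\dots,\uppi_{i,n}^{(k)})+\lambda\delta\big)\,\delta\,\mathrm{d}\lambda,
\end{equation*}
and bound the integrand coordinate by coordinate. The crucial input is \cref{lemma:D2}: the diagonal blocks of $\nabla^{2}F_{t}$ (the mixed second derivatives within a single agent) are identically zero, so the $(i',m')$-th coordinate of $\nabla^{2}F_{t}\,\delta$ collects contributions only from blocks $j'\neq i'$. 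Using the uniform entry bound $|\partial^{2}F_{t}/\partial\pi_{i',m'}\partial\pi_{j',m''}|\le M$ from \cref{remak:D2} together with the non-negativity and the mass bound on each $\delta_{j'}$, each such coordinate is at most $M\sum_{j'\neq i'}\langle\one_{\kappa_{j'}},\delta_{j'}\rangle\le M(n-1)d(G)/K\le Mn\,d(G)/K$.

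Since $\nabla^{2}F_{t}\,\delta$ has $\kappa=\sum_{i=1}^{n}\kappa_{i}$ coordinates, each bounded by $Mn\,d(G)/K$, its $L_{2}$ norm is at most $\sqrt{\kappa}\,Mn\,d(G)/K$; this estimate is uniform in $\lambda$ and so survives the integral, giving exactly the claimed bound. The main obstacle is precisely this last sharpening: a blunt application of the $(M\kappa)$-smoothness from \cref{lemma:D4} would only give $M\kappa\|\delta\|_{2}$, and even combined with $\|\delta\|_{2}\le\|\delta\|_{1}\le n\,d(G)/K$ this produces $M\kappa n\,d(G)/K$, worse by a factor $\sqrt{\kappa}$. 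Recovering the correct power of $\kappa$ forces one to exploit the vanishing diagonal blocks of the Hessian and to push the per-agent mass bound of \cref{lemma:E2} through $\nabla^{2}F_{t}$ \emph{coordinate-wise}, rather than separately estimating $\|\delta\|_{2}$ and the operator size of $\nabla^{2}F_{t}$.
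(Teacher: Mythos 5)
Your proof is correct and takes essentially the same route as the paper's: the paper likewise writes each coordinate of the gradient difference as a Hessian row (evaluated along the segment) paired with the displacement vector, bounds every Hessian entry by $M$ via \cref{remak:D2}, uses the entrywise non-negativity and the $\ell_{1}$-mass bound of the displacement from \cref{lemma:E2} to get the per-coordinate bound $nMd(G)/K$, and concludes with $\|\cdot\|_{2}\le\sqrt{\kappa}\|\cdot\|_{\infty}$. Your two refinements --- that the $i$-th block of the displacement vanishes and that the diagonal Hessian blocks are zero (\cref{lemma:D2}) --- are valid but not required for the stated constant, since the paper reaches the same bound by applying the aggregate mass estimate $\langle\one_{\kappa},\cdot\rangle\le nd(G)/K$ directly.
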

\begin{proof}
At first, we investigate the error  between the first-order derivative $\frac{\partial F_{t}}{\partial\pi_{i,m}}(\uppi_{1,1}^{(k)}(t),\dots,\uppi_{n,n}^{(k)}(t))$ and the first-order derivative $\frac{\partial F_{t}}{\partial\pi_{i,m}}(\uppi_{i,1}^{(k)}(t),\dots,\uppi_{i,n}^{(k)}(t))$ for some $i\in\N$ and $m\in[\kappa_{i}]$.
\begin{equation*}
\begin{aligned}
&\Big|\frac{\partial F_{t}}{\partial\pi_{i,m}}(\uppi_{1,1}^{(k)}(t),\dots,\uppi_{n,n}^{(k)}(t))-\frac{\partial F_{t}}{\partial\pi_{i,m}}(\uppi_{i,1}^{(k)}(t),\dots,\uppi_{i,n}^{(k)}(t))\Big|\\
&=\Big|\left\langle\left(\int_{0}^{1}\frac{\partial^{2} F_{t}}{\partial\pi_{i,m}\partial\pi_{1,1}},\dots,\frac{\partial^{2} F_{t}}{\partial\pi_{i,m}\partial\pi_{n,\kappa_{n}}}\right)(\x),\left(\uppi_{1,1}^{(k)}(t)-\uppi_{i,1}^{(k)}(t),\dots,\uppi_{n,n}^{(k)}(t)-\uppi_{i,n}^{(k)}(t)\right)\right\rangle\Big|\\
&\le M\Big|\left\langle\one_{\kappa},\left(\uppi_{1,1}^{(k)}(t)-\uppi_{i,1}^{(k)}(t),\dots,\uppi_{n,n}^{(k)}(t)-\uppi_{i,n}^{(k)}(t)\right)\right\rangle\Big|\le\frac{nMd(G)}{K},
 \end{aligned}
\end{equation*} where the first equality comes from $\x=\left(\lambda\uppi_{1,1}^{(k)}(t)+(1-\lambda)\uppi_{i,1}^{(k)}(t),\dots,\lambda\uppi_{n,n}^{(k)}(t)+(1-\lambda)\uppi_{i,n}^{(k)}(t)\right)$ for some $\lambda\in[0,1]$, the first inequality follows from \cref{remak:D2} and the final inequality comes from \cref{lemma:E2}.
Then, we have that 
\begin{equation*}
\begin{aligned}
&\left\|\nabla F_{t}(\uppi_{1,1}^{(k)}(t),\dots,\uppi_{n,n}^{(k)}(t))-\nabla F_{t}(\uppi_{i,1}^{(k)}(t),\dots,\uppi_{i,n}^{(k)}(t))\right\|_{2}\\
&\le\sqrt{\kappa}\max_{\pi_{i,m}}\Big|\frac{\partial F_{t}}{\partial\pi_{i,m}}(\uppi_{1,1}^{(k)}(t),\dots,\uppi_{n,n}^{(k)}(t))-\frac{\partial F_{t}}{\partial\pi_{i,m}}(\uppi_{i,1}^{(k)}(t),\dots,\uppi_{i,n}^{(k)}(t))\Big|\le\frac{\sqrt{\kappa}nMd(G)}{K},
\end{aligned}
\end{equation*} where  $\kappa=\sum_{i=1}^{n}\kappa_{i}$ and $d(G)$ is the diameter of a graph, i.e., the length of the shortest path between the most distanced nodes.
\end{proof}

Now, we prove \cref{thm:result2}.
\begin{lemma}\label{lemma:E5}
	When the communication graph $G(\V,\mathcal{E})$ is connected, if we set $M$ as the maximum marginal contribution of each monotone set function $f_{t}$, namely, $M\triangleq\max_{S\subseteq\V,e\in\V\setminus S,t\in[T]}\left(f_{t}(e|S)\right)$, we can show that, 
	
	$\textbf{i):)}$ if $f_{t}$ is a monotone $\alpha$-weakly submodular function: \begin{equation*}
		\begin{aligned}
			&(1-e^{-\alpha})\sum_{t=1}^{T}f_{t}(\mathcal{A}_{t}^{*})-\sum_{t=1}^{T}\E\left(F_{t}(\uppi_{1,1}^{(K)}(t),\dots,\uppi_{n,n}^{(k)}(t))\right)\\&\le\left(\frac{nM(\sum_{i=1}^{n}\kappa_{i})}{2\alpha }+\frac{2\sqrt{\kappa}n^{2}Md(G)}{\alpha }\right)\frac{T}{K}+\frac{2nM\sqrt{\kappa}}{\alpha}\frac{T}{\sqrt{L}}\\
			&+\frac{1}{K}\sum_{k=1}^{K}\sum_{i=1}^{n}\sum_{t=1}^{T}\left\langle\mathbf{d}_{i}^{(k)}(t),\one_{(\mathcal{A}_{t}^{*}\cap\V_{i})}-\mathbf{v}_{i}^{(k)}(t)\right\rangle;
		\end{aligned}
	\end{equation*} 
	$\textbf{ii):)}$ if $f_{t}$ is a monotone $(\gamma,\beta)$-weakly submodular: \begin{equation*}
		\begin{aligned}
			&\frac{\gamma^{2}(1-e^{-\phi(\gamma,\beta)})}{\phi(\gamma,\beta)}\sum_{t=1}^{T}f_{t}(\mathcal{A}_{t}^{*})-\sum_{t=1}^{T}\E\left(F_{t}(\uppi_{1,1}^{(K)}(t),\dots,\uppi_{n,n}^{(k)}(t))\right)\\&\le\left(\frac{nM(\sum_{i=1}^{n}\kappa_{i})}{2\phi(\gamma,\beta) }+\frac{2\sqrt{\kappa}n^{2}Md(G)}{\phi(\gamma,\beta) }\right)\frac{T}{K}+\frac{2nM\sqrt{\kappa}}{\phi(\gamma,\beta)}\frac{T}{\sqrt{L}}\\
			&-\frac{1}{K}\sum_{k=1}^{K}\sum_{i=1}^{n}\sum_{t=1}^{T}\left\langle\mathbf{d}_{i}^{(k)}(t),\one_{(\mathcal{A}_{t}^{*}\cap\V_{i})}-\mathbf{v}_{i}^{(k)}(t)\right\rangle,
		\end{aligned}
	\end{equation*} where where we slightly abuse the notation $\one_{(\mathcal{A}_{t}^{*}\cap\V_{i})}$ to denote $\kappa_{i}$-dimensional indicator vector over $(\mathcal{A}_{t}^{*}\cap\V_{i})$. Note that $\one_{\mathcal{A}_{t}^{*}}$ is a $\left(\sum_{i=1}^{n}\kappa_{i}\right)$-dimensional indicator vector over $\mathcal{A}_{t}^{*}$ and $\phi(\gamma,\beta)=\beta(1-\gamma)+\gamma^{2}$.
Note the $\sum_{i=1}^{n}\sum_{t=1}^{T}\left\langle\mathbf{d}_{i}^{(k)}(t),\one_{(\mathcal{A}_{t}^{*}\cap\V_{i})}-\mathbf{v}_{i}^{(k)}(t)\right\rangle$ is the dynamic regret of linear maximization oracle $Q_{i}^{(k)}$ over the competitive sequence $\left(\one_{(\mathcal{A}_{1}^{*}\cap\V_{i})},\dots,\one_{(\mathcal{A}_{T}^{*}\cap\V_{i})}\right)$.  Therefore, we also rewrite the previous two results as:

$\textbf{i):)}$ if $f_{t}$ is a monotone $\alpha$-weakly submodular function: \begin{equation*}
	\begin{aligned}
		&(1-e^{-\alpha})\sum_{t=1}^{T}f_{t}(\mathcal{A}_{t}^{*})-\sum_{t=1}^{T}\E\left(F_{t}(\uppi_{1,1}^{(K)}(t),\dots,\uppi_{n,n}^{(k)}(t))\right)\\&\le\left(\frac{nM(\sum_{i=1}^{n}\kappa_{i})}{2\alpha }+\frac{2\sqrt{\kappa}n^{2}Md(G)}{\alpha }\right)\frac{T}{K}+\frac{2nM\sqrt{\kappa}}{\alpha}\frac{T}{\sqrt{L}}\\
		&+\frac{1}{K}\sum_{k=1}^{K}\sum_{i=1}^{n}\mathcal{R}^{Q_{i}^{(k)}}\left(\one_{(\mathcal{A}_{1}^{*}\cap\V_{i})},\dots,\one_{(\mathcal{A}_{T}^{*}\cap\V_{i})}\right);
	\end{aligned}
\end{equation*} 
$\textbf{ii):)}$ if $f_{t}$ is a monotone $(\gamma,\beta)$-weakly submodular: \begin{equation*}
	\begin{aligned}
		&\frac{\gamma^{2}(1-e^{-\phi(\gamma,\beta)})}{\phi(\gamma,\beta)}\sum_{t=1}^{T}f_{t}(\mathcal{A}_{t}^{*})-\sum_{t=1}^{T}\E\left(F_{t}(\uppi_{1,1}^{(K)}(t),\dots,\uppi_{n,n}^{(k)}(t))\right)\\&\le\left(\frac{nM(\sum_{i=1}^{n}\kappa_{i})}{2\phi(\gamma,\beta) }+\frac{2\sqrt{\kappa}n^{2}Md(G)}{\phi(\gamma,\beta) }\right)\frac{T}{K}+\frac{2nM\sqrt{\kappa}}{\phi(\gamma,\beta)}\frac{T}{\sqrt{L}}\\
		&-\frac{1}{K}\sum_{k=1}^{K}\mathcal{R}^{Q_{i}^{(k)}}\left(\one_{(\mathcal{A}_{1}^{*}\cap\V_{i})},\dots,\one_{(\mathcal{A}_{T}^{*}\cap\V_{i})}\right),
	\end{aligned}
\end{equation*} where $\mathcal{R}^{Q_{i}^{(k)}}\left(\one_{(\mathcal{A}_{1}^{*}\cap\V_{i})},\dots,\one_{(\mathcal{A}_{T}^{*}\cap\V_{i})}\right)$ is the dynamic regret of linear maximization oracle $Q_{i}^{(k)}$ over the competitive sequence $\left(\one_{(\mathcal{A}_{1}^{*}\cap\V_{i})},\dots,\one_{(\mathcal{A}_{T}^{*}\cap\V_{i})}\right)$.
\end{lemma}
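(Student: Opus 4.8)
The plan is to treat each outer round $t$ as one execution of a decentralized Meta-Frank-Wolfe ascent on the policy-based continuous extension $F_t$, and to read off the claimed decomposition from the standard Frank-Wolfe recursion driven by the weak-submodularity inequality of \cref{thm1}. Writing $\uppi^{(k)}(t)\triangleq(\uppi_{1,1}^{(k)}(t),\dots,\uppi_{n,n}^{(k)}(t))$ for the global policy assembled from the agents' own policies, I would first establish a per-step progress bound. By \cref{lemma:E3} the update satisfies $\uppi^{(k)}(t)-\uppi^{(k-1)}(t)=\frac1K(\mathbf v_1^{(k)}(t),\dots,\mathbf v_n^{(k)}(t))$ with each $\mathbf v_i^{(k)}(t)\in\Delta_{\kappa_i}$, so $\|\uppi^{(k)}(t)-\uppi^{(k-1)}(t)\|_2^2\le n/K^2$. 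Combining this with the smoothness constant $L_s=M\sum_i\kappa_i$ of \cref{lemma:D4} (used in the form $F_t(\mathbf y)-F_t(\mathbf x)\ge\langle\nabla F_t(\mathbf y),\mathbf y-\mathbf x\rangle-\frac{L_s}2\|\mathbf y-\mathbf x\|_2^2$, evaluated at the later iterate $\mathbf y=\uppi^{(k)}(t)$) yields $F_t(\uppi^{(k)}(t))-F_t(\uppi^{(k-1)}(t))\ge\frac1K\langle\nabla F_t(\uppi^{(k)}(t)),\mathbf v^{(k)}(t)\rangle-\frac{nM(\sum_i\kappa_i)}{2K^2}$. Evaluating the gradient at $\uppi^{(k)}(t)$ is deliberate, since the batch feedback $\mathbf d_i^{(k)}(t)$ is an unbiased estimate (by part 1 of \cref{thm1}) of $\nabla F_t$ at precisely this iterate.

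Next I would set up the Frank-Wolfe recursion. Splitting $\langle\nabla F_t(\uppi^{(k)}),\mathbf v^{(k)}\rangle=\langle\nabla F_t(\uppi^{(k)}),\one_{\mathcal A_t^*}\rangle-\langle\nabla F_t(\uppi^{(k)}),\one_{\mathcal A_t^*}-\mathbf v^{(k)}\rangle$ and invoking part 4 of \cref{thm1} — namely $\langle\one_{\mathcal A_t^*},\nabla F_t\rangle\ge\alpha\big(f_t(\mathcal A_t^*)-F_t\big)$ in the DR case and $\langle\one_{\mathcal A_t^*},\nabla F_t\rangle\ge\gamma^2 f_t(\mathcal A_t^*)-\phi(\gamma,\beta)F_t$ in the $(\gamma,\beta)$ case — gives a contraction $\delta_k\le(1-\tfrac{\alpha}{K})\delta_{k-1}+r_k+\frac{nM(\sum_i\kappa_i)}{2K^2}$, where $\delta_k\triangleq f_t(\mathcal A_t^*)-F_t(\uppi^{(k)}(t))$ is the residual gap and $r_k=\frac1K\langle\nabla F_t(\uppi^{(k)}),\one_{\mathcal A_t^*}-\mathbf v^{(k)}\rangle$. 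Unrolling from $\delta_0$ (with $F_t(\mathbf 0)=0$) produces the factor $(1-\tfrac{\alpha}{K})^K\le e^{-\alpha}$, which is exactly the origin of the approximation ratios $(1-e^{-\alpha})$ and $\frac{\gamma^2(1-e^{-\phi(\gamma,\beta)})}{\phi(\gamma,\beta)}$. The same unrolling sends the per-step smoothness error $\frac{nM(\sum_i\kappa_i)}{2K^2}$ through the geometric sum $\sum_{j\ge0}(1-\tfrac{\alpha}{K})^j\le K/\alpha$, which is precisely what turns $\frac1{K^2}$ into the $\frac1{\alpha}\frac1K$ (resp. $\frac1{\phi(\gamma,\beta)}\frac1K$) scaling of every error term in the statement.

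It then remains to convert $r_k$, which still involves the exact global gradient, into the oracle feedback $\mathbf d_i^{(k)}(t)$, via two controlled replacements summed over $k$ and $t$. First, the global-to-local gradient gap $\|\nabla F_t(\uppi^{(k)})-\nabla F_t(\uppi_{i,\cdot}^{(k)})\|_2\le\frac{\sqrt\kappa\,nMd(G)}{K}$ from \cref{lemma:E4} (itself resting on the coordinate-wise-maximum consensus estimate of \cref{lemma:E2}) accounts for the decentralization and, after the $K/\alpha$ geometric weighting, yields the $\frac{2\sqrt\kappa\,n^2Md(G)}{\alpha}\frac TK$ term. Second, since $\mathbf d_i^{(k)}(t)$ averages $L$ i.i.d. marginal evaluations each bounded by $M$ and unbiased for the local partial gradient, its mean-square deviation is $\mathcal O(\sqrt{\kappa_i}\,M/\sqrt L)$, which under the same weighting gives the $\frac{2nM\sqrt\kappa}{\alpha}\frac T{\sqrt L}$ term. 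What survives is $\frac1K\sum_{k}\sum_i\sum_t\langle\mathbf d_i^{(k)}(t),\one_{(\mathcal A_t^*\cap\V_i)}-\mathbf v_i^{(k)}(t)\rangle$, and for each fixed $k$ the inner sum over $t$ is by definition the dynamic regret $\mathcal R^{Q_i^{(k)}}$ of oracle $Q_i^{(k)}$ against the comparator sequence $(\one_{(\mathcal A_1^*\cap\V_i)},\dots,\one_{(\mathcal A_T^*\cap\V_i)})$, which is exactly the second form of the claim.

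The main obstacle I anticipate is the bookkeeping forced by the decentralized consensus: unlike the averaging weight matrix used for \texttt{MA-SPL}, here information is fused by a coordinate-wise maximum, so I must rely on \cref{lemma:E2}, \cref{lemma:E3}, and \cref{lemma:E4} to certify that the local iterate seen by each oracle is within $\mathcal O(d(G)/K)$ of the true global iterate and that all iterates remain in $\prod_i\Delta_{\kappa_i}$. A secondary delicacy is keeping the oracle-regret term \emph{undiscounted}: the geometric factors $(1-\tfrac{\alpha}{K})^{K-k}\in[0,1]$ must be bounded by one on the regret contribution (legitimate because, per \cref{ass:2}, each oracle's cumulative term is governed by a single nonnegative sublinear bound $\mathcal O(\sqrt{V_T T})$) while simultaneously being summed to $K/\alpha$ on the nonnegative smoothness, consensus, and batch errors. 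The $(\gamma,\beta)$ case is then a verbatim rerun with $\phi(\gamma,\beta)$ replacing $\alpha$, the fixed point $\frac{\gamma^2}{\phi(\gamma,\beta)}f_t(\mathcal A_t^*)$ in place of $f_t(\mathcal A_t^*)$, and the coefficient $\gamma^2$ attached to $f_t(\mathcal A_t^*)$.
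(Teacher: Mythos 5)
Your proposal follows essentially the same route as the paper's proof: the same smoothness-based per-step progress bound with the $1/K$ Frank--Wolfe step (Lemma~\ref{lemma:E3}), the same four-way decomposition of the ascent direction into the optimal-direction term (bounded via part \textbf{4)} of Theorem~\ref{thm1}), the consensus error (Lemma~\ref{lemma:E4}), the $L$-batch estimation error, and the residual oracle-regret term, followed by the identical $(1-\rho/K)$-contraction, unrolling with $(1-\rho/K)^K\le e^{-\rho}$ and the geometric sum $K/\rho$ on the error terms. The only substantive deviation is that you evaluate the gradient at the later iterate $\uppi^{(k)}(t)$ so that $\mathbf{d}_i^{(k)}(t)$ is unbiased for exactly that point (matching Lines 17--18 of Algorithm~\ref{alg:MPL}), whereas the paper writes the recursion at $\uppi^{(k-1)}(t)$; this is a minor bookkeeping variant, not a different argument, and your explicit handling of the undiscounted regret terms makes the same silent step in the paper's unrolling more honest.
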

\begin{proof}
\begin{equation*}
\begin{aligned}
&F_{t}(\uppi_{1,1}^{(k)}(t),\dots,\uppi_{n,n}^{(k)}(t))-F_{t}(\uppi_{1,1}^{(k-1)}(t),\dots,\uppi_{n,n}^{(k-1)}(t))\\
&\ge\left\langle\nabla F_{t}(\uppi_{1,1}^{(k-1)}(t),\dots,\uppi_{n,n}^{(k-1)}(t)), (\uppi_{1,1}^{(k)}(t)-\uppi_{1,1}^{(k-1)}(t),\dots,(\uppi_{1,1}^{(k)}(t)-\uppi_{n,n}^{(k-1)}(t))\right\rangle\\
&-\frac{M(\sum_{i=1}^{n}\kappa_{i})}{2}\left\|\left(\uppi_{1,1}^{(k)}(t)-\uppi_{1,1}^{(k-1)}(t),\dots,(\uppi_{1,1}^{(k)}(t)-\uppi_{n,n}^{(k-1)}(t)\right)\right\|_{2}^{2}\\
&\ge\frac{1}{K}\left\langle\nabla F_{t}(\uppi_{1,1}^{(k-1)}(t),\dots,\uppi_{n,n}^{(k-1)}(t)),(\mathbf{v}_{1}^{(k)}(t),\dots,\mathbf{v}_{n}^{(k)}(t))\right\rangle-\frac{nM(\sum_{i=1}^{n}\kappa_{i})}{2K^{2}}\\
&=\frac{1}{K}\left\langle\nabla F_{t}(\uppi_{1,1}^{(k-1)}(t),\dots,\uppi_{n,n}^{(k-1)}(t)),\one_{\mathcal{A}_{t}^{*}}\right\rangle-\frac{nM(\sum_{i=1}^{n}\kappa_{i})}{2K^{2}}\\
&+\frac{1}{K}\left\langle\nabla F_{t}(\uppi_{1,1}^{(k-1)}(t),\dots,\uppi_{n,n}^{(k-1)}(t)),(\mathbf{v}_{1}^{(k)}(t),\dots,\mathbf{v}_{n}^{(k)}(t))-\one_{\mathcal{A}_{t}^{*}}\right\rangle\\
&=\frac{1}{K}\underbrace{\left\langle\nabla F_{t}(\uppi_{1,1}^{(k-1)}(t),\dots,\uppi_{n,n}^{(k-1)}(t)),\one_{\mathcal{A}_{t}^{*}}\right\rangle}_{\text{\textcircled{1}}}-\frac{nM(\sum_{i=1}^{n}\kappa_{i})}{2K^{2}}\\
&+\frac{1}{K}\underbrace{\left\langle\nabla F_{t}(\uppi_{1,1}^{(k-1)}(t),\dots,\uppi_{n,n}^{(k-1)}(t))-\sum_{i\in\N}\Big[\nabla F_{t}(\uppi_{i,1}^{(k-1)}(t),\dots,\uppi_{i,n}^{(k-1)}(t))\Big]_{\uppi_{i}},(\mathbf{v}_{1}^{(k)}(t),\dots,\mathbf{v}_{n}^{(k)}(t))-\one_{\mathcal{A}_{t}^{*}}\right\rangle}_{\text{\textcircled{2}}}\\
&+\frac{1}{K}\underbrace{\left\langle\sum_{i\in\N}\Big[\nabla F_{t}(\uppi_{i,1}^{(k-1)}(t),\dots,\uppi_{i,n}^{(k-1)}(t))\Big]_{\uppi_{i}}-\left(\mathbf{d}_{1}^{(k)}(t),\dots,\mathbf{d}_{n}^{(k)}(t)\right),(\mathbf{v}_{1}^{(k)}(t),\dots,\mathbf{v}_{n}^{(k)}(t))-\one_{\mathcal{A}_{t}^{*}}\right\rangle}_{\text{\textcircled{3}}}\\
&+\frac{1}{K}\left\langle\left(\mathbf{d}_{1}^{(k)}(t),\dots,\mathbf{d}_{n}^{(k)}(t)\right),(\mathbf{v}_{1}^{(k)}(t),\dots,\mathbf{v}_{n}^{(k)}(t))-\one_{\mathcal{A}_{t}^{*}}\right\rangle,
\end{aligned}
\end{equation*} where the first inequality follows the \cref{lemma:D4}, namely, our proposed policy-based continuous extension $F_{t}$ is $\left(M\sum_{i=1}^{n}\kappa_{i}\right)$-smooth andthe symbol $\Big[\nabla  F_{t}^{a}(\uppi_{i,1}(t),\dots,\uppi_{i,1}(t))\Big]_{\uppi_{i}}$ is the projection over the policy $\uppi_{i}$, namely, $\Big[\nabla  F_{t}^{a}(\uppi_{i,1}(t),\dots,\uppi_{i,1}(t))\Big]_{\uppi_{i}}$ represents a $\left(\sum_{i=1}^{n}\kappa_{i}\right)$-dimensional vector that only keeps the first-order derivative at variable $\pi_{i,m},\forall m\in[\kappa_{i}]$ and set other coordinates to $0$, that is to say, 
\begin{equation*}
\Big[\nabla  F_{t}^{a}(\uppi_{i,1}(t),\dots,\uppi_{i,1}(t))\Big]_{\uppi_{i}}\triangleq\Bigg(\dots,0,\dots,\underbrace{\frac{\partial F_{t}}{\partial\pi_{i,1}}(\x),\dots,\frac{\partial F_{t}}{\partial\pi_{i,\kappa_{i}}}(\x)}_{\kappa_{i}},\dots,0,\dots\Bigg),
\end{equation*} where $\x\triangleq(\uppi_{i,1}(t),\dots,\uppi_{i,1}(t))$.

For \textcircled{1}, according to $\textbf{4)}$ of \cref{thm1}, when $f_{t}$ is monotone $\alpha$-weakly DR-submodular, 
\begin{equation*}
\left\langle\nabla F_{t}(\uppi_{1,1}^{(k-1)}(t),\dots,\uppi_{n,n}^{(k-1)}(t)),\one_{\mathcal{A}_{t}^{*}}\right\rangle\ge\alpha\left(f_{t}(\mathcal{A}_{t}^{*})-F_{t}(\uppi_{1,1}^{(k-1)}(t),\dots,\uppi_{n,n}^{(k-1)}(t))\right);
\end{equation*}
As for the setting that $f_{t}$ is monotone $(\gamma,\beta)$-weakly submodular, we also have 
\begin{equation*}
	\left\langle\nabla F_{t}(\uppi_{1,1}^{(k-1)}(t),\dots,\uppi_{n,n}^{(k-1)}(t)),\one_{\mathcal{A}_{t}^{*}}\right\rangle\ge\gamma^{2}f_{t}(\mathcal{A}_{t}^{*})-(\beta(1-\gamma)+\gamma^{2})F_{t}\left(\uppi_{1,1}^{(k-1)}(t),\dots,\uppi_{n,n}^{(k-1)}(t)\right).
\end{equation*}

For \textcircled{2}, according to \cref{lemma:E4}, we have
\begin{equation*}
\begin{aligned}
	&\Big|\left\langle\nabla F_{t}(\uppi_{1,1}^{(k-1)}(t),\dots,\uppi_{n,n}^{(k-1)}(t))-\sum_{i\in\N}\Big[\nabla F_{t}(\uppi_{i,1}^{(k-1)}(t),\dots,\uppi_{i,n}^{(k-1)}(t))\Big]_{\uppi_{i}},(\mathbf{v}_{1}^{(k)}(t),\dots,\mathbf{v}_{n}^{(k)}(t))-\one_{\mathcal{A}_{t}^{*}}\right\rangle\Big|\\
&=\Big|\left\langle\sum_{i\in\N}\Big[\nabla F_{t}(\uppi_{1,1}^{(k-1)}(t),\dots,\uppi_{n,n}^{(k-1)}(t))\Big]_{\uppi_{i}}-\sum_{i\in\N}\Big[\nabla F_{t}(\uppi_{i,1}^{(k-1)}(t),\dots,\uppi_{i,n}^{(k-1)}(t))\Big]_{\uppi_{i}},(\mathbf{v}_{1}^{(k)}(t),\dots,\mathbf{v}_{n}^{(k)}(t))-\one_{\mathcal{A}_{t}^{*}}\right\rangle\Big|\\
&\le\left\|\sum_{i\in\N}\Big[\nabla F_{t}(\uppi_{1,1}^{(k-1)}(t),\dots,\uppi_{n,n}^{(k-1)}(t))\Big]_{\uppi_{i}}-\sum_{i\in\N}\Big[\nabla F_{t}(\uppi_{i,1}^{(k-1)}(t),\dots,\uppi_{i,n}^{(k-1)}(t))\Big]_{\uppi_{i}}\right\|_{2}\left\|(\mathbf{v}_{1}^{(k)}(t),\dots,\mathbf{v}_{n}^{(k)}(t))-\one_{\mathcal{A}_{t}^{*}}\right\|_{2}\\
&\le2n\sum_{i\in\N}\left\|\Big[\nabla F_{t}(\uppi_{1,1}^{(k-1)}(t),\dots,\uppi_{n,n}^{(k-1)}(t))\Big]_{\uppi_{i}}-\Big[\nabla F_{t}(\uppi_{i,1}^{(k-1)}(t),\dots,\uppi_{i,n}^{(k-1)}(t))\Big]_{\uppi_{i}}\right\|_{2}\\
&\le2n\sum_{i\in\N}\left\|\nabla F_{t}(\uppi_{1,1}^{(k-1)}(t),\dots,\uppi_{n,n}^{(k-1)}(t))-\nabla F_{t}(\uppi_{i,1}^{(k-1)}(t),\dots,\uppi_{i,n}^{(k-1)}(t))\right\|_{2}\le\frac{2\sqrt{\kappa}n^{2}Md(G)}{K}.
\end{aligned}
\end{equation*}

As for \textcircled{3},
\begin{equation*}
	\begin{aligned}
		&\E\Big|\left\langle\sum_{i\in\N}\Big[\nabla F_{t}(\uppi_{i,1}^{(k-1)}(t),\dots,\uppi_{i,n}^{(k-1)}(t))\Big]_{\uppi_{i}}-\left(\mathbf{d}_{1}^{(k)}(t),\dots,\mathbf{d}_{n}^{(k)}(t)\right),(\mathbf{v}_{1}^{(k)}(t),\dots,\mathbf{v}_{n}^{(k)}(t))-\one_{\mathcal{A}_{t}^{*}}\right\rangle\Big|\\
		&\le\E\left(\left\|\sum_{i\in\N}\Big[\nabla F_{t}(\uppi_{i,1}^{(k-1)}(t),\dots,\uppi_{i,n}^{(k-1)}(t))\Big]_{\uppi_{i}}-\left(\mathbf{d}_{1}^{(k)}(t),\dots,\mathbf{d}_{n}^{(k)}(t)\right)\right\|_{2}\left\|(\mathbf{v}_{1}^{(k)}(t),\dots,\mathbf{v}_{n}^{(k)}(t))-\one_{\mathcal{A}_{t}^{*}}\right\|_{2}\right)\\
		&\le2n\E\left(\left\|\sum_{i\in\N}\Big[\nabla F_{t}(\uppi_{i,1}^{(k-1)}(t),\dots,\uppi_{i,n}^{(k-1)}(t))\Big]_{\uppi_{i}}-\left(\mathbf{d}_{1}^{(k)}(t),\dots,\mathbf{d}_{n}^{(k)}(t)\right)\right\|_{2}\right)\\
	    &\le2n\sqrt{\E\left(\frac{1}{L}\left\|\left(\mathbf{g}_{1}^{(k)}(t),\dots,\mathbf{g}_{n}^{(k)}(t)\right)\right\|_{2}\right)}\\
		&\le2n\sqrt{\frac{\kappa M^{2}}{L}},
	\end{aligned}
\end{equation*} where the third inequality follows Lines 15-20, namely, $\left(\mathbf{d}_{1}^{(k)}(t),\dots,\mathbf{d}_{n}^{(k)}(t)\right)$ is $L$-batch stochastic gradient for $\sum_{i\in\N}\Big[\nabla F_{t}(\uppi_{i,1}^{(k-1)}(t),\dots,\uppi_{i,n}^{(k-1)}(t))\Big]_{\uppi_{i}}$ and the final inequality follows from \cref{remak:D2} and $\kappa=\sum_{i=1}^{n}\kappa_{i}$.

Therefore, when $f_{t}$ is monotone  $\alpha$-weakly DR-submodular, we have that
\begin{equation*}
\begin{aligned}
&\E\left(F_{t}(\uppi_{1,1}^{(k)}(t),\dots,\uppi_{n,n}^{(k)}(t))-F_{t}(\uppi_{1,1}^{(k-1)}(t),\dots,\uppi_{n,n}^{(k-1)}(t))\right)\\
&\ge\frac{\alpha}{K}\E\left(f_{t}(\mathcal{A}_{t}^{*})-F_{t}(\uppi_{1,1}^{(k-1)}(t),\dots,\uppi_{n,n}^{(k-1)}(t))\right)-\frac{nM(\sum_{i=1}^{n}\kappa_{i})}{2K^{2}}-\frac{2\sqrt{\kappa}n^{2}Md(G)}{K^{2}}-\frac{2nM}{K}\sqrt{\frac{\kappa}{L}}\\
&+\frac{1}{K}\left\langle\left(\mathbf{d}_{1}^{(k)}(t),\dots,\mathbf{d}_{n}^{(k)}(t)\right),(\mathbf{v}_{1}^{(k)}(t),\dots,\mathbf{v}_{n}^{(k)}(t))-\one_{\mathcal{A}_{t}^{*}}\right\rangle.
\end{aligned}	
\end{equation*}

Then, we can have that
\begin{equation*}
\begin{aligned}
	&\E\left(f_{t}(\mathcal{A}_{t}^{*})-F_{t}(\uppi_{1,1}^{(k)}(t),\dots,\uppi_{n,n}^{(k)}(t))\right)\\
	&\le\left(1-\frac{\alpha}{K}\right)\E\left(f_{t}(\mathcal{A}_{t}^{*})-F_{t}(\uppi_{1,1}^{(k-1)}(t),\dots,\uppi_{n,n}^{(k-1)}(t))\right)+\frac{nM(\sum_{i=1}^{n}\kappa_{i})}{2K^{2}}+\frac{2\sqrt{\kappa}n^{2}Md(G)}{K^{2}}+\frac{2nM}{K}\sqrt{\frac{\kappa}{L}}\\
	&-\frac{1}{K}\left\langle\left(\mathbf{d}_{1}^{(k)}(t),\dots,\mathbf{d}_{n}^{(k)}(t)\right),(\mathbf{v}_{1}^{(k)}(t),\dots,\mathbf{v}_{n}^{(k)}(t))-\one_{\mathcal{A}_{t}^{*}}\right\rangle.
\end{aligned}	
\end{equation*}
As a result, 
\begin{equation*}
	\begin{aligned}
		&\E\left(f_{t}(\mathcal{A}_{t}^{*})-F_{t}(\uppi_{1,1}^{(K)}(t),\dots,\uppi_{n,n}^{(k)}(t))\right)\\
		&\le\left(1-\frac{\alpha}{K}\right)^{K}\E\left(f_{t}(\mathcal{A}_{t}^{*})-F_{t}(\uppi_{1,1}^{(0)}(t),\dots,\uppi_{n,n}^{(k-1)}(t))\right)\\
		&-\frac{1}{K}\sum_{k=1}^{K}\left\langle\left(\mathbf{d}_{1}^{(k)}(t),\dots,\mathbf{d}_{n}^{(k)}(t)\right),(\mathbf{v}_{1}^{(k)}(t),\dots,\mathbf{v}_{n}^{(k)}(t))-\one_{\mathcal{A}_{t}^{*}}\right\rangle\\
		&+\sum_{m=1}^{K}\left(1-\frac{\alpha}{K}\right)^{K-m}\left(\frac{nM(\sum_{i=1}^{n}\kappa_{i})}{2K^{2}}+\frac{2\sqrt{\kappa}n^{2}Md(G)}{K^{2}}+\frac{2nM}{K}\sqrt{\frac{\kappa}{L}}\right)\\
		&\le\left(1-\frac{\alpha}{K}\right)^{K}\E\left(f_{t}(\mathcal{A}_{t}^{*})\right)+\frac{nM(\sum_{i=1}^{n}\kappa_{i})}{2\alpha K}+\frac{2\sqrt{\kappa}n^{2}Md(G)}{\alpha K}+\frac{2nM}{\alpha}\sqrt{\frac{\kappa}{L}}\\
		&-\frac{1}{K}\sum_{k=1}^{K}\left\langle\left(\mathbf{d}_{1}^{(k)}(t),\dots,\mathbf{d}_{n}^{(k)}(t)\right),(\mathbf{v}_{1}^{(k)}(t),\dots,\mathbf{v}_{n}^{(k)}(t))-\one_{\mathcal{A}_{t}^{*}}\right\rangle\\
		&\le e^{-\alpha}\E\left(f_{t}(\mathcal{A}_{t}^{*})\right)+\frac{nM(\sum_{i=1}^{n}\kappa_{i})}{2\alpha K}+\frac{2\sqrt{\kappa}n^{2}Md(G)}{\alpha K}+\frac{2nM}{\alpha}\sqrt{\frac{\kappa}{L}}\\
		&-\frac{1}{K}\sum_{k=1}^{K}\left\langle\left(\mathbf{d}_{1}^{(k)}(t),\dots,\mathbf{d}_{n}^{(k)}(t)\right),(\mathbf{v}_{1}^{(k)}(t),\dots,\mathbf{v}_{n}^{(k)}(t))-\one_{\mathcal{A}_{t}^{*}}\right\rangle,
	\end{aligned}	
\end{equation*} where the second inequality follows from $\sum_{m=1}^{K}\left(1-\frac{\alpha}{K}\right)^{K-m}\le\frac{K}{\alpha},F_{t}(\mathbf{0})=0$ and the final inequality from from $(1-\frac{\alpha}{K})^{K}\le e^{-\alpha}$ when $K\ge3$.

So we can get the following result, 
\begin{equation*}
\begin{aligned}
&(1-e^{-\alpha})\sum_{t=1}^{T}f_{t}(\mathcal{A}_{t}^{*})-\sum_{t=1}^{T}\E\left(F_{t}(\uppi_{1,1}^{(K)}(t),\dots,\uppi_{n,n}^{(k)}(t))\right)\\&\le\left(\frac{nM(\sum_{i=1}^{n}\kappa_{i})}{2\alpha }+\frac{2\sqrt{\kappa}n^{2}Md(G)}{\alpha }\right)\frac{T}{K}+\frac{2nM\sqrt{\kappa}}{\alpha}\frac{T}{\sqrt{L}}\\
&-\frac{1}{K}\sum_{k=1}^{K}\sum_{i=1}^{n}\sum_{t=1}^{T}\left\langle\mathbf{d}_{i}^{(k)}(t),\mathbf{v}_{i}^{(k)}(t)-\one_{(\mathcal{A}_{t}^{*}\cap\V_{i})}\right\rangle
\end{aligned}
\end{equation*} where we slightly abuse the notation $\one_{(\mathcal{A}_{t}^{*}\cap\V_{i})}$ to denote $\kappa_{i}$-dimensional indicator vector over $(\mathcal{A}_{t}^{*}\cap\V_{i})$. Note that $\one_{\mathcal{A}_{t}^{*}}$ is a $\left(\sum_{i=1}^{n}\kappa_{i}\right)$-dimensional indicator vector over $\mathcal{A}_{t}^{*}$. 

Similarly, when $f_{t}$ is monotone $(\gamma,\beta)$-weakly submodular, we also have 
\begin{equation*}
	\begin{aligned}
		&\E\left(F_{t}(\uppi_{1,1}^{(k)}(t),\dots,\uppi_{n,n}^{(k)}(t))-F_{t}(\uppi_{1,1}^{(k-1)}(t),\dots,\uppi_{n,n}^{(k-1)}(t))\right)\\
		&\ge\frac{1}{K}\E\left(\gamma^{2}f_{t}(\mathcal{A}_{t}^{*})-\phi(\gamma,\beta)F_{t}(\uppi_{1,1}^{(k-1)}(t),\dots,\uppi_{n,n}^{(k-1)}(t))\right)-\frac{nM(\sum_{i=1}^{n}\kappa_{i})}{2K^{2}}-\frac{2\sqrt{\kappa}n^{2}Md(G)}{K^{2}}-\frac{2nM}{K}\sqrt{\frac{\kappa}{L}}\\
		&+\frac{1}{K}\left\langle\left(\mathbf{d}_{1}^{(k)}(t),\dots,\mathbf{d}_{n}^{(k)}(t)\right),(\mathbf{v}_{1}^{(k)}(t),\dots,\mathbf{v}_{n}^{(k)}(t))-\one_{\mathcal{A}_{t}^{*}}\right\rangle,
	\end{aligned}	
\end{equation*} where $\phi(\gamma,\beta)=\beta(1-\gamma)+\gamma^{2}$.

Then, we have
\begin{equation*}
	\begin{aligned}
		&\E\left(\gamma^{2}f_{t}(\mathcal{A}_{t}^{*})-\phi(\gamma,\beta)F_{t}(\uppi_{1,1}^{(k)}(t),\dots,\uppi_{n,n}^{(k)}(t))\right)\\
		&\le\left(1-\frac{\phi(\gamma,\beta)}{K}\right)\E\left(\gamma^{2}f_{t}(\mathcal{A}_{t}^{*})-\phi(\gamma,\beta)F_{t}(\uppi_{1,1}^{(k-1)}(t),\dots,\uppi_{n,n}^{(k-1)}(t))\right)\\
		&+\phi(\gamma,\beta)\left(\frac{nM(\sum_{i=1}^{n}\kappa_{i})}{2K^{2}}+\frac{2\sqrt{\kappa}n^{2}Md(G)}{K^{2}}+\frac{2nM}{K}\sqrt{\frac{\kappa}{L}}\right)\\
		&-\frac{\phi(\gamma,\beta)}{K}\left\langle\left(\mathbf{d}_{1}^{(k)}(t),\dots,\mathbf{d}_{n}^{(k)}(t)\right),(\mathbf{v}_{1}^{(k)}(t),\dots,\mathbf{v}_{n}^{(k)}(t))-\one_{\mathcal{A}_{t}^{*}}\right\rangle.
	\end{aligned}	
\end{equation*}

As a result, we can get the following result, 
\begin{equation*}
	\begin{aligned}
		&\frac{\gamma^{2}(1-e^{-\phi(\gamma,\beta)})}{\phi(\gamma,\beta)}\sum_{t=1}^{T}f_{t}(\mathcal{A}_{t}^{*})-\sum_{t=1}^{T}\E\left(F_{t}(\uppi_{1,1}^{(K)}(t),\dots,\uppi_{n,n}^{(k)}(t))\right)\\&\le\left(\frac{nM(\sum_{i=1}^{n}\kappa_{i})}{2\phi(\gamma,\beta) }+\frac{2\sqrt{\kappa}n^{2}Md(G)}{\phi(\gamma,\beta) }\right)\frac{T}{K}+\frac{2nM\sqrt{\kappa}}{\phi(\gamma,\beta)}\frac{T}{\sqrt{L}}\\
		&-\frac{1}{K}\sum_{k=1}^{K}\sum_{i=1}^{n}\sum_{t=1}^{T}\left\langle\mathbf{d}_{i}^{(k)}(t),\mathbf{v}_{i}^{(k)}(t)-\one_{(\mathcal{A}_{t}^{*}\cap\V_{i})}\right\rangle,
	\end{aligned}
\end{equation*}
\end{proof}
As a result, when  $K=\sqrt{T}$ and $L=T$, if \cref{ass:2} holds, that is, each linear maximization oracle $Q_{i}^{(k)}$ can achieve the a regret bound of $\mathcal{O}(\sqrt{V_{T}T})$ where $V_{T}$ is the variation of any feasible path $(\mathbf{u}_1,\dots,\mathbf{u}_T)$ where $\mathbf{u}_{t}\in\Delta_{\kappa_{i}},\forall t\in[T]$, that is, $V_{T}\triangleq\sum_{t=2}^{T}\|\mathbf{u}_{t}-\mathbf{u}_{t-1}\|_{2}$ for any path $(\mathbf{u}_1,\dots,\mathbf{u}_T)\in\prod_{t=1}^{T}\Delta_{\kappa_{i}}$, then we can have the following results:

\textbf{i)}: when $f_{t}$ is a monotone $\alpha$-weakly submodular function, \cref{alg:MPL} achieves:
\begin{equation*}
	\begin{aligned}
		&(1-e^{-\alpha})\sum_{t=1}^{T}f_{t}(\mathcal{A}_{t}^{*})-\sum_{t=1}^{T}\E\left(F_{t}(\uppi_{1,1}^{(K)}(t),\dots,\uppi_{n,n}^{(k)}(t))\right)\\&\le\left(\frac{nM(\sum_{i=1}^{n}\kappa_{i})}{2\alpha }+\frac{2\sqrt{\kappa}n^{2}Md(G)}{\alpha }\right)\frac{T}{K}+\frac{2nM\sqrt{\kappa}}{\alpha}\frac{T}{\sqrt{L}}\\
		&+\frac{1}{K}\sum_{k=1}^{K}\sum_{i=1}^{n}\mathcal{R}^{Q_{i}^{(k)}}\left(\one_{(\mathcal{A}_{1}^{*}\cap\V_{i})},\dots,\one_{(\mathcal{A}_{T}^{*}\cap\V_{i})}\right)\\
		&\le\frac{1}{K}\sum_{k=1}^{K}\sum_{i=1}^{n}\mathcal{O}\left(\sqrt{T\sum_{t=1}^{T-1}\left\|\one_{(\mathcal{A}_{t+1}^{*}\cap\V_{i})}-\one_{(\mathcal{A}_{t}^{*}\cap\V_{i})}\right\|_{2}}\right)+\mathcal{O}\left(d(G)\sqrt{T}\right)\\
		&\le\sum_{i=1}^{n}\mathcal{O}\left(\sqrt{T\sum_{t=1}^{T-1}\left\|\one_{\mathcal{A}_{t+1}^{*}}-\one_{\mathcal{A}_{t}^{*}}\right\|_{2}}\right)+\mathcal{O}\left(d(G)\sqrt{T}\right)\\
		&\le\mathcal{O}\left(d(G)\sqrt{T\sum_{t=1}^{T-1}\left\|\one_{\mathcal{A}_{t+1}^{*}}-\one_{\mathcal{A}_{t}^{*}}\right\|_{2}}\right)=\mathcal{O}\left(d(G)\sqrt{T\mathcal{P}_{T}}\right),
	\end{aligned}
\end{equation*} where the third inequality follows from the concavity of $\sqrt{\cdot}$ function and the final equality comes from $\mathcal{P}_{T}\triangleq\sum_{t=1}^{T}|\mathcal{A}_{t+1}^{*}\triangle\mathcal{A}_{t}^{*}|\triangleq\sum_{t=1}^{T-1}\|\one_{\mathcal{A}_{t}^{*}}-\one_{\mathcal{A}_{t+1}^{*})}\|_{1}\triangleq\sum_{t=1}^{T-1}\|\one_{\mathcal{A}_{t}^{*}}-\one_{\mathcal{A}_{t+1}^{*}}\|_{2}$.

\textbf{ii)}: Similarly, when $f_{t}$ is a monotone $(\gamma,\beta)$-weakly submodular:
\begin{equation*}
	\begin{aligned}
		&\frac{\gamma^{2}(1-e^{-\phi(\gamma,\beta)})}{\phi(\gamma,\beta)}\sum_{t=1}^{T}f_{t}(\mathcal{A}_{t}^{*})-\sum_{t=1}^{T}\E\left(F_{t}(\uppi_{1,1}^{(K)}(t),\dots,\uppi_{n,n}^{(k)}(t))\right)\\
		&\le\mathcal{O}\left(d(G)\sqrt{T\sum_{t=1}^{T-1}\left\|\one_{\mathcal{A}_{t+1}^{*}}-\one_{\mathcal{A}_{t}^{*}}\right\|_{2}}\right)=\mathcal{O}\left(d(G)\sqrt{T\mathcal{P}_{T}}\right).
	\end{aligned}
\end{equation*}  
\section{Limitation and Broader Impact}\label{sec:limitation}
In this work, in order to eliminate the dependence on the unknown DR ratio and submodularity ratio in our proposed \texttt{MA-SPL} algorithm,  we introduce a \emph{parameter-free} online algorithm named \texttt{MA-MPL} for the MA-OC problem. However, this new \texttt{MA-MPL} algorithm typically  incurs greater communication complexity, as shown in \cref{tab:Comparison}. Notably, recent studies~\citep{liao2023improved,zhang2023communication} employed a blocking procedure from \citep{zhang2019online} to reduce the number of communication in the decentralized online Frank-Wolfe algorithm~\citep{zhu2021projection}. Since our \texttt{MA-SPL} algorithm also can be viewed a variant of decentralized online Frank-Wolfe algorithm~\citep{zhu2021projection}, we believe that this  blocking strategy is a promising technique to help reduce the communication complexity of our proposed \texttt{MA-MPL} algorithm. We plan to explore this in future work. Furthermore, this work focuses on theoretically exploring MA-OC problem with $(\gamma,\beta)$-weakly submodular and $\alpha$-weakly DR-submodular objectives. So we do not foresee any form of negative social impact induced by our work.
\section{NeurIPS Paper Checklist}
The checklist is designed to encourage best practices for responsible machine learning research, addressing issues of reproducibility, transparency, research ethics, and societal impact. Do not remove the checklist: {\bf The papers not including the checklist will be desk rejected.} The checklist should follow the references and follow the (optional) supplemental material.  The checklist does NOT count towards the page
limit. 

Please read the checklist guidelines carefully for information on how to answer these questions. For each question in the checklist:
\begin{itemize}
	\item You should answer \answerYes{}, \answerNo{}, or \answerNA{}.
	\item \answerNA{} means either that the question is Not Applicable for that particular paper or the relevant information is Not Available.
	\item Please provide a short (1–2 sentence) justification right after your answer (even for NA). 
\end{itemize}

{\bf The checklist answers are an integral part of your paper submission.} They are visible to the reviewers, area chairs, senior area chairs, and ethics reviewers. You will be asked to also include it (after eventual revisions) with the final version of your paper, and its final version will be published with the paper.

The reviewers of your paper will be asked to use the checklist as one of the factors in their evaluation. While "\answerYes{}" is generally preferable to "\answerNo{}", it is perfectly acceptable to answer "\answerNo{}" provided a proper justification is given (e.g., "error bars are not reported because it would be too computationally expensive" or "we were unable to find the license for the dataset we used"). In general, answering "\answerNo{}" or "\answerNA{}" is not grounds for rejection. While the questions are phrased in a binary way, we acknowledge that the true answer is often more nuanced, so please just use your best judgment and write a justification to elaborate. All supporting evidence can appear either in the main paper or the supplemental material, provided in appendix. If you answer \answerYes{} to a question, in the justification please point to the section(s) where related material for the question can be found.

IMPORTANT, please:
\begin{itemize}
	\item {\bf Delete this instruction block, but keep the section heading ``NeurIPS Paper Checklist"},
	\item  {\bf Keep the checklist subsection headings, questions/answers and guidelines below.}
	\item {\bf Do not modify the questions and only use the provided macros for your answers}.
\end{itemize}


\begin{enumerate}
	
	\item {\bf Claims}
	\item[] Question: Do the main claims made in the abstract and introduction accurately reflect the paper's contributions and scope?
	\item[] Answer: \answerYes{} 
	\item[] Justification: The abstract and introduction clearly state the paper’s contribution and scope.
	\item[] Guidelines:
	\begin{itemize}
		\item The answer NA means that the abstract and introduction do not include the claims made in the paper.
		\item The abstract and/or introduction should clearly state the claims made, including the contributions made in the paper and important assumptions and limitations. A No or NA answer to this question will not be perceived well by the reviewers. 
		\item The claims made should match theoretical and experimental results, and reflect how much the results can be expected to generalize to other settings. 
		\item It is fine to include aspirational goals as motivation as long as it is clear that these goals are not attained by the paper. 
	\end{itemize}
	
	\item {\bf Limitations}
	\item[] Question: Does the paper discuss the limitations of the work performed by the authors?
	\item[] Answer: \answerYes{} 
	\item[] Justification:  Some limitations of our proposed \texttt{MA-MPL} algorithm have been discussed in the \cref{sec:limitation}.
	\item[] Guidelines:
	\begin{itemize}
		\item The answer NA means that the paper has no limitation while the answer No means that the paper has limitations, but those are not discussed in the paper. 
		\item The authors are encouraged to create a separate "Limitations" section in their paper.
		\item The paper should point out any strong assumptions and how robust the results are to violations of these assumptions (e.g., independence assumptions, noiseless settings, model well-specification, asymptotic approximations only holding locally). The authors should reflect on how these assumptions might be violated in practice and what the implications would be.
		\item The authors should reflect on the scope of the claims made, e.g., if the approach was only tested on a few datasets or with a few runs. In general, empirical results often depend on implicit assumptions, which should be articulated.
		\item The authors should reflect on the factors that influence the performance of the approach. For example, a facial recognition algorithm may perform poorly when image resolution is low or images are taken in low lighting. Or a speech-to-text system might not be used reliably to provide closed captions for online lectures because it fails to handle technical jargon.
		\item The authors should discuss the computational efficiency of the proposed algorithms and how they scale with dataset size.
		\item If applicable, the authors should discuss possible limitations of their approach to address problems of privacy and fairness.
		\item While the authors might fear that complete honesty about limitations might be used by reviewers as grounds for rejection, a worse outcome might be that reviewers discover limitations that aren't acknowledged in the paper. The authors should use their best judgment and recognize that individual actions in favor of transparency play an important role in developing norms that preserve the integrity of the community. Reviewers will be specifically instructed to not penalize honesty concerning limitations.
	\end{itemize}
	
	\item {\bf Theory assumptions and proofs}
	\item[] Question: For each theoretical result, does the paper provide the full set of assumptions and a complete (and correct) proof?
	\item[] Answer: \answerYes{} 
	\item[] Justification: We have clearly stated the required assumptions and an accompanying complete proof in the appendix for each theory result.
	\item[] Guidelines:
	\begin{itemize}
		\item The answer NA means that the paper does not include theoretical results. 
		\item All the theorems, formulas, and proofs in the paper should be numbered and cross-referenced.
		\item All assumptions should be clearly stated or referenced in the statement of any theorems.
		\item The proofs can either appear in the main paper or the supplemental material, but if they appear in the supplemental material, the authors are encouraged to provide a short proof sketch to provide intuition. 
		\item Inversely, any informal proof provided in the core of the paper should be complemented by formal proofs provided in appendix or supplemental material.
		\item Theorems and Lemmas that the proof relies upon should be properly referenced. 
	\end{itemize}
	
	\item {\bf Experimental result reproducibility}
	\item[] Question: Does the paper fully disclose all the information needed to reproduce the main experimental results of the paper to the extent that it affects the main claims and/or conclusions of the paper (regardless of whether the code and data are provided or not)?
	\item[] Answer: \answerYes{} 
	\item[] Justification: We present the detailed experiment setups and results in \cref{appendix:experiments}.
	\item[] Guidelines:
	\begin{itemize}
		\item The answer NA means that the paper does not include experiments.
		\item If the paper includes experiments, a No answer to this question will not be perceived well by the reviewers: Making the paper reproducible is important, regardless of whether the code and data are provided or not.
		\item If the contribution is a dataset and/or model, the authors should describe the steps taken to make their results reproducible or verifiable. 
		\item Depending on the contribution, reproducibility can be accomplished in various ways. For example, if the contribution is a novel architecture, describing the architecture fully might suffice, or if the contribution is a specific model and empirical evaluation, it may be necessary to either make it possible for others to replicate the model with the same dataset, or provide access to the model. In general. releasing code and data is often one good way to accomplish this, but reproducibility can also be provided via detailed instructions for how to replicate the results, access to a hosted model (e.g., in the case of a large language model), releasing of a model checkpoint, or other means that are appropriate to the research performed.
		\item While NeurIPS does not require releasing code, the conference does require all submissions to provide some reasonable avenue for reproducibility, which may depend on the nature of the contribution. For example
		\begin{enumerate}
			\item If the contribution is primarily a new algorithm, the paper should make it clear how to reproduce that algorithm.
			\item If the contribution is primarily a new model architecture, the paper should describe the architecture clearly and fully.
			\item If the contribution is a new model (e.g., a large language model), then there should either be a way to access this model for reproducing the results or a way to reproduce the model (e.g., with an open-source dataset or instructions for how to construct the dataset).
			\item We recognize that reproducibility may be tricky in some cases, in which case authors are welcome to describe the particular way they provide for reproducibility. In the case of closed-source models, it may be that access to the model is limited in some way (e.g., to registered users), but it should be possible for other researchers to have some path to reproducing or verifying the results.
		\end{enumerate}
	\end{itemize}

	\item {\bf Open access to data and code}
	\item[] Question: Does the paper provide open access to the data and code, with sufficient instructions to faithfully reproduce the main experimental results, as described in supplemental material?
	\item[] Answer: \answerYes{} 
	\item[] Justification: We reveal the codes about our experiments in supplemental materials.
	\item[] Guidelines:
	\begin{itemize}
		\item The answer NA means that paper does not include experiments requiring code.
		\item Please see the NeurIPS code and data submission guidelines (\url{https://nips.cc/public/guides/CodeSubmissionPolicy}) for more details.
		\item While we encourage the release of code and data, we understand that this might not be possible, so “No” is an acceptable answer. Papers cannot be rejected simply for not including code, unless this is central to the contribution (e.g., for a new open-source benchmark).
		\item The instructions should contain the exact command and environment needed to run to reproduce the results. See the NeurIPS code and data submission guidelines (\url{https://nips.cc/public/guides/CodeSubmissionPolicy}) for more details.
		\item The authors should provide instructions on data access and preparation, including how to access the raw data, preprocessed data, intermediate data, and generated data, etc.
		\item The authors should provide scripts to reproduce all experimental results for the new proposed method and baselines. If only a subset of experiments are reproducible, they should state which ones are omitted from the script and why.
		\item At submission time, to preserve anonymity, the authors should release anonymized versions (if applicable).
		\item Providing as much information as possible in supplemental material (appended to the paper) is recommended, but including URLs to data and code is permitted.
	\end{itemize}

	\item {\bf Experimental setting/details}
	\item[] Question: Does the paper specify all the training and test details (e.g., data splits, hyperparameters, how they were chosen, type of optimizer, etc.) necessary to understand the results?
	\item[] Answer: \answerYes{} 
	\item[] Justification: We present the detailed experiment setups and results in \cref{appendix:experiments}.
	\item[] Guidelines:
	\begin{itemize}
		\item The answer NA means that the paper does not include experiments.
		\item The experimental setting should be presented in the core of the paper to a level of detail that is necessary to appreciate the results and make sense of them.
		\item The full details can be provided either with the code, in appendix, or as supplemental material.
	\end{itemize}
	
	\item {\bf Experiment statistical significance}
	\item[] Question: Does the paper report error bars suitably and correctly defined or other appropriate information about the statistical significance of the experiments?
	\item[] Answer: \answerYes{} 
	\item[] Justification: We repeat each experiments five times and report the average utility in \cref{appendix:experiments}.
	\item[] Guidelines:
	\begin{itemize}
		\item The answer NA means that the paper does not include experiments.
		\item The authors should answer "Yes" if the results are accompanied by error bars, confidence intervals, or statistical significance tests, at least for the experiments that support the main claims of the paper.
		\item The factors of variability that the error bars are capturing should be clearly stated (for example, train/test split, initialization, random drawing of some parameter, or overall run with given experimental conditions).
		\item The method for calculating the error bars should be explained (closed form formula, call to a library function, bootstrap, etc.)
		\item The assumptions made should be given (e.g., Normally distributed errors).
		\item It should be clear whether the error bar is the standard deviation or the standard error of the mean.
		\item It is OK to report 1-sigma error bars, but one should state it. The authors should preferably report a 2-sigma error bar than state that they have a 96\% CI, if the hypothesis of Normality of errors is not verified.
		\item For asymmetric distributions, the authors should be careful not to show in tables or figures symmetric error bars that would yield results that are out of range (e.g. negative error rates).
		\item If error bars are reported in tables or plots, The authors should explain in the text how they were calculated and reference the corresponding figures or tables in the text.
	\end{itemize}
	
	\item {\bf Experiments compute resources}
	\item[] Question: For each experiment, does the paper provide sufficient information on the computer resources (type of compute workers, memory, time of execution) needed to reproduce the experiments?
	\item[] Answer: \answerYes{} 
	\item[] Justification: We present the detailed experiment setups and results in \cref{appendix:experiments}.
	\item[] Guidelines:
	\begin{itemize}
		\item The answer NA means that the paper does not include experiments.
		\item The paper should indicate the type of compute workers CPU or GPU, internal cluster, or cloud provider, including relevant memory and storage.
		\item The paper should provide the amount of compute required for each of the individual experimental runs as well as estimate the total compute. 
		\item The paper should disclose whether the full research project required more compute than the experiments reported in the paper (e.g., preliminary or failed experiments that didn't make it into the paper). 
	\end{itemize}
	
	\item {\bf Code of ethics}
	\item[] Question: Does the research conducted in the paper conform, in every respect, with the NeurIPS Code of Ethics \url{https://neurips.cc/public/EthicsGuidelines}?
	\item[] Answer: \answerYes{} 
	\item[] Justification: Our research conforms, in every respect, to the NeurIPS Code of Ethics.
	\item[] Guidelines:
	\begin{itemize}
		\item The answer NA means that the authors have not reviewed the NeurIPS Code of Ethics.
		\item If the authors answer No, they should explain the special circumstances that require a deviation from the Code of Ethics.
		\item The authors should make sure to preserve anonymity (e.g., if there is a special consideration due to laws or regulations in their jurisdiction).
	\end{itemize}

	\item {\bf Broader impacts}
	\item[] Question: Does the paper discuss both potential positive societal impacts and negative societal impacts of the work performed?
	\item[] Answer: \answerYes{} 
	\item[] Justification: Our work is primarily of theoretical nature and has no immediate societal impacts.
	\item[] Guidelines:
	\begin{itemize}
		\item The answer NA means that there is no societal impact of the work performed.
		\item If the authors answer NA or No, they should explain why their work has no societal impact or why the paper does not address societal impact.
		\item Examples of negative societal impacts include potential malicious or unintended uses (e.g., disinformation, generating fake profiles, surveillance), fairness considerations (e.g., deployment of technologies that could make decisions that unfairly impact specific groups), privacy considerations, and security considerations.
		\item The conference expects that many papers will be foundational research and not tied to particular applications, let alone deployments. However, if there is a direct path to any negative applications, the authors should point it out. For example, it is legitimate to point out that an improvement in the quality of generative models could be used to generate deepfakes for disinformation. On the other hand, it is not needed to point out that a generic algorithm for optimizing neural networks could enable people to train models that generate Deepfakes faster.
		\item The authors should consider possible harms that could arise when the technology is being used as intended and functioning correctly, harms that could arise when the technology is being used as intended but gives incorrect results, and harms following from (intentional or unintentional) misuse of the technology.
		\item If there are negative societal impacts, the authors could also discuss possible mitigation strategies (e.g., gated release of models, providing defenses in addition to attacks, mechanisms for monitoring misuse, mechanisms to monitor how a system learns from feedback over time, improving the efficiency and accessibility of ML).
	\end{itemize}
	
	\item {\bf Safeguards}
	\item[] Question: Does the paper describe safeguards that have been put in place for responsible release of data or models that have a high risk for misuse (e.g., pretrained language models, image generators, or scraped datasets)?
	\item[] Answer: \answerNA{} 
	\item[] Justification: No high risk data or model have been used.
	\item[] Guidelines:
	\begin{itemize}
		\item The answer NA means that the paper poses no such risks.
		\item Released models that have a high risk for misuse or dual-use should be released with necessary safeguards to allow for controlled use of the model, for example by requiring that users adhere to usage guidelines or restrictions to access the model or implementing safety filters. 
		\item Datasets that have been scraped from the Internet could pose safety risks. The authors should describe how they avoided releasing unsafe images.
		\item We recognize that providing effective safeguards is challenging, and many papers do not require this, but we encourage authors to take this into account and make a best faith effort.
	\end{itemize}
	
	\item {\bf Licenses for existing assets}
	\item[] Question: Are the creators or original owners of assets (e.g., code, data, models), used in the paper, properly credited and are the license and terms of use explicitly mentioned and properly respected?
	\item[] Answer: \answerNA{} 
	\item[] Justification: No existing asset has been used in the paper.
	\item[] Guidelines:
	\begin{itemize}
		\item The answer NA means that the paper does not use existing assets.
		\item The authors should cite the original paper that produced the code package or dataset.
		\item The authors should state which version of the asset is used and, if possible, include a URL.
		\item The name of the license (e.g., CC-BY 4.0) should be included for each asset.
		\item For scraped data from a particular source (e.g., website), the copyright and terms of service of that source should be provided.
		\item If assets are released, the license, copyright information, and terms of use in the package should be provided. For popular datasets, \url{paperswithcode.com/datasets} has curated licenses for some datasets. Their licensing guide can help determine the license of a dataset.
		\item For existing datasets that are re-packaged, both the original license and the license of the derived asset (if it has changed) should be provided.
		\item If this information is not available online, the authors are encouraged to reach out to the asset's creators.
	\end{itemize}
	
	\item {\bf New assets}
	\item[] Question: Are new assets introduced in the paper well documented and is the documentation provided alongside the assets?
	\item[] Answer: \answerNA{} 
	\item[] Justification:  No new asset is introduced in the paper
	\item[] Guidelines:
	\begin{itemize}
		\item The answer NA means that the paper does not release new assets.
		\item Researchers should communicate the details of the dataset/code/model as part of their submissions via structured templates. This includes details about training, license, limitations, etc. 
		\item The paper should discuss whether and how consent was obtained from people whose asset is used.
		\item At submission time, remember to anonymize your assets (if applicable). You can either create an anonymized URL or include an anonymized zip file.
	\end{itemize}
	
	\item {\bf Crowdsourcing and research with human subjects}
	\item[] Question: For crowdsourcing experiments and research with human subjects, does the paper include the full text of instructions given to participants and screenshots, if applicable, as well as details about compensation (if any)? 
	\item[] Answer: \answerNA{} 
	\item[] Justification: No experiments with human subjects were conducted.
	\item[] Guidelines:
	\begin{itemize}
		\item The answer NA means that the paper does not involve crowdsourcing nor research with human subjects.
		\item Including this information in the supplemental material is fine, but if the main contribution of the paper involves human subjects, then as much detail as possible should be included in the main paper. 
		\item According to the NeurIPS Code of Ethics, workers involved in data collection, curation, or other labor should be paid at least the minimum wage in the country of the data collector. 
	\end{itemize}
	
	\item {\bf Institutional review board (IRB) approvals or equivalent for research with human subjects}
	\item[] Question: Does the paper describe potential risks incurred by study participants, whether such risks were disclosed to the subjects, and whether Institutional Review Board (IRB) approvals (or an equivalent approval/review based on the requirements of your country or institution) were obtained?
	\item[] Answer: \answerNA{} 
	\item[] Justification: We conducted no experiments with human subjects.
	\item[] Guidelines:
	\begin{itemize}
		\item The answer NA means that the paper does not involve crowdsourcing nor research with human subjects.
		\item Depending on the country in which research is conducted, IRB approval (or equivalent) may be required for any human subjects research. If you obtained IRB approval, you should clearly state this in the paper. 
		\item We recognize that the procedures for this may vary significantly between institutions and locations, and we expect authors to adhere to the NeurIPS Code of Ethics and the guidelines for their institution. 
		\item For initial submissions, do not include any information that would break anonymity (if applicable), such as the institution conducting the review.
	\end{itemize}
	
	\item {\bf Declaration of LLM usage}
	\item[] Question: Does the paper describe the usage of LLMs if it is an important, original, or non-standard component of the core methods in this research? Note that if the LLM is used only for writing, editing, or formatting purposes and does not impact the core methodology, scientific rigorousness, or originality of the research, declaration is not required.
	\item[] Answer: \answerNA{} 
	\item[] Justification: The core methodology of our work does not involve the use of LLMs as any important, original, or non-standard components.
	\item[] Guidelines:
	\begin{itemize}
		\item The answer NA means that the core method development in this research does not involve LLMs as any important, original, or non-standard components.
		\item Please refer to our LLM policy (\url{https://neurips.cc/Conferences/2025/LLM}) for what should or should not be described.
	\end{itemize}
	
\end{enumerate}
\end{document}